\documentclass[a4paper]{article}
\usepackage{amsmath}
\usepackage[english]{babel}
\usepackage{latexsym}
\usepackage{amssymb}
\usepackage{amscd}
\usepackage{amsgen,amstext,amsbsy,amsopn}
\usepackage{math rsfs}

\usepackage{amsthm,epsfig,graphicx,graphics}
\usepackage[latin1]{inputenc}
\usepackage{xspace}
\usepackage{amsxtra}

\usepackage{color}

%%%%%%%%%%% Page setting %%%%%%%%%%%%%%%%%%%%%%%%%

\newcommand{\version}{December 6th, 2010}
\setlength{\textwidth}{16cm}
\addtolength{\evensidemargin}{-1.5cm}
\addtolength{\oddsidemargin}{-1.5cm}

%\markboth{\scriptsize CRY \version}{\scriptsize CY \version}

%%%%%%%%%%%% new commands CY %%%%%%%%%%%%%%%%%%%%%%%%

\renewcommand{\theequation}{\thesection.\arabic{equation}}
\numberwithin{equation}{section}
\newcommand{\bdm}{\begin{displaymath}}
\newcommand{\edm}{\end{displaymath}}
\newcommand{\bdn}{\begin{eqnarray}}
\newcommand{\edn}{\end{eqnarray}}
\newcommand{\bay}{\begin{array}{c}}
\newcommand{\eay}{\end{array}}
\newcommand{\ben}{\begin{enumerate}}
\newcommand{\een}{\end{enumerate}}
\newcommand{\beq}{\begin{equation}}
\newcommand{\eeq}{\end{equation}}
\newcommand{\bml}[1]{\begin{multline} #1 \end{multline}}
\newcommand{\bmln}[1]{\begin{multline*} #1 \end{multline*}}

\newcommand{\lf}{\left}
\newcommand{\ri}{\right}

\newcommand{\RR}{\mathbb{R}^2}

\newcommand{\rv}{\vec{r}}
\newcommand{\sv}{\vec{s}}
\newcommand{\av}{\vec{a}}
\newcommand{\avj}{\vec{a}_j}
\newcommand{\avi}{\vec{a}_i}

\newcommand{\avti}{\vec{\tilde{a}}_i}
\newcommand{\avtj}{\vec{\tilde{a}}_j}
\newcommand{\diff}{\mathrm{d}}
\newcommand{\eps}{\varepsilon}

\newcommand{\ba}{\mathcal{B}}

\newcommand{\gpf}{\mathcal{E}^{\mathrm{GP}}}
\newcommand{\gpe}{E^{\mathrm{GP}}}
\newcommand{\gpm}{\Psi^{\mathrm{GP}}}
\newcommand{\chem}{\mu^{\mathrm{GP}}}
\newcommand{\gpd}{\rho^{\mathrm{GP}}}

\newcommand{\hgpf}{\hat{\mathcal{E}}^{\mathrm{GP}}}
\newcommand{\hgpe}{\hat{E}^{\mathrm{GP}}}
\newcommand{\hgpm}{g_{\omega}}
\newcommand{\hchem}{\hat{\mu}^{\mathrm{GP}}}

\newcommand{\hgpfa}{\hat{\mathcal{E}}^{\mathrm{GP}}_{\A,\omega}}
\newcommand{\hgpea}{\hat{E}^{\mathrm{GP}}_{\A,\omega}}
\newcommand{\hgpma}{g_{\A,\omega}}
\newcommand{\hchema}{\hat{\mu}^{\mathrm{GP}}_{\A,\omega}}

\newcommand{\Eg}{\mathcal{E}}
\newcommand{\Fg}{\mathcal{F}}

\newcommand{\dg}{{\rm deg}}
\newcommand{\curl}{{\rm curl}}

\newcommand{\tff}{\mathcal{E}^{\mathrm{TF}}}
\newcommand{\tfd}{\mathcal{A}^{\mathrm{TF}}}
\newcommand{\tfe}{E^{\mathrm{TF}}}
\newcommand{\tfm}{\rho^{\mathrm{TF}}}
\newcommand{\rtf}{R_{\mathrm{h}}}
\newcommand{\rd}{R_{>}}
\newcommand{\rt}{R_{<}}

\newcommand{\tfchem}{\mu^{\mathrm{TF}}}

\newcommand{\ttfm}{\tilde{\rho}^{\mathrm{TF}}}

\newcommand{\htff}{\hat{\mathcal{E}}^{\mathrm{TF}}}
\newcommand{\htfe}{\hat{E}^{\mathrm{TF}}}
\newcommand{\htfm}{\hat{\rho}^{\mathrm{TF}}}
\newcommand{\hrtf}{\hat{R}_{\omega}}
\newcommand{\htfchem}{\hat{\mu}^{\mathrm{TF}}}

\newcommand{\ann}{\mathcal{A}}
\newcommand{\at}{\tilde{\A}}

\newcommand{\optphtf}{\omega^{\mathrm{TF}}}
\newcommand{\oopt}{\omega_{{\rm opt}}}
\newcommand{\gaintf}{H^{\mathrm{TF}}}
\newcommand{\rgaintf}{\tilde{H}^{\mathrm{TF}}}
\newcommand{\gain}{H}
\newcommand{\costtf}{F^{\mathrm{TF}}}

\newcommand{\magnp}{\vec{A}}
\newcommand{\rmagnp}{\vec{B}_{\omega}}

\newcommand{\half}{\hbox{$\frac12$}}

%%%%%%%%%%%	New Commands Nicolas %%%%%%%%%%%%%%%%%

\newcommand{\Z}{\mathbb{Z}}
\newcommand{\R}{\mathbb{R}}
\newcommand{\N}{\mathbb{N}}

\newcommand{\C}{\mathbb{C}}
\newcommand{\D}{\mathcal{D}}
\newcommand{\F}{\mathcal{F}}
\newcommand{\E}{\mathcal{E}}
\newcommand{\A}{\mathcal{A}}
\newcommand{\B}{\mathcal{B}}
\newcommand{\T}{\mathcal{T}}
\newcommand{\HH}{\mathcal{H}}

\newcommand{\Q}{\mathcal{Q}}
\newcommand{\OO}{\mathcal{O}}

\newcommand{\al}{\alpha}
\newcommand{\alt}{\tilde{\alpha}}
\newcommand{\ep}{\varepsilon}
\newcommand{\g}{\gamma}
\newcommand{\Om}{\Omega}
\newcommand{\om}{\omega}

\newcommand{\dd}{\partial}

%%%%%%%%%%% New Theorems CY %%%%%%%%%%%%%%%%%%%%%%

\newtheorem{teo}{Theorem}[section]
\newtheorem{lem}{Lemma}[section]
\newtheorem{pro}{Proposition}[section]

\newtheorem{defi}{Definition}[section]

\newcounter{remark}[section]
\newenvironment{rem}{\stepcounter{remark} \vspace{0,1cm} \noindent \textit{Remark \thesection.\theremark}\,}{\vspace{0,2cm}}

\pagestyle{myheadings} \sloppy

\begin{document}

\markboth{\scriptsize{Giant Vortex - CRY - \version}}{\scriptsize{Giant Vortex - CRY - \version}}

\title{The Transition to a Giant Vortex Phase in a Fast Rotating Bose-Einstein Condensate}

\author{M. Correggi	\\ \normalsize\it CIRM, Fondazione Bruno Kessler,    \\ \normalsize\it Via Sommarive 14, 38123 Trento, Italy. \\ \normalsize\it \hspace{-.5 cm}	\\	N. Rougerie	\\	\normalsize\it	Universit\'{e} Paris 6 \\ \normalsize\it Laboratoire Jacques-Louis Lions \\ \normalsize\it 175 rue du Chevaleret, 75013 Paris, France \\ \normalsize\it \hspace{-.5 cm}	\\ J. Yngvason	\\ \normalsize\it Erwin Schr{\"o}dinger Institute for Mathematical Physics,	\\ \normalsize\it Boltzmanngasse 9, 1090 Vienna, Austria,	\\ \normalsize\it Fakult\"at f\"ur Physik, Universit{\"a}t Wien,	\\ \normalsize\it Boltzmanngasse 5, 1090 Vienna, Austria.}

\date{\version}

\maketitle

\begin{abstract} 

We study the Gross-Pitaevskii (GP) energy functional for a fast rotating Bose-Einstein condensate on the unit disc in two dimensions. Writing the coupling parameter as $1/\eps^2$ we consider the asymptotic regime $\eps\to 0$ with the angular velocity $\Omega$ proportional to $(\eps^2|\log\eps|)^{-1}$.
We prove that if $\Omega=\Omega_0 (\eps^2|\log\eps|)^{-1}$ and $\Omega_0>2(3\pi)^{-1}$ then a minimizer of the GP energy functional has no zeros in an annulus at the boundary  of the disc that contains the bulk of the mass. The vorticity resides in a complementary `hole'  around the center where the density is vanishingly small. Moreover, we prove a lower bound to the ground state energy that matches, up to small errors, the upper bound obtained from an optimal giant vortex  trial function, and  also that the winding number of a GP minimizer around the disc is in accord with the phase of this trial function.
	\vspace{0,2cm}

	MSC: 35Q55,47J30,76M23. PACS: 03.75.Hh, 47.32.-y, 47.37.+q.
	\vspace{0,2cm}
	
	Keywords: Bose-Einstein Condensates, Superfluidity, Vortices, Giant Vortex.
\end{abstract}

\tableofcontents

\section{Introduction}

%\subsection{Background}
An ultracold Bose gas in a magneto-optical trap exhibits remarkable phenomena when the trap is set in rotational motion. In the ground state the gas is a superfluid and responds to the rotation by  the creation of quantized vortices whose number increases with the angular velocity. The literature on this subject is vast but there exist some excellent reviews \cite{A,Co,Fe1}. The mathematical analysis is usually carried out in the framework of the (time independent) Gross-Pitaevskii (GP) equation that has been derived from the many-body quantum mechanical Hamiltonian in \cite{LSY} for the non-rotating case and in \cite{LS} for a rotating system at  {\em fixed} coupling constant and rotational velocity. When these parameters tend to infinity  the leading order approximation was established  in \cite{BCPY} but it is still an open problem to derive the full GP  equation rigorously in this regime. The present paper, however,  is not concerned with the many-body problem and  the GP description will be assumed throughout.

When studying gases in rapid rotation a distinction has to be made between the case of  harmonic traps, where the confining external potential increases quadratically with the distance from the rotation axis, and anharmonic traps where the confinement is stronger. In the former case there is a limiting angular velocity beyond which the centrifugal forces drive the gas out of the trap. In the latter case the angular velocity can in principle be as large as one pleases \cite{Fe2}. Using variational arguments it was predicted in \cite{FB} that at sufficiently large angular velocities in such traps a phase transition changing the character of the wave function takes place: Vortices disappear from the bulk of the density and the vorticity resides in a `giant vortex' in the central region of the trap where the density is very low. This phenomenon was also discussed and studied numerically in both earlier and later works \cite{FJS,FZ,KB,KF,KTU}, but proving theorems that firmly establish it mathematically has remained quite challenging. 

The emergence of a giant vortex state at large angular velocity with the interaction parameter {\em fixed} was proved in \cite{R1} for a quadratic plus quartic trap potential. 
The present paper is concerned with the combined effects of a large angular velocity {\em and} a large interaction parameter on the distribution of vorticity in an anharmonic trap. In particular, we shall establish rigorous estimates on the relation between the interaction strength and the angular velocity required for creating a giant vortex.  The physical regime of interest and hence the mathematical problem treated is rather different  from that in \cite{R1} but a common feature is that the annular shape of the condensate is created by the centrifugal forces at fast rotation. By contrast, the paper  \cite{AAB} focuses on a situation where, even at slow rotation speeds, the condensate has a fixed annular shape due to the choice of the trapping potential. The basic methodology of the present paper is related to that of \cite{AAB} and \cite{IM,IM2} but with important  novel aspects that will become apparent in the sequel.
 
As in the papers \cite{CRY1,CY}, that deal mainly with rotational velocities below the giant vortex transition,  the mathematical model we consider is that of a two-dimensional `flat' trap. The angular velocity vector points in the direction orthogonal to the plane and  the energy functional in the rotating frame of reference is\footnote{The notation ``$a:=b$" means that $a$ is {\em by definition} equal to $b $.}
\beq
	\label{GPf0}
	\gpf[\Psi] : = \int_{\B} \diff \rv \: \lf\{ \lf| \nabla \Psi \ri|^2 - 2\vec\Omega\cdot\Psi^*\vec L\Psi + \eps^{-2} |\Psi|^4 \ri\},
\eeq
where we have denoted the physical angular velocity by $2\vec\Omega$, the integral domain is the unit disc 
$ \B = \{ \rv \in \RR \: : \: r \leq 1 \} $,
and $\vec L=-i \rv \wedge \nabla $ is
%=(-ir^{-1}\partial/\partial \vartheta)\hat{e}_{\vartheta}$ 
the angular momentum operator. 
It is also useful to write the functional in the form
\beq
	\label{GPf}
	\gpf[\Psi] = \int_{\B} \diff \rv \: \lf\{ \lf| \lf( \nabla - i \magnp \ri) \Psi \ri|^2 - \Omega^2 r^2 |\Psi|^2 + \eps^{-2} |\Psi|^4 \ri\},
\eeq
where the vector potential $ \magnp $ is given by
\beq
	\magnp : = \vec\Omega\wedge \rv=\Omega r \vec{e}_{\vartheta}.
	\eeq
Here $(r,\vartheta)$ are two-dimensional polar coordinates and $\vec{e}_{\vartheta}$ a unit vector in the angular direction.
The complex valued $ \Psi $ is normalized in $ L^2(\B) $  and the ground state energy is defined as
\beq\label{gpe} 
\gpe : = \inf_{\lf\| \Psi \ri\|_2 = 1} \gpf[\Psi]. 
\eeq
The minimization in \eqref{gpe} leads to Neumann boundary conditions on $\dd \B$. Alternatively we could have imposed Dirichlet boundary conditions, or considered the case of a homogeneous trapping potential as in \cite{CRY2}. Our general strategy applies to these cases too, but with nontrivial modifications and new aspects that are dealt with in separate papers \cite{CPRY}.\\
We denote by $\Psi^{\rm GP}$ a minimizer of \eqref{GPf}, i.e., any normalized function such that $\gpf[\Psi^{\rm GP}]=\gpe$.\\
The minimizer is in general not unique because vortices can break the rotational symmetry \cite[Proposition 2.2]{CRY1} but any minimizer satisfies the variational equation (GP equation)
\beq
	\label{GP variational}
	- \Delta \gpm - 2 \vec\Omega\cdot \vec L\, \gpm + 2 \eps^{-2} \lf| \gpm \ri|^2 \gpm = \chem \gpm,
\eeq
with Neumann boundary conditions and the chemical potential
\beq
	\label{chem}
	\chem : = \gpe + \frac{1}{\eps^2} \int_{\B} \diff \rv \: \lf| \gpm \ri|^4.
\eeq
%\footnote{Alternatively, Dirichlet boundary conditions could be posed... } 
The subsequent analysis concerns the asymptotic behavior of $\Psi^{\rm GP}$ and $\gpe$ as $\eps\to 0$ with $\Omega$ tending to $\infty$ in a definite way. 

As discussed in \cite{CRY1} the centrifugal term $- \Omega^2 r^2 |\Psi|^2$ in \eqref{GPf} creates for $\Omega\gtrsim  \eps^{-1}$ a \lq hole' around the center  where the density $|\Psi^{\rm GP}|^2$ is exponentially small while the mass is concentrated in an annulus of thickness $\sim\eps\Omega$ at the boundary. Moreover, by establishing upper and lower bounds on $\gpe$, it was shown in \cite{CY} that in the asymptotic parameter regime
\beq\label{lowregime} |\log\eps|\ll \Omega\ll (\eps^2|\log\eps|)^{-1}\eeq the ground state energy is to subleading order correctly reproduced by the energy of a trial function exhibiting a lattice of vortices reaching all the way to the boundary of the disc. It can even be shown that in the whole regime \eqref{lowregime} the vorticity of a true minimizer $\Psi^{\rm GP}$ in the annulus is uniformly distributed\footnote{Cf. \cite[Theorem 3.3]{CY}. This theorem is stated only for the case $\Omega\lesssim \eps^{-1}$ but it holds in fact true in the whole region \eqref{lowregime} \cite{R2}.} 
with density $\Omega/\pi$.

\subsection{Main Results}

In this paper we investigate the case 
\beq
	\label{angular velocity}
	\Omega = \frac{\Omega_0}{\eps^2 |\log\eps|}
\eeq
with $\Om_0$ a fixed constant and prove that for  $\Omega_0$ sufficiently large a phase transition takes place: Vortices disappear from the bulk of the density and all vorticity is contained in a hole where the density is low.

To state this result precisely, we first recall from \cite{CRY1} that the leading order in the asymptotic expansion of the GP ground state energy is given by the minimization of the \lq Thomas-Fermi\rq\ 
(TF) functional obtained from $ \gpf $ by simply neglecting the kinetic term:
\beq
	\label{TFf}
	\tff[\rho] : = \int_{\B} \diff \rv \: \lf\{ - \Omega^2 r^2 \rho + \eps^{-2} \rho^2 \ri\},
\eeq
where $ \rho $ plays the role of the density $ |\Psi|^2 $.  The minimizing normalized density, denoted by $ \tfm $, can be  computed
explicitly (see  Appendix).  For $\Omega>2(\sqrt{\pi}\eps)^{-1}$ it vanishes for $r< \rtf $ with
\beq
	\label{TFann0}
	1-\rtf^2 = (2/\sqrt \pi) (\eps \Omega)^{-1}\sim \Omega_0^{-1}\eps|\log\eps|,
\eeq
while for $\rtf\leq r\leq 1$ it is given by
\beq
\label{TFm0} \tfm(r) =\half \eps^2\Omega^2 (r^2-\rtf^2).
\eeq
Our proof of the disappearance of vortices is based on estimates that require the TF density to be sufficiently large. For this reason we consider an annulus
\beq \label{abulk}\tilde{\mathcal A}:=\{\rv \: :\: \rtf +\eps|\log\eps|^{-1}\leq r\leq 1\}\eeq
where $\rho^{\rm TF}(r)\gtrsim \Omega_0\eps^{-1}|\log\eps|^{-3}$. Our result on the disappearance of vortices from the  essential support of the density is as follows:

\begin{teo}[\textbf{Absence of vortices in the bulk}]\label{theo:vortex}
\mbox{}	\\
If $\Omega$ is given by \eqref{angular velocity} with $\Omega_0>2(3\pi)^{-1}$, then $\Psi^{\rm GP}$ does not have any zero  in $\tilde{\mathcal A}$ for $\eps>0$ small enough. More precisely, for $\rv\in \tilde{\mathcal A}$, 
 \beq
 \label{densitydiff}
 \left||\Psi^{\rm GP}(\rv)|^2-\tfm(r)\right|\leq C\frac{|\log\eps|^3}{\eps^{7/8}}\ll\tfm(r).
 \eeq
\end{teo}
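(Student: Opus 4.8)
Since $\tfm\gtrsim\Omega_0\eps^{-1}|\log\eps|^{-3}$ on $\tilde{\mathcal A}$ while the right-hand side of \eqref{densitydiff} is $\ll$ this, the estimate \eqref{densitydiff} already forces $\gpm\neq0$ on $\tilde{\mathcal A}$, so it suffices to prove \eqref{densitydiff}. The plan is to combine matching energy estimates with a local analysis. First I would collect the a priori facts. From the GP equation \eqref{GP variational}, a maximum principle for $|\gpm|^2$ together with elliptic regularity give $\|\gpm\|_{L^\infty(\tilde{\mathcal A})}^2\lesssim\|\tfm\|_{L^\infty}$ and, once $\gpm$ is zero-free on $\tilde{\mathcal A}$, a Lipschitz bound $\|\nabla|\gpm|^2\|_{L^\infty(\tilde{\mathcal A})}\lesssim\eps^{-5/2}$ up to logarithmic factors (both require the sharp control of $\chem$ supplied by the energy analysis). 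In parallel, for $\omega\in\Z$ let $g_\omega\ge0$ minimize the one-dimensional functional $\hat{\mathcal E}^{\mathrm{GP}}_\omega[g]:=\int_{\B}\diff\rv\,\{|\nabla g|^2+(\omega^2 r^{-2}-2\omega\Omega)g^2+\eps^{-2}g^4\}$ under $\|g\|_2=1$, and set $\hat{E}^{\mathrm{GP}}:=\min_\omega\hat{\mathcal E}^{\mathrm{GP}}_\omega[g_\omega]$, attained at some $\omega$ of order $\Omega$; then $\gpe\le\hat{E}^{\mathrm{GP}}$ by testing $\gpf$ on $g_\omega e^{i\omega\vartheta}$, and a study of the radial profile $g_\omega$ (whose Euler--Lagrange equation is approximated by the TF relation \eqref{TFm0}) shows that $g_\omega^2$ differs from $\tfm$ on $\tilde{\mathcal A}$ by far less than the right-hand side of \eqref{densitydiff}.

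Next, the energy decoupling. For the optimal $\omega$, write $\gpm=g_\omega e^{i\omega\vartheta}v$ --- licit because $g_\omega>0$ on $\overline{\B}\setminus\{0\}$ --- insert into $\gpf$, and use the Euler--Lagrange equation for $g_\omega$ to cancel the cross terms; this yields the exact identity
\[
\gpf[\gpm]=\hat{E}^{\mathrm{GP}}_\omega+\int_{\B}\diff\rv\left\{g_\omega^2\,|\nabla v|^2+2\left(\frac{\omega}{r}-\Omega r\right)g_\omega^2\,\frac{\mathrm{Im}(\bar v\,\partial_\vartheta v)}{r}+\frac{g_\omega^4}{\eps^2}\,(1-|v|^2)^2\right\}=:\hat{E}^{\mathrm{GP}}_\omega+\mathcal E_\omega[v].
\]
Since $\mathcal E_\omega[1]=0$ and $\gpm$ minimizes $\gpf$, this gives the a priori bound $\mathcal E_\omega[v]=\gpe-\hat{E}^{\mathrm{GP}}\le0$.

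The core of the argument is a lower bound on $\mathcal E_\omega[v]$, obtained by a vortex-ball/Jacobian construction adapted to the spatially varying weight $g_\omega^2$. If $v$ had a zero in $\tilde{\mathcal A}$ --- i.e.\ $\gpm$ a vortex there --- the construction extracts from $\int g_\omega^2|\nabla v|^2$ a positive self-energy of order $g_\omega^2|\log\eps|\gtrsim\eps^{-1}|\log\eps|^{-2}$ on a small ball around it, whereas a careful accounting of the winding number against the residual rotation $\omega/r-\Omega r$ shows that the (negative) amount the circulation term $2(\omega/r-\Omega r)g_\omega^2 r^{-1}\,\mathrm{Im}(\bar v\,\partial_\vartheta v)$ can supply there is strictly smaller precisely when $\Omega_0>2(3\pi)^{-1}$; combined with a lower bound on $\mathcal E_\omega[v]$ over the complement, this contradicts $\mathcal E_\omega[v]\le0$, so $v$, and hence $\gpm$, is zero-free on $\tilde{\mathcal A}$. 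Feeding this back (so that $v=|v|e^{i\psi}$ with $\psi$ single-valued on the annulus, which lets the circulation term be absorbed into $\int g_\omega^2|\nabla v|^2$ via a Poincar\'e inequality in $\vartheta$), the same analysis delivers the matching energy lower bound $\gpe\ge\hat{E}^{\mathrm{GP}}-(\text{small})$ and, via the non-negative quartic term, $\||\gpm|^2-g_\omega^2\|_{L^2(\B)}^2=\int_{\B}g_\omega^4(1-|v|^2)^2\lesssim\eps^{3/2}$ up to logarithms. This is the main obstacle: making the vortex-ball lower bound quantitative with a weight that varies over many orders of magnitude, in the fast-rotation regime \eqref{angular velocity}, and identifying $2(3\pi)^{-1}$ as the sharp constant at which a bulk vortex's self-energy overtakes the gain from the residual rotation. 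It is the same estimate that produces the matching energy lower bound announced in the introduction, and all the error terms left implicit above must be controlled there with enough precision (roughly $\eps^{-1/2}$, up to logarithms) for the final interpolation to close.

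Finally, write $|\gpm|^2-\tfm=(|\gpm|^2-g_\omega^2)+(g_\omega^2-\tfm)$; the second term is handled on $\tilde{\mathcal A}$ by the first paragraph. For the first, combine the $L^2$ bound of the previous step with the Lipschitz bound through the two-dimensional interpolation $\|h\|_{L^\infty(\tilde{\mathcal A})}\lesssim\|h\|_{L^2(\B)}^{1/2}\|\nabla h\|_{L^\infty(\tilde{\mathcal A})}^{1/2}$ (proved by the usual mean-value argument: if $|h(\rv_0)|=\delta$ then $|h|\ge\delta/2$ on a ball of radius $\sim\delta/\|\nabla h\|_{L^\infty}$, contributing $\gtrsim\delta^4/\|\nabla h\|_{L^\infty}^2$ to $\|h\|_{L^2}^2$). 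With $\|h\|_{L^2}\lesssim\eps^{3/4}$ and $\|\nabla h\|_{L^\infty}\lesssim\eps^{-5/2}$ (logarithms suppressed) this gives $\||\gpm|^2-g_\omega^2\|_{L^\infty(\tilde{\mathcal A})}\lesssim\eps^{-7/8}$, hence \eqref{densitydiff} after restoring logarithmic factors; since $\eps^{-7/8}|\log\eps|^3\ll\tfm$ on $\tilde{\mathcal A}$, this also yields $|\gpm|^2>0$ there and completes the proof.
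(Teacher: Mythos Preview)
Your overall architecture matches the paper's---energy decoupling, reduced functional $\mathcal E_\omega[v]\le 0$, vortex-ball/Jacobian lower bound hinging on the sign of the cost function $\tfrac12 g^2|\log\eps|-|F|$, then a gradient bound plus a contradiction argument to upgrade to a pointwise estimate. But two steps are genuine gaps, not just suppressed details.

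\textbf{The decomposition on the whole disc is not justified.} You write $\gpm=g_\omega e^{i\omega\vartheta}v$ on $\B$ and assert it is ``licit because $g_\omega>0$ on $\overline{\B}\setminus\{0\}$''. The trouble is that $g_\omega$ vanishes at the origin to order $\hat\Omega\sim\eps^{-2}|\log\eps|^{-1}$, while nothing is known about the behaviour of $\gpm$ there (it is not known whether $\gpm(0)=0$). If $\gpm$ does not vanish to comparably high order, $v$ and $\nabla v$ are violently singular at $0$, the integration by parts behind the exact splitting fails, and $\mathcal E_\omega[v]$ need not even be finite. The paper handles this by first using the exponential smallness of $\gpm$ and $g$ in the hole (Propositions~\ref{exponential smallness} and~\ref{g exponential smallness}) to restrict everything to an annulus $\A=\{\rt\le r\le 1\}$ with $\rt$ slightly inside $\rtf$, and then performing the decoupling with the density $g_{\A,\omega_0}$ that minimizes the giant-vortex functional on $\A$ with Neumann conditions at both boundaries (Proposition~\ref{reduction}). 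This is not cosmetic: it is what makes $u$ and $\mathcal E_{\A,\omega}[u]$ well defined.

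\textbf{The vortex-ball step does not start from the information you have.} From $\mathcal E_\omega[v]\le 0$ and Cauchy--Schwarz you only get $\mathcal F[v]\le C\eps^{-2}$. Since $g^4\sim(\eps\Omega)^2\sim\eps^{-2}|\log\eps|^{-2}$ on the bulk, this says merely $\int(1-|v|^2)^2\lesssim\eps^2|\log\eps|^2$, which is compatible with $v$ vanishing on a ball of radius comparable to the annulus width $\eps|\log\eps|$; you cannot isolate zeros in small balls, so the Jerrard--Sandier construction does not apply and your contradiction for zero-freeness does not get started. The paper's new ingredient (Section~\ref{Sect est reduced energy}) is a cell decomposition of $\A$ into $\sim(\eps|\log\eps|)^{-1}$ almost-square cells, a good/bad dichotomy, local vortex-ball construction in the good cells, a Pohozaev-type boundary estimate (Lemma~\ref{lem:boundary}) exploiting that the optimal phase makes $F(1)=\OO(1)$, and an implicit induction that drives the number of bad cells to zero while improving the bound to $\mathcal F[u]\le C\eps^{-1/2}|\log\eps|^{5/2}(\log|\log\eps|)^{-2}$ (Proposition~\ref{pro:refinedbound}). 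Your sketch names the obstacle but not the mechanism that overcomes it.

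Two smaller remarks. The gradient estimate does not require zero-freeness as you suggest: the paper derives $\|\nabla u\|_{L^\infty(\at)}\le C\eps^{-3/2}|\log\eps|^{3/2}$ directly from the PDE for $u$ and Gagliardo--Nirenberg, with the Lagrange multiplier controlled by the energy bounds of Proposition~\ref{pro:refinedbound}. And the paper's endgame is slightly different from yours: rather than an $L^2$--Lipschitz interpolation on $|\gpm|^2-g^2$, it argues \`a la BBH that if $\big||u(\rv_0)|-1\big|\ge\eps^{1/8}|\log\eps|^3$ at some $\rv_0\in\at$, the gradient bound forces this to persist on a ball of radius $\sim\eps^{13/8}|\log\eps|^{3/2}$, producing a contribution to $\mathcal F[u]$ that contradicts the refined bound; then $|\,|\gpm|^2-g^2|\le g^2\big||u|^2-1\big|$ gives \eqref{densitydiff} after adding the pointwise estimate $|g^2-\tfm|$ (Proposition~\ref{point GP dens}). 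Your interpolation route would also close once the correct inputs are in place, but the paper's is a touch more direct.
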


\begin{rem}{\it (Bulk of the condensate).}
\mbox{}	\\
The annulus $\tilde{\mathcal A}$ contains the bulk of the mass.  Indeed, because  $\int_{\tilde{\mathcal A}}\tfm=1-o(1)$ and $\Psi^{\rm GP}$ is normalized, \eqref{densitydiff} implies that also $\int_{\tilde{\mathcal A}}|\Psi^{\rm GP}|^2=1-o(1)$.
\end{rem}

The proof of Theorem 1.1 is based on precise energy estimates that involve a comparison of the energy of the restriction of $\Psi^{\rm GP}$ to $\tilde{\mathcal A}$ with the energy of a giant vortex trial function. By the latter we mean a function of the form

\beq\label{gvgeneral}
\Psi(\rv)=f(\rv)\exp(i\hat\Omega \vartheta)
\eeq
with $f$ {\em real valued} and $\hat\Omega\in \mathbb N$. It is convenient to write\footnote{We use the notation `$ [ \: \: \cdot \: \: ]$' for the integer part of a real number.} 
\beq\label{hatomega}
\hat\Omega =[\Omega]-\omega
\eeq 
with $\omega\in\mathbb Z$ and use $\omega$ as label for the 
$\hat\Omega$-dependent quantities in the sequel.  We define a functional of $f$ by
\begin{equation}\label{hatfunct}
\hat{\mathcal E}^{\rm GP}_{\omega}[f] := {\mathcal E}^{\rm GP}[f\exp(i\hat\Omega \vartheta)]=
\int_{\mathcal B} \diff \rv \:\left\{| \nabla f|^2+(\hat\Omega^2\,r^{-2}-2\Omega\hat\Omega)f^2+ \eps^{-2} f^4 \right\}.
\end{equation}

Minimizing the functional $\hat{\mathcal E}^{\rm GP}_{\omega}[f]$ (see Proposition \ref{htminimization}), that is convex in $ f^2 $, over $L^2$-normalized $f$ for fixed $\hat\Omega$, we obtain an energy denoted by
$\hat{E}^{\rm GP}_{\omega}$ and a minimizer that is rotationally symmetric, without a zero for $r>0$,  and unique up to sign\footnote{Instead of requiring $f$ to be real we could have required $f$ to be radial in which case the minimizer is unique up to a constant phase.}. We denote the unique positive minimizer by $g_\omega$. If $0\leq \omega\leq C\eps^{-1}$ the corresponding density $g_\omega^2$ is close to $\rho^{\rm TF}$ (see Remark 1.3 below).

It is clear that
\beq\label{upperhat}
{E}^{\rm GP}\leq{\mathcal E}^{\rm GP}[g_\omega\exp(i\hat\Omega \vartheta)] =\hat{E}^{\rm GP}_{\omega}
\eeq
for any value of $\omega$ and hence
\beq\label{upperhat1}
{E}^{\rm GP}\leq {\hat E}^{\rm GP}:=\inf_{\omega \in \Z} \hat{E}^{\rm GP}_{\omega}.
\eeq

Our second main result is a lower bound that matches \eqref{upperhat1} up to small errors (see Remark 1.4 below) and an estimate of the phase that optimizes $\hat{E}^{\rm GP}_{\omega}
$.

\begin{teo}[\textbf{Ground state energy}]\label{theo:energy}
\mbox{}	\\
For  $\Omega_0>2(3\pi)^{-1}$ and $\eps>0$ small enough the ground state energy is
\beq
E^{\rm GP}\label{enas1}=\hat E^{\rm GP}- \OO\left(\frac{|\log\eps|^{3/2}}{\eps^{1/2}(\log|\log\eps|)^2}\right).
\eeq
Moreover $\hat E^{\rm GP}=\hat E^{\rm GP}_{\omega_{\rm opt}}$, with $\omega_{\rm opt}\in\mathbb N$ satisfying
\beq\label{optphase}
\omega_{{\rm opt}}=\frac2{3\sqrt\pi \eps} \left(1+ \OO(|\log\eps|^{-1/2})\right).
\eeq
\end{teo}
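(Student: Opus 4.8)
The upper bound $\gpe\le\hgpe$ is \eqref{upperhat1}; what is left is the matching lower bound and the localisation of the optimal phase, which can be treated essentially independently. For the optimal phase, which concerns only the map $\om\mapsto\hat E^{\rm GP}_\om$, regard $\hat\Om$ momentarily as a continuous parameter in \eqref{hatfunct}: the Feynman--Hellmann principle yields $\tfrac{\diff}{\diff\hat\Om}\hat E^{\rm GP}_\om=2\int_{\B}\diff\rv\,(\hat\Om r^{-2}-\Om)g_\om^{2}$ and $\tfrac{\diff^{2}}{\diff\hat\Om^{2}}\hat E^{\rm GP}_\om=2\int_{\B}\diff\rv\, r^{-2}g_\om^{2}+o(1)$ (the remainder small because $g_\om$ depends smoothly and weakly on $\hat\Om$, cf.\ Proposition \ref{htminimization}). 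Hence $\hat E^{\rm GP}_\om$ is, to the relevant precision, strictly convex in $\hat\Om$ with a unique continuous minimiser $\hat\Om_{*}$ determined by $\hat\Om_{*}\int_{\B}r^{-2}g_\om^{2}=\Om$. Inserting $g_\om^{2}\simeq\tfm$, valid for $0\le\om\le C\eps^{-1}$ (Remark 1.3), and the explicit profile \eqref{TFm0}, a short computation gives $\int_{\B}r^{-2}\tfm=1+\tfrac13(1-\rtf^{2})+O((1-\rtf^{2})^{2})$, whence $\hat\Om_{*}=\Om\bigl(1-\tfrac13(1-\rtf^{2})\bigr)(1+o(1))$ and, by \eqref{hatomega} and \eqref{TFann0}, $\om_{*}=[\Om]-\hat\Om_{*}=\tfrac{\Om}{3}(1-\rtf^{2})(1+o(1))=\tfrac{2}{3\sqrt{\pi}\,\eps}\bigl(1+O(|\log\eps|^{-1/2})\bigr)$. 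Since the curvature is bounded below by a positive constant, the integer minimiser $\om_{\rm opt}$ differs from $\om_{*}$ by $O(1)$, which is absorbed in the error term; this gives \eqref{optphase}. The only real work here is quantifying these approximations, which is routine once the minimisation of $\hat{\mathcal E}^{\rm GP}_\om$ is under control.

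For the lower bound, use Theorem \ref{theo:vortex}: $\gpm$ has no zero on $\tilde{\mathcal A}$, so its winding number $N$ around $\dd\B$ is well defined, and a crude energy bound forces $\om:=[\Om]-N\in[0,C\eps^{-1}]$ (otherwise $\hat E^{\rm GP}_\om$ would already exceed $\hgpe$); set $\hat\Om:=N$. On $\{r>0\}$ write $\gpm=g_\om\,e^{i\hat\Om\vartheta}\,v$. Multiplying the variational equation for $g_\om$ by $g_\om|v|^{2}$, integrating, integrating by parts the cross term, and using $\|g_\om v\|_{2}=\|g_\om\|_{2}=1$, one obtains the exact identity
\begin{equation*}
\gpe=\gpf[\gpm]=\hat E^{\rm GP}_\om+\int_{\B}\diff\rv\,\Bigl\{g_\om^{2}|\nabla v|^{2}-2g_\om^{2}\,\vec B_\om\!\cdot\!\vec{j}(v)+\eps^{-2}g_\om^{4}(|v|^{2}-1)^{2}\Bigr\}=:\hat E^{\rm GP}_\om+\mathcal E_\om[v],
\end{equation*}
with $\vec B_\om:=(\Om r-\hat\Om r^{-1})\,\vec e_{\vartheta}$ the reduced vector potential and $\vec{j}(v):=\mathrm{Im}(\bar v\nabla v)$. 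Since $\hat E^{\rm GP}_\om\ge\hgpe$, the claim \eqref{enas1} follows once $\mathcal E_\om[v]\ge-\OO(|\log\eps|^{3/2}\eps^{-1/2}(\log|\log\eps|)^{-2})$; note also that $\mathcal E_\om[v]=\gpe-\hat E^{\rm GP}_\om\le0$, which through Cauchy--Schwarz furnishes an a priori bound on the two non-negative terms. Split $\B$ into a hole $\{r<R_{<}\}$, with $R_{<}$ slightly below $\rtf$ so that $\gpm$ and $g_\om$ are exponentially small there (a priori estimates of \cite{CRY1}), an intermediate shell $\{R_{<}\le r\le\rtf+\eps|\log\eps|^{-1}\}$, and the bulk $\tilde{\mathcal A}$. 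On the hole, $g_\om^{2}|\nabla v|^{2}-2g_\om^{2}\vec B_\om\cdot\vec{j}(v)\ge-|\vec B_\om|^{2}|\gpm|^{2}$ together with the exponential smallness gives a negligible contribution. On the shell, where $g_\om^{2}\lesssim\eps^{-1}|\log\eps|^{-3}$ and $|\vec B_\om|\lesssim\eps^{-1}$ but Theorem \ref{theo:vortex} is unavailable, a Ginzburg--Landau vortex-ball construction bounds the interaction term against the non-negative kinetic and quartic terms, the vortex-core scale being what produces the $(\log|\log\eps|)^{-2}$ in the final error.

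The term on $\tilde{\mathcal A}$ is the heart of the matter. There Theorem \ref{theo:vortex} together with $\hat\Om=N$ forces $v=\rho_v e^{i\phi}$ with $\phi$ \emph{single-valued}, and the a priori estimates underlying Theorem \ref{theo:vortex} give $\rho_v^{2}=1+o(1)$ on $\tilde{\mathcal A}$ (cf.\ \eqref{densitydiff}). Since $\vec B_\om$ is purely azimuthal, $\vec B_\om\!\cdot\!\vec{j}(v)=(\Om r-\hat\Om r^{-1})r^{-1}\rho_v^{2}\partial_{\vartheta}\phi$; integrating first in $\vartheta$ and using $\int_{0}^{2\pi}\partial_{\vartheta}\phi\,\diff\vartheta=0$ cancels the $O(\eps^{-1})$-sized leading part and leaves only the residue $-2\int_{\tilde{\mathcal A}}g_\om^{2}(\Om r-\hat\Om r^{-1})r^{-1}(\rho_v^{2}-1)\partial_{\vartheta}\phi\,\diff\rv$. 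This is estimated by Cauchy--Schwarz, using the refined pointwise density estimate, $\|\gpm\|_{\infty}^{2}\lesssim\eps^{-1}|\log\eps|^{-1}$, and the control of $\int_{\tilde{\mathcal A}}g_\om^{2}|\nabla v|^{2}$ (hence of $\partial_{\vartheta}\phi$ in $L^{2}$) coming from $\mathcal E_\om[v]\le0$; a bootstrap then closes the estimate. Combining the three regions yields $\mathcal E_\om[v]\ge-\OO(|\log\eps|^{3/2}\eps^{-1/2}(\log|\log\eps|)^{-2})$, hence $\gpe\ge\hat E^{\rm GP}_\om-\OO(\cdots)\ge\hgpe-\OO(\cdots)$, which with the upper bound proves \eqref{enas1}; as a byproduct $\hat E^{\rm GP}_\om=\gpe-\mathcal E_\om[v]\le\hgpe+\OO(\cdots)$ forces the winding $N$ to be a near-optimal phase.

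The main obstacle is the control of the interaction term where $|\vec B_\om|$ is of order $\eps^{-1}$: a direct Cauchy--Schwarz against the kinetic term fails by several powers of $\eps$, and on $\tilde{\mathcal A}$ the dangerous contribution is removed only by exploiting the exact vanishing of the angular average of $\partial_{\vartheta}\phi$, which itself rests on the absence of zeros (Theorem \ref{theo:vortex}) and the choice $\hat\Om=N$; the residue is genuinely small but requires a delicate bootstrap with sharp a priori estimates. On the intermediate shell, where Theorem \ref{theo:vortex} gives no information, one must bring in the Ginzburg--Landau vortex-ball machinery to show that vortices there would be too costly, and this—together with the sharp bookkeeping needed to keep the error as small as stated—is the most technical part of the argument.
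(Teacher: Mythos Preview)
Your argument is circular. In the paper the logical order is: Proposition~\ref{reduction} (energy decoupling on the annulus with the \emph{fixed} optimal phase $\omega_0$, not a phase read off from $\gpm$) reduces the lower bound to showing $\E_{\A,\omega_0}[u]\ge -C|\log\eps|^{3/2}\eps^{-1/2}(\log|\log\eps|)^{-1}$; this is Proposition~\ref{pro:refinedbound}, proved by the vortex-ball/jacobian/cell-decomposition machinery of Section~\ref{Sect est reduced energy}. Theorem~\ref{theo:energy} then follows in two lines. Theorem~\ref{theo:vortex} is \emph{also} a corollary of Proposition~\ref{pro:refinedbound} (via the bound \eqref{borne Fg final} on $\F[u]$ plus a gradient estimate), and Theorem~\ref{theo:degree} likewise. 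So when you invoke Theorem~\ref{theo:vortex} as input, you are assuming precisely the hard estimate that already gives Theorem~\ref{theo:energy} directly.

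Concretely, two steps of your scheme fail without that machinery. First, you set $\hat\Omega=N={\rm deg}\{\gpm,\partial\B\}$ and claim ``a crude energy bound forces $\omega=[\Omega]-N\in[0,C\eps^{-1}]$.'' But the only a priori lower bound on $\mathcal E_\omega[v]$ you have is $\mathcal E_\omega[v]\ge-\int B_\omega^2|\gpm|^2$, and both sides of $\hat E^{\rm GP}_\omega-\hat E^{\rm GP}\le\int B_\omega^2|\gpm|^2$ grow like $\omega^2$, so this gives no upper bound on $\omega$; control of $N$ is exactly Theorem~\ref{theo:degree}. Second, your bulk trick (single-valued $\phi$, vanishing of $\int_0^{2\pi}\partial_\vartheta\phi$) is valid only once $v$ has winding zero, i.e.\ once you already know $N$, and only on $\tilde{\mathcal A}$; the remaining shell $\{R_<\le r\le \rtf+\eps|\log\eps|^{-1}\}$ is precisely where you have \emph{no} information on zeros of $v$, and ``a Ginzburg--Landau vortex-ball construction'' there is not a shortcut---that is the content of Section~\ref{Sect est reduced energy}. (In the paper the shell is in fact handled not by vortex balls but by the pointwise bound \eqref{Fbound2}, which makes $|F|\ll g^2$ there; the actual source of the error term in \eqref{enas1} is the boundary integral on $\partial\B$, controlled via Lemma~\ref{lem:boundary}.)

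Your treatment of $\omega_{\rm opt}$ via a continuous Feynman--Hellmann/convexity argument is reasonable in spirit; the paper takes a shorter route (Propositions~\ref{optimal phase pro} and~\ref{optimal phase omega_{opt} pro}), sandwiching $\hat E^{\rm GP}_\omega$ between $\hat E^{\rm TF}_\omega$ and $\hat E^{\rm TF}_\omega+\OO(\eps^{-2}|\log\eps|^{-1})$ and reading off the minimiser from the explicit formula \eqref{hTFe}. Either gives \eqref{optphase}.
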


Because $\Psi^{\rm GP}$ does not have any zeros in the annulus $\tilde{\mathcal A}$ the winding number (degree) of   $\Psi^{\rm GP}$ around the unit disc is well defined.
Our third main result is that the giant vortex phase $[\Omega]-\omega_{\rm opt}$ is, up to possible small errors, equal to this winding number. 

\begin{teo}[\textbf{Degree of a minimizer}] \label{theo:degree}
\mbox{}	\\
For $\Omega_0> 2(3\pi)^{-1}$  and $\eps>0$ small enough, the winding number of $\Psi^{\rm GP}$  is
\beq
	\label{gpm degree}
	 {\rm deg}\; \{\Psi^{\rm GP}, \partial \mathcal B\}=[\Omega]-\omega_{\rm opt}(1+o(1))\eeq
 with $\omega_{\rm opt}$ as in \eqref{optphase}.\end{teo}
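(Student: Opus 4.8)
The plan is to derive Theorem~\ref{theo:degree} from Theorems~\ref{theo:vortex} and~\ref{theo:energy} and the precise value~\eqref{optphase} of $\oopt$. Since $\gpm$ has no zero on $\tilde{\mathcal A}$, the map $r\mapsto\dg\{\gpm,\{|\rv|=r\}\}$ is integer valued and continuous, hence constant, for $r$ with $\{|\rv|=r\}\subset\tilde{\mathcal A}$; its common value is $D:=\dg\{\gpm,\partial\mathcal B\}$. Write $\hat\Omega_{\rm opt}:=[\Omega]-\oopt\in\Z$ and, on $\tilde{\mathcal A}$, set
\[
\gpm(\rv)=g_{\oopt}(r)\,e^{i\hat\Omega_{\rm opt}\vartheta}\,v(\rv),
\]
which is meaningful because $g_{\oopt}>0$ there. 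Then $\delta:=\dg\{v,\{|\rv|=r\}\}=D-\hat\Omega_{\rm opt}$ is again independent of $r$ on $\tilde{\mathcal A}$, and \eqref{gpm degree} is exactly the statement $\delta=o(\oopt)$. By Theorem~\ref{theo:vortex} and Remark~1.3 we also have $|v|^{2}=|\gpm|^{2}/g_{\oopt}^{2}=1+o(1)$ uniformly on $\tilde{\mathcal A}$.

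The key input is a quantitative version of the fact that in the bulk $\gpm$ is close to a pure giant vortex. Inserting the decomposition into \eqref{GPf}, using the variational equation for $g_{\oopt}$ and integrating by parts, one gets the standard energy splitting on $\tilde{\mathcal A}$, whose two correction terms $\int g_{\oopt}^{2}|(\nabla-i\vec B_{\oopt})v|^{2}$ and $\eps^{-2}\int g_{\oopt}^{4}(|v|^{2}-1)^{2}$ are non-negative, with effective vector potential $\vec B_{\oopt}:=(\Omega r-\hat\Omega_{\rm opt}r^{-1})\vec{e}_\vartheta$. Comparing with the upper bound $\gpe\le\hgpe$ from~\eqref{upperhat1} and the matching lower bound~\eqref{enas1}, and treating the boundary terms on $\partial\tilde{\mathcal A}$ and the net contribution of the region outside $\tilde{\mathcal A}$ as in the energy estimates underlying Theorems~\ref{theo:vortex}--\ref{theo:energy}, one obtains
\[
\int_{\tilde{\mathcal A}}g_{\oopt}^{2}\,\big|(\nabla-i\vec B_{\oopt})v\big|^{2}\;\le\;C\,\frac{|\log\eps|^{3/2}}{\eps^{1/2}(\log|\log\eps|)^{2}}\;=:\;E .
\]
Carrying out this step rigorously is the hard part, essentially the content of the lower bound in Theorem~\ref{theo:energy}: one must control the boundary terms on $\partial\tilde{\mathcal A}$ (by averaging the choice of the bulk inner radius) and the exponentially small energies deep in the hole.

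The last step exploits that $\oopt$ is located so that the zero $r_*:=(\hat\Omega_{\rm opt}/\Omega)^{1/2}$ of $\vec B_{\oopt}$ lies well inside $\tilde{\mathcal A}$: by~\eqref{TFann0} and~\eqref{optphase},
\[
r_*^{2}-\rtf^{2}=\frac{4}{3\sqrt{\pi}\,\Omega_0}\,\eps|\log\eps|\,(1+o(1)),
\]
which far exceeds $\eps|\log\eps|^{-1}$, so that a radial band of width $W:=\eps^{2}$ centred at $r_*$ is contained in $\tilde{\mathcal A}$, on which $|\Omega r^{2}-\hat\Omega_{\rm opt}|\le C\Omega W=o(1)$ and $g_{\oopt}^{2}\ge c\,\eps^{-1}|\log\eps|^{-1}$. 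On a circle $\{|\rv|=r\}$ in this band,
\[
\delta=\frac1{2\pi}\int_{0}^{2\pi}\frac{(iv,\partial_{\vartheta}v)}{|v|^{2}}\,d\vartheta
=\big(\Omega r^{2}-\hat\Omega_{\rm opt}\big)+\frac1{2\pi}\int_{0}^{2\pi}\frac{\big(iv,(\partial_{\vartheta}-i(\Omega r^{2}-\hat\Omega_{\rm opt}))v\big)}{|v|^{2}}\,d\vartheta ,
\]
with the first term $o(1)$ as noted. Bounding the last integral by Cauchy--Schwarz and $|v|^{2}=1+o(1)$ gives an estimate by $C\big(\int_{\{|\rv|=r\}}|(\nabla-i\vec B_{\oopt})v|^{2}\,d\sigma\big)^{1/2}$; averaging this over $r$ in the band and dividing by the lower bound $c\,\eps^{-1}|\log\eps|^{-1}$ for $g_{\oopt}^{2}$ produces a radius $r^{\dagger}$ in the band at which it is $\le C\big(\eps|\log\eps|\,E/W\big)^{1/2}=o(\eps^{-1})$. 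Since $\oopt=\tfrac{2}{3\sqrt{\pi}\eps}(1+o(1))$, it follows that $|\delta|=o(\eps^{-1})=o(\oopt)$, which is~\eqref{gpm degree}. These last ingredients—the location of $r_*$ and the Cauchy--Schwarz extraction of $r^{\dagger}$—are elementary given Theorem~\ref{theo:vortex}, \eqref{TFann0} and~\eqref{optphase}, so the whole difficulty sits in the bulk energy bound above.
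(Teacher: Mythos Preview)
Your energy splitting is mis-stated, and the error is not cosmetic. The Lassoued--Mironescu decomposition \eqref{split}--\eqref{eomega0} gives
\[
\gpf[\Psi]=\hgpe_{\omega}+\int g_{\omega}^{2}|\nabla v|^{2}-2\int g_{\omega}^{2}\vec B_{\omega}\cdot(iv,\nabla v)+\eps^{-2}\int g_{\omega}^{4}(1-|v|^{2})^{2},
\]
not the sum of the two non-negative terms you write. The difference between $g^{2}|\nabla v|^{2}-2g^{2}\vec B\cdot(iv,\nabla v)$ and $g^{2}|(\nabla-i\vec B)v|^{2}$ is $-g^{2}B^{2}|v|^{2}$, and since $|B|\sim\eps^{-1}$ on the bulk and $\int g^{2}|v|^{2}\simeq 1$, this contributes $\OO(\eps^{-2})$. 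Hence your claimed bound $\int_{\tilde{\mathcal A}}g_{\oopt}^{2}|(\nabla-i\vec B_{\oopt})v|^{2}\le E$ with $E\sim\eps^{-1/2}$ is simply false. What \emph{is} available (Proposition~\ref{pro:refinedbound}) is $\int_{\A}g^{2}|\nabla u|^{2}\le\F[u]\le C\eps^{-1/2}|\log\eps|^{5/2}(\log|\log\eps|)^{-2}$, for $g=g_{\A,\omega_0}$ and $u=u_{\omega_0}$, i.e.\ the annulus quantities with the phase $\omega_0$, not $\oopt$ on the full disc; the paper explains at the end of Section~\ref{sec:proof strategy} why it avoids the latter.

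The paper's own proof proceeds quite differently and more directly: it computes the degree on $\partial\B$ itself, writing $\dg\{\gpm,\partial\B\}=[\Omega]-\omega_0+\dg\{u,\partial\B\}$ and bounding the last term via $\int_{\partial\B}|\partial_{\tau}u|$. The key input is the Pohozaev-type boundary estimate (Lemma~\ref{lem:boundary}), already needed for the energy lower bound, which together with Proposition~\ref{pro:refinedbound} gives $\int_{\partial\B}|\partial_{\tau}u|^{2}\le C\eps^{-1}|\log\eps|^{3}(\log|\log\eps|)^{-2}$; Cauchy--Schwarz then yields $|\dg\{u,\partial\B\}|=\OO(\eps^{-1/2}|\log\eps|^{3/2})=o(\eps^{-1})$, and one concludes since $\omega_0$ and $\oopt$ agree to the stated precision. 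Your interior-circle averaging idea can in fact be salvaged: restrict the correct bound on $\int g^{2}|\nabla u|^{2}$ to a thin band around $r_*$, where $|\vec B|\le C\Omega W=o(1)$ so that $|\nabla u|$ controls $|(\nabla-i\vec B)u|$ there up to negligible corrections, and then average. But as written, the central inequality you rely on does not hold, and the route through $g_{\oopt}$ on $\tilde{\mathcal A}$ introduces boundary terms ($g_{\oopt}$ does not satisfy Neumann conditions on $\partial\tilde{\mathcal A}$) that you have not controlled.
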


The following remarks are intended to elucidate the phase difference $\omega_{\rm opt}$ and to justify the claim that the remainder in \eqref{enas1} is, indeed, a small correction to $\hat E^{\rm GP}$.

\begin{rem} ({\it Giant vortex phase}).
\label{rem: giant vortex phase}
\mbox{}	\\
For a fixed  trial function $f$ the energy \eqref{hatfunct} is minimal for 
\beq\label{hatopt}
\hat\Omega=\Omega \left(\int \diff \rv \:  r^{-2}f^2\right)^{-1}+ \OO(1).
\eeq
If $f^2$ is close to $\rho^{\rm TF}$ and thus essentially concentrated in an annulus of width $\sim(\eps\Omega)= \OO(\eps|\log\eps|)$, it follows that
\beq\label{omegadiff} 0<(\Omega-\hat\Omega)= \OO(\eps^{-1}).\eeq
The leading term $2/(3\sqrt\pi)\eps^{-1}$ in \eqref{optphase} can, indeed, be computed  from \eqref{hatopt} with $f^2=\rho^{\rm TF}$, or more simply, for $f^2(r)$ increasing linearly from 0 at $r=R_{\rm h}$ to its maximal value at $r=1$.
It is worth noting that if $f^2(r)$ were constant in the interval $[R_{\rm h},1]$, the leading term for $\omega$ according to  \eqref{hatopt} would be $2/(\sqrt\pi\eps)$ and the phase $\hat\Omega$ therefore approximately equal to  $(\Omega/\pi)\times\text{area of the hole}$.  The optimal $\hat\Omega$ is closer to $\Omega$ because of the inhomogeneity of the density in the annulus.
\end{rem}

\begin{rem} ({\it Giant vortex density functional}).
\label{rem: giant vortex density}
\mbox{}	\\
The functional \eqref{hatfunct} can also be written
\begin{equation}\label{hatfunct2}
\hat{\mathcal E}^{\rm GP}_{\omega}[f] = \int_{\mathcal B}\diff \rv \: \left\{| \nabla f|^2-\Omega^2 r^{2}f^2+B_\omega^2 f^2+ \eps^{-2} f^4 \right\}
\end{equation}
with
\beq\label{newB0}
B_{\omega}(r) : =  \Omega r - {\hat \Omega }/{r}.
\eeq
If $\omega= \OO(\eps^{-1})$, then $B_{\omega}(r)= \OO(\eps^{-1})$ close to the boundary of the disc. Neglecting the term $B_\omega^2 f^2$ as well as the gradient term in \eqref{hatfunct2} in comparison with the centrifugal and interaction terms leads to the TF functional evaluated at $f^2$.  This makes plausible the assertion above that $g_\omega^2$ is close to $\rho^{\rm TF}$ if  
$\omega= \OO(\eps^{-1})$ and this will indeed be proved in Section \ref{sec:estimates}. 
\end{rem}

\begin{rem} ({\it Composition of the ground state energy}).
\mbox{}	\\
If one drops the gradient term in  \eqref{hatfunct2} but retains the term with $B_\omega^2$, one obtains a modified (and $\omega$-dependent) TF functional
\beq\label{hatTF}
\hat{\mathcal E}^{\rm TF}_{\omega}[\rho] : = \int_{\mathcal B}\diff \rv \: \left\{-\Omega^2 r^{2}\rho+B_\omega^2 \rho+ \eps^{-2} \rho^2 \right\}.
\eeq
Its minimizer and minimizing energy $\hat E_\omega^{\rm TF}$ can be  computed explicitly (see the Appendix) and one sees that the energy is minimal for $\omega=2/(3\sqrt\pi\eps)(1+o(1))$. Denoting the minimal value by $\hat E^{\rm TF}$, we have
\beq\label{hatTFenergy}
\hat E^{\rm TF}=E^{\rm TF}+ \OO(\eps^{-2}),
\eeq
while
\beq E^{\rm TF}=-\Omega^2-4/(3\sqrt\pi)\eps^{-1}\Omega=\OO(\eps^{-4}|\log\eps|^{-2})+\OO(\eps^{-3}|\log\eps|^{-1}).
\eeq
The term $\OO(\eps^{-2})$ in \eqref{hatTFenergy} is the angular kinetic energy corresponding to the third term in \eqref{hatfunct2}.
The difference between $\hat E^{\rm GP}$ and $\hat E^{\rm TF}$ is the radial kinetic energy of the order $(\eps\Omega)^{2}|\log\eps|=\OO(\eps^{-2}|\log\eps|^{-1})$. The remainder in 
\eqref{enas1} is thus much smaller than all terms in $\hat E^{\rm GP}$. It is also smaller than the energy a vortex in $\tilde{\mathcal A}$ would have, which is $|\log\eps|\times\,{\rm density}=\OO(\eps^{-1})$.
\end{rem}

\subsection{Proof Strategy}
\label{sec:proof strategy}

We now explain the general strategy for the proof of the main results. 

The first step is an energy splitting as in \cite{LM}. We write a generic, normalized wave function as $\Psi(\rv)= g_\omega(r)w(\rv)$ with a complex valued function $w$. The identity
\beq\int_{\mathcal B} \diff \rv \: |\nabla\Psi|^2=\int_{\mathcal B} \diff \rv \: (-\Delta g_\omega)g_\omega|w|^2+ \int_{\mathcal B} \diff \rv  \: g_\omega^2|\nabla w|^2\eeq
holds by partial integration and the boundary condition on $g_\omega$. Using the variational equation for $g_\omega$ and writing $w(\rv)=\exp(i\hat\Omega\vartheta) v(\rv)$ and $\hat\Omega$ as in \eqref{hatomega}, we obtain a splitting of the energy functional:
\beq \label{split}\mathcal E^{\rm GP}[\Psi]=\hat E^{\rm GP}_\omega+\mathcal E_\omega[v]\eeq
with
\beq\label{eomega0} \mathcal E_\omega[v] :=\int_{\mathcal B} \diff \rv \lf\{ g_\omega^2|\nabla v|^2-2g_\omega^2\vec B_\omega\cdot (iv,\nabla v)+\frac {g_\omega^4}{\eps^2}(1-|v|^2)^2 \ri \}
\eeq
where
\beq
	\label{vec B}
	\vec B_\omega(r) : =B_\omega(r) \vec e_\vartheta = \lf[ \Omega r - \lf( [\Omega] - \omega \ri) r^{-1} \ri]  \vec e_\vartheta
\eeq 
and we have used the notation 
\beq (iv,\nabla v) : = \half i( v\nabla v^* - v^* \nabla v).
\eeq
Since the divergence of the two-dimensional vector field $g_\omega^2\vec B_\omega$ vanishes we can write 
\beq 2g_\omega^2\vec B_\omega=\nabla^{\perp}F_\omega\eeq with a scalar potential $F_\omega$ and the dual gradient $\nabla^{\perp}=(-\partial_y,\partial_x)$. Stokes theorem then gives
\beq\label{eomega}\mathcal E_\omega[v] = \int_{\mathcal B} \diff \rv \lf\{ g_\omega^2|\nabla v|^2
+F_\omega{\rm curl}(iv,\nabla v) \ri\} -\int_{\partial \mathcal B} \diff \sigma \: F_\omega(iv,\partial_\tau v)
+\int_{\B} \diff \rv \:\frac {g_\omega^4}{\eps^2}(1-|v|^2)^2 .
\eeq
Here and in the rest of the paper \lq curl $(iv,\nabla v)$' stands for the 3-component of $\nabla\wedge (iv,\nabla v)$, $\partial_\tau$ is the tangential derivative and $ \diff \sigma $  the Lebesgue measure on the circle, i.e, given a ball $ \B_R $ of radius $ R $ centered at the origin, $ \partial_{\tau} : = R^{-1} \partial_{\vartheta} $ and $ \diff \sigma : = R \diff \vartheta $ on $ \partial \B_R $.

The giant vortex trial function $g_\omega \exp(i\hat\Omega\vartheta)$ gives an upper bound to the energy and hence we see from \eqref{split} that
\beq\label{simplebound1}
\mathcal E_\omega[u]\leq 0
\eeq
where $u$ is the remaining factor of  $\Psi^{\rm GP}$ after the giant vortex trial function has been extracted, e.g.,
$\Psi^{\rm GP}=g_\omega\exp(i\hat\Omega\vartheta)\, u$. The proof of Theorem \ref{theo:vortex} is based on a lower bound on $\mathcal E_\omega[u]$ that, for $\Omega_0$ large enough, would be positive and hence contradict \eqref{simplebound1} if $u$ had zeros in $\tilde{\mathcal A}$. The main steps are as follows:
\begin{itemize}
\item {\em Concentration of the density:} There is an annulus $\mathcal A$, slightly larger than $\tilde{\mathcal A}$, so that the supremum of $|\Psi^{\rm GP}|^2$ over $\mathcal B\setminus \mathcal A$ is\footnote{We use the symbol $ \OO(\eps^{\infty}) $ to denote a remainder which is smaller than any positive power of $ \eps $, e.g., exponentially small.} $\OO(\eps^\infty)$. The same holds for $g_\omega^2$ provided $\omega\leq C\eps^{-1}$. This implies that for $v=u$ the integrations in \eqref{eomega0} can be restricted to $\mathcal A$ up to errors that are $\OO(\eps^\infty)$.
\item {\em Optimization of the phase:} Choice of $\omega=\OO(\eps^{-1})$ so that $F_\omega$, chosen to vanish on the inner boundary of $\mathcal A$, is small on $\partial \B$.
\item {\em Boundary estimate:} Using the variational equations for $\Psi^{\rm GP}$ and $g_\omega$ and the smallness of $F_\omega$ on the boundary it is proved that the boundary term in \eqref{eomega} is small. This step uses the Neumann boundary conditions on $\dd \B$ in a strong way and its extension to the case of Dirichlet boundary conditions is a nontrivial open problem\footnote{This problem has recently been solved \cite{CPRY}.}.
\item {\em Vortex balls:} Isolation of the possible zeros of $u$ in \lq vortex balls' and an estimate of the first term in \eqref{eomega} from below in terms of the infimum of $g_\omega^2$ on the vortex balls and the winding numbers of $u$ around the centers. For the construction of vortex balls a suitable upper bound on the last term in \eqref{eomega} is essential.
\item {\em Jacobian estimate:} Approximation of the integral of $F_\omega$ against the vorticity ${\rm curl}(iu,\nabla u)$ by a sum of  the values of $F_\omega$ at the centers of the vortex balls multiplied by the corresponding winding numbers.
\item{\em Gradient estimate:} Estimates on $\nabla u$ leading to a lower bound on the last term in \eqref{eomega} that, for $\Omega_0$ large enough,  excludes zeros of $u$.
\end{itemize}
A key point to notice is that the vortex ball construction in combination with the jacobian estimate leads to a {\em cost function} defined
as
\beq
H_\omega(r) :=\half g_\omega^2(r)|\log\eps|+F_\omega(r).
\eeq
The first term is, to leading approximation, the kinetic energy of a vortex of unit strength at radius $r$, the second term is the gain due to the potential energy of the vortex in the field $ 2 g^2_\omega\vec B_\omega$. If $H_\omega$ is negative at some point, the energy may be lowered by inserting a vortex at this point. A positive value of $H_\omega$ means that the cost of the kinetic energy outweighs the gain. Note that $H_\omega$ depends on $\Omega_0$ through $F_\omega$; in fact it is not difficult to see that\footnote{Here, as in the rest of the paper, $C$ stands for a finite, positive constant whose value may vary from one formula to another.}
\beq
|F_\omega(r)|\leq\frac C{\Omega_0}g_\omega^2(r)|\log\eps|.
\eeq 
Hence, if $\Omega_0$ is large enough, the cost function is positive everywhere on the annulus and  vortices are energetically unfavorable. An upper bound, $ 2 (3 \pi)^{-1}$, on the critical value for $\Omega_0$ is computed in the Appendix. This upper bound is in fact optimal, as demonstrated in \cite{R3}. The proof requires additional ingredients.

The construction of vortex balls is a technique introduced independently by Jerrard \cite{J} and Sandier \cite{Sa} in Ginzburg-Landau (GL) theory, whereas the jacobian estimate originates from the work \cite{JS}. Both techniques are described in details in the monograph \cite{SS}.  This method has been applied in \cite{AAB} and \cite{IM,IM2}  to functionals that at first sight look exactly like \eqref{eomega0}. There is, however,  an essential difference: While in \cite{AAB}  and \cite{IM,IM2} the size of the relevant integral domain  is fixed, the weight function $g_\omega^2$ is in our case concentrated in an annulus whose width tends to zero as $\eps\to 0$. This shrinking of the width of the annulus and the large gradient of  $g_\omega^2$ are the main reasons for the complications encountered in the proof of our main theorems.

To appreciate the difficulty the following consideration is helpful. As already noted, the concentration of $g_\omega^2$ on the
annulus $\tilde{\mathcal A}$  of width 
$\OO(\eps|\log\eps|)$ implies that $B_\omega=\OO(\eps^{-1})$ on this annulus. From \eqref{simplebound1} and the Cauchy-Schwarz inequality applied to the second term in \eqref{eomega0}, one obtains the bound
\beq\label{estimate0}
\int_{\tilde{\mathcal A}} \diff \rv \: \frac {g_\omega^4}{\eps^2}(1-|u|^2)^2\leq \frac C{\eps^2}.
\eeq
Now, although $g_\omega^4\sim (\eps\Omega)^2\sim \eps^{-2}|\log\eps|^{-2}$ on $\tilde{\mathcal A}$, the estimate \eqref{estimate0} is not sufficient for the construction of vortex balls on $\tilde{\mathcal A}$. In fact, \eqref{estimate0} is compatible with the vanishing of $u$ on a ball of radius $\OO(\eps|\log\eps|)$, i.e., comparable to the width of $\tilde{\mathcal A}$, while for a construction of vortex balls one must be able to isolate the possible zeros of $u$ in balls of much smaller radius.  

The solution of this problem, elaborated in Section \ref{Sect est reduced energy}, involves a division of the annulus into cells of area $\sim (\eps\Omega)^2$ with the upshot that a local version of \eqref{estimate0} holds in every cell. The local version,
\beq\label{estimate0cell}
\int_{\rm Cell} \diff \rv \: \frac {g_\omega^4}{\eps^2}(1-|u|^2)^2\leq \frac C{\eps^2}\times\text{(number of of cells)}^{-1}\sim \frac{|\log\eps|}\eps
\eeq
means that, in the cell,  $u$ can only vanish in a region of area $\OO(\eps^3|\log\eps|^3)$
that is much smaller than the area of the cell, i.e., $\sim \eps^2|\log\eps|^2$. Hence the vortex ball technique applies in the cells where \eqref{estimate0cell}, or a sufficiently close approximation to it,  holds. The gist of the proof of a global lower bound, that eventually leads to Theorems \ref{theo:vortex} and \ref{theo:energy},  is a stepwise increase in the number of  cells where an estimate close to \eqref{estimate0cell} is valid until all potential zeros in the annulus are included in vortex balls. The proof of Theorem \ref{theo:degree} involves in addition an estimate on the winding number of $u$.

In the sketch of the proof strategy above we have for simplicity deviated slightly from the actual procedure that will be followed in the sequel. Namely, for technical reasons, we find it  necessary to restrict the considerations to a problem on the annulus {\it before} the splitting of the GP energy functional as in \eqref{split}. This means that, instead of the function $g_{\omega_{\rm opt}}$ and the optimal phase $\omega_{\rm opt}$ defined above, we shall work with corresponding quantities for a functional like \eqref{hatfunct2} but with the integration restricted to $\mathcal A$. The main reason for this complication is  lack of precise information about the behavior of $\Psi^{\rm GP}$ in a neighborhood of the origin. In fact,  the distribution of the  `giant vorticity' of $\Psi^{\rm GP}$  in the  hole is unknown. In particular it is not known whether $\Psi^{\rm GP}$ vanishes at the origin like $g_\omega$, which has a zero there of order $\hat\Omega$. 
 
\subsection{Heuristic Considerations}\label{scaling}
A heuristic argument for the giant vortex transition, based on an analogy with an electrostatic problem, has been given in \cite{CY} and goes as follows: Writing a wave function as
$\Psi(\rv)=|\Psi(\rv)|\exp(i\varphi(\rv))$ with a real phase $\varphi$ the kinetic energy term in \eqref{GPf} is
\beq
\int_{\mathcal B} \diff \rv \left\{|\nabla|\Psi||^2+|\Psi|^2|\nabla\varphi-\vec A|^2\right\}
\eeq
with $\vec A(\rv)=\Omega r\vec e_{\vartheta}$. We focus on the second term and consider the situation where the phase $\varphi$ contains,  besides the giant vortex, also a possible contribution from a vortex of unit degree at a point $\rv_0$ in the annulus of thickness $\sim \left(\Omega\eps\right) ^{-1}$. The phase can thus be written
\beq
\varphi(\rv) :=\hat\Omega\vartheta  +\arg(\rv-\rv_0).
\eeq
where $\arg(x,y)=\arctan(y/x)$. The modulus $|\Psi|$ vanishes at $\rv_0$ and is small in a disc (`vortex core') of radius $\sim\sqrt{\eps/\Omega}\sim \eps^{3/2}|\log \ep |^{1/2}$ around $\rv_0$. The phase $\arg(\rv-\rv_0)$ can be regarded as the imaginary part of the complex logarithm if we identify $\mathbb R^2$ with $\mathbb C$. The corresponding real part, i.e., conjugate harmonic function, is $\log|\rv-\rv_0|$ which is the two-dimensional electrostatic potential of a unit point charge localized in $\rv_0$. Likewise, the conjugate harmonic function of $\hat\Omega\vartheta$ is $\hat\Omega\log r$, i.e., the electrostatic potential  of a charge $\hat\Omega$ placed at the origin. By the Cauchy-Riemann equations for the complex logarithm we can write 
\beq\label{CR} \nabla \varphi=\partial_r\varphi \: \vec e_r+r^{-1}\partial_{\vartheta}\varphi \: \vec e_{\vartheta}=
\partial_r\chi \: \vec e_{\vartheta}-r^{-1}\partial_{\vartheta}\chi \: \vec e_{r}
\eeq
with
\beq\label{conjugate} \chi(\rv) :=\hat\Omega \log r+\log|\rv-\rv_0|.\eeq
After a rotation by $\pi/2$, i.e., replacement of $\vec e_{\vartheta}$ by $\vec e_r$ at every point, the
vector potential also has an electrostatic interpretation: $\Omega r\vec e_r$ is the electrostatic field of a uniform charge distribution with charge density $\Omega/\pi$. Employing \eqref{CR} we  now have
\beq\label{dual}
|\nabla\varphi-\vec A|^2=|\nabla\chi-\Omega r\vec e_r|^2
\eeq 
and $\vec E(\rv):=\nabla\chi-\Omega r\vec e_r$ is the electric field generated by the point charges and the uniform background.  

We can now apply {\em Newton's theorem} to argue that the effect of the giant vortex, i.e., the first term in \eqref{conjugate}, is to neutralize in the annulus the field generated by the uniform charge distribution in the `hole', i.e., the second term in \eqref{dual}, provided $\hat\Omega$ is chosen to match the area of the hole times the uniform charge density.  By the same argument the point charge at $\rv_0$ neutralizes, outside a disc of radius $\sim 1/\sqrt{\Omega}$, the effect of one unit of the continuous charge.

The total charge of the continuous distribution in the annulus is $Q\sim \Omega (\Omega\eps)^{-1}\sim\eps^{-1}$.  Inserting the vortex reduces the effective charge by one unit so the corresponding energy gain is  $\sim \eps^{-1}$. On the other hand, the energy associated with the electrostatic field from the point charge outside the vortex core (that is cut off by the modulus of the wave function) is $\sim|\Psi|^2|\log\eps|\sim(\Omega\eps)|\log\eps|$.  The condition for the cost outweighing the gain is $ \Omega \eps |\log\eps| \gtrsim \eps^{-1} $, i.e.,
\beq
 \Omega\gtrsim \frac 1{\eps^2|\log\eps|},
\eeq
marking the transition to the giant vortex phase.

It is also instructive to consider a version of the functional \eqref{eomega0}  where the variables have been scaled so that the width of the annulus becomes $\OO(1)$. We define
\beq\ell : =(\eps\Omega)^{-1}=\Omega_0^{-1}\eps|\log\eps|,\quad \vec s : =\ell^{-1}\vec r, \quad \check g(\vec s) : =g_\omega(\ell \vec s), \quad \check v(\vec s) : =v(\ell\vec s), \quad \vec  {\check B}(\vec s) : =\ell \vec B_\omega(\ell\vec s).
\eeq
Then \eqref{eomega0} is equal to 
\beq\label{scaledfunc}
\check{{\mathcal E}}[\check v] :=\int_{\check {\mathcal B}}\diff\vec s \: \check g^2 \left\{|\nabla \check v|^2-2\vec {\check B}\cdot (i\check v,\nabla\check  v)+\frac {\check g^2 \ell ^2 }{\eps^2}(1-|\check v|^2)^2\right\}
\eeq
with $\check {\mathcal B} : =\mathcal B_{\ell^{-1}}$ a ball of radius $\ell^{-1}=\Omega_0\,\eps^{-1}|\log\eps|^{-1}$. 
For the sake of a heuristic consideration  we now assume that $g^2_\omega$ is constant, $\sim \ell^{-1}$, on the annulus of width $\ell$ and zero otherwise. We define a new small parameter
\beq
\check \eps : =\eps\ell^{-1/2}\sim\eps^{1/2}|\log\eps|^{-1/2}
\eeq
and note that
\beq | \check B|\sim\eps^{-1}\ell\sim \Omega_0^{-1}(|\log\check\eps|+\OO(\log|\log\check\eps|)).
\eeq
We are thus led to consider the functional
\beq
\int_{\check {\mathcal A}} \diff \sv \lf\{ |\nabla \check v|^2-2 \vec {\check B}\cdot (i\check v,\nabla\check  v)+\frac 1{\check\eps^2}(1-|\check v|^2)^2 \ri\} \eeq
on an annulus $\check{\mathcal  A}$ of width $\OO(1)$ and an effective vector potential of strength $\OO(\Omega_0^{-1}|\log\check\eps|)$. This is reminiscent of the situation considered in \cite{AAB} and \cite{IM,IM2} where the  domain is fixed and the rotational velocity is proportional to the logarithm of the small parameter. Moreover, increasing $\Omega_0$ {\em decreases} the coefficient in front of the logarithm. Hence for large $\Omega_0$ this coefficient is small and {\em if} the analysis of \cite{AAB} and \cite{IM,IM2} would apply, one could conclude that vortices are absent. This reduction of the problem to known results is, however, too simplistic because the annulus $\check{\mathcal  A}$ is not fixed: Although its width stays constant, its diameter and hence the area increases as $\check \eps^{-2}|\log\check\eps|^{-2}$. A new ingredient is needed, and in our approach this is the division of the annulus into cells as mentioned in the previous subsection. In the scaled version \eqref{scaledfunc} of the energy functional  these cells are (essentially) of fixed size. When writing the actual proofs we prefer to use the original unscaled variables but the picture provided by the scaling is still helpful, in particular for comparison with \cite{AAB} and \cite{IM,IM2}.

\begin{rem}{\it (Alternative approach)}.	\\
	After the submission of this paper we learned \cite{J3} of a possible alternative approach to prove the absence of vortices in the bulk, relying on \cite[Lemma 8]{J2} (see also \cite[Lemma 4.1]{AJR}). This method could replace some of the arguments in Section \ref{Sect est reduced energy} and lead to a shorter proof of our Theorem \ref{theo:vortex} but is likely to yield worse remainder terms in the energy (Theorem \ref{theo:energy}). We also stress that the tools developed in Section \ref{Sect est reduced energy} of the present paper are an essential input in the paper \cite{R3}, where it is proved that vortices {\it do} appear in the bulk if $\Omega_0 < 2(3\pi)^{-1}$. For this Lemma 8 in \cite{J2} would not be sufficient.
\end{rem}

\subsection{Organization of the Paper}

The paper is organized as follows. In Section \ref{sec:estimates} we gather some definitions and notation that are to be used in the rest of the paper. We also prove useful estimates on the matter densities, both for the actual GP minimizer $|\gpm| ^2$ and for the `giant vortex profiles' $g_{\om}$. In Section \ref{sec:auxiliary} we introduce the auxiliary problem that we are going to study on the annulus $\A$ and prove that it indeed captures the main energetic features of the full problem. 
Section \ref{Sect est reduced energy} is devoted to the study of the auxiliary problem via tools from the Ginzburg-Landau theory. We conclude the proof of our main results in Section \ref{sec:energy asympt}. Finally an Appendix gathers important facts about the TF functionals that we use in our analysis, as well as the analysis of the cost function that leads to an upper bound on the critical speed.

\section{Useful Estimates}
\label{sec:estimates}

In this section we state some useful estimates which are going to be used in the rest of the paper. Some of them are simple consequences of energy considerations and in particular energy upper bounds, whereas others depend in a crucial way on the variational equations solved by $ \gpm $, etc., and apply therefore only to (global) minimizers.

In the first part of the section we investigate the properties of any GP minimizer: Starting from simple energy estimates, we prove the concentration of $ \gpm $ on an annulus of width $ \OO(\eps |\log\eps|) $ close to the boundary of the trap and in particular its exponential smallness inside the hole. This is the major result about $ \gpm $.
\newline
The second part of the section is devoted to the analysis of the densities associated with the giant vortex energy. We shall introduce first some notation and then, exploiting energy considerations, discuss the properties of those minimizers: Most of them are very similar to the ones proven for $ \gpm $, as, e.g., the exponential smallness, but we also need other general properties as, for instance, an a priori bound on the gradient of the densities and an estimate of their difference from the TF density $ \tfm $.

\subsection{Estimates for GP Minimizers}

We briefly recall the main notation: Given the GP energy functional $ \gpf[\Psi] $ defined in \eqref{GPf}, we denote by $ \gpe $ its infimum over $L^2$-normalized wave functions. Any GP minimizer is denoted by $ \gpm $ and solves the variational equation \eqref{GP variational}.
\newline
The TF functional $ \tff[\rho] $ was introduced in \eqref{TFf} and $ \tfe $ and $ \tfm $ stand for its ground state energy and density respectively (see also the Appendix).
\newline
Any further label to the functionals $ \gpf $ and $ \tff $, as, e.g.,  $ \gpf_{\D} $, denotes a restriction of the integration in the functional to the domain $ \D $. The same convention is used for the corresponding ground state energies and minimizers.
\newline
Finally we use the notation $ \B(\rv,\varrho) $ for a ball of radius $ \varrho $ centered at $ \rv $, whereas $ \B_R $ is a ball with radius $ R $ centered at the origin.

The starting point is a simple GP energy upper bound proven for instance in \cite[Proposition 4.2]{CY}, i.e.,
\beq
	\label{CY energy upper bound}
	\gpe \leq \tfe + \Omega |\log\eps| (1 + o(1)) \leq \tfe + C \eps^{-2}.
\eeq
A straightforward consequence of such an upper bound is that the $L^2$ norm of $ \gpm $ is concentrated in the support  $ \tfd := \{\rv \: : \: \rtf \leq r \leq 1\} $ of the TF minimizer. At the same time the bound implies a useful upper bound on $ |\gpm | $:

	\begin{pro}[\textbf{Preliminary estimates for $ \gpm $}]
		\label{GPmin}
		\mbox{}	\\
		As $ \eps \to 0 $,
		\beq
			\label{GPmin estimates}
			\lf\| |\gpm|^2 - \tfm \ri\|_{L^2(\ba)} \leq \OO(1),	\hspace{1,5cm}	\lf\| \gpm \ri\|^2_{L^{\infty}(\ba)} \leq \lf\| \tfm \ri\|_{L^{\infty}(\ba)}  \lf(1 + \OO(\sqrt{\eps|\log\eps|}) \ri).
		\eeq
	\end{pro}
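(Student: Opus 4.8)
\emph{Proof strategy.}
The first bound follows from the convexity of the TF functional together with the energy upper bound \eqref{CY energy upper bound}. Dropping the nonnegative magnetic kinetic term in \eqref{GPf} gives $\gpf[\Psi]\geq\tff[|\Psi|^2]$ for every $\Psi$, so $\tfe\leq\tff[|\gpm|^2]\leq\gpe$ because $|\gpm|^2$ is a normalized density. The key ingredient is then the coercivity estimate
\beq
\tff[\rho]-\tfe\ \geq\ \eps^{-2}\,\lf\| \rho-\tfm \ri\|_{L^2(\B)}^2 ,
\eeq
valid for every normalized $\rho\geq0$, which I would prove by writing $\tff[\rho]-\tff[\tfm]=\int_\B(-\Om^2r^2+2\eps^{-2}\tfm)(\rho-\tfm)+\eps^{-2}\int_\B(\rho-\tfm)^2$ and using the variational (obstacle) equation for the TF minimizer --- $-\Om^2r^2+2\eps^{-2}\tfm=\tfchem$ on $\{\tfm>0\}$ and $-\Om^2r^2\geq\tfchem$ on $\{\tfm=0\}$ --- to bound the first integral from below by $\tfchem\int_\B(\rho-\tfm)=0$. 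Taking $\rho=|\gpm|^2$ and inserting $\gpe-\tfe\leq C\eps^{-2}$ yields $\lf\||\gpm|^2-\tfm\ri\|_{L^2(\B)}^2\leq C$, which is the first claim.

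For the $L^\infty$ bound I would use the GP equation \eqref{GP variational}. Rewritten in magnetic form it reads $(\nabla-i\magnp)^2\gpm=(2\eps^{-2}|\gpm|^2-\Om^2r^2-\chem)\gpm$, and applying the diamagnetic (Kato) inequality $\Delta|\gpm|\geq\Re\lf(\overline{\gpm}\,|\gpm|^{-1}(\nabla-i\magnp)^2\gpm\ri)$ at a point $\rv_0$ where $|\gpm|$ attains its maximum $M:=\|\gpm\|_{L^\infty(\B)}>0$ gives $0\geq M(2\eps^{-2}M^2-\Om^2r_0^2-\chem)$, hence
\beq
\|\gpm\|_{L^\infty(\B)}^2\ \leq\ \half\,\eps^2(\chem+\Om^2).
\eeq
(A maximum located on $\dd\B$ is dealt with by standard arguments exploiting the Neumann condition $\dd_\nu\gpm=0$.) Since $\tfchem=-\Om^2\rtf^2$, we have $\|\tfm\|_{L^\infty(\B)}=\tfm(1)=\half\eps^2\Om^2(1-\rtf^2)=\half\eps^2(\tfchem+\Om^2)$, so it only remains to bound $\chem-\tfchem$ from above.

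From \eqref{chem} and the analogous identity $\tfchem=\tfe+\eps^{-2}\int_\B(\tfm)^2$ one has $\chem-\tfchem=(\gpe-\tfe)+\eps^{-2}\int_\B(|\gpm|^4-(\tfm)^2)$. By Cauchy--Schwarz, the first bound above, and $\|\tfm\|_{L^2(\B)}^2\leq\|\tfm\|_{L^\infty(\B)}\leq C\eps^{-1}|\log\eps|^{-1}$,
\beq
\lf| \int_\B (|\gpm|^4-(\tfm)^2) \ri|\ \leq\ \lf\| |\gpm|^2-\tfm \ri\|_{L^2(\B)}\lf\| |\gpm|^2+\tfm \ri\|_{L^2(\B)}\ =\ \OO\lf(\eps^{-1/2}|\log\eps|^{-1/2}\ri),
\eeq
which, together with $\gpe-\tfe\leq C\eps^{-2}$, gives $\chem-\tfchem\leq\OO(\eps^{-5/2}|\log\eps|^{-1/2})$. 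Plugging this into the bound on $M^2$ gives $\|\gpm\|_{L^\infty(\B)}^2\leq\|\tfm\|_{L^\infty(\B)}+\OO(\eps^{-1/2}|\log\eps|^{-1/2})$, and since $\|\tfm\|_{L^\infty(\B)}\sim\eps^{-1}|\log\eps|^{-1}$ the remainder equals $\|\tfm\|_{L^\infty(\B)}\,\OO(\sqrt{\eps|\log\eps|})$, which is the second claim.

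The step I expect to be the main obstacle is the maximum-principle argument: justifying the Kato inequality for the complex magnetic equation and, above all, treating a maximum located on the boundary $\dd\B$, where only the Neumann condition is available. Everything else is routine bookkeeping with the explicit TF quantities.
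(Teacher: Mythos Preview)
Your argument is correct and follows essentially the same route as the paper. Two small differences are worth noting. First, for the $L^2$ bound the paper carries out the same expansion you describe but writes it as a direct computation rather than phrasing it as a coercivity inequality; the content is identical. Second, for the $L^\infty$ bound the paper works with $\gpd:=|\gpm|^2$ rather than $|\gpm|$: from $-\Delta\gpd=-\gpm^*\Delta\gpm-\gpm\Delta\gpm^*-2|\nabla\gpm|^2$ and the elementary inequality $|\gpm^*\,2\vec\Omega\cdot\vec L\,\gpm|\leq|\nabla\gpm|^2+\Omega^2r^2|\gpm|^2$ one gets directly $-\Delta\gpd\leq 4\eps^{-2}(\eps^2\chem+\eps^2\Omega^2r^2-2\gpd)\gpd$, which yields the same pointwise bound at a maximum without invoking Kato. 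This also makes the boundary case transparent, since $\partial_r\gpm=0$ on $\partial\B$ immediately gives $\partial_r\gpd=0$, so a boundary maximum of $\gpd$ has vanishing gradient and hence $\Delta\gpd\leq0$ there as well --- exactly the point you flagged as the main obstacle. Your chemical potential estimate is in fact slightly cleaner than the paper's, which bounds $\||\gpm|^2+\tfm\|_2$ in terms of $\|\gpm\|_\infty$ and then bootstraps, whereas you use the already-established $L^2$ bound to control $\||\gpm|^2\|_2$ directly.
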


	\begin{proof}
	 	In order to prove the first statement it is sufficient to use the fact that $ \tfm $ is the positive part of the function $(\eps^2/2)(\tfchem + \Omega^2 r^2)$ together with the normalization of $ \tfm $ and $\Psi^{\rm GP}$ and estimate
		\bml{
 			\label{l2 difference}  
			\int_{\B} \diff \rv \: \lf[ |\gpm|^2 - \tfm \ri]^2 = \lf\| \gpm \ri\|^4_4 + \lf\| \tfm \ri\|_2^2 - 2 \int_{\B} \diff \rv \: |\gpm|^2 \: \tfm \leq	\\
			\lf\| \gpm \ri\|^4_4 + \lf\| \tfm \ri\|_2^2 - \eps^2 \tfchem - \eps^2 \Omega^2 \int_{\B} \diff \rv \: r^2 |\gpm|^2 = \eps^2 \lf( \tff\lf[|\gpm|^2\ri] - \tfe \ri) \leq	\\
			\eps^2 \lf( \gpe - \tfe \ri) \leq C,
		}
		by \eqref{CY energy upper bound}.	
		
		The proof of the second inequality is similar to the proof of Lemma 5.1 in \cite{CY} and involves the variational equation \eqref{GP variational}. We define
		\beq
			\label{GP density}
			\gpd : = | \gpm |^2.
		\eeq
		The crucial point is that at any maximum point of $ \gpd $ in the closed ball $ \bar{\B} $, it has to be $ \Delta \gpd \leq 0 $. This is trivially true in the open ball $ \B $ but can be extended to the boundary thanks to Neumann boundary conditions: Since $ \partial_r \gpm = 0 $ at the boundary $ \partial \B $, which implies $ \partial_r \gpd = 0 $ there, $ \gpd $ can have a maximum at $ \rv_0 \in \partial \B $ only if $ \nabla \gpd (\rv_0) = 0 $ and therefore $ \Delta \gpd (\rv_0) \leq 0 $. 
		\newline
		From the variational equation \eqref{GP variational} solved by $ \gpm $, one can estimate 
		\beq
			\label{var eq density}
			- \Delta \gpd \leq 4 \eps^{-2} \lf( \eps^2 \chem + \eps^2 \Omega^2 r^2 - 2 \gpd \ri) \gpd \leq 4 \eps^{-2} \lf( \eps^2 \chem + \eps^2 \Omega^2 - 2 \gpd \ri) \gpd,
		\eeq
		by using the properties
		\bdm
			- \Delta \gpd = - {\gpm}^* \Delta \gpm - \gpm \Delta {\gpm}^{*} - 2 \lf|\nabla \gpm \ri|^2,	
		\edm
		\beq
			\label{delta density est}
			\big| {\gpm}^{*} 2 \vec\Omega \cdot \vec{L} \gpm \big| \leq \lf| \nabla \gpm \ri|^2 + \Omega^2 r^2 \lf| \gpm \ri|^2.
		\eeq
		Now since $ \Delta \gpd \leq 0 $ at any maximum point of $ \gpd $, one immediately has
		\bdm
			\lf\| \gpm \ri\|_{\infty}^2 \leq (\eps^2 \chem + \eps^2 \Omega^2)/2 \leq \tfm(1) + C \eps^{2} |\chem - \tfchem|.
		\edm
		On the other hand
		\beq
			\label{TF sup}
		\tfm(1) = \lf\| \tfm \ri\|_{\infty} = C \eps \Omega = C \eps^{-1} |\log\eps|^{-1},
		\eeq
		and the difference between the chemical potentials (see \eqref{chem} and \eqref{tfchem} for the definitions) can be estimated as follows
		\bmln{
			|\chem - \tfchem| \leq \gpe - \tfe + \eps^{-2} \lf( \lf\| \tfm \ri\|^{1/2}_{\infty} + \lf\| \gpm \ri\|_{\infty} \ri) \lf\| |\gpm|^2 - \tfm \ri\|_2 \leq		\\ 
			C \eps^{-2} \lf[ 1 + \lf\| \tfm \ri\|_{\infty}^{1/2} \lf( 1 + \eps^{3} |\log\eps| |\chem - \tfchem| \ri)^{1/2} \ri] \leq	\\
			C \eps^{-5/2} |\log\eps|^{-1/2} \lf(1 + \eps^{3/2} |\log\eps|^{1/2}  |\chem - \tfchem|^{1/2} \ri),
		}
		which yields
		\beq
			\label{diff GP chemical}
			|\chem-\tfchem| \leq C \eps^{-5/2} |\log\eps|^{-1/2}
		\eeq
		and thus the result.
	\end{proof}

A consequence of the $L^2 $ estimate in \eqref{GPmin estimates} is that the $ L^2 $ norm of $ \gpm $ is concentrated inside the support of $ \tfm $ or, in other words, the mass of $ \gpm $ inside the hole $ \ba_{\rtf} $ is small. Indeed one can easily realize that the first inequality in \eqref{GPmin estimates} implies
\beq
	\lf\| \gpm \ri\|^2_{L^2(\ba_{\rtf})} \leq \OO(\sqrt{\eps |\log\eps|}),
\eeq
since, due to the normalization of $ \tfm $ and its support in $ \tfd $,
\bmln{
 	1 = \lf\| \gpm \ri\|^2_{L^2(\B_{\rtf})} + \lf\| \gpm \ri\|^2_{L^2(\tfd)} = \lf\| \gpm \ri\|^2_{L^2(\B_{\rtf})} + \int_{\tfd} \diff \rv \lf( \lf|\gpm\ri|^2 - \tfm \ri) + 1 \geq	\\
			\lf\| \gpm \ri\|^2_{L^2(\B_{\rtf})} + 1 - \OO(\sqrt{\eps|\log\eps|)}),
} 
where in the last step the Cauchy-Schwarz inequality and the size $ \OO(\eps|\log\eps|) $ of $ \tfd $ is used.

This simple estimate can be refined (see, e.g., \cite[Proposition 2.5]{CRY1}) and one can actually show that $ \gpm $ is exponentially small in $ \eps $ inside $ \ba_{\rtf} $. The next Proposition is devoted to the proof of two pointwise inequalities of this kind:
	
	\begin{pro}[\textbf{Exponential smallness of $ \gpm $ inside the hole}]
		\label{exponential smallness}
		\mbox{}	\\
		As $ \eps \to 0 $ and for any $ \rv \in \B $,
		\beq
			\label{exp small}
			\lf| \gpm (\rv) \ri|^2 \leq C \eps^{-1} |\log\eps|^{-1} \: \exp \lf\{ - \frac{1-r^2}{1 - \rtf^2} \ri\}.
		\eeq 
		Moreover there exists a strictly positive constant $ c $ such that for any $ \OO(\eps^{7/6}) \leq r \leq \rtf - \OO(\eps^{7/6}) $,
		\beq
			\label{improved exp small}
			\lf| \gpm(\rv) \ri|^2 \leq C  \eps^{-1} |\log\eps|^{-1}  \: \exp \lf\{ - \frac{c}{\eps^{1/6}} \ri\}.
		\eeq
	\end{pro}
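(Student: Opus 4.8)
The plan is to establish both inequalities by comparison arguments: I would dominate the density $\gpd:=|\gpm|^2$ by explicit radial supersolutions of a suitable linear elliptic inequality and invoke the weak maximum principle. The starting point is the differential inequality already used in the proof of Proposition \ref{GPmin}: discarding there the favorable term $-8\eps^{-2}\gpd^2$, the first inequality in \eqref{var eq density} gives $-\Delta\gpd\le 4(\chem+\Omega^2r^2)\gpd$ on $\B$. Writing $\chem+\Omega^2r^2=(\chem-\tfchem)-\Omega^2(\rtf^2-r^2)$ — using $\tfchem=-\Omega^2\rtf^2$, which follows from \eqref{TFann0}--\eqref{TFm0} — and recalling the quantitative bound $|\chem-\tfchem|\le C\eps^{-5/2}|\log\eps|^{-1/2}$ from \eqref{diff GP chemical}, the potential $V(r):=-4(\chem+\Omega^2r^2)=4\Omega^2(\rtf^2-r^2)-4(\chem-\tfchem)$ satisfies $V(r)\ge 2\Omega^2(\rtf^2-r^2)$ whenever $\rtf^2-r^2\gg\eps^{3/2}|\log\eps|^{3/2}$, and $V(r)\asymp\Omega^2\asymp\eps^{-4}|\log\eps|^{-2}$ for $r$ bounded away from $\rtf$. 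In particular $\gpd$ is a (smooth, by elliptic regularity) subsolution of $-\Delta u+Vu=0$ on every ball $\B_r$ with $r^2\le\rtf^2-C_0\eps^{3/2}|\log\eps|^{3/2}$, $C_0$ a large enough constant.

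To prove \eqref{exp small} I would set $r_*^2:=\rtf^2-C_0\eps^{3/2}|\log\eps|^{3/2}$ and take $w_1(r):=C\eps^{-1}|\log\eps|^{-1}\exp\{-(1-r^2)/(1-\rtf^2)\}$. A direct computation gives $-\Delta w_1=-(4\alpha+4\alpha^2r^2)w_1$ with $\alpha:=(1-\rtf^2)^{-1}\asymp\eps\Omega$, so $4\alpha+4\alpha^2r^2\lesssim\eps^2\Omega^2\ll V(r)$ on $\B_{r_*}$ when $C_0$ is large; hence $w_1$ is a supersolution of $-\Delta u+Vu=0$ there. On $\partial\B_{r_*}$ one has $(1-r_*^2)/(1-\rtf^2)=1+o(1)$, so by Proposition \ref{GPmin} $\gpd\le\|\tfm\|_{L^\infty}(1+o(1))\le w_1$ there once $C$ is large enough (depending on $\Omega_0$), and the weak maximum principle yields $\gpd\le w_1$ on $\B_{r_*}$. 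On the complementary annulus $r_*\le r\le 1$ the exponential in $w_1$ is $\ge e^{-1-o(1)}$, so the pointwise bound $\gpd\le\|\tfm\|_{L^\infty}(1+o(1))$ of Proposition \ref{GPmin} already gives $\gpd\le w_1$ for $C$ large. Together these establish \eqref{exp small} on all of $\B$.

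To prove \eqref{improved exp small} I would, for each fixed $r\in[\eps^{7/6},\,\rtf-\eps^{7/6}]$, work on the annular neighbourhood $\A_r:=\{\,|s-r|\le\tfrac12\eps^{7/6}\,\}\subset\{\tfrac12\eps^{7/6}\le s\le\rtf-\tfrac12\eps^{7/6}\}$. Since $V$ is decreasing and $17/6>5/2$, one has $V(s)\ge V(\rtf-\tfrac12\eps^{7/6})\ge\tilde V:=\tfrac12\Omega^2\eps^{7/6}$ throughout $\A_r$, the error $4(\chem-\tfchem)=\OO(\eps^{-5/2}|\log\eps|^{-1/2})$ being negligible against $\Omega^2\eps^{7/6}\asymp\eps^{-17/6}|\log\eps|^{-2}$. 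Take $w_2(s):=A\cosh(k(s-r))$ with $k:=\sqrt{\tilde V/2}\asymp\Omega\eps^{7/12}$; using $ks\ge k\tfrac12\eps^{7/6}\asymp\Omega\eps^{7/4}\to\infty$ one computes $-\Delta w_2+Vw_2\ge-\Delta w_2+\tilde Vw_2=Ak\big(k\cosh(k(s-r))-s^{-1}\sinh(k(s-r))\big)\ge 0$, so $w_2$ is a supersolution of $-\Delta u+Vu=0$ on $\A_r$. Choosing $A:=C\eps^{-1}|\log\eps|^{-1}/\cosh(k\tfrac12\eps^{7/6})$ makes $w_2\ge\gpd$ on the two bounding circles of $\A_r$ by Proposition \ref{GPmin}, so the maximum principle gives $\gpd(r)\le w_2(r)=A\le 2C\eps^{-1}|\log\eps|^{-1}\exp\{-\tfrac12 k\eps^{7/6}\}$; and $\tfrac12 k\eps^{7/6}\asymp\Omega\eps^{7/4}\asymp\eps^{-1/4}|\log\eps|^{-1}\ge c\,\eps^{-1/6}$ for $\eps$ small, which is \eqref{improved exp small}.

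I expect the main obstacle to be the bookkeeping near the inner edge $r\approx\rtf$ of the hole, where the effective potential $V$ degenerates. One must keep the comparison domains far enough from $\{r=\rtf\}$ that $V$ still dominates the curvature terms $4\alpha+4\alpha^2r^2$ of $w_1$ (respectively is large enough to drive stretched‑exponential decay over the macroscopic width $\eps^{7/6}$ in the second bound), while keeping them close enough that the crude $L^\infty$ bound of Proposition \ref{GPmin} is usable as boundary data; it is precisely the quantitative estimate $|\chem-\tfchem|\le C\eps^{-5/2}|\log\eps|^{-1/2}$ of \eqref{diff GP chemical} that makes both constraints compatible, and the exponents $\eps^{7/6}$ and $\eps^{1/6}$ (surely not optimal) arise from this balance.
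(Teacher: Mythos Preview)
Your proposal is correct and follows essentially the same route as the paper: both parts are proved by comparing $\gpd$ with explicit radial supersolutions of a linear elliptic inequality derived from \eqref{var eq density} after dropping the favorable $-\gpd$ term, then invoking the maximum principle. The only differences are cosmetic: for \eqref{exp small} the paper cuts off at $r^2\le\rtf^2-\eps$ rather than your $r_*^2=\rtf^2-C_0\eps^{3/2}|\log\eps|^{3/2}$, and for \eqref{improved exp small} it uses a Gaussian barrier $\|\gpd\|_\infty\exp\{(\varrho^2-(r-r_0)^2)/\eps^{5/2}\}$ on the same annulus of width $\OO(\eps^{7/6})$ instead of your $\cosh$ barrier; your $\cosh$ choice in fact yields a slightly sharper exponent $\eps^{-1/4}|\log\eps|^{-1}$ than the stated $\eps^{-1/6}$.
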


	\begin{rem}{\it (Comparison between \eqref{exp small} and \eqref{improved exp small})}.
		\mbox{}	\\
		At first sight the pointwise estimates in \eqref{exp small} and \eqref{improved exp small} look very similar. In fact, since $ 1 - \rtf^2 = \OO(\eps |\log\eps|) $, one can easily realize that the first one yields a much better upper bound than the latter as soon as $ 1 - r^2 \gg \OO(\eps^{5/6}) $, i.e., in particular for $ 1 - r^2 = \OO(1) $. The main drawback of the first inequality is however that it becomes much weaker and even useless if one gets closer to the radius $ \rtf $. More precisely as soon as $ 1 - r^2 \lesssim \OO(\eps) $ the bound is no longer exponentially small in $ \eps $. On the opposite the second inequality has no $ r $ dependence and a worse coefficient in the exponential function but it holds true and yields some exponential smallness up to a distance of order $ \eps^{7/6} $ from the boundary $ \partial \B_{\rtf} $. Because of this fact the second inequality will be crucial in the reduction of the original problem to another one on an annulus close to the support of the TF minimizer. The first inequality, on the contrary, will be useful in a region far from the boundary of the hole $ \partial \ba_{\rtf} $ and close to the origin, where the second bound provides a worse estimate.

		Note also that the factor in front of the exponential in both \eqref{exp small} and \eqref{improved exp small} is essentially given by the sup of $ |\gpm|^2 $ over the domain $ \ba $. In the case of the first inequality this is actually needed in order to make it meaningful on the whole of $ \ba $.
	\end{rem}

	\begin{proof}
		The starting point of the proof of \eqref{exp small} is the inequality \eqref{var eq density} together with the estimate \eqref{diff GP chemical}, which yield
		\beq
			\label{var ineq gpd}
			- \Delta \gpd \leq 4 \eps^{-2} \lf[ \ttfm(r) + C \eps^{-\frac{1}{2}} |\log\eps|^{-\frac{1}{2}} - \gpd \ri] \gpd,
		\eeq
		where we have set
		\beq
			\label{ttfm}
			\ttfm(r) : = \frac{\eps^2}{2} \lf[ \tfchem + \Omega^2 r^2 \ri],
		\eeq
		which coincides with the TF density $ \tfm $ inside $ \tfd $ and is negative everywhere else. More precisely, for any $ \rv $ such that $ r^2 \leq \rtf^2 - \eps $, one has 
		\beq
			\label{U subsolution}
			- \Delta \gpd + 2 \eps^{-3} |\log\eps|^{-2} \gpd \leq 0,
		\eeq
		since in that region
		\bdm
			\ttfm(r) + C \eps^{-\frac{1}{2}} |\log\eps|^{-\frac{1}{2}} \leq - \eps^3 \Omega^2 (1 - o(1)) \leq - \frac{1}{2 \eps |\log\eps|^2} .
		\edm
		On the other hand the function 
		\bdm
			W(r) : = \exp \lf\{ \frac{r^2 - 1}{1 - \rtf^2 - \eps} \ri\},
		\edm
		satisfies for any $ r \leq 1 $
		\bdm
			- \Delta W + 2 \eps^{-3} |\log\eps|^{-2} W \geq \frac{C}{\eps^2 |\log\eps|^2} \lf[ - \eps |\log\eps| - r^2 + \eps^{-1} \ri] W \geq 0.
		\edm
		If we then multiply $ W $ by $ \| \gpd \|_{\infty} $ we get then a supersolution to \eqref{U subsolution}, so that by the maximum principle (see, e.g., \cite[Theorem 1 at p. 508]{Evans})
		\bdm
			\lf| \gpm(\rv) \ri|^2 \leq \lf\| \gpm \ri\|^2_{\infty} W(r),
		\edm
		for any $ r \leq \rtf - \eps $. However $ W(r) $ is an increasing function and $ W(\sqrt{\rtf^2-\eps}) = \| \gpd \|_{\infty} $ by construction, which implies that the estimate trivially holds true for any $ \rv \in \B $. The estimate \eqref{exp small} is a consequence of \eqref{GPmin estimates} and the inequality $ 1 - \rtf^2 = \eps |\log\eps| \gg \eps $.

		Concerning the proof of the second inequality \eqref{improved exp small}, we first notice that \eqref{var ineq gpd} implies that, for any $ \rv \in \B_{\rtf} $ such that $ r \leq \rtf - \OO(\eps^{7/6}) $,
		\beq
			- \Delta \gpd \leq - C \eps^{-17/6} |\log\eps|^{-2} \gpd,
		\eeq
		since in that region
		\beq
			\ttfm(r) = \frac{\eps^2 \Omega^2}{2} \lf( r^2 - \rtf^2 \ri) \leq - \frac{C}{\eps^{5/6} |\log\eps|^2}.
		\eeq
		For any $ \rv_0 $ such that $ r_0 \leq \rtf - \OO(\eps^{7/6}) $ we can thus consider the annulus $ \mathcal{I} = \{ \rv \in \B \: : \: r \in [r_0 - \varrho, r_0 + \varrho] \} $ for some $ \varrho = \OO(\eps^{7/6})  $, such that $ r_0 + \varrho \leq \rtf - \OO(\eps^{7/6}) $, i.e., the outer boundary of the annulus is still at a distance of order $ \eps^{7/6} $ from $ \partial \B_{\rtf} $, and it is straightforward to verify that the function
		\beq
			\label{supersolution exp small}
			U(r) : = \lf\| \gpd \ri\|_{\infty} \exp\lf\{ \frac{ \varrho^2 - (r - r_0)^2}{\eps^{5/2}} \ri\},
		\eeq
		satisfies
		\bml{
			- \Delta U = - \lf[ \frac{4 (r-r_0)^2}{\eps^5} + \frac{2}{\eps^{5/2}} + \frac{2(r-r_0)}{\eps^{5/2} r} \ri] U(r) \geq - C \lf[ \frac{\varrho^2}{\eps^5} + \frac{1}{\eps^{5/2}} \ri] U(r) \geq	\\
			 - C \eps^{-8/3} U(r) \gg - C \eps^{-17/6} |\log\eps|^{-2} U(r),
		}
		where we have used the fact that $ |r - r_0| \leq \varrho $ inside $ \mathcal{I} $ and $ \varrho = \OO(\eps^{7/6}) $. Denoting now $ V(\rv) = \gpd(\rv) - U(r) $, we aim at proving that $ V \leq 0 $, but one has, for any $ \rv \in \mathcal{I} $,
		\bdm
			-\Delta V(\rv) \leq - C \eps^{-17/6} |\log\eps|^{-2}  V(\rv).
		\edm	
		Now at any maximum point of $ V $ in the interior of $ \mathcal{I} $ it must be $ \Delta V \leq 0 $, which implies $ V \leq 0 $ because of the above inequality. Hence it  remains only to prove that $ V \leq 0 $ at boundary $ \partial \mathcal{I} $ since the function might have a positive maximum there. However by construction $ V \leq 0 $ at $ \partial \mathcal{I} $ since $ U(r_0 - \varrho) = U(r_0 + \varrho) = \| \gpd \|_{\infty} $ and thus $ V(\rv) $ is negative everywhere. In particular $ \gpd(\rv) \leq U(r)$ for all $\rv$ such that $r=r_0$ and $ \varrho=\OO(\eps^{7/6})$, which yields \eqref{improved exp small}.
	\end{proof}

\subsection{The Giant Vortex Densities}

We now discuss some basic properties of the energy functional \eqref{hatfunct} together with those of an analogous functional where the integration is restricted to an annulus $\A=\{\rv \: : \: R_<\leq r\leq 1\}\subset\B$ with $ \rt < \rtf $ and $ |\rt - \rtf| \ll \eps|\log\eps| $. A precise choice for $ \rt$ will be made at the beginning of Section \ref{sec:auxiliary} (see \eqref{the inner radius}) but the results contained in this section apply to any $ \rt $ satisfying the above conditions.
\newline
As indicated at the end of Section \ref{sec:proof strategy} the restricted functional is for technical reasons actually more useful for the proof of the main results than the original functional. It is defined  in analogy with \eqref{hatfunct} and \eqref{hatfunct2} for {\em real} valued functions $f$ on $\A$ by
\begin{multline}
\label{hGPfa}
\hgpfa[f] : = \int_{\A} \diff \rv \left\{| \nabla f|^2+ ([\Omega]-\omega)^2 r^{-2} f^2 - 2\Omega([\Omega]-\omega) f^2 + \eps^{-2} f^4 \right\} =	\\
\int_{\A} \diff \rv \: \lf\{ \lf| \nabla f \ri|^2 - \Omega^2 r^2 f^2 + B_{\omega}^2 f^2 + \eps^{-2} f^4 \ri\},
\end{multline}
with $B_\omega(r)$ given by \eqref{newB0}.  We  use $\A$  as a subindex to label quantities associated with the annulus and thus denote the infimum of the functional \eqref{hGPfa} by 
\beq
\hgpea: = \inf_{\lf\| f \ri\|_2 = 1, f = f^*} \hgpfa[f]
\eeq
while  $\hat E^{\rm GP}_\omega$ stands for the corresponding infimum of \eqref{hatfunct}.
\newline
We shall use the short-hand notation $ \hgpf_{\star} $, $ \hgpe_{\star} $, $ g_{\star} $, etc., for quantities related either to  \eqref{hatfunct}  or \eqref{hGPfa}, e.g., a statement about $ g_{\star} $ is meant to apply to both $ \hgpm $ {\it and} $ \hgpma $.

	\begin{pro}[\textbf{Minimization of $ \hgpf_{\star} $}]
		\label{htminimization}
		\mbox{}	\\
		For any $ \omega \in \Z $ such that $ |\omega| \leq \OO(\eps^{-1}) $, the ground state energies $ \hgpe_{\star} $ satisfy the estimates
		\beq
			\label{h energy bound}
			\tfe \leq \htfe_{\omega} \leq \hgpe_{\star} \leq \htfe_{\omega} + \OO(\eps^{-2} |\log\eps|^{-1}) \leq \tfe + \OO(\eps^{-2}).
		\eeq
		The minimizers $ \hgpm $ and $ \hgpma $ exist, are radially symmetric and unique up to a global sign, which can be chosen so that they are given by positive functions solving the variational equation
		\beq
			\label{hgpm var}
			- \Delta g_{\star} + \frac{([\Omega] - \omega)^2}{r^2} g_{\star}  - 2 \Omega ([\Omega] - \omega) g_{\star} + 2 \eps^{-2} g_{\star} ^3 = \hchem_{\star} g_{\star} ,
		\eeq
		where the chemical potentials $ \hchem_{\star} $ are fixed by the $L^2$ normalization of $ g_{\star} $, i.e., $ \hchem_{\star} = \hgpe_{\star} + \eps^{-2} \| g_{\star} \|_4^4 $.
		\newline
		In addition $ g_{\star} $ are smooth and increasing and satisfy Neumann conditions at the boundary $ \partial \B $, i.e., $ \partial_r g_{\star}(1) = 0 $; $ \hgpma $ satisfies an identical condition at the inner boundary as well, i.e., $ \partial_r \hgpma(\rt) = 0 $. 
	\end{pro}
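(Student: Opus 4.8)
The plan is to prove the two cases (full disc, annulus) in parallel through the $\star$-notation, keeping the energy bounds separate from the structural properties of the minimisers. For the lower bounds I would discard the non-negative gradient term: $\hgpf_{\star}[f]\ge\int\{-\Omega^2r^2f^2+B_\omega^2f^2+\eps^{-2}f^4\}=\htff_{\omega}[f^2]\ge\htfe_{\omega}$, and, since $B_\omega^2\ge0$ and $\tfm$ is supported inside $\A$, the same integrand dominates the unrestricted TF functional, so $\htfe_{\omega}\ge\tfe$. The rightmost inequality follows from $\htff_{\omega}[\tfm]=\tfe+\int B_\omega^2\tfm\,\diff\rv=\tfe+\OO(\eps^{-2})$, using that $|B_\omega|=\OO(\eps^{-1})$ on the support of $\tfm$ when $|\omega|=\OO(\eps^{-1})$. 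For the remaining upper bound $\hgpe_{\star}\le\htfe_{\omega}+\OO(\eps^{-2}|\log\eps|^{-1})$ I would take as trial state $\sqrt{\htfm_{\omega}}$ regularised on a layer of width $\sim\eps^2$ around the inner edge $\hrtf$ of its support (regularisation is genuinely needed, since $\sqrt{\htfm_{\omega}}$ just fails to lie in $H^1$ there because $\htfm_{\omega}$ vanishes only linearly at $\hrtf$); the explicit form and edge behaviour of $\htfm_{\omega}$ are the content of the Appendix. A direct computation then gives $\hgpf_{\star}[\text{trial}]=\htfe_{\omega}+\OO((\eps\Omega)^2|\log\eps|)$, the logarithm coming from the near-edge tail of $\int(\partial_r\sqrt{\htfm_{\omega}})^2$, and $(\eps\Omega)^2|\log\eps|=\OO(\eps^{-2}|\log\eps|^{-1})$.

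For existence I would run the direct method: from $\hgpf_{\star}[f]\ge\int|\nabla f|^2+\eps^{-2}\|f\|_4^4-2\Omega\hat\Omega$ a minimising sequence is bounded in $H^1$ and in $L^4$, so a weak limit exists, and weak lower semicontinuity of the convex terms together with the compact embedding $H^1\hookrightarrow L^4$ on the bounded domain produce a minimiser; it may be taken non-negative because replacing $f$ by its modulus is admissible and does not raise the energy. Uniqueness and radial symmetry I would get from convexity: rewriting the functional in the density variable $\rho=f^2$, the gradient term becomes $\int|\nabla\rho|^2/(4\rho)$, which is jointly convex in $(\rho,\nabla\rho)$ (being the perspective of $\xi\mapsto|\xi|^2/4$), the potential term is linear in $\rho$, and $\eps^{-2}\rho^2$ is strictly convex; hence the functional is strictly convex on the convex set $\{\rho\ge0,\ \int\rho=1\}$ and its minimiser $\rho=g_{\star}^2$ is unique. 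Since the functional, the constraint and the domain are rotation invariant, uniqueness forces $\rho$ to be radial, and then $g_{\star}=\sqrt\rho$ is determined up to a global sign.

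The variational equation \eqref{hgpm var} is the Euler--Lagrange equation with multiplier $\hchem_{\star}$ for the normalisation, and testing it against $g_{\star}$ gives $\hchem_{\star}=\hgpe_{\star}+\eps^{-2}\|g_{\star}\|_4^4$; the Neumann conditions at $\partial\B$ (and, in the annular case, also at the inner circle) are the natural boundary conditions, and an elliptic bootstrap gives $g_{\star}\in C^{\infty}$ (away from the origin in the disc case, where $g_{\star}$ vanishes to order $\hat\Omega$). Strict positivity in the interior follows from the strong maximum principle applied to $-\Delta g_{\star}+c(\rv)\,g_{\star}=0$ with $c=\hat\Omega^2r^{-2}-2\Omega\hat\Omega+2\eps^{-2}g_{\star}^2-\hchem_{\star}$ (and Hopf's lemma at the boundary), so the sign above is global. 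For the monotonicity I would argue as follows: put $P(r):=rg_{\star}'(r)$, which vanishes at both endpoints of the radial interval, and note that $P'(r)=r\,\Delta g_{\star}=r\,g_{\star}(r)\,\Psi(r)$ with $\Psi:=\hat\Omega^2r^{-2}-2\Omega\hat\Omega+2\eps^{-2}g_{\star}^2-\hchem_{\star}$ by the equation. If $P$ attained a negative minimum at an interior point $r_*$, then $P'(r_*)=0$ would force $\Psi(r_*)=0$ (since $r_*g_{\star}(r_*)>0$) and $P''(r_*)\ge0$ would force $\Psi'(r_*)\ge0$; but $\Psi'(r_*)=-2\hat\Omega^2r_*^{-3}+4\eps^{-2}g_{\star}g_{\star}'<0$ there, because $g_{\star}'(r_*)=P(r_*)/r_*<0$ --- a contradiction. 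Hence $P\ge0$, i.e.\ $g_{\star}$ is non-decreasing.

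The one step I expect to be genuinely delicate is the upper bound $\hgpe_{\star}\le\htfe_{\omega}+\OO(\eps^{-2}|\log\eps|^{-1})$: it needs sufficiently precise control of the modified TF minimiser near its edge (hence the dependence on the Appendix) and a careful choice of the regularisation width so that both the $H^1$-smoothing cost and the near-edge tail of the radial kinetic energy stay of order $(\eps\Omega)^2|\log\eps|$. The remaining assertions --- existence, uniqueness, radial symmetry, regularity, positivity, the Neumann conditions and monotonicity --- are by comparison routine variational or ODE arguments.
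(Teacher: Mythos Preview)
Your treatment of the energy bounds, existence via the direct method, uniqueness and radial symmetry through strict convexity in $\rho=f^2$, the Euler--Lagrange equation with its Lagrange multiplier, bootstrap regularity, positivity, and the natural Neumann conditions all match the paper's proof essentially line by line (the paper phrases positivity as ``ground state of a Schr\"odinger operator'' rather than via the maximum principle, but this is cosmetic).

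The one genuine difference is the monotonicity of $g_{\star}$. The paper isolates this as a separate lemma and proves it by a purely variational rearrangement: assuming the density $\rho=g_{\star}^2$ has a local maximum at $s_1$ followed by a local minimum at $s_2>s_1$, one flattens $\rho$ to height $\rho(s_1)-\bar\eps$ on a set near $s_1$ and to height $\rho(s_2)+\bar\delta$ on a set near $s_2$, choosing $\bar\eps,\bar\delta$ so that the total mass is preserved; the kinetic energy drops on the flattened sets, the potential term $\int a s^{-1}\rho$ drops because mass moves outward, and the interaction $\int\rho^2$ drops because mass moves from high to low density --- contradicting minimality. Your argument instead uses the Euler--Lagrange equation directly: you analyse $P(r)=rg_{\star}'(r)$, which vanishes at both endpoints, and derive a sign contradiction on $\Psi'$ at any interior negative minimum of $P$. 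Your ODE/maximum-principle route is correct and arguably slicker for this specific problem; the paper's rearrangement argument has the complementary virtue of relying only on the structure of the functional (monotone potential, convex interaction), not on the equation, and so transfers unchanged to related functionals.
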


	\begin{proof}
		The lower bounds to the ground state energies are simply obtained by neglecting positive terms (the kinetic energies) in the functionals: In the case of $ \hgpf_{\omega} $ (the other case is identical), one has
		\bdm
 			\hgpf_{\omega}[f] \geq \htff[f^2] \geq \htfe \geq \tfe,
		\edm
		where we refer to the Appendix for the simple proof of the last inequality.
		\newline
		The upper bound can be easily obtained by testing the functionals on suitable regularizations of $ \sqrt{\htfm_{\omega}} $ (see \eqref{hTFm} for its definition and \cite{CY}, where such regularizations are performed on $\sqrt{\tfm}$): The main correction to the energy is due to the radial kinetic energy of the regularization and one can easily realize that this energy can be made of order $ \OO(\eps^{-2} |\log\eps|^{-2})$ times $|\log\eps|$. Indeed,  $ \sqrt{\htfm_{\omega}} $ is a monotone function going from $ 0 $ to $ \OO(\eps^{-1/2} |\log\eps|^{-1/2}) $ in an interval of size $ \OO(\eps|\log\eps|) $ and if it were smooth its kinetic energy would hence be  $ \OO(\eps^{-2} |\log\eps|^{-2})$. The exctra factor $|\log\eps|$ is due to the regularization close to $r=R_{\rm h}$. 
		\newline
		The last inequality in \eqref{h energy bound} is also discussed in the Appendix but it is basically due to the estimate 
		\beq
			\label{rmagnp inside TF}
			\big| B_{\omega}(\rv) \big| \leq \Omega \lf| r^{-1} - r \ri| + C |\omega| + \OO(1) \leq C \eps^{-1},
		\eeq
		for any $ \omega $ such that $ |\omega| \leq \OO(\eps^{-1}) $ and $ \rv \in \A $.
		\newline
		We remark that the restriction of the integration to the annulus $ \A $ has no effect because the support of the trial function can be assumed to be contained inside the support of $ \htfm_{\omega} $, i.e., the region where $ r \geq \hrtf $. On the other hand by \eqref{new radius}, $ \hrtf \geq \rtf - \OO(\eps^2|\log\eps|^2) $,  for any $ \omega \in \Z $ such that $ |\omega| \leq C \eps^{-1} $, which implies that the support of $ \htfm_{\omega} $ is contained inside $ \A $.

		Existence and uniqueness of the minimizers trivially follow from strict convexity of the functionals with respect to the density $ f^2 $, which can be made clearer by writing them as
		\beq
			\label{rewrite hGPf}
			\hgpf_{\star}[\sqrt{\rho}] = \int \diff \rv \:  \lf\{ \lf| \nabla \sqrt{\rho} \ri|^2 + ([\Omega] - \omega)^2 r^{-2} \rho + \eps^{-2} \rho^2 \ri\} - 2 \Omega ([\Omega] - \omega),
		\eeq 
		where $ \rho = f^2 $ and we have used the $L^2 $ normalization of $ f $. Similarly the radial symmetry of $ g_{\star} $ can be proven by averaging over the angular variable and  exploiting  the convexity of the functional.
		\newline
		All the other properties of $ g_{\star} $, including the variational equations \eqref{hgpm var}, are trivial consequences of the minimization: Positivity can be proven by noticing that the minimizers $ g_{\star} $ are actually ground states of suitable one-dimensional Schr\"{o}dinger operators and therefore cannot vanish except at the origin (see, e.g., \cite[Theorem 11.8]{LL}). Smoothness follows  from \eqref{hgpm var} by a simple bootstrap argument, etc.
		\newline
		The only property which requires a brief discussion is the monotonicity and we state it in a separate Lemma.
	\end{proof}

	\begin{lem}[\textbf{Monotonicity of the density}]	
		\mbox{}	\\
		Let $\rho(r)\geq 0$ be the $L^1 $-normalized minimizer of
		\bdm
			\int_R^1 \diff r \: r \bigg\{ \left|\frac{\diff\sqrt \rho}{\diff r} \ri|^2 + \frac{a}{r^2} \rho +  b\rho^2 \bigg\}
		\edm
		in $ H^1(\ba\setminus\ba_R) \cap L^2(\ba\setminus \ba_R) $ with $ a,b>0 $ and $ 0 \leq R < 1 $. Then $\rho$ is monotonously increasing in $r$.
	\end{lem}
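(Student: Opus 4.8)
The plan is to show that the Euler–Lagrange equation for the one-dimensional minimizer, together with the Neumann conditions it must satisfy at both endpoints, forces $\rho' \geq 0$. First I would record the variational equation: writing $u = \sqrt{\rho}$ (which is positive on $(R,1)$ by the ground-state argument already invoked in Proposition \ref{htminimization}), the minimizer solves
\beq
-\frac{1}{r}\lf(r u'\ri)' + \frac{a}{r^2} u + 2 b u^3 = \lambda u
\eeq
on $(R,1)$ with $u'(R) = u'(1) = 0$, where $\lambda$ is the Lagrange multiplier enforcing the $L^1$ normalization of $\rho$. (If $R = 0$ the left endpoint condition is replaced by regularity at the origin, but the monotonicity argument below only uses the outer endpoint together with the structure of the equation, so this case is covered as well.)

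The key step is a maximum-principle argument applied to $w := u'$. Differentiating the equation in $r$ gives, after a short computation, a linear second-order ODE for $w$ of the schematic form
\beq
-\frac{1}{r}\lf(r w'\ri)' + \lf(\frac{1}{r^2} + \frac{a}{r^2} + 6 b u^2 - \lambda\ri) w = \frac{2a}{r^3} u \geq 0,
\eeq
so $w$ is a supersolution of a linear operator $\mathcal L w := -r^{-1}(rw')' + V(r) w$ with $V(r) = (1+a)r^{-2} + 6bu^2 - \lambda$. Since $w(R) = w(1) = 0$, if $w$ were negative somewhere it would attain a negative interior minimum at some $r_*$, where $w'(r_*) = 0$ and $(rw')'(r_*) \geq 0$, hence $\mathcal L w(r_*) \leq V(r_*) w(r_*)$; combined with $\mathcal L w \geq 0$ this gives $V(r_*) w(r_*) \geq 0$, i.e. $V(r_*) \leq 0$. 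The remaining task is to rule this out — or rather, to run the argument on a subinterval where $V > 0$ is not available, one uses the standard device of comparing on the set $\{w < 0\}$: on each connected component $(\alpha,\beta)$ of $\{w<0\}$ one has $w = 0$ at the endpoints and $\mathcal L w \geq 0$ inside, and multiplying by $w$ and integrating by parts yields $\int_\alpha^\beta r\lf(|w'|^2 + V w^2\ri) \leq 0$; to conclude $w \equiv 0$ there one needs the quadratic form $\int r(|w'|^2 + V w^2)$ to be nonnegative on $H^1_0(\alpha,\beta)$. This is where the minimality of $\rho$ (not just stationarity) enters: the second variation of the energy at $\rho$ is nonnegative, and a suitable choice of density variation — essentially perturbing $\rho$ by moving mass along $r$ — produces exactly the relevant quadratic form with $V$, so no negative direction can exist. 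I expect this coercivity/second-variation step to be the main obstacle, since one must translate the convexity of the functional in $\rho$ into the right sign for the linearized operator acting on $w = (\sqrt\rho)'$; the cleanest route is probably to avoid the second variation altogether and instead note that $6bu^2 - \lambda$ can be replaced using the equation itself ($2bu^2 = \lambda - (1+a)r^{-2}\cdot(\text{something}) - \ldots$), reducing $V$ to a manifestly controllable expression, but some care with the $r^{-2}$ terms is needed near $r = R$.

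An alternative, and perhaps more robust, route I would keep in reserve: argue by contradiction at the level of the energy directly, via a rearrangement. If $\rho$ is not monotone increasing, let $\rho^*$ be its increasing rearrangement on $(R,1)$ with respect to the measure $r\,\diff r$ (or the monotone rearrangement that preserves the $L^1$ and $L^2$ norms against $r\,\diff r$). One checks that the potential term $\int r^{-1} a\,\rho\,\diff r$ strictly decreases under replacing $\rho$ by $\rho^*$ because $r^{-2}$ is decreasing, that $\int r b\rho^2\diff r$ is unchanged, and that the gradient term $\int r|(\sqrt\rho)'|^2\diff r$ does not increase under monotone rearrangement (a Pólya–Szegő-type inequality adapted to the weight $r$). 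This would contradict minimality unless $\rho = \rho^*$ already, i.e. unless $\rho$ is increasing. The technical point to verify here is the weighted Pólya–Szegő inequality and the compatibility of the rearrangement with the boundary behavior; since the domain is an interval and the weight $r$ is smooth and positive on $[R,1]$ this is standard, and it has the advantage of not requiring the variational equation at all.
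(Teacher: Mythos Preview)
Both sketches have real gaps, and neither matches the paper's argument.

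In your Approach~1 you correctly identify the obstacle yourself: the quadratic form $\int r(|w'|^2 + Vw^2)$ need not be nonnegative, and invoking the second variation does not repair this directly, because the second variation is nonnegative only for variations $\psi$ tangent to the constraint surface (i.e.\ with $\int r\,u\psi\,\diff r = 0$), and the function $w = u'$ restricted to a component of $\{w<0\}$ and extended by zero has no reason to satisfy that orthogonality. Projecting onto $u^\perp$ introduces cross terms that are not controlled, so the maximum-principle route, as you suspected, does not close without a substantial new idea.

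More importantly, your Approach~2 relies on a weighted P\'olya--Szeg\H{o} inequality that is \emph{false}. The increasing rearrangement pushes the large values of $\sqrt\rho$ --- and hence the regions of large gradient --- toward $r=1$, exactly where the weight $r$ in $\int r|(\sqrt\rho)'|^2\,\diff r$ is largest; this can strictly \emph{increase} the weighted kinetic term. (Concretely: take a narrow bump of $\sqrt\rho$ near $r=R$; after rearrangement it sits near $r=1$ with comparable width and slope, and the kinetic integral roughly scales like the weight, so it goes up by a factor $\sim 1/R$.) The same failure occurs whether you rearrange with respect to Lebesgue measure or $r\,\diff r$. So the ``standard'' step you defer to is precisely the one that breaks.

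The paper's proof sidesteps both difficulties with an elementary local truncation rather than a global rearrangement. After the substitution $s=r^2$ (which makes the constraint $\int\rho\,\diff s=\text{const}$), one supposes $\rho$ has an interior maximum at $s_1$ followed by a local minimum at $s_2>s_1$, and builds a competitor $\bar\rho$ by replacing $\rho$ by the constant $\rho(s_1)-\bar\epsilon$ on a small super-level set near $s_1$ and by $\rho(s_2)+\bar\delta$ on a small sub-level set near $s_2$, with $\bar\epsilon,\bar\delta$ tuned to preserve the mass. On the modified sets $\bar\rho$ is \emph{constant}, so its kinetic contribution there vanishes --- the weighted kinetic term therefore strictly decreases, with no rearrangement inequality needed. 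The potential term $\int(a/s)\rho\,\diff s$ strictly decreases because mass has moved toward larger $s$ where $1/s$ is smaller, and $\int b\rho^2\,\diff s$ strictly decreases because mass has moved from higher to lower density. This contradicts minimality. The point to take away is that flattening the competitor locally is what makes the kinetic comparison trivial; a global rearrangement is both unnecessary and, because of the weight, unavailable.
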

	
	\begin{proof} 
		After a  transformation of variables $r^2\mapsto s$ and considering  $\rho$ as a function of $s$, the functional takes the form
		\bdm
			\int_{R^2}^1 \diff s \bigg\{s \left|\frac{\diff\sqrt \rho}{\diff s} \ri|^2 + \frac{a}{s}\rho + b \rho^2 \bigg\}
		\edm
		and the normalization condition is 
		\bdm
			\int_{R^2}^1 \diff s \:\rho ={\rm const}.
		\edm	
		Suppose the assertion is false. Then $\rho$ has a maximum at $s=s_1$ for some $ R \leq s_1<1$ and a local minimum at some $s_2$ with $s_1<s_2\leq 1$. For $0<\epsilon<\rho(s_1)-\rho(s_2)$ consider the set $ \mathcal{I}_{\epsilon}=\{s < s_2 \: : \: \rho(s_1)-\epsilon \leq \rho(s) \leq \rho(s_1)\}$. Then, because $\rho$ is continuous, 
		\bdm
			\Phi(\eps) : = \int_{\mathcal{I}_\epsilon} \diff s \: \rho(s) 
		\edm 
		is strictly positive and $ \Phi(\epsilon)\to 0$ as $\epsilon\to 0$. Likewise, for $\delta>0$ we consider  $ \mathcal{J}_\delta=\{s>s_1 \: : \: \rho(s_2)\leq \rho(s)\leq \rho(s_2)+\delta\}$ and the function
		\bdm
			\Gamma(\delta) : =  \int_{\mathcal{J}_\delta} \diff s \: \rho(s)
		\edm
		that has the same properties as $\Phi$. 
		\newline
		Since both $\Phi$ and $\Gamma$ are continuous, strictly positive and tend to zero when their arguments tends to zero, there exist  $\bar\epsilon,\bar\delta>0$ such that $ \rho(s_2)+\bar\delta<\rho(s_1)-\bar\epsilon $ and $\Phi(\bar\epsilon)=\Gamma(\bar\delta)$: Because $\rho$ is continuous one can, for any given $\bar\delta>0$, always find an $\bar \epsilon>0$ such that the equality 
$\Phi(\bar\epsilon)=\Gamma(\bar\delta)$ holds and $\bar\eps \to 0$ as $\bar\delta\to 0$. The inequality $\rho(s_2)+\bar\delta<\rho(s_1)-\bar\epsilon$ is fulfilled if $\bar\delta$, and hence also $\bar\epsilon$, are small enough.
		\newline
		We now define a new function $\bar \rho$ by putting 
		\bdm
			\bar \rho(s) : = \lf\{ 
			\begin{array}{lll}
				\rho(s_1)-\bar\epsilon, 	&	& \mbox{if} \:\: s \in \mathcal{I}_{\bar\epsilon},	\\
				\rho(s_2)+\bar\delta, 	& 	& \mbox{if} \:\: s \in \mathcal{J}_{\bar\delta},	\\
				\rho(s),			& 	& \mbox{otherwise.}
			\end{array}
			\right.
		\edm 
		We now compute the energy of the new density $ \bar{\rho} $. Note that it belongs by definition to the minimization domain and it is normalized in $ L^1 $ because $\Phi(\bar\epsilon)=\Gamma(\bar\delta)$. 
		\newline
		The kinetic energy of $\bar\rho$ vanishes in the intervals $\mathcal{I}_{\bar\epsilon}$ and $\mathcal{J}_{\bar\delta}$ and the potential term for $\bar\rho$ is strictly smaller than for $\rho$ because $1/s$ is strictly decreasing and the value of $\bar\rho$ on $\mathcal{I}_{\bar\epsilon}$ is larger than on $\mathcal{J}_{\bar\delta}$. Finally 
		\bdm
			\int_{R^2}^1 \diff s \: {\bar\rho}^2 < \int_{R^2}^1 \diff s \: \rho^2 
		\edm
		because when modifying $\rho$ to define $\bar\rho$, mass is moved from $\mathcal{I}_{\bar\epsilon}$ to $\mathcal{J}_{\delta}$ where the density is lower. 
		\newline
		Altogether the functional evaluated on $\bar\rho$ is strictly smaller than on $\rho$ and this contradicts the assumption that $\rho$ is a minimizer.
	\end{proof}

	As next we compare the densities $ g_{\star}^2$ with the TF density and prove exponential smallness in the hole. The analogue of Proposition \ref{GPmin} is

	\begin{pro}[\textbf{Preliminary estimates for $ g_{\star} $}]
		\mbox{}	\\
		As $ \eps \to 0 $ and for any $ \omega \in \Z $ such that $ |\omega| \leq \OO(\eps^{-1}) $,
		\beq
			\label{g estimates}
			\lf\| g_{\star}^2 - \tfm \ri\|_{L^2(\B)} = \OO(1),	\hspace{1,5cm}	\lf\| g_{\star} \ri\|_{L^{\infty}(\ba)}^2 \leq \lf\| \tfm \ri\|_{L^{\infty}(\ba)} \lf(1 + \OO(\sqrt{\eps|\log\eps|} \ri).
		\eeq
	\end{pro}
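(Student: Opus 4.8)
The plan is to follow closely the proof of Proposition~\ref{GPmin}, treating $\hgpm$ and $\hgpma$ at once through the common symbol $g_{\star}$; for the restricted minimizer I would extend $\hgpma$ by zero on $\B\setminus\A$ and use that $\rt<\rtf$ implies $\tfd\subset\A$, so that all integrals of $g_{\star}^2$ against $\tfm$ and all comparisons with the TF functional are unaffected by the restriction of the domain.

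For the first ($L^2$) estimate I would expand
\[
\int_{\B}\diff\rv\,(g_{\star}^2-\tfm)^2=\|g_{\star}\|_4^4+\|\tfm\|_2^2-2\int_{\B}\diff\rv\,g_{\star}^2\,\tfm ,
\]
use the pointwise inequality $2\tfm(r)\geq\eps^2(\tfchem+\Omega^2 r^2)$ (with equality on $\tfd$, and non-positive right-hand side elsewhere) together with $\|g_{\star}\|_2=1$ to get $-2\int g_{\star}^2\tfm\leq-\eps^2\tfchem-\eps^2\Omega^2\int r^2 g_{\star}^2$, and then insert the identity $\|\tfm\|_2^2-\eps^2\tfchem=-\eps^2\tfe$ that follows from the definition \eqref{tfchem} of $\tfchem$. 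This bounds the left-hand side by $\eps^2\big(\tff[g_{\star}^2]-\tfe\big)$, where $\tff[g_{\star}^2]$ denotes the TF functional on the relevant domain. Since dropping the non-negative terms $|\nabla g_{\star}|^2$ and $B_{\omega}^2 g_{\star}^2$ from $\hgpf_{\star}[g_{\star}]$ leaves precisely $\tff[g_{\star}^2]$, one has $\tff[g_{\star}^2]\leq\hgpe_{\star}$, and the bound $\hgpe_{\star}\leq\tfe+\OO(\eps^{-2})$ of \eqref{h energy bound} yields $\|g_{\star}^2-\tfm\|_{L^2(\B)}^2=\OO(1)$.

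For the second ($L^\infty$) estimate I would run the maximum-principle argument of Proposition~\ref{GPmin} on $\rho_{\star}:=g_{\star}^2$. This is in fact simpler here: the functionals carry no angular-momentum cross term, so no analogue of \eqref{delta density est} is needed, and since $g_{\star}$ is increasing by the monotonicity lemma above, the supremum over $\ba$ is attained at $r=1$, where the Neumann condition $\partial_r g_{\star}(1)=0$ forces $\Delta\rho_{\star}(1)\leq 0$. Writing $-\Delta\rho_{\star}=-2g_{\star}\Delta g_{\star}-2|\nabla g_{\star}|^2\leq-2g_{\star}\Delta g_{\star}$, substituting the variational equation \eqref{hgpm var}, and using $\Omega^2 r^2-B_{\omega}^2=2\Omega([\Omega]-\omega)-([\Omega]-\omega)^2 r^{-2}$ together with $B_{\omega}^2\geq 0$ and $r\leq 1$, one obtains
\[
-\Delta\rho_{\star}\leq 4\eps^{-2}\rho_{\star}\Big(\frac{\eps^2(\hchem_{\star}+\Omega^2)}{2}-\rho_{\star}\Big) ,
\]
hence, evaluating at $r=1$ and using $\rho_{\star}(1)>0$, $\|g_{\star}\|_{L^\infty(\ba)}^2\leq\half\,\eps^2(\hchem_{\star}+\Omega^2)$. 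Since $\|\tfm\|_{L^\infty(\ba)}=\tfm(1)=\half\,\eps^2(\tfchem+\Omega^2)=C\eps\Omega$ by \eqref{TFann0}--\eqref{TFm0}, it remains only to show $\hchem_{\star}-\tfchem\leq C\eps^{-5/2}|\log\eps|^{-1/2}$.

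That last bound is proved, just as \eqref{diff GP chemical}, by a self-consistent argument: from $\hchem_{\star}-\tfchem=(\hgpe_{\star}-\tfe)+\eps^{-2}(\|g_{\star}\|_4^4-\|\tfm\|_2^2)$, estimate $\|g_{\star}\|_4^4-\|\tfm\|_2^2=\int(g_{\star}^2-\tfm)(g_{\star}^2+\tfm)$ by Cauchy--Schwarz with $\|g_{\star}^2\|_2\leq\|g_{\star}\|_\infty$, $\|\tfm\|_2\leq\|\tfm\|_\infty^{1/2}$ and the $L^2$ bound just obtained, and feed in $\|g_{\star}\|_\infty\leq\tfm(1)^{1/2}+C\eps|\hchem_{\star}-\tfchem|^{1/2}$ from the previous step; this gives $|\hchem_{\star}-\tfchem|\leq C\eps^{-5/2}|\log\eps|^{-1/2}+C\eps^{-1}|\hchem_{\star}-\tfchem|^{1/2}$, whose solutions all satisfy $|\hchem_{\star}-\tfchem|\leq C\eps^{-5/2}|\log\eps|^{-1/2}$. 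Inserting this back gives $\|g_{\star}\|_{L^\infty(\ba)}^2\leq\tfm(1)+C\eps^{-1/2}|\log\eps|^{-1/2}=\|\tfm\|_{L^\infty(\ba)}\big(1+\OO(\sqrt{\eps|\log\eps|})\big)$, since $\tfm(1)=C\eps^{-1}|\log\eps|^{-1}$. The whole argument is routine given Proposition~\ref{GPmin}; the two points deserving genuine care are this bootstrap for $\hchem_{\star}$ and checking that the zero-extension of $\hgpma$ together with the inner Neumann condition $\partial_r\hgpma(\rt)=0$ disturb neither the comparison with $\tff$ nor the maximum-principle step.
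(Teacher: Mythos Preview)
Your proposal is correct and follows essentially the same route as the paper: the $L^2$ bound is obtained exactly as in Proposition~\ref{GPmin} with the energy remainder controlled by \eqref{h energy bound}, and the $L^\infty$ bound comes from the maximum principle applied to the variational equation \eqref{hgpm var}, followed by the same bootstrap estimate on $|\hchem_{\star}-\tfchem|$. The only cosmetic difference is that the paper uses the monotonicity of the potential $([\Omega]-\omega)^2 r^{-2}$ (i.e.\ $r^{-2}\geq 1$) rather than your $B_\omega^2\geq 0$ to pass from the equation to the differential inequality, yielding the slightly sharper intermediate bound \eqref{apriori bound}; both versions give the same final estimate.
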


	\begin{proof}
		The proof of the $L^2$ estimate is exactly the same as for the first inequality in \eqref{GPmin estimates}. The only difference occurs in the energy remainder on the r.h.s. of \eqref{l2 difference} which can now be bounded by $ \eps^2(\hgpe_{\star} - \tfe) \leq C $ because of \eqref{h energy bound}. Here the condition on $ \omega $, i.e., $ |\omega| \leq C \eps^{-1} $, is used as in Proposition \ref{htminimization}.
		
		The sup estimate can be proven by applying the same argument used to prove the second inequality in \eqref{GPmin estimates} to the variational equations \eqref{hgpm var} solved by $ g_{\star} $, exploiting as well Neumann boundary conditions: Indeed thanks to the monotonicity of the potential $  ([\Omega]-\omega)^2 r^{-2} $, one obtains the inequality
		\bdm
			- \Delta g_{\star}^2 \leq 4 \eps^{-2} \lf( \eps^{2} \hchem_{\star} + 2\eps^2\Omega ([\Omega]-\omega) - \eps^2 ([\Omega] - \omega)^2 - 2 g_{\star}^2 \ri) g_{\star}^2,
		\edm
		which as in the proof of Proposition \ref{GPmin} implies
		\beq
			\label{apriori bound}
			\lf\| g_{\star} \ri\|^2_{\infty} \leq \frac{\eps^{2} \hchem_{\star} + 2\eps^2\Omega ([\Omega]-\omega) - \eps^2 ([\Omega] - \omega)^2}{2}.
		\eeq
		The difference between the chemical potential can be estimated as in \eqref{diff GP chemical}:	
		\beq
			\label{diff g chemical}
			\lf| \hchem_{\star} - \tfchem \ri| \leq \OO(\eps^{-5/2} |\log\eps|^{-1/2}),
		\eeq 
		so that  $ \| g_{\star} \|^2_{\infty} \leq \tfm(1) (1 + o(1)) $.
	\end{proof}

As for $ \gpm $ the above estimates imply the exponential smallness of the densities $ g_{\star} $ inside the hole. Moreover the $L^2$ estimate \eqref{GPmin estimates} can be refined to a pointwise estimate inside $ \tfd $ and one can actually prove that there is a region where the difference between $ g_{\star} $ and $ \tfm $ is much smaller than $ \tfm $:
	 
	\begin{pro}[\textbf{Exponential smallness of $ g_{\star} $ inside the hole}]
		\label{g exponential smallness}
		\mbox{}	\\
		As $ \eps \to 0 $ and for any $ \rv \in \B $
		\beq
			\label{g exp small}
			 g^2_{\star}(r) \leq C \eps^{-1} |\log\eps|^{-1} \exp \lf\{ - \frac{1-r^2}{1 - \rtf^2} \ri\}.
		\eeq 
		Moreover there exists a strictly positive constant $ c $ such that for any $ r \leq \rtf - \OO(\eps^{7/6}) $,
		\beq
			\label{g improved exp small}
			g^2_{\star}(r) \leq C \eps^{-1} |\log\eps|^{-1} \exp \lf\{ - \frac{c}{\eps^{1/6}} \ri\}.
		\eeq
	\end{pro}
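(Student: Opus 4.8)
The plan is to reduce both bounds to the already established Proposition \ref{exponential smallness}. Indeed, \eqref{g exp small} and \eqref{g improved exp small} are the exact counterparts, with $g_\star^2$ playing the role of $\gpd$, of \eqref{exp small} and \eqref{improved exp small}; so it suffices to produce for $g_\star^2$ a differential inequality of the same shape as \eqref{var ineq gpd} and then rerun, essentially verbatim, the two maximum-principle comparisons used in that proof.

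First I would extract the differential inequality from the variational equation \eqref{hgpm var}. Setting $\hat\Omega := [\Omega]-\omega$ and using $-\Delta g_\star^2 = -2 g_\star\Delta g_\star - 2|\nabla g_\star|^2 \le -2 g_\star\Delta g_\star$ together with \eqref{hgpm var}, one gets
\beq
-\Delta g_\star^2 \le 2\lf(\hchem_\star + 2\Omega\hat\Omega - \hat\Omega^2 r^{-2} - 2\eps^{-2} g_\star^2\ri) g_\star^2 .
\eeq
The key observation is that $\Omega^2 r^2 - 2\Omega\hat\Omega + \hat\Omega^2 r^{-2} = B_\omega(r)^2 \ge 0$ (see \eqref{newB0}), so the parenthesis is bounded above by $\hchem_\star + \Omega^2 r^2 - 2\eps^{-2} g_\star^2$; recalling the definition \eqref{ttfm} of $\ttfm$ and using the chemical-potential estimate \eqref{diff g chemical} to replace $\hchem_\star$ by $\tfchem$ at the price of $\OO(\eps^{-5/2}|\log\eps|^{-1/2})$, this is precisely \eqref{var ineq gpd} with $g_\star^2$ in place of $\gpd$. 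The prefactor needed for the comparison functions, $\|g_\star\|_{L^\infty(\ba)}^2 \le C\eps^{-1}|\log\eps|^{-1}$, is furnished by \eqref{g estimates} and \eqref{TF sup}.

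From here the proof is a line-by-line transcription of that of Proposition \ref{exponential smallness}. For \eqref{g exp small}: on $\{r^2 \le \rtf^2 - \eps\}$ the inequality above yields the subsolution inequality \eqref{U subsolution} (with $g_\star^2$ for $\gpd$), the function $\|g_\star\|_{L^\infty(\ba)}^2 W(r)$ with $W$ as in that proof is a supersolution, and the maximum principle together with the monotonicity of $W$ gives the bound on the whole domain. For \eqref{g improved exp small}: on $\{r \le \rtf - \OO(\eps^{7/6})\}$ one has $\ttfm(r) \le -C\eps^{-5/6}|\log\eps|^{-2}$, so the inequality improves to $-\Delta g_\star^2 \le -C\eps^{-17/6}|\log\eps|^{-2} g_\star^2$, and comparison on an annulus $\{r \in [r_0-\varrho, r_0+\varrho]\}$ with $\varrho = \OO(\eps^{7/6})$ against the local supersolution \eqref{supersolution exp small} yields the bound at $r = r_0$. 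The Neumann condition $\partial_r g_\star(1) = 0$ lets the maximum-principle argument be carried up to $r = 1$, exactly as the Neumann condition on $\partial\B$ was used for $\gpm$; in the $\hgpma$ case the extra Neumann condition $\partial_r\hgpma(\rt) = 0$ covers the inner boundary (for the improved estimate, near $\rt$ one replaces the two-sided annulus by the one-sided collar $\{r \in [\rt, \rt+\varrho]\}$), and for $\hgpm$, which vanishes at the origin of order $\hat\Omega$, the estimates near $r = 0$ are trivial.

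Since every step is a transcription, I do not anticipate a genuine obstacle; the point requiring the most care is the inner-boundary bookkeeping in the $\hgpma$ case — checking that the comparison domains fit inside $\A = \{\rt \le r \le 1\}$ and that the comparison functions dominate $g_\star^2$ on their boundaries — all of which is immediate from $\rt < \rtf$, $|\rt - \rtf| \ll \eps|\log\eps|$ and the Neumann condition at $\rt$. One should also note that \eqref{diff g chemical} holds uniformly for $|\omega| \le \OO(\eps^{-1})$, which is the case because it is established exactly as \eqref{diff GP chemical}.
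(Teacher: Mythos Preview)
Your proposal is correct and follows precisely the route the paper takes: the paper's proof is the single sentence ``The proof can be easily reduced to the proof of Proposition \ref{exponential smallness} by using \eqref{diff g chemical},'' and what you have written is exactly that reduction spelled out --- deriving the analogue of \eqref{var ineq gpd} from \eqref{hgpm var} via $B_\omega^2\ge 0$ and \eqref{diff g chemical}, then rerunning the two comparison arguments with $\|g_\star\|_\infty^2$ supplied by \eqref{g estimates}. Your attention to the inner-boundary bookkeeping for $\hgpma$ is more than the paper bothers to mention, and is handled correctly.
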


	\begin{proof}
		The proof can be easily reduced to the proof of Proposition \ref{exponential smallness} by using \eqref{diff g chemical}.
	\end{proof}

	\begin{pro}[\textbf{Pointwise estimate for $ g_{\star} $}]
		\label{point GP dens}
		\mbox{}	\\
		As $ \eps \to 0 $ and for any $ \omega \in \Z $ with $ |\omega| \leq \OO(\eps^{-1}) $
		\beq
			\label{pointwise bounds}
			\lf| g_{\star}^2(r) - \tfm(r) \ri| \leq C \eps^{2} |\log\eps|^{2} (r^2 - \rtf^2)^{-3/2} \tfm(r)
		\eeq
		for any $ \rv \in \tfd $ such that $ r \geq \rtf + \OO(\eps^{3/2} |\log\eps|^2) $.
	\end{pro}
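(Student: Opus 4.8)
The plan is to recast the assertion as a pointwise bound on the density $\rho_\star:=g_\star^2$ obtained from a maximum principle. Multiplying the variational equation \eqref{hgpm var} by $g_\star$, using $\Delta(g_\star^2)=2g_\star\Delta g_\star+2|\nabla g_\star|^2$ together with $([\Omega]-\omega)^2r^{-2}-2\Omega([\Omega]-\omega)=B_\omega^2-\Omega^2r^2$, one gets the density equation
\[
-\tfrac12\Delta\rho_\star+|\nabla g_\star|^2+\big(B_\omega^2-\Omega^2r^2\big)\rho_\star+2\eps^{-2}\rho_\star^2=\hchem_\star\rho_\star .
\]
On the interior of $\tfd$ one has $2\eps^{-2}\tfm=\tfchem+\Omega^2r^2$ and $-\tfrac12\Delta\tfm=-\eps^2\Omega^2$; subtracting the corresponding identity (or the identity for $\ttfm$, which is valid for all $r$ and is more convenient near $\rtf$), the difference $D:=\rho_\star-\tfm$ satisfies, for $r>\rtf$,
\[
-\tfrac12\Delta D+\big(B_\omega^2+2\eps^{-2}\rho_\star\big)\,D+|\nabla g_\star|^2=(\hchem_\star-\tfchem)\rho_\star-B_\omega^2\tfm+\eps^2\Omega^2 .
\]
The decisive algebraic feature, after dropping $|\nabla g_\star|^2\ge0$ for the upper bound, is that the coefficient of $D$ is nonnegative — indeed $B_\omega^2+2\eps^{-2}\rho_\star\sim\Omega^2(r^2-\rtf^2)$ wherever $\rho_\star$ is comparable to $\tfm$ — so a comparison argument is available.

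First I would gather the quantitative inputs. The source terms are controlled through $|\hchem_\star-\tfchem|=\OO(\eps^{-5/2}|\log\eps|^{-1/2})$ (\eqref{diff g chemical}), $B_\omega^2=\OO(\eps^{-2})$ (\eqref{rmagnp inside TF}), and the elementary fact that $r^2-\rtf^2\le1-\rtf^2=\OO(\eps|\log\eps|)$ throughout $\tfd$, which makes $\eps^2\Omega^2=\OO(\eps^{-2}|\log\eps|^{-2})$ negligible. I would also need: the monotonicity of $\rho_\star$ (the monotonicity lemma above) and of $\tfm$; the exponential smallness of $g_\star$ for $r\le\rtf-\OO(\eps^{7/6})$ (Proposition \ref{g exponential smallness}); the sup bound $\|g_\star\|_{\infty}^2\le\tfm(1)(1+\OO(\sqrt{\eps|\log\eps|}))$ (\eqref{g estimates}); and an a priori bound on $\nabla g_\star$ coming from the variational equation by elliptic regularity.

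The core is then a barrier argument with $V(r):=C\eps^2|\log\eps|^2(r^2-\rtf^2)^{-3/2}\tfm(r)=\tfrac12C\,\Omega_0^2(r^2-\rtf^2)^{-1/2}$. On the bulk of $\tfd$, where $r^2-\rtf^2$ is bounded below by a fixed power of $\eps$ (of order $\eps^{4/3}$ up to logarithms), $V$ is a supersolution and $-V$ a subsolution of the inequality for $D$: the key observation is that the dominant source contribution $|\hchem_\star-\tfchem|\rho_\star$ and the coercive term $2\eps^{-2}\rho_\star V$ both scale linearly in $\rho_\star$, so their ratio is $\OO(\Omega_0^{-2})$ regardless of the size of $\rho_\star$, while $-B_\omega^2\tfm\le0$ is favorable and $\Delta V$ is under control there. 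Feeding in the boundary data at $r=1$ from the Neumann condition $\partial_r g_\star(1)=0$ (hence $\rho_\star'(1)=0$) and the sup bound, the maximum principle yields $|D|\le CV$ on this bulk region. To push the estimate down to the stated threshold $r^2-\rtf^2\sim\eps^{3/2}|\log\eps|^2$ one uses a separate, direct analysis of the radial equation for $g_\star$ in the transition region around the "turning point" $r=\rtf$ where $\tfm$ vanishes (comparison with suitably rescaled explicit sub/supersolutions): this both establishes $|D|\le CV$ in that thin inner layer — where in any case the lower bound $D\ge-\tfm\ge-V$ is automatic because $V$ dominates $\tfm$ itself there — and supplies the inner boundary datum needed to close the bulk barrier argument. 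Combined with the exponential smallness and the monotonicity of $\rho_\star$, this gives \eqref{pointwise bounds}.

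The main obstacle is precisely the turning-point region $r\to\rtf$: there $\tfm$ vanishes linearly in $r^2-\rtf^2$, so the coercive coefficient $B_\omega^2+2\eps^{-2}\rho_\star$ degenerates, the barrier $V$ blows up, and $\Delta V$ becomes large with the unfavorable sign, so no power-law barrier is a supersolution of the density equation arbitrarily close to $\rtf$; the comparison cannot be pushed naively all the way in. Controlling $g_\star$ across this layer — i.e. understanding how the gradient term in $\hgpf_\star$ smooths the square-root profile $\sqrt{\tfm}$ near $\rtf$ — is what fixes the exponent $3/2$ and the power of $|\log\eps|$ in \eqref{pointwise bounds}, and it is the only step that is not a routine elliptic comparison.
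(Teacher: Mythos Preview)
Your route through the density equation is different from the paper's and, as written, has a real gap. The paper does \emph{not} work with $\rho_\star=g_\star^2$: it keeps the equation for $g_\star$ itself in the form $-\Delta g_\star=2\eps^{-2}(\tilde\rho_\star-g_\star^2)g_\star$, where $\tilde\rho_\star=\tfrac12(\eps^2\hchem_\star+\eps^2\Omega^2r^2-\eps^2B_\omega^2)$, and builds \emph{local} sub/supersolutions on intervals $[r_0-\delta,r_0+\delta]$ --- an explicit $\coth$ profile for the upper bound (as in Andr\'e--Shafrir) and a rescaling of the Serfaty solution of $-\Delta h=\tilde\eps^{-2}(1-h^2)h$ for the lower bound. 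With the choice $\delta=C\eps^2|\log\eps|^2(r_0^2-\rtf^2)^{-1/2}$ both comparisons give the stated error at $r_0$, uniformly down to $r_0-\rtf\sim\eps^{3/2}|\log\eps|^2$, and no a priori information on $\rho_\star$ or $\nabla g_\star$ is needed beyond the sup bound $g_\star^2\le\tilde\rho_\star(1)$.

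Your scheme, by contrast, hinges on two inputs that are not available at this stage. First, for the lower bound the term $|\nabla g_\star|^2$ sits on the wrong side, and the only gradient control one has without \eqref{pointwise bounds} is the crude Gagliardo--Nirenberg bound $|\nabla g_\star|\lesssim\eps^{-2}|\log\eps|^{-1}$; this makes $|\nabla g_\star|^2$ of order $\eps^{-4}|\log\eps|^{-2}$, larger than the coercive term $2\eps^{-2}\rho_\star V$ throughout the annulus, so the barrier $-V$ is not a subsolution. The sharper gradient estimate you would need is precisely the paper's Proposition on $\nabla g_\star$, whose proof \emph{uses} \eqref{pointwise bounds} --- a circularity. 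Second, even for the upper bound your global barrier $V\sim(r^2-\rtf^2)^{-1/2}$ ceases to be a supersolution once $-\tfrac12\Delta V\sim -(r^2-\rtf^2)^{-5/2}$ overtakes $2\eps^{-2}\rho_\star V$, which happens already at $r^2-\rtf^2\sim\eps^{4/3}$, well above the target threshold $\eps^{3/2}|\log\eps|^2$; the ``separate, direct analysis'' you invoke in the remaining layer is exactly the local sub/supersolution construction the paper uses from the start, so the density-equation detour buys nothing.
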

	
	\begin{proof}	
		The proof is similar to the proof of Proposition 1 in \cite{AAB} but for the sake of completeness we bring here all the details.

		The variational equation \eqref{hgpm var} can be rewritten in the following form
		\beq
			\label{var eq g}
			-\Delta g_{\star} = \frac{2}{\eps^2} \lf[ \tilde{\rho}_{\star}(r) - g_{\star}^2 \ri] g_{\star},
		\eeq
		where the function $ \tilde{\rho}_{\star} $ is given by
		\beq
			\label{def tilderho}
			\tilde{\rho}_{\star}(r) : = \half \lf( \eps^2 \hchem_{\star} + \eps^2 \Omega^2 r^2 - \eps^2 B_{\omega}^2(r) \ri).
		\eeq
		Moreover by \eqref{rmagnp inside TF} and \eqref{diff g chemical}
		\beq
			\label{tilderho}
			\lf\| \tilde{\rho}_{\star} - \tfm \ri\|_{L^{\infty}(\tfd)} \leq \eps^2 \lf| \hchem - \tfchem \ri| + \OO(1) \leq \OO(\eps^{-1/2} |\log\eps|^{-1/2}).
		\eeq
		On the other hand, for any $ \rv \in \tfd $ such that $ r - \rtf \geq \OO(\eps^{3/2} |\log\eps|^{2}) $, $ \tfm(r) \geq \OO(\eps^{-1/2}) $ and therefore
		\beq
			\label{lb hatdensity}
			\tilde{\rho}_{\star}(r) \geq \tfm(r) (1 - C |\log\eps|^{-1/2}) > C \eps^{-1/2}
		\eeq
		and in particular $ \tilde{\rho}_{\star} $ is strictly positive for such $ \rv $, which is crucial in order to apply the maximum principle. The pointwise estimates are indeed proven by providing local super- and subsolutions to \eqref{var eq g}.

		For the upper bound, we consider an interval $ [r_0 - \delta, r_0 + \delta] $, where $ \rtf + C \eps^{3/2} |\log\eps|^{2} + \delta < r_0 < 1 - \delta $ with $ \delta \ll 1 $, and the function
		\beq
			W(r) : = \sqrt{\tilde{\rho}_{\star}(r_0+\delta)} \coth \lf[ \coth^{-1} \lf( \sqrt{\frac{\tilde{\rho}_{\star}(1)}{\tilde{\rho}_{\star}(r_0 + \delta)}} \ri) + \frac{\delta^2 - \lf| r - r_0 \ri|}{3 \delta \eps} \sqrt{2\tilde{\rho}_{\star}(r_0 + \delta)} \ri].
		\eeq
		One has (see \cite[Proof of Proposition 2.1]{AS}) for any $ r \in [r_0 - \delta, r_0 + \delta] $
		\beq
			- \Delta W \geq \frac{2}{\eps^2} \lf( \tilde{\rho}_{\star}(r_0 + \delta) - W^2 \ri) W \geq  \frac{2}{\eps^2} \lf( \tilde{\rho}_{\star}(r) - W^2 \ri) W,
		\eeq
		where we have used the fact that $ \tilde{\rho}_{\star}(r) $ is an increasing function of $ r $. Moreover at the boundary of the interval $ W(r_0-\delta) = W(r_0 + \delta) = \sqrt{\tilde{\rho}_{\star}(1)} $ which is not smaller than $ g_{\star} $ thanks to the upper bound \eqref{apriori bound}, which reads $ g_{\star}^2 \leq \tilde{\rho}_{\star}(1) $. Therefore $ W(r) $ is a supersolution to \eqref{var eq g} in the interval $ [r_0 - \delta, r_0 + \delta] $ and by the maximum principle
		\beq
 			\label{pointwise ub}
 			g_{\star}(r_0) \leq W(r_0) = \sqrt{\tilde{\rho}_{\star}(r_0+\delta)} \coth \lf[ \frac{\delta}{3 \eps} \sqrt{2\tilde{\rho}_{\star}(r_0 + \delta)} \ri] 
		\eeq
		where we have used the fact that $ \coth(x) $ is a non-increasing function.
		\newline
		By the explicit expression \eqref{def tilderho} and the inequality \eqref{lb hatdensity}, one has
		\beq
			\lf| \tilde{\rho}_{\star}(r_0+\delta) - \tilde{\rho}_{\star}(r_0) \ri| \leq C \eps^2 \Omega^2 \delta,
		\eeq
		\beq
			\label{lb at r_0}
			\tilde{\rho}_{\star}(r_0) \geq \tfm(r_0)  \lf( 1 - \OO(|\log\eps|^{-1/2} \ri) \geq C \eps^2 \Omega^2 (r_0^2 - \rtf^2),
		\eeq
		so that \eqref{pointwise ub} becomes
		\beq
			\label{pointwise ub 1}
			g_{\star}(r_0) \leq \sqrt{\tilde{\rho}_{\star}(r_0)}  \lf( 1 + \frac{C \delta}{r_0^2 - \rtf^2} \ri)  \coth \lf[ \frac{\delta}{3 \eps} \sqrt{2\tilde{\rho}_{\star}(r_0 + \delta)} \ri].
		\eeq
		When the argument of $ \coth $ tends to $ \infty $, i.e., 
		\bdm
			\frac{\delta}{\eps} \sqrt{\tilde{\rho}_{\star}(r_0 + \delta)} \gg 1,
		\edm
		we can bound
		\bdm
			\coth(x) = \frac{1 + e^{-2x}} {1 - e^{-2x}} \leq (1 + C e^{-2x}),
		\edm
		and obtain the inequality
		\beq
			\label{pointwise ub 2}
			g_{\star}(r_0) \leq \sqrt{\tilde{\rho}_{\star}(r_0)} \lf( 1 + \frac{C \delta}{r_0^2 - \rtf^2} \ri) \lf( 1 + C\exp \lf\{ - \frac{2 \delta}{3 \eps} \sqrt{2\tilde{\rho}_{\star}(r_0 + \delta)} \ri\} \ri).
		\eeq
		By \eqref{lb at r_0} the second error term on the r.h.s. of the above expression is bounded from above by
		\bdm
			\exp \lf\{ - C \delta\Omega \sqrt{r_0^2 - \rtf^2} \ri\},
		\edm
		so that by taking 
		\beq
			\label{delta choice}
			\delta = C \eps^2 |\log\eps|^2 \lf(r_0^2 - \rtf^2\ri)^{-1/2},
		\eeq
		such an error can be made smaller than any power of $ \eps $, since one can choose the constant coefficient in $ \delta $ arbitrarily large. With such a choice the other error term becomes
		\bdm
			C \eps^2 |\log\eps|^2 \lf( r_0^2 - \rtf^2 \ri)^{-3/2} > C \eps^{1/2} |\log\eps|^{1/2},
		\edm
		since by definition $ r^2 - \rtf^2 \leq 1 - \rtf^2 \leq \OO(\eps|\log\eps|) $. Hence the second factor on the r.h.s. of \eqref{pointwise ub 2} can be absorbed in the above remainder. 
		\newline
		Moreover for any $ r \geq \rtf $, $ \delta \leq \eps^{5/2} |\log\eps|^{1/2} \ll \eps^{3/2} $ and one can extend the estimate to any $ r \geq \rtf + \OO(\eps^{3/2} |\log
\eps|^2) $. For the same reason the estimate applies also to the region $ [1 - 2\delta, 1] $: There one can use \eqref{apriori bound} and the fact that $ \tilde{\rho}_{\star}(1) - \tilde{\rho}_{\star}(1-2\delta) \leq C \Omega^2 \eps^2 \delta \leq C \eps^{1/2} $, which is much smaller than the error term above. 
		\newline
		The final estimate is then
		\beq
			\label{pointwise ub 3}
			g_{\star}(r) \leq \sqrt{\tilde{\rho}_{\star}(r)}  \lf[ 1 + C \eps^2 |\log\eps|^2 \lf( r^2 - \rtf^2 \ri)^{-3/2} \ri],
		\eeq
		for any $ \rtf + C \eps^{3/2} |\log\eps|^2 \leq r \leq 1 $. 
		\newline
		The next step is the replacement of $ \tilde{\rho}_{\star} $ with $ \tfm $ by means of \eqref{tilderho}, which  yields an additional remainder given by
		\bml{
			\tfm(r)^{-1} \lf|  \tilde{\rho}_{\star}(r) - \tfm(r) \ri| \leq C (\eps\Omega)^{-2} \lf( r^2 - \rtf^2 \ri)^{-1} \lf\| \tilde{\rho}_{\star} - \tfm \ri\|_{L^{\infty}(\tfd)} \leq	\\
			C \eps^{2} |\log\eps|^{2} \lf( r^2 - \rtf^2 \ri)^{-1},
		}
		which can however be absorbed in the error term in \eqref{pointwise ub 1}, since $ r^2 - \rtf^2 \leq \eps |\log\eps| $.		

		In order to prove a corresponding lower bound, we fix some $ r_0 $ and $ \delta $ as before, i.e., such that $ \rtf + C \eps^{3/2} |\log\eps|^2 + \delta < r_0 < 1 - \delta $ and $ \delta \ll 1 $. Since $ \tilde{\rho}_{\star} $ is an increasing function of $ r $ and $  g_{\star} $ is positive
		\beq
			\label{g var inequation}
			-\Delta g_{\star} \geq \frac{2}{\eps^2} \lf[ \tilde{\rho}_{\star}(r_0-\delta) - g_{\star}^2 \ri] g_{\star}.	
		\eeq
		Then we denote by $ h(r) $ the function solving for $ \rv \in \B $
		\bdm
			-\Delta h = \tilde{\eps}^{-2} \lf( 1 - h^2 \ri) h,
		\edm
		with Dirichlet boundary condition $ h(1) = 0 $ and $ \tilde{\eps} \to 0 $. In \cite{Se} it was proven that $ h $ satisfies the bound
		\bdm
			1 - c \exp \lf\{ - \frac{1 - r^2}{2 \tilde{\eps}} \ri\} \leq h(r) \leq 1.
		\edm
		If we now set 
		\beq
			\tilde{h}(r) : = \sqrt{\tilde{\rho}_{\star}(r_0-\delta)} \: h\lf(\frac{|r-r_0|}{\delta}\ri),
		\eeq 
		\beq
			\label{tilde eps}
			\tilde{\eps} : = \frac{\eps}{\delta \sqrt{2 \tilde{\rho}_{\star}(r_0-\delta)}},
		\eeq	
		then $ \tilde{h} $ solves in $ [r_0-\delta,r_0+\delta] $ the equation
		\bdm
			-\Delta \tilde{h} = \frac{2}{\eps^2} \lf[ \tilde{\rho}_{\star}(r_0-\delta) - \tilde{h}^2 \ri] \tilde{h},
		\edm
		with Dirichlet conditions at the boundary $ r = r_0 \pm \delta $. Thanks to \eqref{g var inequation}, $ g_{\star} $ is a supersolution for the same problem, so that by the maximum principle $ g_{\star}(r) \geq \tilde{h}(r) $ inside the interval and in particular
		\beq
			\label{pointwise lb}
			g_{\star}(r_0) \geq \tilde{h}(r_0) \geq \sqrt{\tilde{\rho}_{\star}(r_0-\delta)} \lf[ 1 - c \exp \lf\{ - \frac{1}{2 \tilde{\eps}} \ri\} \ri],
		\eeq
		for any $  \rtf + C \eps^{3/2} |\log\eps|^2 + \delta < r_0 < 1 - \delta $. 
		\newline
		Note that by choosing $ \delta $ as in \eqref{delta choice} the remainder in the above expression can be made smaller than any power of $ \eps $. However the estimate of $ \tilde{\rho}_{\star}(r-\delta) $ in terms of $ \tilde{\rho}_{\star}(r) $ provides the same remainder as in the upper bound proof, i.e.,
		\bdm
			g_{\star}(r) \geq \sqrt{\tilde{\rho}_{\star}(r)}  \lf[ 1 - C \eps^2 |\log\eps|^2 \lf( r^2 - \rtf^2 \ri)^{-3/2} \ri].
		\edm
		The extension of the estimate to the whole region $ [\rtf + \OO(\eps^{3/2}|\log\eps|^2), 1] $ as well as the replacement of $ \tilde{\rho}_{\star} $ with $ \tfm $ can be done exactly as in the upper bound and the remainders included in the above error term.
		\end{proof}
	
We conclude the section with an useful estimate of the gradient of the densities $ g_{\star} $:

	\begin{pro}[\textbf{Gradient estimate for $ g_{\star} $}]
		\mbox{}	\\
		As $ \eps \to 0 $ and for any $ R > \rtf + \OO(\eps^{3/2} |\log\eps|^2) $, one has
		\beq
			\label{gen grad est}
			\lf\| \nabla  g_{\star} \ri\|_{L^{\infty}(\ba\setminus\ba_R)} \leq  C \eps^{-1} \lf( R^2 - \rtf^2 \ri)^{-1/4} \lf\| g_{\star} \ri\|_{L^{\infty}(\ba)},
		\eeq
		which inside $ \at = \{\rv \: :\: \rtf +\eps|\log\eps|^{-1}\leq r\leq 1\} $ becomes
		\beq
			\label{grad est}
			\lf\| \nabla  g_{\star} \ri\|_{L^{\infty}(\at)} \leq  C \eps^{-7/4} |\log\eps|^{-3/4}.
		\eeq	
	\end{pro}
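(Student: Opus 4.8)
The plan is to deduce \eqref{gen grad est} from local elliptic regularity applied to the Ginzburg--Landau equation \eqref{var eq g}, written as $-\Delta g_{\star}=f_{\star}$ with $f_{\star}:=2\eps^{-2}\bigl(\tilde\rho_{\star}(r)-g_{\star}^2\bigr)g_{\star}$ and $\tilde\rho_{\star}$ as in \eqref{def tilderho}, together with the Neumann condition $\partial_r g_{\star}(1)=0$ (the condition $\partial_r\hgpma(\rt)=0$ is irrelevant, since throughout we work at radii $\geq R>\rtf>\rt$). The facts needed, all established above, are: the bound \eqref{tilderho} on $\|\tilde\rho_{\star}-\tfm\|_{L^\infty(\tfd)}$; the pointwise estimate \eqref{pointwise bounds}, which combined with \eqref{TFm0}, \eqref{angular velocity} and $r^2-\rtf^2\leq 1-\rtf^2=\OO(\eps|\log\eps|)$ (cf. \eqref{TFann0}) yields $|\tilde\rho_{\star}(r)-g_{\star}^2(r)|\leq C(r^2-\rtf^2)^{-1/2}$ for all $r\geq\rtf+\OO(\eps^{3/2}|\log\eps|^2)$; and the monotonicity of $g_{\star}$ together with \eqref{apriori bound} and \eqref{pointwise bounds}, whence $g_{\star}^2(r)\leq g_{\star}^2(r')\leq C\tfm(r')=C\eps^2\Omega^2(r'^2-\rtf^2)$ for $r\leq r'$ both in $\tfd$ and away from its inner edge, and also $\|g_{\star}\|_{L^\infty(\ba)}^2\asymp\|\tfm\|_{L^\infty}\asymp\eps\Omega$.

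Fix $\rv_0\in\ba\setminus\ba_R$ and set $t_0:=r_0^2-\rtf^2\geq t:=R^2-\rtf^2$. Apply on $\B(\rv_0,2\varrho)$ the standard interior gradient estimate $\varrho\|\nabla g_{\star}\|_{L^\infty(\B(\rv_0,\varrho))}\leq C\bigl(\|g_{\star}\|_{L^\infty(\B(\rv_0,2\varrho))}+\varrho^2\|f_{\star}\|_{L^\infty(\B(\rv_0,2\varrho))}\bigr)$, replaced near $r=1$ by the corresponding Neumann boundary gradient estimate (valid in the same form since $\partial_r g_{\star}(1)=0$). The radius $\varrho$ is always kept $\leq c(r_0-\rtf)$ with $c<\tfrac12$: then $\B(\rv_0,2\varrho)$ stays in the region where \eqref{pointwise bounds} applies --- which is precisely where the hypothesis $R>\rtf+\OO(\eps^{3/2}|\log\eps|^2)$, with a sufficiently large implicit constant, is used --- and where $r^2-\rtf^2\asymp t_0$ and $\tfm\lesssim\eps^2\Omega^2 t_0$, so that $\|f_{\star}\|_{L^\infty(\B(\rv_0,2\varrho))}\leq C\eps^{-2}t_0^{-1/2}\|g_{\star}\|_{L^\infty(\B(\rv_0,2\varrho))}$, while monotonicity gives $\|g_{\star}\|_{L^\infty(\B(\rv_0,2\varrho))}\leq\|g_{\star}\|_{L^\infty(\ba)}$ and also, once $\varrho\lesssim t_0$, $\leq C\eps\Omega t_0^{1/2}$. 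For $t_0\gtrsim\eps^{4/3}$ one takes the balancing radius $\varrho\sim\eps t_0^{1/4}$, which then does satisfy $\varrho\ll t_0\asymp r_0-\rtf$; the two terms of the estimate balance and, inserting the bound $C\eps\Omega t_0^{1/2}$ for $\|g_{\star}\|_{L^\infty(\B(\rv_0,2\varrho))}$, one gets $|\nabla g_{\star}(\rv_0)|\leq C\Omega t_0^{1/4}$. For $t_0\lesssim\eps^{4/3}$ one takes instead $\varrho\sim t_0$ and obtains $|\nabla g_{\star}(\rv_0)|\leq C\eps\Omega t_0^{-1/2}$. Using $t_0\leq 1-\rtf^2\asymp(\eps\Omega)^{-1}$ in the first case, $t_0\geq t$ in the second, and $\|g_{\star}\|_{L^\infty(\ba)}^2\asymp\eps\Omega$, one checks that both bounds are $\leq C'\eps^{-1}(R^2-\rtf^2)^{-1/4}\|g_{\star}\|_{L^\infty(\ba)}$; taking the supremum over $\rv_0$ gives \eqref{gen grad est}. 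Finally, for $R=\rtf+\eps|\log\eps|^{-1}$ every $\rv_0\in\at$ lies in the first case, where $C\Omega t_0^{1/4}\leq C\Omega(1-\rtf^2)^{1/4}$, and by \eqref{TFann0} and \eqref{angular velocity} this is $\OO(\eps^{-7/4}|\log\eps|^{-3/4})$, i.e. \eqref{grad est}.

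The only genuine difficulty is the matching of scales at the inner edge of the TF support: the natural Ginzburg--Landau length $\eps(R^2-\rtf^2)^{1/4}$ exceeds $R^2-\rtf^2$ as soon as $R-\rtf$ is only $\OO(\eps^{3/2}|\log\eps|^2)$ from $\rtf$, so the balancing ball is then inadmissible and one must pass to the smaller ball $\varrho\sim t_0$; the apparent loss is recovered by exploiting that $g_{\star}$ itself is small near the edge (monotonicity plus \eqref{pointwise bounds}), while still deeper inside, where \eqref{pointwise bounds} is unavailable, both $g_{\star}$ and hence $f_{\star}$ are exponentially small (Proposition \ref{g exponential smallness}), so that balls reaching into that zone do no harm. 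Keeping track of the powers of $\eps$ and $|\log\eps|$ through this case distinction, and verifying that every ball used stays in the good region, is all the real work; the elliptic estimates themselves --- interior and Neumann --- are entirely standard.
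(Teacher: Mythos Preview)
Your argument is correct, but the paper takes a shorter route that exploits the radiality of $g_{\star}$. Rather than working with the two-dimensional equation and local balls, the paper writes the one-dimensional radial ODE \eqref{var eq g radial}, reads off directly the bound
\[
\lf\| g^{\prime\prime}_{\star} \ri\|_{L^{\infty}(\ba\setminus\ba_R)} \leq C \lf[ \lf\| g^{\prime}_{\star} \ri\|_{L^{\infty}(\ba\setminus\ba_R)} + \eps^{-2} \lf( R^2 - \rtf^2 \ri)^{-1/2} \lf\| g_{\star} \ri\|_{L^{\infty}(\ba)} \ri],
\]
and then closes with the one-dimensional Gagliardo--Nirenberg interpolation $\|g_{\star}'\|_{\infty}\leq C\|g_{\star}''\|_{\infty}^{1/2}\|g_{\star}\|_{\infty}^{1/2}$, obtaining a self-consistent inequality for $\|g_{\star}'\|_{\infty}$ that immediately yields \eqref{gen grad est}. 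No case distinction in $t_0$ is needed, and the global bound drops out in two lines.

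What your approach buys is robustness: it does not use radiality, and along the way it actually proves the sharper \emph{pointwise} estimates $|\nabla g_{\star}(\rv_0)|\leq C\Omega t_0^{1/4}$ (for $t_0\gtrsim\eps^{4/3}$) and $|\nabla g_{\star}(\rv_0)|\leq C\eps\Omega t_0^{-1/2}$ (for smaller $t_0$), which are finer than the uniform bound \eqref{gen grad est}. The price is the bookkeeping of the case split and the verification that the balls stay in the region where \eqref{pointwise bounds} applies. The paper's Gagliardo--Nirenberg route is the more economical proof of the stated proposition; your local elliptic argument would be the natural one if $g_{\star}$ were not radial or if a pointwise gradient bound were needed.
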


	\begin{proof}
		We first exploit the fact that any $ g_{\star} $ is radial to rewrite \eqref{hgpm var} as
		\beq
			\label{var eq g radial}
			- g^{\prime\prime}_{\star} - r^{-1} g_{\star}^{\prime} + ([\Omega] - \omega)^2 r^{-2} g_{\star}  + 2 \eps^{-2} g_{\star} ^3 = \lf( \hchem_{\star} + 2 \Omega [\Omega - \omega] \ri) g_{\star}
		\eeq
		and take the $ L^{\infty} $ norm inside $ \ba\setminus\ba_R $, for some $ R > \rtf + \OO(\eps^{3/2} |\log\eps|^2) $, of both sides to obtain (see \eqref{def tilderho})
		\bml{
			\lf\| g^{\prime\prime}_{\star} \ri\|_{L^{\infty}(\ba\setminus\ba_R)} \leq C \lf\| r^{-1} g_{\star}^{\prime} \ri\|_{L^{\infty}(\ba\setminus\ba_R)} + C \eps^{-2} \lf\| (\tilde{\rho}_{\star} - g_{\star}^2) g_{\star} \ri\|_{L^{\infty}(\ba\setminus\ba_R)} \leq \\ 
 			C \lf[ \lf\| g_{\star}^{\prime} \ri\|_{L^{\infty}(\ba\setminus\ba_R)} + \eps^{-2} \lf( R^2 - \rtf^2\ri)^{-1/2} \lf\| g_{\star} \ri\|_{L^{\infty}(\ba)} \ri],
		}
		by \eqref{tilderho} and the pointwise estimate \eqref{pointwise bounds}:
		\bmln{
			\lf\| \tilde{\rho}_{\star} - g_{\star}^2 \ri\|_{L^{\infty}(\ba\setminus\ba_R)} \leq \lf\| \tfm - g_{\star}^2 \ri\|_{L^{\infty}(\ba\setminus\ba_R)} + \OO(\eps^{-1/2}|\log\eps|^{-1/2}) \leq	\\
			C \lf( R^2 - \rtf^2 \ri)^{-1/2} + \OO(\eps^{-1/2}|\log\eps|^{-1/2}) \leq C \lf( R^2 - \rtf^2 \ri)^{-1/2},
		}
		since $ R^2 - \rtf^2 \leq C \eps |\log\eps| $.
		\newline
		On the other hand by the Gagliardo-Nirenberg inequality (see, e.g., \cite[Theorem at p. 125]{N})
		\bml{
			\lf\| g^{\prime}_{\star} \ri\|_{L^{\infty}(\ba\setminus\ba_R)} \leq C \lf\| g^{\prime\prime}_{\star} \ri\|_{L^{\infty}(\ba\setminus\ba_R)}^{1/2} \lf\| g_{\star} \ri\|_{L^{\infty}(\ba)}^{1/2} \leq	\\
			C \lf[ \eps^{-1}  \lf( R^2 - \rtf^2 \ri)^{-1/4} \lf\| g_{\star} \ri\|_{L^{\infty}(\ba)} + \lf\| g^{\prime}_{\star} \ri\|_{L^{\infty}(\ba\setminus\ba_R)}^{1/2} \lf\| g_{\star} \ri\|_{L^{\infty}(\ba)}^{1/2} \ri],
		}
		which implies the result.
	\end{proof}

\section{Reduction to an Auxiliary Problem on an Annulus}
\label{sec:auxiliary}

The first step towards the proof of the main results is the reduction of the original GP energy functional to an analogous functional on a suitable annulus. The main ingredients of such a reduction are, on the one hand, the exponential smallness of the GP minimizers proven in the last section (see, e.g., \eqref{exp small}, \eqref{improved exp small}, \eqref{g exp small}, etc.), which intuitively implies that all the $L^2$ mass and therefore the energy are concentrated in a annulus close to the boundary, and on the other a decoupling of the GP energy functional, which allows the extraction of the giant vortex energies introduced in \eqref{hatfunct}.
\newline
The second part of the section is devoted to the discussion of the optimal giant vortex phase: Whereas both the energy splitting and upper bound hold true for any reasonable phase $ \omega $, they are useful only for specific choices of the phase. As we are going to see, the optimal phase $ \omega_0 $ can be defined as the minimizer of a suitable coupled problem in the annulus.
\newline
We also discuss the existence of the analogous minimizer associated with the original problem in the ball $ \B $, i.e., the phase $ \oopt $ occurring in Theorems \ref{theo:energy} and \ref{theo:degree}, as well as some relevant properties of it.

\subsection{Energy Decoupling}

Before stating the main result of this section, we first recall and introduce some notation.
\newline
The main object of the reduction is an annulus 
\beq
	\label{the annulus}
	\A : = \lf\{ \rv  \: : \: \rt \leq r \leq 1 \ri\},
\eeq 
where the inner radius $ \rt < \rtf $ has to be chosen in a proper way so that two conditions are simultaneously fulfilled: The radius $ \rt $ should be sufficiently far from $ \rtf $ in such a way that the exponential smallness proven in Propositions \ref{exponential smallness} applies inside the complement of $ \A $. At the same time, $ \rt $ must not be too far from $ \rtf $; in fact we shall in Section \ref{Sect est reduced energy} need that $|\rt - \rtf | \ll \eps|\log\eps|^{-1} $ (see \eqref{BL 1} and the use of \eqref{Fbound2} in the proof of Proposition \ref{pro:refinedbound}).
\newline
All these conditions are satisfied if one chooses
\beq
	\label{the inner radius}
	\rt : = \rtf - \eps^{8/7}.
\eeq
The apparently strange power of $ \eps $ occurring in the above expression is essentially motivated by the pointwise estimates \eqref{improved exp small} and \eqref{g improved exp small}, which hold true for any $ r \leq \rtf - \OO(\eps^{7/6}) $: In the definition of the inner radius of the annulus we could have chosen any $ \rt $ satisfying such a condition in addition to $|\rt - \rtf | \ll \eps|\log\eps|^{-1} $. In particular any remainder given by a power of $ \eps $ smaller than $ 7/6 $ but larger than $1$ would have been all right. However for the sake of simplicity we make an explicit choice among the allowed remainders and pick $ \eps^{8/7}$.

The auxiliary problem on the annulus is associated with the estimate of the energy functional
\beq
	\label{reduced energy}
	\E_{\A,\omega}[v] := \int_{\A} \diff \rv \: \hgpma^2 \lf\{ \lf| \nabla v \ri|^2 - 2 \rmagnp \cdot \lf( i v, \nabla v \ri) + \eps^{-2} \hgpma^2 \lf( 1 - |v|^2 \ri)^2 \ri\},
\eeq
where $ \rmagnp $ is defined in \eqref{vec B}. According to the convention used in the rest of the paper, $ \E_{\omega} $ without the label $ \A $ stands for the same energy as above but with the integral extended to the whole of $ \ba $.

A key ingredient in the proof of the GP energy asymptotics is a lower estimate of \eqref{reduced energy}. In fact in the next Proposition, which is the main result proven in this section, we show that the energy \eqref{reduced energy} provides a lower bound to $E^{\rm {GP}} - \hat{E}^{\rm {GP}}_{\A,\om}$ when $ v $ is suitably linked to $ \gpm $:

	\begin{pro}[\textbf{Reduction to an annulus}]
		\label{reduction}
		\mbox{}	\\
		For any $ \omega \in \Z $ such that $ |\omega|\leq \OO(\eps^{-1}) $ and for $ \eps $ sufficiently small
		\beq
			\label{reduction est} 
			\hgpea + \E_{\A,\omega}[u_{\omega}] - \OO(\eps^{\infty}) \leq \gpe \leq \hgpea + \OO(\eps^{\infty}),
		\eeq
		where the function $ u_{\omega} \in H^1(\A) $ is given by the decomposition
		\beq
			\label{function u}
			\gpm(\rv) = : \hgpma(r) u_{\omega}(\rv) \exp\lf\{ i([\Omega] - \omega) \vartheta\ri\}.
		\eeq
	\end{pro}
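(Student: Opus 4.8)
The upper bound $\gpe \leq \hgpea + \OO(\eps^\infty)$ is the easier half: the strategy is to use $\hgpma \exp(i([\Omega]-\omega)\vartheta)$, suitably extended to all of $\B$, as a trial function for $\gpf$. Since $\hgpma$ satisfies Neumann conditions at both $\rt$ and $1$, and since by Proposition 2.5 (the $g_\star$-analogue of exponential smallness, \eqref{g improved exp small}) the density $\hgpma^2$ is already exponentially small at $r=\rt$, I would extend $\hgpma$ inside $\ba_{\rt}$ by a function that decays to $0$ with all derivatives controlled, paying only a cost of order $\OO(\eps^\infty)$ in kinetic, centrifugal and interaction energy. Re-normalizing changes the energy by $\OO(\eps^\infty)$ as well. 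Evaluating $\gpf$ on this trial state reproduces $\hgpfa[\hgpma]=\hgpea$ up to $\OO(\eps^\infty)$, which gives the right inequality in \eqref{reduction est}.

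For the lower bound, the plan is the energy-splitting computation sketched in Section \ref{sec:proof strategy}, but carried out on the annulus $\A$ rather than on $\B$. Starting from $\gpm$, first I would argue that restricting all integrals in $\gpf[\gpm]$ to $\A$ costs only $\OO(\eps^\infty)$: this is exactly where the pointwise bounds \eqref{exp small}, \eqref{improved exp small} on $\gpm$ enter, together with the choice $\rt = \rtf - \eps^{8/7}$, which places $\partial\ba_{\rt}$ well within the region $r \le \rtf - \OO(\eps^{7/6})$ where \eqref{improved exp small} applies; one also needs to control $\int_{\ba\setminus\A}|\nabla\gpm|^2$, which follows from the variational equation \eqref{GP variational} and elliptic estimates (or from an integration-by-parts bound using the exponential smallness on a slightly larger region). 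Then on $\A$ I write $\gpm = \hgpma\, u_\omega \exp(i([\Omega]-\omega)\vartheta)$ as in \eqref{function u}; this is legitimate since $\hgpma>0$ on $\A$, and $u_\omega \in H^1(\A)$ because $\hgpma$ is bounded below on $\A$ away from $\rt$ and is smooth. The IMS-type identity
\beq
\int_{\A} \diff\rv\: |\nabla\gpm|^2 = \int_{\A}\diff\rv\:(-\Delta\hgpma)\hgpma|u_\omega|^2 + \int_{\A}\diff\rv\:\hgpma^2|\nabla(u_\omega e^{i([\Omega]-\omega)\vartheta})|^2
\eeq
holds by partial integration; the boundary terms at $r=1$ and $r=\rt$ vanish because $\partial_r\hgpma = 0$ there (Neumann at both ends — this is precisely why $\hgpma$, not $\hgpm$, is used). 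Substituting the variational equation \eqref{hgpm var} for $\hgpma$ to replace $-\Delta\hgpma$, collecting the centrifugal, magnetic and quartic terms, and using $\|\gpm\|_{L^2(\A)}^2 = 1 - \OO(\eps^\infty)$ (again from exponential smallness), the algebra collapses to
\beq
\gpf_{\A}[\gpm] = \hgpea\,\|\gpm\|_{L^2(\A)}^2 + \E_{\A,\omega}[u_\omega] \geq \hgpea + \E_{\A,\omega}[u_\omega] - \OO(\eps^\infty),
\eeq
where the $\OO(\eps^\infty)$ absorbs both the normalization defect and the $\hgpchema$-vs-$\hgpea$ bookkeeping. Combining with $\gpf_{\A}[\gpm] \le \gpf[\gpm] = \gpe$ (the discarded part of $\gpf$ being $\ge -\OO(\eps^\infty)$, by the same exponential estimates applied to the centrifugal term) yields the left inequality of \eqref{reduction est}.

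The main obstacle I expect is the careful bookkeeping of the $\OO(\eps^\infty)$ errors in the restriction to $\A$ — in particular controlling $\int_{\ba\setminus\A}|\nabla\gpm|^2$ and the negative centrifugal contribution $-\Omega^2\int_{\ba\setminus\A}r^2|\gpm|^2$, since $\Omega^2 = \OO(\eps^{-4}|\log\eps|^{-2})$ is a large prefactor that must be beaten by the exponential smallness. The choice $\rt = \rtf - \eps^{8/7}$ is tuned exactly so that \eqref{improved exp small} gives a bound $\exp\{-c\eps^{-1/6}\}$ that survives multiplication by any polynomial power of $\eps^{-1}$; one must check that the region $\ba_{\rt} \le r \le \rtf$ of width $\eps^{8/7}$, where \eqref{improved exp small} does \emph{not} apply, contributes a negligible amount — here \eqref{exp small} is too weak near $\rtf$, so one instead uses the crude $L^2$ bound $\|\gpm\|_{L^2(\ba_{\rtf})}^2 = \OO(\sqrt{\eps|\log\eps|})$ together with $\|\gpm\|_\infty^2 = \OO(\eps^{-1}|\log\eps|^{-1})$ to see the mass in that thin shell is $\OO(\eps^{8/7}\cdot\eps^{-1}|\log\eps|^{-1}) = \OO(\eps^{1/7}|\log\eps|^{-1})$, which is not $\eps^\infty$. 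This is the one genuinely delicate point: the error from the thin transition shell is only polynomially small, so \eqref{reduction est} as stated with $\OO(\eps^\infty)$ requires that this shell in fact contribute $\eps^\infty$, which forces one to use the exponential bound \eqref{improved exp small} right up to distance $\OO(\eps^{7/6})$ from $\rtf$ and to take $\rt$ on the good side of that threshold — exactly the content of \eqref{the inner radius}. The same remarks apply verbatim to $\hgpma$ via Proposition 2.5, so the $\hgpea + \OO(\eps^\infty)$ upper bound is handled identically.
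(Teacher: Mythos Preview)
Your overall strategy matches the paper's proof almost exactly: regularize $\hgpma$ outside $\A$ for the upper bound, and for the lower bound restrict $\gpf[\gpm]$ to $\A$, then perform the Lassoued--Mironescu splitting using the variational equation \eqref{hgpm var} and the Neumann conditions on $\hgpma$ at both boundaries.

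Two points where you overcomplicate or mislocate the difficulty:

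\textbf{Gradient control is unnecessary.} You flag $\int_{\B\setminus\A}|\nabla\gpm|^2$ as requiring elliptic estimates, but the paper avoids this entirely. Writing $\gpf$ in the magnetic form \eqref{GPf}, the kinetic term $|(\nabla - i\magnp)\gpm|^2$ and the interaction term are both nonnegative on $\B\setminus\A$ and can simply be dropped; the only negative contribution is the centrifugal term $-\Omega^2 r^2|\gpm|^2$, which is controlled by \eqref{improved exp small} since every point of $\B\setminus\A$ satisfies $r\le\rtf-\eps^{8/7}$ and $\eps^{8/7}\gg\eps^{7/6}$. This is what the paper means by ``one can avoid the estimate of the gradient of $\gpm$ by simply rewriting the functional as in \eqref{GPf}.''

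\textbf{The thin-shell worry is misplaced.} Your concern about the region $\{\rt \le r \le \rtf\}$ of width $\eps^{8/7}$ is confused: this shell lies \emph{inside} $\A$, so it is retained in $\gpf_\A[\gpm]$ and needs no exponential estimate whatsoever. The only region where smallness is required is $\B\setminus\A = \{r<\rt\}$, and there \eqref{improved exp small} (supplemented by \eqref{exp small} near the origin) gives $|\gpm|^2 = \OO(\eps^\infty)$ pointwise. Your attempt to salvage this via $\|\gpm\|_\infty^2\cdot\eps^{8/7}$ is addressing a nonexistent problem; there is no polynomially-small error term in the reduction step.
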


	\begin{proof}
		The proof of \eqref{reduction est} is done by proving suitable upper and lower bounds to the GP ground state energy.
		
		The upper bound is obtained by testing $ \gpf $ on a trial function of the form $ \tilde{g}(r) \exp\{ i ([\Omega] - \omega) \vartheta\} $, where $ \tilde{g} $ is an appropriate regularization of $ \hgpma $, and using the definition \eqref{hGPfa}. Since $ \hgpma $ does not vanish at $ \rt $, it is not in the minimization domain of $ \gpf $ and one has to regularize it at the boundary $ \partial \rt $, e.g., taking 
		\beq
			\tilde{g}(r) : = c
			\lf\{
			\begin{array}{lll}
				\hgpma(r)							&	\mbox{}	&	\mbox{if} \:\:\: \rv \in \A,	\\
				\eps^{-n} \hgpma(\rt) (\rt - \eps^{n} - r)		&	\mbox{}	&	\mbox{if} \:\:\: \rt - \eps^{n} \leq r \leq \rt, \\
				0								&	\mbox{}	&	\mbox{if} 	\:\:\:	r \leq \rt - \eps^{n},
			\end{array}
			\ri.
		\eeq
where $ c  $ is a normalization constant and $ n $ some arbitrary power greater than $ 0 $. Thanks to the exponential smallness \eqref{g improved exp small}, which implies $ \hgpma(\rt) = \OO(\eps^{\infty}) $, the normalization constant is $ c = 1 - \OO(\eps^{\infty}) $ and the energy of $ \tilde{g} $ in $ [\rt - \eps^n, \rt] $ is exponentially small as well. The upper bound trivially follows.

		The lower bound is mainly a consequence of a classical result of energy decoupling, which has been already used in different contexts, e.g., in \cite{LM}. 
		\newline
		The starting point is however a reduction to the annulus $ \A $ of the GP energy functional: Exploiting the exponential smallness \eqref{improved exp small} of $ \gpm $, it is very easy to show that
		\beq
			\label{annulus energy}
			\gpe = \gpf \lf[\gpm\ri] \geq \gpf_{\A} \lf[\gpm\ri] - \OO(\eps^{\infty}),	
		\eeq
		where we have used the symbol $ \gpf_{\A} $ to denote the restriction of the integration to $ \A $. Note that in the above inequality one can avoid the estimate of the gradient of $ \gpm $ by simply rewriting the functional as in \eqref{GPf} and neglecting all the positive terms given by the integration over $ \ba \setminus \A $; the only negative term (centrifugal energy) can then be estimated by means of \eqref{improved exp small}. 
		\newline
		Thanks to the exponential smallness, one also has that the $L^2$ mass of $ \gpm $ outside $ \A $ is very small, i.e.,
		\beq
			\label{annulus mass}
			\lf\| \gpm \ri\|_{L^2(\A)} \geq 1 - \OO(\eps^{\infty}).
		\eeq
		Now inside $ \A $ the decomposition \eqref{function u} is well defined and therefore one can calculate
		\bmln{
			\gpf_{\A}\lf[\gpm\ri] = \E_{\A,\omega}[u_{\omega}] + \int_{\A} \diff \rv \lf\{ \lf| u_{\omega} \ri|^2 \lf| \nabla \hgpma \ri|^2 +  \hgpma \nabla \hgpma \cdot \nabla \lf| u_{\omega} \ri|^2 + \lf( [\Omega]- \omega \ri)^2 r^{-2} \hgpma^2 \lf| u_{\omega} \ri|^2 - 	\ri.	\\
			\lf. 2\Omega\lf([\Omega] - \omega\ri) \hgpma^2 \lf| u_{\omega} \ri|^2 + \eps^{-2} \hgpma^4 \lf( 2 \lf|u_{\omega}\ri|^2 - 1 \ri) \ri\}.
		}
		A simple integration by parts then yields
		\bdm
 			 \int_{\A} \diff \rv \: \hgpma \nabla \hgpma \cdot \nabla \lf| u_{\omega} \ri|^2 = - \int_{\A} \diff \rv \: \lf\{ \lf|u_{\omega} \ri|^2 \lf| \nabla \hgpma \ri|^2 + \hgpma  \lf| u_{\omega} \ri|^2 \Delta \hgpma \ri\},
		\edm
		since the boundary terms vanish because of the Neumann conditions satisfied by $ \hgpma $ at the boundaries $ \partial \ba_{\rt} $ and $ \partial \ba $. Then one can replace in the above expression $ \Delta \hgpma $ by means of the variational equation \eqref{hgpm var} and the result is
		\bml{
			\gpf_{\A}\lf[\gpm\ri] = \E_{\A,\omega}[u_{\omega}] + \hchema \int_{\A} \diff \rv \: \hgpma^2 \lf| u_{\omega} \ri|^2 - \eps^{-2} \int_{\A} \diff \rv \: \hgpma^4 =	\\
			\E_{\A,\omega}[u_{\omega}] + \hchema \lf\| \gpm \ri\|^2_{L^2(\A)} - \eps^{-2} \lf\| \hgpma \ri\|^4_{L^4(\A)} \geq \E_{\A,\omega}[u_{\omega}] + \hgpea - \OO(\eps^{\infty}),
		}
		by \eqref{annulus mass} and the definition of the chemical potential $ \hchema $ (see Proposition \ref{htminimization}).
	\end{proof}

\subsection{Optimal Phases and Densities}

The idea behind the decomposition \eqref{function u} is that, after the extraction from $ \gpm $ of a density $ \hgpm $ and a giant vortex phase, i.e., the phase factor $ \exp \{ i([\Omega] - \omega) \vartheta \}$, what is left is a function $ u_{\omega} $, which contains all the remaining vorticity in $ \gpm $. Therefore in the giant vortex regime, one would like to prove that $ |u_{\omega}| \sim 1 $ inside a suitable annulus at the boundary of the trap, and the key tool to proving such a behavior is a detailed analysis of the reduced energy $ \E_{\A,\omega} $.
\newline
However, in order to prove such a result, both the phase and the associated density $ \hgpma $ have to be chosen in an appropriate way: The result stated in Proposition \ref{reduction} is basically independent of $ \omega $, i.e., it applies to any reasonable giant vortex phase $ \omega $. Nevertheless the leading order term in the GP energy asymptotics is given in \eqref{reduction est} by $ \hgpea $ which depends in crucial way on $ \omega $ and it is clear that in order to extract some delicate information about $u_\omega$ like the absence of zeros, the estimate (upper bound) of the reduced energy $ \E_{\A,\omega}[u_{\omega}] $ through \eqref{reduction est} has to be very precise. This leads to the definition of a giant vortex {\it optimal phase} (associated with the annulus $ \A $), which is nothing but the minimizer of the energy $ \hgpea $ with respect to $ \omega $.
\newline
In the next Proposition we show that there exists at least one $ \omega_0 $ minimizing $ \hgpea $ as well as some properties which are going to be crucial in the rest of the proof.

	\begin{pro}[\textbf{Properties of the optimal phase $ \omega_0 $ and density $ g_{\A,\omega_0} $}]
		\label{optimal phase pro}
		\mbox{}	\\
		For every $ \eps > 0 $ there exists an $ \omega_0 \in \Z $ minimizing $ \hgpea $. It satisfies
		\beq
			\label{est omega_0}
			\omega_0 = \frac{2}{3\sqrt{\pi} \eps} \lf(1 + \OO(|\log\eps|^{-1/2}) \ri).
		\eeq
		Moreover the minimizer $ g_{\A,\omega_0} $ of $ \hgpf_{\A,\omega_0} $ satisfies the bound
		\beq
			\label{compatibility}
			\int_{\A} \diff \rv \: g_{\A,\omega_0}^2 \lf( \Omega - \frac{[\Omega] - \omega_0}{r^2} \ri) =  \OO(1).
		\eeq
	\end{pro}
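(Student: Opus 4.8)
The plan is to establish the three assertions in turn, deducing the asymptotics \eqref{est omega_0} from the compatibility bound \eqref{compatibility} rather than independently. Existence is immediate: in \eqref{hGPfa} drop the non-negative terms $|\nabla f|^2$ and $\eps^{-2}f^4$ and use $r^{-2}\geq1$ on $\A$ together with the normalization $\|f\|_{L^2(\A)}=1$, to get
\beq
\hgpfa[f]\;\geq\;([\Omega]-\omega)^2\int_\A\diff\rv\,r^{-2}f^2-2\Omega([\Omega]-\omega)\;\geq\;\lf([\Omega]-\omega-\Omega\ri)^2-\Omega^2
\eeq
for every admissible $f$, a quadratic in $\omega$ that tends to $+\infty$ as $|\omega|\to\infty$. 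Since each $\hgpea$ is finite by Proposition \ref{htminimization}, the infimum over $\Z$ is attained at some $\omega_0$.

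For \eqref{compatibility}, set $\hat\Omega_0:=[\Omega]-\omega_0$ and test $\hgpf_{\A,\omega_0\pm1}$ on the minimizer $g_{\A,\omega_0}$ of $\hgpf_{\A,\omega_0}$ (which is real and $L^2(\A)$-normalized, hence admissible for both neighbouring functionals). Using $(\hat\Omega_0\mp1)^2-\hat\Omega_0^2=\mp2\hat\Omega_0+1$ in \eqref{hGPfa} one computes
\beq
\hgpf_{\A,\omega_0\pm1}[g_{\A,\omega_0}]\;=\;\hgpe_{\A,\omega_0}\pm2\int_\A\diff\rv\,g_{\A,\omega_0}^2\lf(\Omega-\frac{\hat\Omega_0}{r^2}\ri)+\int_\A\diff\rv\,\frac{g_{\A,\omega_0}^2}{r^2} .
\eeq
Since $\omega_0$ minimizes $\hgpea$ over $\Z$, $\hgpe_{\A,\omega_0}\leq\hgpf_{\A,\omega_0\pm1}[g_{\A,\omega_0}]$ for both signs, whence
\beq
\lf|\int_\A\diff\rv\,g_{\A,\omega_0}^2\lf(\Omega-\frac{\hat\Omega_0}{r^2}\ri)\ri|\;\leq\;\frac12\int_\A\diff\rv\,\frac{g_{\A,\omega_0}^2}{r^2}\;\leq\;\frac12\,\rt^{-2}\,\|g_{\A,\omega_0}\|_{L^2(\A)}^2=\OO(1) ,
\eeq
because $r\geq\rt$ stays bounded away from $0$ for $\eps$ small; this is \eqref{compatibility}.

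Finally, \eqref{est omega_0} follows from \eqref{compatibility}. Using $\|g_{\A,\omega_0}\|_{L^2(\A)}=1$, \eqref{compatibility} reads $\Omega-\hat\Omega_0\int_\A g_{\A,\omega_0}^2\,r^{-2}=\OO(1)$, that is
\beq
\hat\Omega_0=\frac{\Omega+\OO(1)}{\int_\A\diff\rv\,g_{\A,\omega_0}^2\,r^{-2}} ,
\eeq
so everything hinges on evaluating the denominator, which is $\geq1$. By minimality $\hgpe_{\A,\omega_0}\leq\hgpe_{\A,\omega_*}\leq\tfe+\OO(\eps^{-2})$, with $\omega_*\in\Z$ the integer nearest $2/(3\sqrt\pi\,\eps)$; here $|\omega_*|\leq C\eps^{-1}$ allows use of \eqref{h energy bound}. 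Feeding $\hgpe_{\A,\omega_0}-\tfe=\OO(\eps^{-2})$ into the computation that proves the $L^2$ bound in \eqref{g estimates} (that computation uses nothing beyond an $\OO(\eps^{-2})$ energy excess) gives $\|g_{\A,\omega_0}^2-\tfm\|_{L^2(\B)}=\OO(1)$, with no a priori restriction on $|\omega_0|$ — which will instead come out a posteriori. On $\A$ one has $1-r^2=\OO(1-\rtf^2)=\OO(\eps|\log\eps|)$ and $|\A|=\OO(\eps|\log\eps|)$, so expanding $r^{-2}=1+(1-r^2)+\OO((1-r^2)^2)$ and replacing $g_{\A,\omega_0}^2$ by $\tfm$ at the cost $\|g_{\A,\omega_0}^2-\tfm\|_{L^2(\A)}\,\|1-r^2\|_{L^2(\A)}=\OO((\eps|\log\eps|)^{3/2})$,
\beq
\int_\A\diff\rv\,g_{\A,\omega_0}^2\,r^{-2}\;=\;1+\int_\A\diff\rv\,\tfm\,(1-r^2)+\OO((\eps|\log\eps|)^{3/2})\;=\;1+\tfrac13(1-\rtf^2)+\OO((\eps|\log\eps|)^{3/2}) ,
\eeq
the identity $\int\tfm\,(1-r^2)=\tfrac13(1-\rtf^2)$ being an elementary integral of the profile \eqref{TFm0}; it is this identity, together with \eqref{TFann0}, that produces the constant $2/(3\sqrt\pi)$. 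Inserting this and using \eqref{TFann0} — so $\tfrac{\Omega}{3}(1-\rtf^2)=\tfrac{2}{3\sqrt\pi\,\eps}$ and $\Omega(1-\rtf^2)^{3/2}=\OO(\eps^{-1/2}|\log\eps|^{1/2})$ — we get $\hat\Omega_0=\Omega-\tfrac{2}{3\sqrt\pi\,\eps}+\OO(\eps^{-1/2}|\log\eps|^{1/2})$, hence $\omega_0=[\Omega]-\hat\Omega_0=\tfrac{2}{3\sqrt\pi\,\eps}+\OO(\eps^{-1/2}|\log\eps|^{1/2})$; since $\eps^{-1/2}|\log\eps|^{1/2}=o(\eps^{-1}|\log\eps|^{-1/2})$ this is exactly \eqref{est omega_0} (and in particular $|\omega_0|\leq C\eps^{-1}$).

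The one genuinely delicate point is this last evaluation: $g_{\A,\omega_0}^2-\tfm$ is controlled only in the coarse norm $L^2(\B)$ with an $\OO(1)$ error, whereas we need $\int_\A g_{\A,\omega_0}^2 r^{-2}$ accurately enough that its $\Omega$-multiple pins $\omega_0$ down to relative precision $o(|\log\eps|^{-1/2})$. This is possible only because the correcting weight $1-r^2$ lives on the thin annulus $\A$ of area $\OO(\eps|\log\eps|)$, which forces the replacement error to be $\OO((\eps|\log\eps|)^{3/2})$ — a factor $(\eps|\log\eps|)^{1/2}$ smaller than the term $\tfrac13(1-\rtf^2)$ it corrects; here the shrinking of the annulus, a nuisance elsewhere, works in our favour. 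An alternative, heavier route bypassing \eqref{compatibility} would combine the two-sided bound $\htfe_\omega\leq\hgpea\leq\htfe_\omega+\OO(\eps^{-2}|\log\eps|^{-1})$ of Proposition \ref{htminimization} with the explicit form of $\htfe_\omega$ from the Appendix — minimized near $2/(3\sqrt\pi\,\eps)$ with $\OO(1)$ curvature — so that the $\OO(\eps^{-2}|\log\eps|^{-1})$ gap displaces the minimizing integer by at most $\OO(\eps^{-1}|\log\eps|^{-1/2})$.
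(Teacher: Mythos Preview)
Your proof is correct. For existence and for the compatibility bound \eqref{compatibility} you proceed exactly as the paper does (testing $\hgpf_{\A,\omega_0\pm1}$ on $g_{\A,\omega_0}$). The genuine difference is in how you obtain \eqref{est omega_0}: the paper proves it \emph{before} \eqref{compatibility}, by sandwiching $\hgpe_{\A,\omega_0}$ between $\htfe_{\omega_0}$ and $\htfe_{\omega_*}+\OO(\eps^{-2}|\log\eps|^{-1})$ with $\omega_*=[2/(3\sqrt\pi\eps)]$, and then invoking the explicit Appendix expansion $\htfe_\omega=\tfe+[\omega-2/(3\sqrt\pi\eps)]^2+\text{const}+\OO(\eps^{-2}|\log\eps|^{-2})$, whose quadratic form in $\omega$ directly pins down $\omega_0$. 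You instead deduce \eqref{est omega_0} \emph{from} \eqref{compatibility}, evaluating $\int_\A g_{\A,\omega_0}^2 r^{-2}$ by reducing to the TF profile via the $L^2$ bound (correctly observing that the latter needs only the one-sided energy bound $\hgpe_{\A,\omega_0}\leq\tfe+\OO(\eps^{-2})$, available by minimality, and not the a priori restriction $|\omega_0|\leq C\eps^{-1}$). Your route is more self-contained --- it avoids the explicit computation of $\htfe_\omega$ --- and in fact yields the sharper remainder $\OO(\eps^{-1/2}|\log\eps|^{1/2})$; the paper's route is shorter once the Appendix is in hand and makes transparent that the constant $2/(3\sqrt\pi)$ is the minimizer of a concrete quadratic. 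Amusingly, what you describe in your closing paragraph as ``an alternative, heavier route'' is precisely the argument the paper uses.
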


	\begin{proof}
		The existence of a minimizing $ \omega_0 \in \Z $ can be easily proven by noticing that $ \lim_{\omega \to \pm \infty} \hgpea = + \infty $ since by \eqref{hGPfa}
		\bdm
			\hgpea \geq \lf( [\Omega] - \omega \ri)^2 - 2 \Omega \lf( [\Omega] - \omega \ri),
		\edm
		which implies that, for given $ \eps > 0 $, only finitely many $ \omega \in \Z $ can minimize the energy.

		The main ingredient for the proof of \eqref{est omega_0} is the energy bound \eqref{h energy bound}, which implies, for any $ \omega \in \Z $ such that $ |\omega| \leq C\eps^{-1} $,
		\beq
			\label{energy inequalities}
			\htfe_{\omega_0} \leq \hgpe_{\A,\omega_0} \leq \hgpe_{\A,\omega} \leq \htfe_{\omega} + \OO(\eps^{-2}|\log\eps|^{-1}),
		\eeq
		by definition of $ \omega_0 $. Choosing now $ \omega = [2/(3\sqrt{\pi}\eps)] $ in the r.h.s. of the above expression and using \eqref{hTFe} for $ \htfe_{\omega} $, we get the inequality
		\beq
			\tfe + \lf[ \omega_0 - \frac{2}{3 \sqrt{\pi} \eps} \ri]^2 + \frac{2}{9\pi \eps^2} - 2 \Omega \left( [\Omega] - \Omega \right) \leq \htfe_{\omega_0} 	\leq \tfe + \frac{2}{9\pi \eps^2} -2 \Omega \left( [\Omega] - \Omega \right) + C\eps^{-2} |\log\eps|^{-1}
		\eeq
		which yields the result.
		
		We now prove \eqref{compatibility}. A simple estimate yields
		\bml{
 			\label{Taylor exp}
 			\hgpf_{\A,\omega_0 \pm 1}\lf[g_{\A,\omega_0}\ri] = \hgpf_{\A,\omega_0}\lf[g_{\A,\omega_0}\ri] \pm 2 \int_{\A} \diff \rv \: \lf[ \Omega - \lf( [\Omega] - \omega_0 \ri) r^{-2} \ri] g_{\A,\omega_0}^2 + \int_{\A} \diff \rv \: r^{-2} g_{\A,\omega_0}^2 \leq	\\
			\hgpe_{\A,\omega_0} \pm 2 \int_{\A} \diff \rv \: \lf[ \Omega - \lf( [\Omega] - \omega_0 \ri) r^{-2} \ri] g_{\A,\omega_0}^2 + \rt^{-2}.
		}
		Now suppose that 
		\bdm
			\lf| \int_{\A} \diff \rv \: \lf[ \Omega - \lf( [\Omega] - \omega_0 \ri) r^{-2} \ri] g_{\A,\omega_0}^2 \ri| > \frac{2}{3},
		\edm
		then, since $ \rt = 1 - o(1) $, \eqref{Taylor exp} would imply that there exists some $ \bar\omega \in \Z $ equal to $ \omega_0 \pm 1 $ such that
		\bdm
			\hgpe_{\A,\bar\omega} \leq \hgpf_{\A,\bar\omega}\lf[g_{\A,\omega_0}\ri] < \hgpe_{\A,\omega_0},
		\edm
		which contradicts the fact that $ \omega_0 $ minimizes $ \hgpe_{\A,\omega} $. Note that the above argument implies that the constant on the r.h.s. of \eqref{compatibility} is actually smaller than 1, i.e.,
		\beq
			\label{compatibility2}
			\lf| \int_{\A} \diff \rv \: \lf[ \Omega - \lf( [\Omega] - \omega_0 \ri) r^{-2} \ri] g_{\A,\omega_0}^2 \ri| < 1.
		\eeq
\end{proof}

As anticipated above and in the Introduction, there is another giant vortex phase that naturally emerges in the study of the above functionals, i.e., the one associated with the energy in the whole ball $ \B $: More precisely it is defined as the minimizer of 
\bdm
	\hgpe_{\omega} : = \inf_{\lf\| f \ri\|_2 = 1} \hgpf_{\om}[f]
\edm
with respect to $ \omega \in \Z $ and denoted by $ \oopt $:

\begin{pro}[\textbf{Optimal phase $ \oopt $}]
		\label{optimal phase omega_{opt} pro}
		\mbox{}	\\
		For every $ \eps > 0 $ there exists an $ \oopt \in \Z $ minimizing $ \hgpe_{\omega} $. It satisfies
		\beq
			\label{est omega_{opt}}
			\oopt = \frac{2}{3\sqrt{\pi} \eps} \lf(1 + \OO(|\log\eps|^{-1/2}) \ri).
		\eeq
\end{pro}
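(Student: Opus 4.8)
The plan is to follow the proof of Proposition \ref{optimal phase pro} almost verbatim, with the annular functional $\hgpea$ replaced by the full-disc functional $\hgpe_\omega$ and with the bounds of Proposition \ref{htminimization} invoked for $g_\omega$ rather than for $g_{\A,\omega}$. For existence of a minimising $\oopt$ I would first drop the nonnegative gradient and quartic terms in \eqref{hatfunct} and use $r^{-2}\geq 1$ on $\B$ together with $\|f\|_2=1$ to get the coarse bound $\hgpe_\omega \geq ([\Omega]-\omega)^2 - 2\Omega([\Omega]-\omega)$, whose right-hand side tends to $+\infty$ as $|\omega|\to\infty$. Since $\inf_{\omega\in\Z}\hgpe_\omega\leq\hgpe_{[2/(3\sqrt\pi\eps)]}<\infty$, only finitely many $\omega\in\Z$ can realise the infimum, so a minimiser $\oopt$ exists.

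The heart of the argument is the energy comparison \eqref{h energy bound} of Proposition \ref{htminimization} for $g_\omega$. Its lower half $\htfe_\omega\leq\hgpe_\omega$ holds for every $\omega\in\Z$ (discarding the kinetic energy needs no size restriction), whereas its upper half $\hgpe_\omega\leq\htfe_\omega+\OO(\eps^{-2}|\log\eps|^{-1})$ is valid once $|\omega|\leq\OO(\eps^{-1})$. Combining the definition of $\oopt$ with a test by the admissible phase $\omega=[2/(3\sqrt\pi\eps)]=\OO(\eps^{-1})$ gives
\[
\htfe_{\oopt}\;\leq\;\hgpe_{\oopt}\;\leq\;\hgpe_{[2/(3\sqrt\pi\eps)]}\;\leq\;\htfe_{[2/(3\sqrt\pi\eps)]}+\OO(\eps^{-2}|\log\eps|^{-1}).
\]
As in the proof of Proposition \ref{optimal phase pro}, this chain presupposes no a priori bound on $\oopt$: the size restriction enters only through the choice of test phase.

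It then remains to insert the explicit formula \eqref{hTFe} for $\htfe_\omega$ from the Appendix, whose only $\omega$-dependence is through a term $[\omega-2/(3\sqrt\pi\eps)]^2$ that is $\OO(1)$ at $\omega=[2/(3\sqrt\pi\eps)]$; the chain above then collapses to $\bigl[\oopt-2/(3\sqrt\pi\eps)\bigr]^2\leq\OO(\eps^{-2}|\log\eps|^{-1})$, that is $\oopt=\tfrac{2}{3\sqrt\pi\eps}\bigl(1+\OO(|\log\eps|^{-1/2})\bigr)$, and in particular $\oopt>0$, so $\oopt\in\N$, for $\eps$ small. I do not anticipate any genuine obstacle: the only non-routine input is the explicit minimisation of the modified TF functional carried out in the Appendix, and the rest is a transcription of the proof of Proposition \ref{optimal phase pro} with $\A$ replaced by $\B$.
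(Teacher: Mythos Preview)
Your proposal is correct and takes essentially the same approach as the paper: the paper's proof consists of a single sentence saying that one replaces the chain \eqref{energy inequalities} by its full-disc analogue via \eqref{h energy bound}, which is precisely what you spell out. Your treatment is in fact more detailed than the paper's, correctly isolating that the size restriction $|\omega|\leq\OO(\eps^{-1})$ is needed only on the test phase in the rightmost inequality.
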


\begin{proof}
	The proof is basically identical to the one of \eqref{est omega_0}: For instance it is  sufficient to replace \eqref{energy inequalities} with the corresponding version in $ \B $ (see \eqref{h energy bound}).
\end{proof}

\section{Estimates of the Reduced Energy}
\label{Sect est reduced energy}

In this section we study the auxiliary problem introduced in Section \ref{sec:auxiliary}. From now on we shall simplify notation by dropping some subscripts:
\begin{equation*}
g_{\A,\om_0} := g ,
\end{equation*}
also
\begin{equation}\label{defi B}
\vec{B}(r) := \vec{B}_{\omega_0}(r) = \lf( \Omega r - \frac{[\Omega] - \omega_0}{r} \ri) \vec{e}_{\vartheta},
\end{equation}
and 
\[
 \E [v] := \int_{\ann} \diff \rv \lf\{ g ^{2} \left| \nabla v \right|^2 - 2 g ^2 \vec{B} \cdot (iv,\nabla v) + \frac{g ^4 }{\ep ^2} \left(1-|v|^2 \right)^2 \ri\}.
\]
The following energy is crucial in our analysis: 
\begin{equation}\label{defiF}
\F [v]:= \int_{\A} \diff \rv \lf\{ g ^{2} \left| \nabla v \right|^2 + \frac{g ^4 }{\ep ^2} \left(1-|v|^2 \right)^2 \ri\}.
\end{equation}
We recall that $u :=u_{\om_0}$ is defined by
\begin{equation}\label{defi u}
\gpm(\rv) = : g(r) u(\rv) \exp\lf\{ i([\Omega] - \omega_0) \vartheta\ri\}
\end{equation}
and that from Proposition \ref{reduction} we have
\begin{equation}\label{starting bound}
 \E [u] \leq \OO (\ep ^{\infty}).
\end{equation}
We also need to define a reduced annulus
\begin{equation}\label{defi annt}
 \at := \left\lbrace \rv  \: : \: \rd \leq r \leq 1 \right\rbrace
\end{equation}
with
\begin{equation}\label{defi R large}
\rd := \rtf + \ep |\log \ep|^{-1}. 
\end{equation}
An important point is that from \eqref{pointwise bounds} we have the lower bound
\begin{equation}\label{g low bound}
 g^2(r) \geq \frac{C}{\ep |\log \ep|^{3}} \mbox{ on } \at.
\end{equation}

The main result of this section is
 
\begin{pro}[\textbf{Bounds for the reduced energies}]\label{pro:refinedbound}
	\mbox{}	\\
	Let $u$ be defined by (\ref{defi u}). If $\Om_{0}> 2(3 \pi)^{-1} $ we have for $\ep$ small enough
		\begin{eqnarray}\label{borne Fg final}
			\Fg [u] \leq C \frac{|\log \ep|^{5/2}}{\ep ^{1/2} (\log |\log \ep|)^2} 
			\\ \Eg [u] \geq -C \frac{|\log \ep|^{3/2}}{\ep ^{1/2} \log |\log \ep|} \label{borne Eg final}.
		\end{eqnarray}
\end{pro}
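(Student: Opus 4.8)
The plan is to combine the variational upper bound~\eqref{starting bound} with a localized Ginzburg--Landau vortex-ball analysis run on a cellular partition of the annulus, exactly along the lines of the proof strategy sketched in Subsection~\ref{sec:proof strategy}.

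\textbf{Step 1: a priori bound and potential-theoretic rewriting.} From \eqref{starting bound}, using $|(iu,\nabla u)|\le|u|\,|\nabla u|$, the bound $|\vec B|\le C\ep^{-1}$ on $\A$, and $\int_\A g^2|u|^2=\|\gpm\|^2_{L^2(\A)}\le 1$, a Cauchy--Schwarz splitting of the mixed term gives the crude estimates
\[
\int_\A g^2|\nabla u|^2\le C\ep^{-2},\qquad \Fg[u]\le C\ep^{-2},
\]
far weaker than \eqref{borne Fg final} but enough to start the refinement. Following the Introduction I would then write $2g^2\vec B=\nabla^{\perp}F$ with the scalar potential $F$ normalized to vanish on the inner circle $\partial\B_{\rt}$, so that Stokes' theorem converts the mixed term into $\int_\A F\,\curl(iu,\nabla u)-\int_{\partial\B}F(iu,\partial_\tau u)$. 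Since $F$ is radial it equals the constant $F(1)=\OO(1)$ on $\partial\B$ by \eqref{compatibility}, while in the interior the Appendix estimate $|F(r)|\le C\Om_0^{-1}g^2(r)|\log\ep|$ holds; the latter is precisely what makes the cost function $\frac12 g^2|\log\ep|+F$ positive once $\Om_0>2(3\pi)^{-1}$.

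\textbf{Step 2: cellular vortex-ball bootstrap.} The difficulty is that $g^2$ is concentrated on an annulus of width $\OO(\ep|\log\ep|)$, so the crude bound confines the bad set $\{|u|<1/2\}$ only to an area comparable to that width, too large for a direct vortex-ball construction. Instead I would tile $\at$ by $\sim\ep^{-2}|\log\ep|^{-2}$ cells of side $\sim\ep|\log\ep|$ and argue by finite induction on the number of cells in which a local version $\int_{\text{cell}}\ep^{-2}g^4(1-|u|^2)^2\lesssim\ep^{-1}|\log\ep|$ (up to $\log|\log\ep|$ factors) is already known. On such a good cell the bad set has area $\OO(\ep^3|\log\ep|^3)$, far smaller than the cell, so the Jerrard--Sandier construction applies: it yields $\int_{\text{cell}}g^2|\nabla u|^2\gtrsim|\log\ep|\,(\inf_{\text{cell}}g^2)\sum_i|d_i|$, and the jacobian estimate of \cite{JS} gives $\int_{\text{cell}}F\,\curl(iu,\nabla u)=2\pi\sum_i d_i F(a_i)+(\text{small error})$. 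Because $|F(a_i)|\le C\Om_0^{-1}g^2(a_i)|\log\ep|$, the sum of these two terms is bounded below by $c\,|\log\ep|\,(\inf_{\text{cell}}g^2)\sum_i|d_i|$ with $c>0$ when $\Om_0>2(3\pi)^{-1}$; comparing with \eqref{starting bound} restricted to a union of good cells then forces $\sum_i|d_i|$, and hence the local potential and gradient energies, to be small, which upgrades the neighbouring cells to good cells. Iterating $\OO(\log|\log\ep|)$ times exhausts $\at$; the contribution of $\A\setminus\at$, where $g^2$ is already small (cf.\ \eqref{g low bound}), is of lower order, and collecting the per-cell estimates while optimizing the cell size and iteration count yields \eqref{borne Fg final}.

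\textbf{Step 3: lower bound on $\Eg[u]$, and the main obstacle.} With \eqref{borne Fg final} in hand I would return to $\Eg[u]=\Fg[u]-\int_\A F\,\curl(iu,\nabla u)+\int_{\partial\B}F(iu,\partial_\tau u)$: the interior term is controlled via $|F|\le C\Om_0^{-1}g^2|\log\ep|$ together with the vortex-ball lower bounds of Step~2, which bound $\int_\A g^2|\curl(iu,\nabla u)|$ by $\Fg[u]$ divided by a power of $|\log\ep|$ (the naive inequality $|\curl(iu,\nabla u)|\le|\nabla u|^2$ being far too lossy), giving a contribution of order \eqref{borne Eg final}; the boundary term is handled through $|F(1)|=\OO(1)$, the Neumann data $\partial_r u=0$ on $\partial\B$ (from $\partial_r\gpm=\partial_r g=0$ there), and the variational equations for $\gpm$ and $g$ to bound $\int_{\partial\B}(iu,\partial_\tau u)$, which is where the Neumann condition enters essentially. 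The heaviest part will be Step~2: since the annulus width shrinks while its diameter grows and $g^2$ has a large gradient, the standard vortex-ball machinery does not apply to $\A$ directly, and making the cellular induction quantitative --- fixing the cell size, saying precisely how good a cell must be for the construction to run, and controlling the accumulation of errors over the $\OO(\log|\log\ep|)$ iterations without losing more than the stated polylogarithmic factors --- is where the real difficulty lies; a close second is pushing the cost-function positivity to the sharp threshold $2(3\pi)^{-1}$ (rather than merely ``$\Om_0$ large'') via the precise behaviour of $\frac12 g^2|\log\ep|+F$ near $r=1$ computed in the Appendix.
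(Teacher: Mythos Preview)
Your overall strategy matches the paper's proof closely: the potential $F$, the cell decomposition into good and bad cells, the local vortex-ball construction plus jacobian estimate, and the cost-function positivity $\tfrac12 g^2|\log\ep|-|F|>0$ for $\Om_0>2(3\pi)^{-1}$ are exactly the ingredients of Section~\ref{Sect est reduced energy}. Two points, however, differ from what you sketch and are worth flagging.

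First, the paper does \emph{not} iterate explicitly. It introduces a single threshold parameter $\alpha$ in the good/bad cell definition, bounds the number of bad cells by $C\ep^{\alpha}\ep^{-1}|\log\ep|^{-1}\,\F[u]$ (note the unknown $\F[u]$ on the right), and with the choice $\alpha=\tilde\alpha\,\log|\log\ep|/|\log\ep|$ the bad-cell error becomes of lower order than $\tfrac{\log|\log\ep|}{|\log\ep|}\,\F[u]$. This produces one self-consistent inequality $\E[u]\ge c\bigl(\tfrac{\log|\log\ep|}{|\log\ep|}\F[u]-|\log\ep|^{1/2}|\E[u]|^{1/2}-\ep^{-1/4}|\log\ep|^{1/4}\F[u]^{1/2}\bigr)$, and both \eqref{borne Fg final} and \eqref{borne Eg final} follow from it simultaneously via $\E[u]\le\OO(\ep^\infty)$. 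Your Step~3 plan to bound $\int_\A g^2|\curl(iu,\nabla u)|$ separately is not the right object --- vortex balls and jacobian estimates do not deliver such a bound. The mechanism is rather that kinetic plus curl is already $\ge 0$ (up to the extracted $\gamma$-fraction) by cost-function positivity, so the $\E[u]$ lower bound is a byproduct of the same computation, not a separate step.

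Second, your boundary treatment is too vague. The paper derives the PDE for $u$ from those for $\gpm$ and $g$, then applies a Pohozaev-type multiplier $(r-\rd)\,\vec e_r\cdot\nabla u^*$ (Lemma~\ref{lem:boundary}) to obtain $\int_{\partial\B}g^2|\partial_\tau u|^2\lesssim \ep^{-3/2}|\log\ep|^{-1/2}\F[u]+\ep^{-1}|\E[u]|$; together with $|F(1)|=\OO(1)$ and Cauchy--Schwarz this gives the boundary contribution $\lesssim |\log\ep|^{1/4}\ep^{-1/4}\F[u]^{1/2}+|\log\ep|^{1/2}|\E[u]|^{1/2}$, and it is precisely this term that fixes the exponents in \eqref{borne Fg final}--\eqref{borne Eg final}. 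Merely invoking the Neumann condition and the variational equations will not give you a trace estimate of this strength.
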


Since the proof of Proposition \ref{pro:refinedbound} is rather involved we sketch the main ideas before going into the details.\\
It is useful to introduce a potential function $F$ defined as follows: 
\begin{equation}\label{F}
F(r):= 2 \int_{R_{<}} ^r \diff s \: g ^2  (s) \left(\Om  s- \left([\Om] - \om_0 \right)\frac{1}{s}\right) = 2 \int_{R_{<}}^r \diff s \: g ^2(s) \vec{B}(s) \cdot \vec{e}_{\vartheta}.
\end{equation}
We have
\begin{equation}\label{F2}
\nabla ^{\perp} F  = 2 g ^2 \vec{B}, \hspace{1,5cm} F(R_{<}) = 0,
\end{equation}  
i.e., $F$ is the ``primitive'' of $ 2 g^2 \vec{B}$ vanishing at $R_{<}$. We refer to Subsection \ref{subsec:preliminaries} for further properties of $F$. Integrating by parts we have
\begin{equation}\label{integ part}
	\int_{\ann} \diff \rv \lf\{ g ^{2} \left| \nabla u \right|^2 - 2 g ^2 \vec{B} \cdot (iu,\nabla u) \ri\} = \int_{\ann} \diff \rv \left\{ g ^{2} \left| \nabla u \right|^2 + F \curl (iu,\nabla u) \right\} - \int_{\partial \B} \diff \sigma  \: F(1) (iu , \partial_{\tau}u) 
\end{equation} 
and thus the energy $\Eg [u]$ can be rewritten as follows
\begin{equation}\label{integ energy}
\Eg [u] = \int_{\ann} \diff \rv \: g ^{2} \left| \nabla u \right|^2 + \int_{\ann} \diff \rv \: F \curl (iu,\nabla u)  - \int_{\partial \B} \diff \sigma \: F(1) (iu , \partial_{\tau}u) + \int_{\ann} \diff \rv \: \frac{g^4}{\ep ^2}\left( 1-|u|^2 \right)^2.
\end{equation}
It is in this form that the energy is best bounded from below.\\
The boundary term (third term in \eqref{integ energy}) is estimated in the following way: The property of $\om_0$ given in \eqref{compatibility} implies that $F(1)=\OO (1)$. We combine this fact with an estimate of the circulation of $u$ on the boundary of the unit ball that we provide in Subsection \ref{subsec:boundary}. Proving this estimate requires to derive a PDE satisfied by $u$ and use it in much the same way as in the proof of the Pohozaev identity \cite{P}.\\
The first two terms can be estimated in terms of the vorticity of $u$. Indeed, suppose that $|u|\sim 1$ except in some balls (that we identify with vortices) whose radii are much smaller than the width of $\A$. Let us denote these balls by $ \lf\{ \B(\avj,t) \ri\}_{j\in J}$ with $ J \subset \N $ and $t\ll \ep |\log \ep|$, i.e., much smaller than the width of the annulus. Then by Stokes theorem, if the degree of $u$ around $\avj$ is $d_j$ (we systematically neglect remainder terms in this sketch)
\begin{equation}\label{vort heur 2}
\int_{\ann} \diff \rv \: F \curl (iu,\nabla u)\simeq \sum_{j \in J} 2 \pi F(a_j) d_j.    
\end{equation}
Minimizing the sum of the first and the last term with respect to $t$ yields $t\propto \ep^{3/2} |\log \ep|^{1/2}$ (see also \cite[p. 6]{CY} for heuristics about the optimal size of the vortex core), which implies an estimate of the form  
\begin{equation}\label{vort heur 1}
\int_{\ann} \diff \rv \: g ^{2} \left| \nabla u \right|^2 \gtrapprox \sum_{j \in J} 2\pi g^2 (a_j) |d_j| \log \left(\frac{\ep |\log \ep|}{t}\right)  \gtrapprox \sum_{j \in J} \pi g^2 (a_j) |d_j| |\log \ep |. 
\end{equation}
In Subsections \ref{vortex balls sec} and \ref{jacobian estimate sec} we give a rigorous version of this heuristic analysis. The main tools were originally introduced in the context of Ginzburg-Landau (GL) theory (we refer to \cite{BBH2,SS} and references therein). The method of growth and merging of vortex balls introduced independently by Sandier \cite{Sa} and Jerrard \cite{J} provides a lower bound of the form \eqref{vort heur 1} (see Subsection \ref{vortex balls sec}). On the other hand, the jacobian estimate of Jerrard and Soner \cite{JS} allow to deduce in Subsection \ref{jacobian estimate sec} that 
\begin{equation}\label{vort heur 3}
 \curl (iu,\nabla u) \simeq \sum_{j \in J} 2 \pi  d_j \delta(\rv - \avj),
\end{equation}
i.e., the vorticity measure of $u$ is close to a sum of Dirac masses. This implies an estimate of the form \eqref{vort heur 2}. 
\newline
Having performed this analysis, the role of the critical velocity becomes clear: We have essentially (note that the boundary term is negligible in a first approach)
\begin{equation}\label{vort heur 4}
 \E [u] \gtrapprox \sum_{j \in J} 2 \pi  |d_j| \left( \frac{1}{2} g^2 (a_j) |\log \ep | +  F(a_j) \right). 
\end{equation}
If $\Om_0> 2(3\pi) ^{-1}$ the sum in the parenthesis is positive for any $\avj$ in the bulk (see the Appendix), which means that vortices become energetically unfavorable. 
\\
So far we have assumed that the zeros of $u$ were isolated in small `vortex' balls. One can show that this
is indeed the case by exploiting upper bounds to $\F [u]$ (this energy controls, via the co-area formula the size of the set where $|u|$ is no close to $1$). We derive a first bound from our a priori estimate on $\E [u]$ \eqref{starting bound} in Subsection \ref{subsec:preliminaries}. We emphasize however that this first bound is not strong enough to construct vortex balls in the whole annulus $\A$. To get around this point we split in Subsection \ref{vortex balls sec} the annulus into cells and distinguish between two type of cells. In `good' cells we have the proper control and perform locally the vortex balls construction. On the other hand we are able to show that there are relatively few `bad' cells where the construction is not possible. Thus the analysis in the good cells allows to get lower bounds for $\E [u]$. These bounds can in turn be used to improve our control on $\F [u]$ and reduce the number of bad cells. The analysis can then be repeated, but now on a larger set. Finally we see that there are no bad cells. Such an induction process is `hidden' at the end of the proof of Proposition \ref{pro:refinedbound} in Subsection \ref{proof completion sec} (see in particular the discussion after \eqref{lowbound 5}).

\subsection{Preliminaries}
\label{subsec:preliminaries}

We first give some elementary estimates on $\vec{B}$ and $F$ that we need in our analysis:

\begin{lem}[\textbf{Useful properties of $\vec{B}$ and $F$}]\label{propertiesBF}\mbox{}\\
Let $\vec{B}$ and $F$ be defined in \eqref{defi B} and \eqref{F} respectively. We have
\begin{eqnarray}
\lf\| B \ri\|_{L^{\infty} (\A)} &\leq& C \ep ^{-1}, \label{Bbound}
\\ \lf\| F \ri\|_{L^{\infty} (\A)} &\leq& C \ep ^{-1}, \label{Fbound}
\\ \lf\| \nabla F \ri\|_{L^{\infty} (\A)} &\leq& C \ep ^{-2} |\log \ep|^{-1},\label{gradFbound}
\\ \label{compatibility3}
 |F (1)| = 2 \left|\int_{R_{<}}^1 \diff s \: g ^2(s) \vec{B}(s) \cdot \vec{e}_{\vartheta} \right| &\leq& C.
\end{eqnarray}
Moreover there is a constant $C$ such that
\begin{equation}\label{Fbound2}
|F(r)| \leq C \min \left( \frac{ |r-R_< | }{\ep} g^2 (r), \: 1 + \frac{C}{\ep^2 |\log \ep| } |r-1| \right)
\end{equation}
for any $\rv \in \A$. 
\end{lem}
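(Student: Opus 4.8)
The strategy is to obtain all the bounds directly from the definitions \eqref{defi B}, \eqref{F}, the pointwise density estimates of Section \ref{sec:estimates} (in particular \eqref{pointwise bounds}, \eqref{g exp small}, \eqref{g improved exp small}, \eqref{gen grad est}), and the optimality property \eqref{compatibility} of $\omega_0$. First, \eqref{Bbound}: since $\rt = 1 - o(1)$ and $|\omega_0| = \OO(\eps^{-1})$ by \eqref{est omega_0}, one has $|B(r)| = |\Omega r - ([\Omega]-\omega_0)r^{-1}| \leq \Omega |r - r^{-1}| + |\omega_0| r^{-1} + \OO(1)$, and on $\A$ the factor $|r - r^{-1}| = \OO(\eps|\log\eps|)$ so $\Omega|r-r^{-1}| = \OO(\eps^{-1})$; combined with $|\omega_0| = \OO(\eps^{-1})$ this gives \eqref{Bbound}. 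This is essentially \eqref{rmagnp inside TF} again. Next, \eqref{compatibility3}: this is just a restatement of \eqref{compatibility2}, since $F(1) = 2\int_{\rt}^1 g^2(s)\,\vec B(s)\cdot\vec e_\vartheta\,\diff s$ and $g^2 = g_{\A,\omega_0}^2$.

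For \eqref{Fbound} I would split the integral defining $F(r)$ at some intermediate radius. On the region near $\rt$, where $g^2$ is exponentially small by \eqref{g improved exp small} (valid for $r \le \rtf - \OO(\eps^{7/6})$, and $\rt = \rtf - \eps^{8/7}$ lies in this range for the relevant part), the contribution is $\OO(\eps^\infty)$. On the remaining part, where $g^2 \le \|g\|_{L^\infty(\ba)}^2 = \OO(\eps^{-1}|\log\eps|^{-1})$ by \eqref{g estimates} and $|B| = \OO(\eps^{-1})$ by \eqref{Bbound}, and the radial interval has length $\OO(\eps|\log\eps|)$, the integral is bounded by $C\cdot \eps^{-1}|\log\eps|^{-1}\cdot\eps^{-1}\cdot\eps|\log\eps| = C\eps^{-1}$, giving \eqref{Fbound}. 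The gradient bound \eqref{gradFbound} is immediate from \eqref{F2}: $|\nabla F| = 2g^2|B| \le 2\|g\|_{L^\infty(\ba)}^2\|B\|_{L^\infty(\A)} \le C\eps^{-1}|\log\eps|^{-1}\cdot\eps^{-1} = C\eps^{-2}|\log\eps|^{-1}$.

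The main work is the refined two-sided bound \eqref{Fbound2}. For the first term in the minimum, $|F(r)| \le C\eps^{-1}|r - \rt| g^2(r)$: write $F(r) = 2\int_{\rt}^r g^2(s) B(s)\,\diff s$, bound $|B| \le C\eps^{-1}$ by \eqref{Bbound}, and use that $g^2$ is increasing (Lemma, monotonicity of the density) so $g^2(s) \le g^2(r)$ for $s \le r$; this yields $|F(r)| \le 2C\eps^{-1}|r-\rt| g^2(r)$, and $|r - \rt| = |r - R_<|$. For the second term, $|F(r)| \le 1 + C\eps^{-2}|\log\eps|^{-1}|r-1|$: here I would use \eqref{compatibility3}, namely $|F(1)| \le C$, and integrate backwards from $r=1$, $F(r) = F(1) - 2\int_r^1 g^2(s)B(s)\,\diff s$, bounding the integrand by $|\nabla F| \le C\eps^{-2}|\log\eps|^{-1}$ via \eqref{gradFbound} (equivalently $g^2|B| \le C\eps^{-2}|\log\eps|^{-1}$ directly from \eqref{g estimates} and \eqref{Bbound}), so that $|F(r)| \le |F(1)| + C\eps^{-2}|\log\eps|^{-1}|r-1| \le C + C\eps^{-2}|\log\eps|^{-1}|r-1|$; absorbing the constant $C$ into the bound (or noting $|F(1)| \le 1$ from \eqref{compatibility2}) gives the stated form. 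Taking the minimum of the two estimates gives \eqref{Fbound2}. The only subtle point — and the one I would be most careful about — is making sure the two complementary ways of estimating $F$ (forward from $\rt$ using monotonicity of $g^2$, backward from $1$ using the smallness of $F(1)$) are each valid on all of $\A$, so that their pointwise minimum is a legitimate bound everywhere; there is no genuine obstacle, just bookkeeping with the exponentially small tail near $\rt$.
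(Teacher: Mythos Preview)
Your proposal is correct and follows essentially the same route as the paper: \eqref{Bbound} from the explicit form of $B$ and $|\omega_0|=\OO(\eps^{-1})$; \eqref{gradFbound} from $\nabla^\perp F = 2g^2\vec B$ together with \eqref{g estimates} and \eqref{Bbound}; \eqref{compatibility3} as a restatement of \eqref{compatibility}; and \eqref{Fbound2} by bounding $F$ forward from $\rt$ using monotonicity of $g^2$ and backward from $1$ using $|F(1)|\leq C$ plus the gradient bound. The only cosmetic difference is your proof of \eqref{Fbound}: the paper simply notes $F(\rt)=0$ and uses $|F(r)|\leq \|\nabla F\|_{L^\infty(\A)}|\A|\leq C\eps^{-2}|\log\eps|^{-1}\cdot \eps|\log\eps|=C\eps^{-1}$, so your split into an exponentially-small region near $\rt$ and a bulk region is unnecessary (the bound $g^2\leq \|g\|_\infty^2$ already holds on all of $\A$).
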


\begin{proof}
We have 
\[
 \vec{B}(r) = \left( (\Om - [\Om] )r + [\Om] (r-r^{-1}) + \omega_0 r^{-1}\right)\vec{e}_{\vartheta}
\]
so \eqref{Bbound} is a consequence of $|\omega_0| \leq C\ep ^{-1}$ (see \eqref{est omega_0}) and the fact that $|\A| \propto \ep |\log \ep|$ (see \eqref{the annulus} and \eqref{the inner radius}) and $\Om \propto \ep^{-2} |\log \ep|^{-1}$.\\
We obtain \eqref{gradFbound} from \eqref{g estimates}, \eqref{F2} and \eqref{Bbound}. On the other hand $ F(\rt) =0 $, which implies that  \eqref{Fbound} follows from \eqref{gradFbound} and $ |\A| \propto \ep |\log \ep|$. Moreover \eqref{compatibility3} is exactly the same as \eqref{compatibility} in Proposition \ref{optimal phase pro}.
\newline
To prove \eqref{Fbound2} we prove that $|F(r)|$ is smaller than both terms on the right-hand side. From equation \eqref{F} we have
\[
 |F(r)| \leq 2 \lf\| B \ri\|_{L^{\infty} (\A)} \int_{R_<}^r \diff s \: g^2(s)  \leq C \ep ^{-1} \int_{R_<}^r \diff s \: g^2(s)
\]
but $g^2$ is increasing so
\[
\int_{\rt} ^r \diff s \: g^2(s)  \leq g^2 (r) |r-\rt|. 
\]
We deduce that $|F(r)|$ is smaller than the first term on the right-hand side of \eqref{Fbound2}. On the other hand, using \eqref{gradFbound} and \eqref{compatibility3} we have immediately
\[
 |F(r)| \leq C (1 + \ep^{-2} |\log \ep|^{-1} |r-1|)
\]
for some finite constant $C$. Thus \eqref{Fbound2} is proven. 
\end{proof}

We now prove the energy bounds that are the starting point for the vortex balls construction in Subsection \ref{vortex balls sec}. 

\begin{lem}[\textbf{Preliminary energy bounds}]\label{lem:initialbound}\mbox{}	\\
We have, for $ \eps > 0  $ small enough,
\begin{eqnarray}\label{Fgfirstbound}
\F [u] &\leq& \frac{C}{\ep ^2}, 
\\ \E [u] & \geq & - \frac{C}{\ep ^2}. \label{Egfirstbound}
\end{eqnarray}
\end{lem}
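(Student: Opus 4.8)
The plan is to get both inequalities from a single algebraic identity plus one Cauchy--Schwarz estimate; no vortex-ball machinery is needed at this stage. First I would record the trivial decomposition, obtained by comparing \eqref{defiF} with the definition of $\E$,
\[
\E[u] = \F[u] - 2\int_{\A} \diff \rv\: g^{2}\,\vec{B}\cdot(iu,\nabla u),
\]
so that the whole task reduces to bounding the cross term by $\F[u]^{1/2}$ with a controlled constant.

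For that I would use the pointwise inequality $|(iu,\nabla u)|\leq |u|\,|\nabla u|$, pull out $\|B\|_{L^{\infty}(\A)}$, and apply Cauchy--Schwarz \emph{keeping the weight $g^2$ inside both factors}:
\[
\Bigl|\, 2\int_{\A} \diff \rv\: g^{2}\,\vec{B}\cdot(iu,\nabla u)\,\Bigr| \;\leq\; 2\,\|B\|_{L^{\infty}(\A)}\Bigl(\int_{\A}\diff \rv\: g^{2}|u|^{2}\Bigr)^{1/2}\Bigl(\int_{\A}\diff \rv\: g^{2}|\nabla u|^{2}\Bigr)^{1/2}.
\]
Now $g^{2}|u|^{2}=|\gpm|^{2}$ on $\A$ by \eqref{defi u}, hence $\int_{\A}g^{2}|u|^{2}\leq\|\gpm\|_{L^{2}(\B)}^{2}=1$; the second factor is $\leq\F[u]^{1/2}$ by \eqref{defiF}; and $\|B\|_{L^{\infty}(\A)}\leq C\ep^{-1}$ by \eqref{Bbound}. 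So the cross term is bounded by $C\ep^{-1}\F[u]^{1/2}$.

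With this in hand both bounds are immediate. For \eqref{Egfirstbound} I would simply write $\E[u]\geq\F[u]-C\ep^{-1}\F[u]^{1/2}$ and minimize the right-hand side over $\F[u]\geq 0$, obtaining $\E[u]\geq -C\ep^{-2}$ (this step does not even use the a priori bound). For \eqref{Fgfirstbound} I would combine the same estimate with the upper bound $\E[u]\leq\OO(\ep^{\infty})$ from \eqref{starting bound}, getting $\F[u]-C\ep^{-1}\F[u]^{1/2}\leq C$; reading this as a quadratic inequality in $\F[u]^{1/2}$ forces $\F[u]^{1/2}\leq C\ep^{-1}$, i.e.\ $\F[u]\leq C\ep^{-2}$.

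I do not expect any serious obstacle: the argument is elementary. The only points requiring a little care are keeping the weight $g^{2}$ inside the Cauchy--Schwarz step (so that the first factor is the $L^2$-norm of $\gpm$, controlled by normalization, rather than an uncontrolled $\int|u|^{2}$) and using the uniform bound $\|B\|_{L^{\infty}(\A)}\leq C\ep^{-1}$ rather than the finer behaviour of $\vec{B}$ near the hole. The resulting powers $\ep^{-2}$ are of course far from the sharp estimates of Proposition \ref{pro:refinedbound}; this is precisely why Lemma \ref{lem:initialbound} is only a starting point, to be bootstrapped once the vortex-ball construction of Subsection \ref{vortex balls sec} is available.
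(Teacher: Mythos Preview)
Your proof is correct and follows essentially the same approach as the paper: bound the cross term using $\|B\|_{L^\infty(\A)}\le C\eps^{-1}$ and the normalization $\int_\A g^2|u|^2\le 1$, then read off both bounds. The only cosmetic difference is that the paper applies Young's inequality pointwise ($2g^2|B||u||\nabla u|\le \tfrac12 g^2|\nabla u|^2+2g^2B^2|u|^2$) so that half of the kinetic term can be subtracted directly, whereas you use Cauchy--Schwarz followed by a quadratic-in-$\F[u]^{1/2}$ argument; both routes are equally elementary and give the same $\eps^{-2}$ bounds.
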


\begin{proof}
We have, using (\ref{Bbound}) together with the normalization of $\gpm$,
\begin{equation}\label{trick momentum}
2 \left| \int_{\ann } \diff \rv \: g^2 \vec{B} \cdot (iu,\nabla u) \right| \leq \frac{1}{2} \int_{\ann } \diff \rv \: g ^2 \left|\nabla u \right|^2 + 2 \int_{\ann} \diff \rv \: g ^2 B^2 |u|^2 \leq  \frac{1}{2} \int_{\ann} \diff \rv \: g ^2 \left|\nabla u \right|^2 + \frac{C}{\ep ^2}. 
\end{equation}
Equation (\ref{Egfirstbound}) immediately follows. We obtain (\ref{Fgfirstbound}) from (\ref{starting bound}): 
\bml{\label{calcul}
\F [u] = \E [u] + 2 \int_{\ann } \diff \rv \: g^2 \vec{B} \cdot (iu,\nabla u) \leq \OO (\ep ^{\infty}) + 2 \left| \int_{\ann } \diff \rv \: g^2 \vec{B} \cdot (iu,\nabla u) \right| \leq	\\
	 \frac{1}{2} \int_{\ann} \diff \rv \: g ^2 \left|\nabla u \right|^2 + \frac{C}{\ep ^2} \leq \frac{1}{2} \F[u] + \frac{C}{\eps^2}.
}
Subtracting $\frac{1}{2} \F[u] $ from both sides of the last inequality we get \eqref{Fgfirstbound}.
\end{proof}

\subsection{Equation for $u$ and Boundary Estimate}
\label{subsec:boundary}

We first derive from the equations for $\gpm$ and $g$ an equation satisfied by $u$:

\begin{lem}[\textbf{Equation for $u$}]\label{lem:equation u}\mbox{}\\
Let $u$ be defined by \eqref{defi u}. We have on $\A$
\begin{equation}\label{vareq u 2}
 -\nabla (g ^2 \nabla u) -2i g^2 \vec{B} \cdot \nabla u +2\frac{g ^4}{\ep ^2} \left( |u|^2 -1 \right)u =\lambda g ^2 u,
\end{equation} 
where $\lambda\in \R$ satisfies the estimate
\begin{equation}\label{scalelambda}
| \lambda | \leq C \left( \left| \Eg [u] \right| + \frac{1}{\ep ^{3/2}|\log \ep|^{1/2}}\Fg [u]^{1/2}\right).
\end{equation}
Moreover $u$ satisfies the boundary condition
\begin{equation}\label{Neuman u}
\dd _r u =0 \mbox{ on } \dd \B . 
\end{equation}
\end{lem}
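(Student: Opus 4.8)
The plan is to derive \eqref{vareq u 2} by a direct substitution of the decomposition \eqref{defi u} into the GP equation \eqref{GP variational} and then eliminating the second-derivative terms of $g$ by means of the variational equation \eqref{hgpm var} for $g$. First I would write $\gpm = g\, u\, e^{i\hat\Omega\vartheta}$ with $\hat\Omega = [\Omega]-\omega_0$, and expand $-\Delta\gpm$, $-2\vec\Omega\cdot\vec L\,\gpm$ and the cubic term in terms of $g$, $u$ and their derivatives. Using the product rule, $-\Delta(g u e^{i\hat\Omega\vartheta})$ produces the terms $-(\Delta g) u e^{i\hat\Omega\vartheta}$, $-2\nabla g\cdot\nabla(u e^{i\hat\Omega\vartheta})$, $-g\Delta(u e^{i\hat\Omega\vartheta})$; the middle term combines with $-g\Delta u\, e^{i\hat\Omega\vartheta}$ and $-g u\Delta(e^{i\hat\Omega\vartheta})$ and the angular momentum term. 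After dividing by $e^{i\hat\Omega\vartheta}$ and collecting, the coefficient of $u$ is precisely $\big[-\Delta g + \hat\Omega^2 r^{-2} g - 2\Omega\hat\Omega g + 2\eps^{-2}g^3\big]$, which by \eqref{hgpm var} equals $\hchema\, g$; combining the $-2\nabla g\cdot\nabla u$ piece with $-g\Delta u$ gives $-g^{-1}\nabla(g^2\nabla u)$; the remaining first-order term in $u$ assembles into $-2i g\,\vec B\cdot\nabla u$ (using that $\magnp - \hat\Omega r^{-1}\vec e_\vartheta = \vec B$); and the cubic piece from $\gpm$ gives $2\eps^{-2}g^3(|u|^2-1)u$ after subtracting what was already used in \eqref{hgpm var}. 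Multiplying through by $g$ yields \eqref{vareq u 2} with $\lambda = \chem - \hchema$, which is real. The Neumann condition \eqref{Neuman u} follows because $\partial_r\gpm = 0$ on $\partial\B$, $\partial_r g(1)=0$ by Proposition \ref{htminimization}, and $g(1)\neq 0$, so $\partial_r u = g^{-1}(\partial_r\gpm - u\,\partial_r g)e^{-i\hat\Omega\vartheta} = 0$ there.

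The substantive part is the estimate \eqref{scalelambda} on $\lambda$. Since $\lambda = \chem - \hchema$, I would not try to estimate the chemical potentials separately (the bounds from Proposition \ref{GPmin} and \eqref{diff g chemical} are far too weak, only $\OO(\eps^{-5/2}|\log\eps|^{-1/2})$); instead I would extract $\lambda$ variationally from \eqref{vareq u 2} itself. Testing \eqref{vareq u 2} against $u$ and integrating by parts over $\A$ — the boundary terms vanish by the Neumann conditions on both $\partial\ba$ and $\partial\ba_{\rt}$ — gives
\[
\lambda \int_{\A}\diff\rv\, g^2|u|^2 \;=\; \int_{\A}\diff\rv\,\Big\{ g^2|\nabla u|^2 - 2 g^2 \vec B\cdot(iu,\nabla u) + \frac{2 g^4}{\eps^2}(|u|^2-1)|u|^2 \Big\}.
\]
Now $\int_\A g^2|u|^2 = \|\gpm\|_{L^2(\A)}^2 = 1 - \OO(\eps^\infty)$ by \eqref{annulus mass}, so $\lambda$ equals the right-hand side up to an exponentially small factor. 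The first two terms on the right are exactly $\E[u]$ minus the positive quartic term $\int_\A \frac{g^4}{\eps^2}(1-|u|^2)^2$; so the right-hand side equals $\E[u] + \int_\A \frac{g^4}{\eps^2}\big[2(|u|^2-1)|u|^2 - (1-|u|^2)^2\big] = \E[u] + \int_\A \frac{g^4}{\eps^2}(|u|^2-1)(|u|^2+1) = \E[u] + \int_\A \frac{g^4}{\eps^2}(|u|^4-1)$. To bound the extra integral I would write $|u|^4 - 1 = (|u|^2-1)(|u|^2+1)$ and split: on the region where $|u|\le 2$ this is $\OO(1)\cdot|1-|u|^2|$, controlled by Cauchy–Schwarz against $\Fg[u]$ with the weight; on $\{|u|>2\}$ one uses $|u|^4 - 1 \le C|1-|u|^2|^2$. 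Either way, after a Cauchy–Schwarz using $\big(\int_\A g^4\eps^{-2}(1-|u|^2)^2\big)^{1/2} = \Fg[u]^{1/2}$ and the fact that $\int_\A g^4\eps^{-2} = \OO(\eps^{-2}|\log\eps|^{-1})\cdot|\A| = \OO(\eps^{-1}|\log\eps|^0)$ — more precisely $g^2 = \OO(\eps^{-1}|\log\eps|^{-1})$ and $|\A|=\OO(\eps|\log\eps|)$ give $\int_\A g^4\eps^{-2} = \OO(\eps^{-3}|\log\eps|^{-1})$, whose square root is $\eps^{-3/2}|\log\eps|^{-1/2}$ — one gets $\big|\int_\A \frac{g^4}{\eps^2}(|u|^4-1)\big| \le C\,\eps^{-3/2}|\log\eps|^{-1/2}\,\Fg[u]^{1/2}$, and \eqref{scalelambda} follows.

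The main obstacle I anticipate is controlling the quartic remainder $\int_\A g^4\eps^{-2}(|u|^4-1)$ cleanly, since $|u|$ is a priori only in $H^1$ and not bounded, so the pointwise bound $|u|^4-1\le C(1-|u|^2)^2$ is needed on the large-value set while a linear bound is used on the bounded set; getting the stated combination $|\E[u]| + \eps^{-3/2}|\log\eps|^{-1/2}\Fg[u]^{1/2}$ rather than a worse power requires using the sharp estimates $g^2\le C\eps^{-1}|\log\eps|^{-1}$ on $\ba$ (from Proposition \ref{point GP dens}, or the sup bound in \eqref{g estimates}) and $|\A|\propto\eps|\log\eps|$ rather than crude ones. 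A secondary technical point is justifying the integration by parts that produces the testing identity: one needs $u\in H^1(\A)$ (which holds, since $\gpm\in H^1$ and $g$ is bounded below on $\A$ by \eqref{g low bound}) together with enough regularity of $g$ (smooth, by Proposition \ref{htminimization}) for all boundary contributions to vanish via the Neumann conditions \eqref{Neuman u} and $\partial_r g(\rt)=\partial_r g(1)=0$.
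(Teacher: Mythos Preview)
Your proposal is correct and follows essentially the same route as the paper: derive \eqref{vareq u 2} by substituting the decomposition into the GP equation and eliminating $-\Delta g$ via \eqref{hgpm var}, identify $\lambda=\chem-\hchema$, obtain \eqref{Neuman u} from the Neumann conditions on $\gpm$ and $g$, and then test \eqref{vareq u 2} against $u^*$ over $\A$ to get
\[
\lambda\int_\A g^2|u|^2=\E[u]+\int_\A \frac{g^4}{\eps^2}\bigl(|u|^4-1\bigr),
\]
using $\int_\A g^2|u|^2=\|\gpm\|_{L^2(\A)}^2=1-\OO(\eps^\infty)$.

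The one place where your argument is more laborious than necessary is the quartic remainder. You split into $\{|u|\le 2\}$ and $\{|u|>2\}$ because you worry that $|u|$ is a priori unbounded; the paper avoids this entirely by remembering that $g^2|u|^2=|\gpm|^2$, which \emph{is} bounded in $L^\infty$ by \eqref{GPmin estimates}. Then a single Cauchy--Schwarz gives
\[
\left|\int_\A \frac{g^4}{\eps^2}(|u|^4-1)\right|\le\left(\int_\A\frac{g^4}{\eps^2}(|u|^2-1)^2\right)^{1/2}\left(\int_\A\frac{1}{\eps^2}\bigl(g^2|u|^2+g^2\bigr)^2\right)^{1/2},
\]
and since $g^2|u|^2+g^2=|\gpm|^2+g^2\le C\eps^{-1}|\log\eps|^{-1}$ pointwise, the second factor is $\le C\eps^{-3/2}|\log\eps|^{-1/2}$ directly. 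So the ``main obstacle'' you flag is not really there. (Also, a small slip: $\bigl(\int_\A g^4\eps^{-2}(1-|u|^2)^2\bigr)^{1/2}\le\Fg[u]^{1/2}$, not equality, since $\Fg$ also contains the gradient term.)
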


\begin{proof}
In this proof we use the short hand notation
\begin{equation}\label{phase short}
 \vec{\hat{\Om}} = \hat{\Om} \vec{e}_r := \left([\Om]-\om_0\right)\vec{e}_r.
\end{equation}
The equation for $u$ is a consequence of the variational equation \eqref{GP variational} satisfied by $ \gpm $,
\begin{equation}\label{recall GP variational}
	- \Delta \gpm - 2 \vec{\Omega}\cdot \vec{L} \gpm + 2 \eps^{-2} \lf| \gpm \ri|^2 \gpm = \chem \gpm,
\end{equation}
and that solved by $g$ (see \eqref{hgpm var}),
\begin{equation}\label{recall equation g}
 - \Delta g + \bigg( \Om r - \frac{\hat{\Om}}{r} \bigg)^2 g -\Om ^2 r ^2 + 2 \ep^{-2} g^3 = \hchem g.
\end{equation}
Direct computations starting from (\ref{defi u}) show that
\begin{equation}\label{Delta gpm}
 -\Delta \gpm = \bigg( -u \Delta g - g \Delta u -2 \nabla u \cdot \nabla g +\frac{\hat{\Om}^2}{r^2} g u + \frac{2}{r^2}g \vec{\hat{\Om}}\cdot \vec{L} u \bigg) e^{i \hat{\Om} \vartheta}
\end{equation}
and
\begin{equation}\label{L gpm}
-2 \vec{\Omega}\cdot \vec{L} \gpm = \bigg( -2 g \vec{\Omega}\cdot \vec{L} u -2 \Om \hat{\Om} gu\bigg) e^{i \hat{\Om} \vartheta}.
\end{equation}
Plugging the equation for $g$ in (\ref{Delta gpm}) we obtain
\begin{equation}\label{Delta gpm 2}
 -\Delta \gpm = \left( - g \Delta u -2 \nabla u \cdot \nabla g +\hchem g u + 2\Om \hat{\Om} gu - 2 \ep^{-2}g^3 u + 2 g \vec{\hat{\Omega}}\cdot \vec{L} u  \right) e^{i \hat{\Om} \vartheta}.
\end{equation}
Next, combining (\ref{recall GP variational}), (\ref{L gpm}) and (\ref{Delta gpm 2}), we have
\begin{equation}\label{vareq u}
 - g \Delta u - 2\nabla u \cdot \nabla g + \frac{2}{\ep ^2} g^3 \left( |u|^2-1 \right) u + \frac{2}{r^2}g \vec{\hat{\Omega}}\cdot \vec{L} u -2 g \vec{\Omega}\cdot \vec{L} u = \lambda gu
\end{equation}
where 
\begin{equation}\label{defi lambda}
\lambda = \chem - \hchem. 
\end{equation}
There only remains to multiply (\ref{vareq u}) by $g$ and reorganize the terms to obtain (\ref{vareq u 2}).\\ 
The Neumann condition (\ref{Neuman u}) is a straightforward consequence of the corresponding boundary conditions for $g$ and $\gpm$.\\
We now turn to the proof of (\ref{scalelambda}). Multiplying (\ref{vareq u 2}) by $u^*$ and integrating over $\ann$, we obtain, recalling (\ref{annulus mass}),
\begin{equation}\label{lambdaidentity}
 |\lambda| \lf( 1 - \OO(\eps^{\infty}) \ri) \leq |\lambda| \int_{\A} \diff \rv  \: |\gpm |^2  = |\lambda| \int_{\ann} \diff \rv \: g^2 |u| ^2 = \Eg [u] + \int_{\ann} \diff \rv \: \frac{g^4}{\ep ^2}  \left(|u|^4 - 1\right).
\end{equation}
Using Cauchy-Schwarz inequality
\[
\left| \int_{\ann} \frac{g^4}{\ep ^2}  \left(|u|^4 - 1\right) \right| \leq \left( \int_{\ann} \frac{g^4 }{\ep ^2} \left(|u|^2 - 1\right)^2\right)^{1/2} \left(\int_{\ann} \frac{ 1 }{\ep ^2} \left(g^2|u|^2 + g^2\right)^2 \right)^{1/2}
\]
so (\ref{scalelambda}) follows using the upper bounds  \eqref{GPmin estimates} and \eqref{g estimates} on  $g^2 |u|^2 = |\gpm |^2 $ and $g^2$ respectively together with $|\ann| \propto \ep |\log \ep|$.
\end{proof}

We remark that the equation satisfied by $u$ is exactly of the form that we would have obtained if $u$ had been a minimizer of $\E$ under a mass constraint for $g u$.

We are now able to estimate the circulation of $u$ at the boundary of the unit ball:

\begin{lem}[\textbf{Boundary estimate}]\label{lem:boundary}\mbox{}\\
We have, for $ \eps > 0 $ small enough,
\begin{equation}\label{boundary}
\int_{\dd \B} \diff \sigma \: g ^2 \left| \dd_{\tau} u \right|^2 +  \int_{\dd \B} \diff \sigma \: \frac{g ^4}{\ep^2}\left(1-|u|^2 \right)^2 \leq  \frac{C}{\ep^{3/2}|\log \ep|^{1/2}  } \Fg [u]+  \frac{C}{\ep} |\Eg [u]|.
\end{equation}
\end{lem}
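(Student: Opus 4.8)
The plan is to derive a Pohozaev-type (``virial'') identity for $u$ and read off the boundary term from it. Concretely, I would test the equation \eqref{vareq u 2} against $X\cdot\nabla u^{*}$, where $X=\phi(r)\,\vec{e}_{r}$ is a radial vector field chosen to coincide with $\vec{x}$ in a neighbourhood of $\dd\B$ (so that $\phi(1)=1$) and to vanish smoothly in the interior of $\at$, take twice the real part, and integrate over $\A$. Integrating by parts term by term, all contributions on the inner circle $\dd\B_{\rt}$ are $\OO(\ep^{\infty})$ (the weights $g^{2},g^{4}$ and their derivatives are exponentially small there, and moreover $\phi$ vanishes there), while on $\dd\B$ the Neumann condition \eqref{Neuman u} makes the cross term $\int_{\dd\B}\diff\sigma\,g^{2}(\dd_{\nu}u)(X\cdot\nabla u^{*})$ vanish. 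The contributions surviving on $\dd\B$ are then exactly $\int_{\dd\B}\diff\sigma\,g^{2}|\dd_{\tau}u|^{2}$ (from $-\nabla\cdot(g^{2}\nabla u)$, using $|\nabla u|^{2}=|\dd_{\tau}u|^{2}$ on $\dd\B$) and $\ep^{-2}\int_{\dd\B}\diff\sigma\,g^{4}(1-|u|^{2})^{2}$ (from the quartic term), both with a favourable sign. Thus the identity expresses the left-hand side of \eqref{boundary} as a sum of bulk integrals over $\A$ plus the $\dd\B$-contribution of the term $\lambda g^{2}u$.

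The bulk integrals split into two families. Those arising from $\nabla\cdot(g^{2}X)$, $\nabla\cdot(g^{4}X)$ and $\nabla X$ are, after the leading $g^{2}|\nabla u|^{2}$ pieces cancel (which happens because $X=\vec{x}$ on most of $\mathrm{supp}\,\phi$), all bounded schematically by $\big(\|\phi'\|_{\infty}+\|\phi/r\|_{\infty}+\|\nabla g/g\|_{L^{\infty}(\mathrm{supp}\,\phi)}\big)\,\Fg[u]\le C\,\ep^{-5/4}|\log\ep|^{3/4}\,\Fg[u]$, which is $\ll\ep^{-3/2}|\log\ep|^{-1/2}\Fg[u]$; here one uses $\mathrm{supp}\,\phi\subset\at$ together with \eqref{grad est} and \eqref{g low bound} to bound $\|\nabla g/g\|_{\infty}$. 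The magnetic term contributes at most $\|B\|_{L^{\infty}(\A)}\int_{\A}\diff\rv\,g^{2}|\nabla u|^{2}\le C\ep^{-1}\Fg[u]$ by \eqref{Bbound}. The bulk part of the $\lambda$ term, $-\lambda\int_{\A}\diff\rv\,(1-|u|^{2})\,\nabla\cdot(g^{2}X)$, is handled by Cauchy--Schwarz, $\int_{\A}\diff\rv\,g^{2}\big|1-|u|^{2}\big|\le C\ep^{3/2}|\log\ep|^{1/2}\Fg[u]^{1/2}$ (since $|\A|\propto\ep|\log\ep|$), followed by the bound \eqref{scalelambda} on $\lambda$ and $\Fg[u]\le C\ep^{-2}$ from Lemma~\ref{lem:initialbound}; this produces a contribution $\le C\ep^{-3/2}|\log\ep|^{-1/2}\Fg[u]+C\ep^{-1}|\Eg[u]|$.

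Finally, the $\dd\B$-contribution of $\lambda g^{2}u$, namely $\lambda\,g^{2}(1)\int_{\dd\B}\diff\sigma\,(|u|^{2}-1)$, I would bound by Cauchy--Schwarz as $\le\sqrt{2\pi}\,|\lambda|\,\ep\big(\int_{\dd\B}\diff\sigma\,\tfrac{g^{4}}{\ep^{2}}(1-|u|^{2})^{2}\big)^{1/2}$ and then, by Young's inequality, absorb a fraction of $\int_{\dd\B}\diff\sigma\,\tfrac{g^{4}}{\ep^{2}}(1-|u|^{2})^{2}$ into the left-hand side of \eqref{boundary}, leaving a remainder $C\lambda^{2}\ep^{2}\le C\ep^{-1}|\Eg[u]|+C\ep^{-3/2}|\log\ep|^{-1/2}\Fg[u]$ by \eqref{scalelambda} and Lemma~\ref{lem:initialbound}. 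Collecting the estimates and moving the absorbed part of the potential boundary term to the left yields \eqref{boundary}. The main obstacle — and the reason $X$ must be cut off rather than taken equal to $\vec{x}$ — is the control of the terms containing $\nabla g$: near $\rtf$ the estimate \eqref{grad est} is not available and $|\nabla g|/g$ need not be small, so one has to keep $\mathrm{supp}\,\phi'$, and $\mathrm{supp}\,\phi$ outside a neighbourhood of $\dd\B$, inside $\at$, where \eqref{grad est} and \eqref{g low bound} give $\|\nabla g/g\|_{L^{\infty}(\at)}\le C\ep^{-5/4}|\log\ep|^{3/4}$; the contribution of $\{\rt\le r\le\rd\}$ to every integral then vanishes identically. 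This is in the spirit of the Pohozaev identity \cite{P}, here adapted to the degenerate weight $g^{2}$ and exploiting the Neumann boundary condition in an essential way.
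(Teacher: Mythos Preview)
Your approach is essentially the same as the paper's: both derive a Pohozaev-type identity by testing \eqref{vareq u 2} against a radial multiplier $X\cdot\nabla u^{*}$ and use the Neumann condition \eqref{Neuman u} to kill the normal-derivative boundary term. The difference is only in the choice of $X$. The paper takes the explicit linear field $X=(r-\rd)\vec{e}_{r}$ and integrates over $\at$: this vanishes exactly on $\partial\B_{\rd}$ without any cutoff, and the boundary terms on $\partial\B$ come with the small prefactor $\tfrac{1}{2}(1-\rd)\propto\ep|\log\ep|$, which is divided out at the end. Your $\phi$ with $\phi(1)=1$ gives the boundary terms with coefficient $1$ directly, at the cost of introducing a cutoff and having to control $\|\phi'\|_{\infty}$; this works but is slightly heavier. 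The paper also estimates the $\lambda$ term \emph{directly} via Cauchy--Schwarz (no further integration by parts), obtaining $|\lambda|\,\|X\|_{\infty}\,\|gu\|_{2}\,\|g\nabla u\|_{2}\le C|\log\ep|\,|\E[u]|+C\ep^{-1/2}|\log\ep|^{1/2}\F[u]$ before the final division; your route through $\nabla\!\cdot(g^{2}X)$ and absorption of a boundary piece is correct but more circuitous (and note your intermediate bound on the bulk $\lambda$ term omits the factor $|\nabla\!\cdot(g^{2}X)|/g^{2}\sim\ep^{-5/4}|\log\ep|^{3/4}$, though the final claim remains valid once it is included).
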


\begin{proof}
The main idea is a Pohozaev-like trick, like in \cite[Theorem 3.2]{BBH2}. We multiply both sides of \eqref{vareq u 2} by $ \left(r- \rd \right)\vec{e}_r \cdot \nabla u^*  $, then integrate over $\at \subset \ann$ (see \eqref{defi annt} for the definition of $\at$) and take the real part. Recall that $1- \rd \propto \ep |\log \ep|$. \\
The first term is given by
\bml{
\label{calc5}
- \frac{1}{2} \int_{\at}  \diff \rv \lf\{ \left(r- \rd \right)\vec{e}_r \cdot \nabla u^* \nabla \lf( g^2 \nabla u \ri) +  \left(r- \rd \right)\vec{e}_r \cdot \nabla u \nabla \lf( g^2 \nabla u^* \ri) \ri\} =	\\
\int_{\at} \diff \rv \lf\{ g^2 \left(1-\frac{\rd }{r} \right)\lf| \nabla u \ri|^2 + g^2 \frac{\rd}{r} \lf| \vec{e}_r \cdot \nabla u \ri|^2 + \frac{1}{2}  g^2 \left(r- \rd \right)\vec{e}_r \cdot \nabla |\nabla u|^2 \ri\} = \\
\frac{1}{2} \int_{\partial \B} \diff \sigma \: g^2 \left(1-\rd \right)\lf| \dd_{\tau} u \ri|^2 - \int_{\at}   \diff \rv \lf\{ g \: \left(r- \rd \right)\vec{e}_r \cdot \nabla g \: |\nabla u|^2 + \frac{\rd}{2r} g^2 \lf[ 2 \lf| \vec{e}_r \cdot \nabla u \ri|^2 - \lf| \nabla u \ri|^2 \ri] \ri\},
} 
where we have integrated by parts twice, remembering the Neumann boundary conditions for $g$ and $u$ on $\partial \B$.
\newline
The third terms yields
\bml{
\label{calc6}
\frac{1}{\eps^2} \int_{\at}  \diff \rv \lf\{ \left(r- \rd \right)\vec{e}_r \cdot \nabla u^* g^4 (|u|^2 - 1) u +  \left(r- \rd \right)\vec{e}_r \cdot \nabla u \: g^4 (|u|^2 - 1) u^* \ri\} =   \\ 
\frac{1}{2\eps^2} \int_{\at} \diff \rv \: g^4 \left(r- \rd \right)\vec{e}_r \cdot \nabla \lf( 1 - |u|^2 \ri)^2 =
\frac{1}{2\eps^2} \int_{\partial \B} \diff \sigma \: \left(1-\rd \right)g^4 \lf( 1 - |u|^2 \ri)^2 - \\ 
 \frac{1}{\eps^2} \int_{\at} \diff \rv \lf\{ \left(1-\frac{ \rd }{2 r} \right) g^4 \lf( 1 - |u|^2 \ri)^2 + 2 g^3 \left(r- \rd \right)\vec{e}_r \cdot \nabla g \: \lf( 1 - |u|^2 \ri)^2 \ri\}.
} 
Altogether we thus obtain
\bml{
\label{bdest4}
\frac{1}{2} \left(1-\rd\right) \int_{\dd \B} \diff \sigma \lf\{ g ^2 \left| \dd_{\tau} u \right|^2 + \frac{g ^4}{\ep^2}\left(1-|u|^2 \right)^2  \right\} = 	\\
	\int_{\at}  \diff \rv \lf\{ g \: \left(r- \rd\right)\vec{e}_r \cdot \nabla g \: |\nabla u|^2 + \frac{\rd}{2r} g^2 \lf[ 2 \lf| \vec{e}_r \cdot \nabla u \ri|^2 - \lf| \nabla u \ri|^2 \ri] +  2 i g^2 \left(r- \rd\right)\vec{e}_r \cdot \nabla u^* \: \vec{B} \cdot \nabla u \ri\} +	\\
\frac{1}{\eps^2} \int_{\at} \diff \rv \lf\{ \left(1-\frac{\rd}{2 r} \right) g ^4 \lf( 1 - |u|^2 \ri)^2 + 2 g ^3 \left(r- \rd \right)\vec{e}_r \cdot \nabla g \: \lf( 1 - |u|^2 \ri)^2 \ri\} +	\\
	\frac{\lambda}{2} \int_{\at} \diff \rv  \: g^2 \left(r- \rd \right)\vec{e}_r \cdot \nabla |u|^2.
}
We then estimate the moduli of the terms of the r.h.s.. Obviously
\begin{equation}\label{bdbound1}
\int_{\at} \diff \rv \lf\{ \frac{\rd}{2r} g^2 \lf[ 2 \lf| \vec{e}_r \cdot \nabla u \ri|^2 - \lf| \nabla u \ri|^2 \ri] + \frac{g^4}{\ep^2}\left(1-\frac{\rd}{2 r} \right)\left(1-|u|^2 \right)^2 \ri\} \leq \Fg [u].
\end{equation}
Using \eqref{defi R large} and \eqref{Bbound} we obtain
\begin{equation}\label{bdbound2}
\bigg| \int_{\at} \diff \rv \: i g^2 \left(r- \rd\right)\vec{e}_r \cdot \nabla u^* \: \vec{B} \cdot \nabla u \bigg|\leq C|\log \ep | \int_{\at} \diff \rv \: g ^2 |\nabla u |^2 \leq C|\log \ep | \Fg [u],	
\end{equation}
whereas, using the normalization of $\gpm$, \eqref{Fgfirstbound} and \eqref{scalelambda},
\bml{
 	\label{bdbound3}
\left|\frac{\lambda}{2} \int_{\at} \diff \rv \:  g^2 \left(r- \rd\right)\vec{e}_r \cdot \nabla |u|^2 \right| =  \left| \lambda \int_{\at} \diff \rv \: g^2 u  \left(r- \rd\right)\vec{e}_r \cdot \nabla u^* \right|  \leq \\
C \ep |\log \ep |\left( \left| \Eg [u] \right| + \frac{1}{\ep ^{3/2}|\log \ep|^{1/2}}\Fg [u]^{1/2}\right) \left(\int_{\at}  \diff \rv \: g^2 |u|^2\right)^{1/2}\left(\int_{\at}  \diff \rv \: g^2 |\nabla u|^2\right)^{1/2}  \leq  \\
 C   |\log \ep | \left| \Eg [u] \right| +  \frac{C|\log \ep |^{1/2} }{\ep ^{1/2}}\Fg [u].
}
Combining \eqref{grad est} with \eqref{g low bound}, we obtain
\begin{multline} \label{bdbound4}
\left| \frac{1}{\eps^2} \int_{\at}  \diff \rv \: g^3 \left(r- \rd \right)\vec{e}_r \cdot \nabla g \: \lf( 1 - |u|^2 \ri)^2 \right| \leq	\\
\frac{C |\log \ep |^{9/4}}{\ep ^{1/4}}  \int_{\at}  \diff \rv \: \frac{ 1}{\ep ^2} g^4 (1-|u|^2 )^2 \leq  \frac{C |\log \ep |^{9/4}}{\ep ^{1/4}}\Fg [u] .
\end{multline}
By similar arguments
\begin{equation}\label{bdbound5}
\left| \int_{\at}   \diff \rv \: g \: \left(r- \rd\right)\vec{e}_r \cdot \nabla g \: |\nabla u|^2 \right| \leq \frac{C |\log \ep |^{9/4}}{\ep ^{1/4}}\Fg [u].
\end{equation}
Gathering equations \eqref{bdest4} to \eqref{bdbound5}, we have 
\[
\frac{1}{2}\left(1-\rd\right)\int_{\dd \B} \diff \sigma \left\{ g ^2 \left| \dd_{\tau} u \right|^2 + \frac{g ^4}{\ep^2}\left(1-|u|^2 \right)^2  \right\} \leq C   |\log \ep | \left| \Eg [u] \right| +  \frac{C|\log \ep |^{1/2} }{\ep ^{1/2}}\Fg [u]
\]
and there only remains to divide by $1-\rd \propto \ep |\log \ep|$ to get the result.
\end{proof}

\subsection{Cell Decomposition and Vortex Ball Construction}
\label{vortex balls sec}

In this subsection we aim at constructing vortex balls for $u$. Namely, we want to construct a collection of balls whose radii are much smaller than the width of $\A$ and that cover the set where $|u|$ is not close to $1$. The usual of way of performing this task is to exploit bounds on $\F [u]$. Unfortunately, the bound \eqref{Fgfirstbound} is not sufficient for our purpose. It only implies that the area of the set where $u$ can possibly vanish is of order $\ep^2 |\log \ep|^2$, whereas the vortex balls method requires to cover it by balls of radii much smaller than the width of $\ann$, which is $\OO(\ep |\log \ep|)$. However, the estimates of Lemma \ref{lem:initialbound} are definitely not optimal and can be improved by using a procedure of local vortex balls construction.\\
The idea is the following: If the bound \eqref{Fgfirstbound} could be localized (in a sense made clear below), we could construct vortex balls. As this is not the case we split the annulus $\ann$ into regions where the bound can be localized and therefore vortex balls can be constructed as usual, and regions where this is not the case. 

\begin{defi}[\textbf{Good and bad cells}]
\mbox{}	\\
We cover $\ann$ with (almost rectangular) cells of side length $C \ep |\log \ep|$, using a corresponding division of the angular variable. We denote by $N\propto \ep^{-1} |\log \ep|^{-1} $ the total number of cells and label the cells as $\A_n, n\in \left\lbrace 1,...,N \right\rbrace$. Let $0\leq \alpha< \frac{1}{2}$ be a parameter to be fixed later on. 
\begin{itemize}
\item We say that $\A_n$ is an $\alpha$-good cell if 
\begin{equation}\label{defgc}
\int_{\A _n}  \diff \rv \lf\{ g ^{2} \left| \nabla u \right|^2+ \frac{g ^4 }{\ep ^2} \left(1-|u|^2 \right)^2 \ri\} \leq \frac{|\log \ep|}{\ep} \ep^{-\alpha}.
\end{equation}
We denote by $N_{\alpha} ^{\mathrm{G}}$ the number of $\alpha$-good cells and $GS_{\alpha}$ the (good) set they cover.
\item We say that $\A_n$ is an $\alpha$-bad cell if
\begin{equation}\label{defbc}
\int_{\A _n}  \diff \rv \lf\{ g ^{2} \left| \nabla u \right|^2+ \frac{g ^4 }{\ep ^2} \left(1-|u|^2 \right)^2 \ri\} > \frac{|\log \ep|}{\ep} \ep^{-\alpha}.
\end{equation}We denote by $N_{\alpha} ^{\mathrm{B}}$ the number of $\alpha$-bad cells and $BS_{\alpha}$ the (bad) set they cover.
\end{itemize}
\end{defi}

Note that the annulus $\ann$ has a width $\ell \propto \ep|\log \ep|$ (which implies that $N\propto \ep^{-1} |\log \ep|^{-1}$) so that we are actually dividing it into bad cells where there is much more energy that what would be expected from the localization of the bound \eqref{Fgfirstbound} (namely $ C \ell \ep ^{-2} \propto \eps^{-1}|\log \ep|$) and regions (good cells) of reasonably small energy. A first consequence of this is, neglecting the good cells and  using \eqref{Fgfirstbound}, 
	\begin{equation}\label{numberbad}
		N_{\alpha} ^{\mathrm{B}} \leq \frac{\ep}{|\log \ep|} \ep^{\alpha} \F [u] \leq \frac{C}{\ep |\log \ep|}\ep^{\al} \ll N, 	 
	\end{equation}
i.e., there are very few $\alpha$-bad cells. A consequence of the refined bound \eqref{borne Fg final} that we are aiming at is that there are actually no $\al$-bad cells at all.

We now construct the vortex balls in the good set. The proof is merely sketched because it is an adaptation of well-established methods (see \cite{SS} and references therein). Note that the construction is possible only in the subdomain $\at$ where the density is large enough.

\begin{pro}[\textbf{Vortex ball construction in the good set}]\label{pro:vortexballs}
 	\mbox{}\\
	Let $0\leq \al< \frac{1}{2}$. There is a certain $\ep_0$ so that, for $\ep \leq \ep_0$ there exists a finite collection $ \{ \B_i \}_{i \in I} := \left\lbrace \B (\avi, \varrho_i)\right\rbrace_{i\in I}$ of disjoint balls with centers $ \avi $ and radii $ \varrho_i $ such that
	\begin{enumerate}
		\item $\left\lbrace \rv \in GS_{\al} \cap \at \: : \: \left| |u| - 1  \right| > |\log \ep| ^{-1}  \right\rbrace \subset \bigcup_{i \in I} \B_i$,
		\item for any $\al $-good cell $\A_n$, $\sum_{i, \B_i \cap \A_n \neq \varnothing } \varrho_i = \ep |\log \ep |^{-5} $.		   
	\end{enumerate}
	Setting $d_i:= \dg \{ u, \partial \B_i \} $, if $ \B_i \subset \at \cap GS_{\al} $, and $d_i=0$ otherwise, we have the lower bounds
		\begin{equation}\label{lowboundballs}
		 	\int_{\B_i}  \diff \rv \: g ^2 \left|\nabla u\right|^2 \geq 2\pi \left(\frac{1}{2} -\al \right) |d_i|   g^2 (a_i) \left| \log \ep \right| \left(1-C \frac{\log \left| \log \ep \right|}{\left|\log \ep\right|}\right).
		\end{equation}
\end{pro}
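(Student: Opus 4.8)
The plan is to adapt the now-standard growth-and-merging vortex-ball construction of Sandier \cite{Sa} and Jerrard \cite{J} (see also \cite{SS}) for the Ginzburg--Landau functional. Two features are new compared with the textbook situation: the integrand carries the non-constant weight $g^2$, and we must track powers of $|\log\eps|$ carefully because the lower bound \eqref{lowboundballs} is claimed with coefficient $|\log\eps|$ (up to the displayed correction) rather than $|\log\eps_{\mathrm{eff}}|$.

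\textbf{Localizing the weight.} By \eqref{pointwise bounds} and \eqref{TFm0} we have $g^2(r)=\tfm(r)\bigl(1+\OO(\eps^{1/2}|\log\eps|^{7/2})\bigr)$ on $\at$, and since $\tfm$ is affine in $r^2$ and bounded below by $C\eps^{-1}|\log\eps|^{-3}$ on $\at$ (cf.\ \eqref{g low bound}), the relative oscillation of $g^2$ over any disc of radius $\leq \eps|\log\eps|^{-5}$ centred in $\at$ is $\OO(|\log\eps|^{-4})$. Hence on each such small disc $g^2$ may be replaced by its value at the centre up to a factor $1+o(1)$. Dividing the integrands in \eqref{defgc}--\eqref{defiF} by this reference value reduces the problem locally to the model functional $\int\bigl(|\nabla u|^2+\eps_{\mathrm{eff}}^{-2}(1-|u|^2)^2\bigr)$ with effective parameter $\eps_{\mathrm{eff}}^2:=\eps^2/g^2$, so that $\eps_{\mathrm{eff}}=\eps^{3/2}|\log\eps|^{\OO(1)}$ and in particular $\eps_{\mathrm{eff}}\ll\eps|\log\eps|^{-5}$: the scale separation required by the vortex-ball machinery holds, and this is why the construction is confined to $\at$, where $g^2$ is large.

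\textbf{Construction in a good cell and gluing.} Inside a fixed $\al$-good cell $\A_n\cap\at$ one runs the usual procedure. The potential term in \eqref{defgc} first shows that $\{\,||u|-1|>|\log\eps|^{-1}\,\}$ has measure $\OO(\eps^{3-\al}|\log\eps|^{\OO(1)})$ there, hence is covered by a collection of discs of total radius $\ll\eps|\log\eps|^{-5}$ and of individual radii $\sim\eps_{\mathrm{eff}}$. One then grows and merges these discs, keeping them disjoint, stopping when the total radius in the cell reaches $\eps|\log\eps|^{-5}$; this is possible because each core carries (weighted) energy $\gtrsim g^2|\log\eps_{\mathrm{eff}}|\gtrsim \eps^{-1}|\log\eps|^{-2}$, so the number $N_0$ of initial discs with nonzero degree is at most $\eps^{-1-\al}|\log\eps|\,/\,(\eps^{-1}|\log\eps|^{-2})=\eps^{-\al}|\log\eps|^{\OO(1)}$, and after merging each surviving ball has radius $\varrho_i\geq \eps|\log\eps|^{-5}/N_0\geq \eps^{1+\al}|\log\eps|^{-\OO(1)}$. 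Setting $d_i:=\dg\{u,\dd\B_i\}$ for $\B_i\subset\at\cap GS_\al$ and $d_i=0$ otherwise, the Sandier--Jerrard lower bound on each surviving ball gives $\int_{\B_i}|\nabla u|^2\geq 2\pi|d_i|\log(\varrho_i/\eps_{\mathrm{eff}})(1-o(1))$; restoring the local weight $g^2(a_i)$ and using $\log\varrho_i\geq -(1+\al)|\log\eps|-\OO(\log|\log\eps|)$ together with $-\log\eps_{\mathrm{eff}}=\tfrac32|\log\eps|-\OO(\log|\log\eps|)$, so that $\log(\varrho_i/\eps_{\mathrm{eff}})\geq(\tfrac12-\al)|\log\eps|\bigl(1-\OO(\log|\log\eps|/|\log\eps|)\bigr)$, yields \eqref{lowboundballs}. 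Finally one takes the union of the collections over all $\al$-good cells, merging any two balls that meet across adjacent cells; the per-cell radius bound and the per-ball lower bounds are preserved, and item (1) follows since $\at\cap GS_\al$ is covered cell by cell.

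\textbf{Main obstacle.} The delicate point is precisely the smallness (in fact largeness) of the weight: because $g^2\sim\eps^{-1}|\log\eps|^{-\OO(1)}$, the effective coherence length $\eps_{\mathrm{eff}}=\eps/g$ is only $\eps^{3/2+o(1)}$, so $|\log\eps_{\mathrm{eff}}|=\tfrac32|\log\eps|(1+o(1))$, not $|\log\eps|$. The coefficient $\tfrac12-\al$ in \eqref{lowboundballs} is exactly the net of this gain against the loss $\log\varrho_i\geq-(1+\al)|\log\eps|-\OO(\log|\log\eps|)$, which is forced by the requirement that the final balls be small enough that their total radius in a cell containing up to $\sim\eps^{-\al}$ vortices stays $\eps|\log\eps|^{-5}$. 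Making this bookkeeping rigorous — absorbing every stray $|\log\eps|^{\OO(1)}$ into the correction $1-C\log|\log\eps|/|\log\eps|$, and verifying at every stage of the ball growth that the oscillation of $g^2$ over each current ball is genuinely negligible — is where the care is needed; the rest is a routine transcription of \cite{Sa,J,SS}.
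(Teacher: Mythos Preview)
Your overall strategy is the paper's: Sandier--Jerrard growth-and-merging applied cell by cell in $GS_\al\cap\at$, with the weight $g^2$ frozen to $g^2(a_i)$ on each final ball (the paper does this via the oscillation estimate \eqref{densityballs}, exactly your ``localizing the weight'' step), then glued across cells. Where you diverge from the paper, and where the gap is, is the derivation of the factor $(\tfrac12-\al)|\log\eps|$.

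You claim each surviving ball has $\varrho_i\geq \eps|\log\eps|^{-5}/N_0$ and then invoke a per-ball bound $\int_{\B_i}|\nabla u|^2\geq 2\pi|d_i|\log(\varrho_i/\eps_{\mathrm{eff}})$. Neither step is justified. First, knowing that at most $N_0$ balls have total radius $\eps|\log\eps|^{-5}$ gives no lower bound on an individual $\varrho_i$; in the growth-and-merging process balls can merge at different stages and the final radii need not be comparable. Second, the output of the growth-and-merging method is not $2\pi|d_i|\log(\varrho_i/\eps_{\mathrm{eff}})$ per ball but $2\pi|d_i|\log m$, where $m$ is the \emph{global} dilation factor---the ratio of total final radius to total initial radius of the collection. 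You are conflating this with a Jerrard-style single-ball bound, which would need the full Ginzburg--Landau energy on $\B_i$ and a genuine lower bound on $\varrho_i$, neither of which you have.

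The paper's route is cleaner and avoids both issues. From \eqref{defgc} and \eqref{g low bound} one gets, via Cauchy--Schwarz and the coarea formula (not a measure bound---your passage from ``measure $\OO(\eps^{3-\al}\cdots)$'' to ``total radius $\ll\eps|\log\eps|^{-5}$'' is another glossed step), that the initial covering of $\{||u|-1|>|\log\eps|^{-1}\}$ in each good cell has total radius $\leq C\eps^{3/2-\al}|\log\eps|^{13/2}$. Growing to total radius $\eps|\log\eps|^{-5}$ per cell gives dilation factor $m\geq C\eps^{\al-1/2}|\log\eps|^{-23/2}$, hence $\log m\geq (\tfrac12-\al)|\log\eps|(1-C\log|\log\eps|/|\log\eps|)$ directly, and this is the per-ball coefficient in \eqref{lowboundballs}. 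No control on individual $\varrho_i$ or on $N_0$ is needed.
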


\begin{proof}
	We begin by covering the subset of $GS_{\al}\cap \at$ where $u$ can possibly vanish by balls of much smaller radii than those announced in the Proposition. Let $\A_n$ be an $\alpha$-good cell. We have, using \eqref{g low bound} and \eqref{defgc},
	\bmln{
	 	\int_{\A_n\cap \at}  \diff \rv \lf\{ \left| \nabla u \right|^2 + \frac{1}{\ep ^3 |\log \ep|^3} \left(1-|u|^2 \right)^2 \ri\} \leq \\
		 C \ep  |\log \ep|^3 \int_{\A_n}  \diff \rv \left\{ g^2 \left| \nabla u \right|^2 + \frac{g^4}{\ep ^2 } \left(1-|u|^2 \right)^2 \right\} \leq C |\log \ep|^4 \ep^{-\al}.
	}
	Then, using the Cauchy-Schwarz inequality, 
	\[
	 	\int_{\A_n  \cap \at}  \diff \rv \: \frac{\left| \nabla |u| \right| \left| 1-|u|^2 \right|}{\ep^{3/2}|\log \ep|^{3/2}}\leq C|\log \ep|^4 \ep^{-\al}
	\]
	and the coarea formula implies
	\[
	 	\int_{t\in \R^{+}} \diff t \: \frac{\left|1-t^2\right|}{\ep^{3/2}|\log \ep|^{3/2}} \HH ^1 \left(\left\lbrace \rv \in \A_n\cap \at \: : \: |u|(\rv) = t\right\rbrace\right) \leq C  |\log \ep|^4 \ep^{-\al},
	\]
	where $\HH ^1$ stands for the one-dimensional Hausdorff measure. We argue as in \cite[Propositions 4.4 and 4.8]{SS} to deduce from this that the set $\{ \rv \in \A_n\cap \at \: : \: \left||u| - 1 \right| >|\log \ep| ^{-1}   \}$ can be covered by a finite number of disjoint balls $\{ \B(\avtj, \tilde{\varrho}_{j})\}_{j \in J}$ with $\sum_{j \in J} \tilde{\varrho}_j \leq C \ep^{3/2-\al}|\log \ep|^{13/2}$. Doing likewise on each $\al$-good cell, we get a collection $\{ \B (\avti, \tilde{\varrho}_i)\}_{i \in I}$, covering $\{ \rv \in GS_{\al}\cap \at \: : \: \left||u| - 1 \right| >|\log \ep| ^{-1}  \}$ and satisfying 
	\begin{equation}\label{seeds2}
		\sum_{i \in I, \B(\avti, \tilde{\varrho}_i) \cap \A_n \neq \varnothing } \tilde{\varrho}_i \leq C\ep^{3/2-\al} |\log \ep |^{13/2}, \mbox{ for any }\al \mbox{-good cell }\A_n. 
	\end{equation}
	The balls in this collection may overlap because we have constructed them locally in the cells. However, by merging the balls that intersect as in \cite[Lemma 4.1]{SS}, we can construct a finite collection of disjoint balls (still denoted by $\B(\avti, \tilde{\varrho}_i)$) satisfying the same bounds on their radii and still covering $\{ \rv \in GS_{\al}\cap \at \: : \: \left| |u| - 1  \right| > |\log \ep| ^{-1}  \}$.\\
	The collection $ \B(\avti, \tilde{\varrho}_i)$ is represented in Figure 1. The gray regions are those were $u$ could vanish, i.e., bad cells and vortex balls. The dashed circle is the inner boundary of $\at$.\\ 
	\begin{figure}\label{fig1}
	\begin{center}
	 \includegraphics[width=200pt,height=200pt]{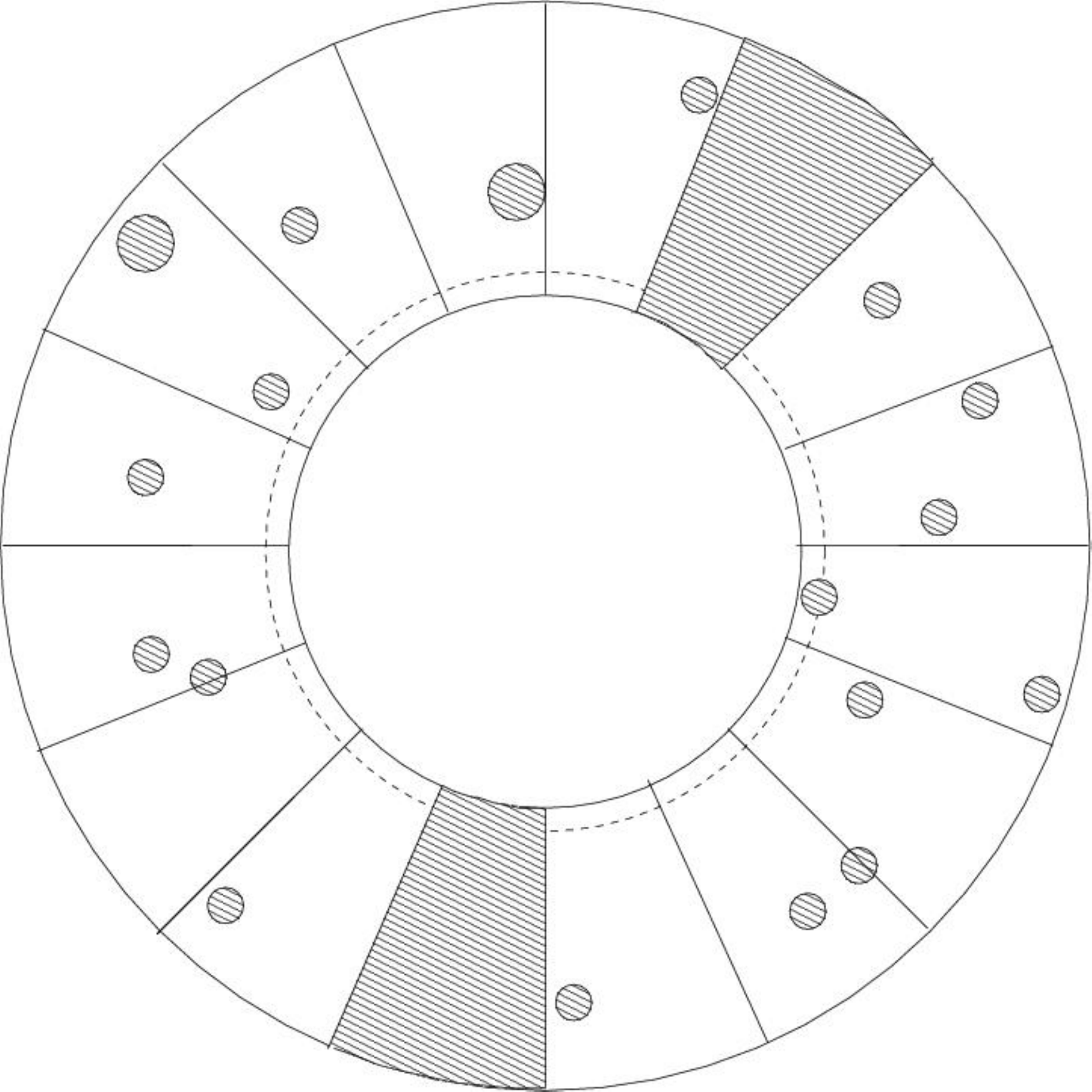}
	\caption{Initial collection of vortex balls}
 	% seeds3.pdf: 7143528x6553711 pixel, 300dpi, 60481.87x55488.09 cm, bb=0 0 542 542
	\end{center}
	\end{figure}
	We now let the balls in our initial collection grow and merge using the method described in \cite[Section 4.2]{SS}, adding lower bounds on the conformal annuli constructed. In brief, the idea is to introduce a process parametrized by a variable $t$ (interpreted as `time') so that, as $t$ increases, the balls will grow all with the same dilation factor. There are two phases in the process: When balls grow independently their radii get multiplied by some factor, say $m(t)$, and we add lower bounds to the kinetic energy of $u$ on the annuli between the initial and grown balls. The first time two (or more) balls touch we merge them into larger balls (see \cite[Lemma 4.1]{SS}), and continue with the growing phase. Figure 2 shows this process. 
	\newline
	The lower bounds on the annuli are obtained by integrating the kinetic energy over circles centered at the balls centers $ \avi $ during the growth phases:
	\begin{equation}\label{growth}
	 \int _{\B(\avi,m(t) \varrho_i) \setminus \B(\avi,\varrho_i)} \diff \rv \: |\nabla u |^2  \geq 2\pi d_i ^2 \log \left(m(t)\right) \left(1-|\log \ep| ^{-1} \right),
	\end{equation}
	where $d_i : = \dg \{ u,\partial \B(\avi, \varrho_i) \}$. \\
	The main point in this computation is that $||u|-1| < |\log \ep|^{-1}$ (in particular $u$ cannot vanish) between the circles $ \partial \B(\avi,\varrho_i)$ and $\partial \B(\avi,m(t) \varrho_i)$. Thus the degree of $u$ is $d_i$ on any circle $\partial \B(\avi,\varrho)$ with $ \varrho_i<\varrho<m(t) \varrho_i$. 
	\newline
	To add the lower bounds obtained during the different growth phases we use that if two balls, say $\B(\av_p, \varrho_p)$ and $\B(\av_q, \varrho_q)$ merge into a new ball $ \B(\av_s, \varrho_s)$ one always has $ d_p ^2 + d_q ^2 \geq |d_p|+|d_q| \geq |d_p+d_q|= |d_s|$. This accounts for the fact that we get a factor $|d_i|$ in \eqref{lowboundballs}, whereas the factor in \eqref{growth} is $d_i ^2$.\\ 
	\begin{figure}
	\begin{center}
 	\includegraphics[width=200pt,height=200pt]{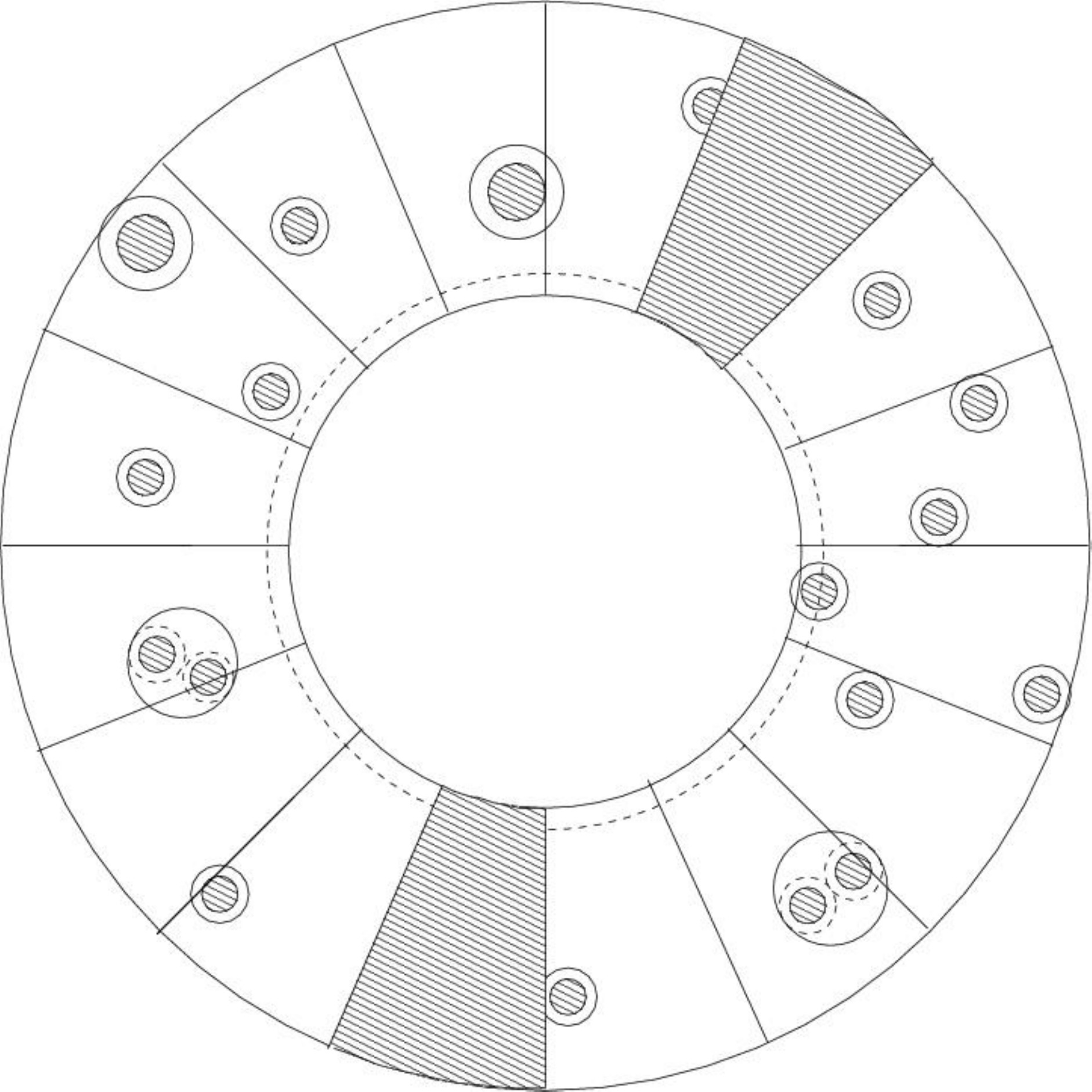}
 	% merging3.pdf 1179666x1179666 pixel, 300dpi, 9987.84x9987.84 cm, bb=0 0 542 542
	\caption{Growth and merging process}
	\end{center}
	\end{figure}
	We stop the process when we have a collection of disjoint balls $\{ \B_i \}_{i \in I} : = \left\lbrace \B (\avi, \varrho_i)\right\rbrace_{i \in I} $ satisfying condition 2 of Proposition \ref{pro:vortexballs} (property 1 is satisfied by construction, since it holds true for the initial family of balls), i.e., the balls are large enough, so that we have
	\begin{equation}\label{lowboundballs2}
		 \int_{\B (\avi,\varrho_i)} \diff \rv \: \left|\nabla u\right|^2 \geq 2\pi \left(\frac{1}{2} -\al \right) |d_i| \left| \log \ep \right|\left(1-C \frac{\log \left| \log \ep \right|}{\left|\log \ep\right|}\right),
	\end{equation}
	where $d_i:=\dg \{u, \partial\B_i\}$, if $ \B_i \subset GS_{\al}\cap \at $, and $d_i =0$ otherwise. The final configuration is drawn in Figure 3. Note that the radii of the final vortex balls are still much smaller than the diameter of the cells. This is important for \eqref{densityballs} below and the jacobian estimate in the next section.
	\newline
	The logarithmic factor $\left(\frac{1}{2} -\al \right) \left| \log \ep \right|\left(1-C \frac{\log \left| \log \ep \right|}{\left|\log \ep\right|}\right)$ comes from the logarithm of the dilation factor of the collections of balls, i.e.,
	\[
		\frac{\sum_{i \in I} \varrho_i}{\sum_{i \in \tilde{I}} \tilde{\varrho}_i}\geq C \ep ^{\alpha-1/2} |\log \ep|^{-23/2}.
	\]
	On the other hand, using \eqref{pointwise bounds} and the fact that $|\nabla \tfm |\leq C \ep^{-2} |\log \ep |^{-2}$ (see \eqref{TFm0}), we have 
	\begin{multline}\label{densityballs}
	 \left| \min_{\rv\in \B_i} g^2(r) -g^2 (a_i)\right| \leq C\left( \ep^{1/2}|\log \ep|^{3/2} \lf\| \tfm \ri\|_{L^{\infty} (\ann)}+ \left| \min_{\rv\in \B_i}\tfm(r) -\tfm(a_i)\right| \right) \leq 
	\\ C \left( \ep^{-1/2}|\log \ep|^{1/2}  + \ep^{-2} |\log \ep |^{-2} \varrho_i \right) \leq C \ep^{-1} |\log \ep|^{-7} \leq C |\log \ep |^{-4} g^2 (a_i),
	\end{multline}
	because $ \avi \in \at$ and $ \varrho_i \leq C \ep |\log \ep|^{-5}$. We conclude from \eqref{lowboundballs2} and \eqref{densityballs} that the lower bound \eqref{lowboundballs} holds on each ball we have constructed by bounding below $g^2$ with its minimum on the ball $ \B_i$. The error $\min_{\rv\in \B_i} g^2(\rv) -g^2 (a_i)$ can then be absorbed into the $g^2 (a_i) \log |\log \ep|$ term.
\end{proof}
\begin{figure}
	\begin{center}
 	\includegraphics[width=200pt,height=200pt]{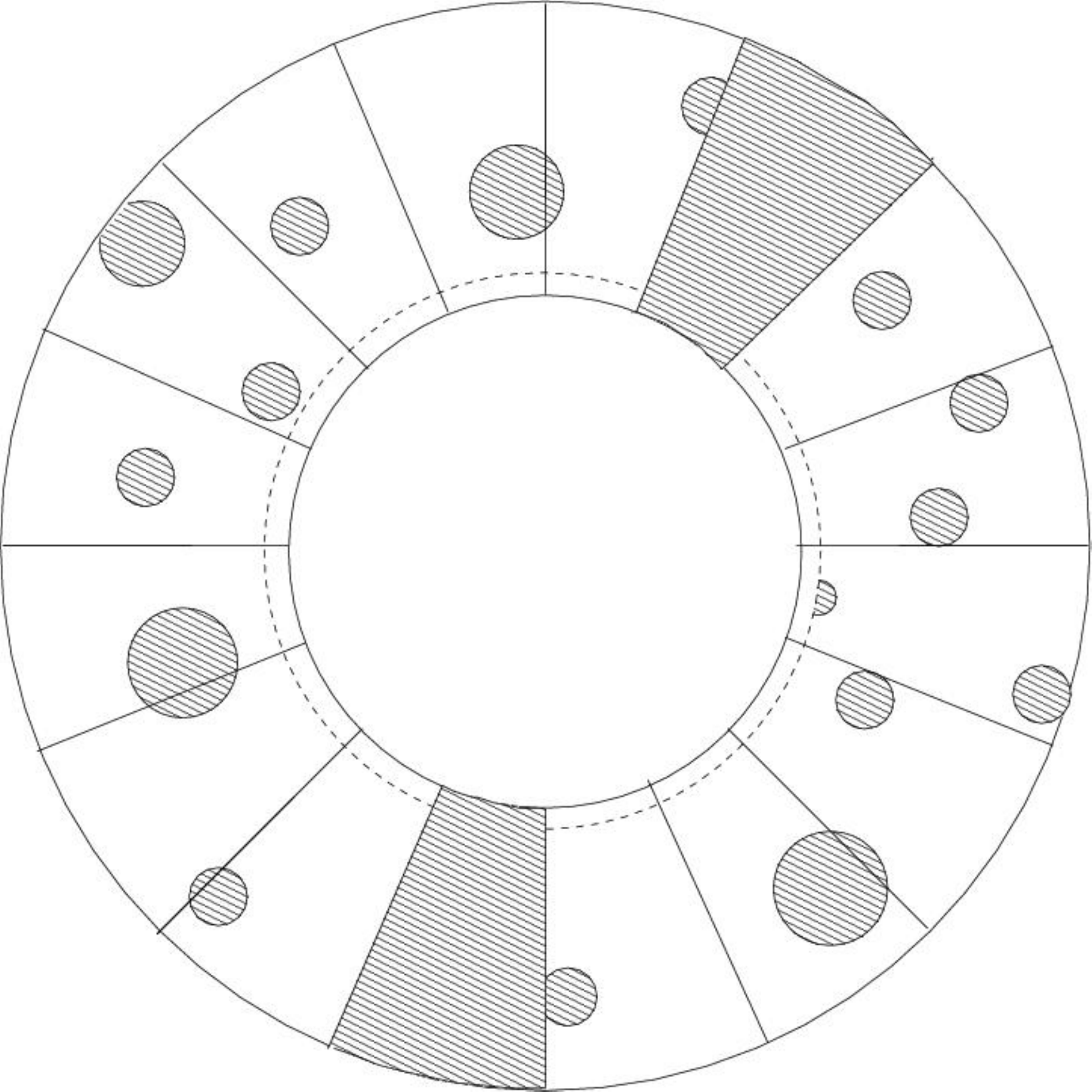}
 	% merging3.pdf 1179666x1179666 pixel, 300dpi, 9987.84x9987.84 cm, bb=0 0 542 542
	\caption{Final configuration of cells and vortex balls.}
	\end{center}
\end{figure}

\subsection{Jacobian Estimate} 
\label{jacobian estimate sec}

We now turn to the jacobian estimate. With the vortex balls that we have constructed, it is to be expected that in the $\al$-good set the vorticity measure of $u$ will be close to a sum of Dirac masses 
	\[
		\curl(iu,\nabla u)\simeq \sum_{i \in I} 2 \pi d_i \delta(\rv - \avi),
	\]
where $\delta(\rv - \avi)$ stands for the Dirac mass at $ \avi $. Indeed, outside of the balls $|u|\simeq 1$, so $\curl(iu,\nabla u) \simeq 0$, and the balls have very small radii compared to the size of $\ann$.\\
Proposition \ref{pro:jacest} gives a rigorous statement of this fact. Note that we can quantify the difference between $\curl(iu,\nabla u)$ and $\sum_i 2 \pi d_i \delta(\rv - \avi) $ in terms of the energy only in the domain $\at$ where the density is large enough. 

\begin{pro}[\textbf{Jacobian estimate}]\label{pro:jacest}
 	\mbox{}\\
	Let $0\leq \al<\frac{1}{2}$ and $\phi$ be any piecewise-$C^1$ test function with compact support 
	\[
	 {\rm supp}(\phi) \subset \at \cap GS_{\al}.  
	\]
	Let $\left\lbrace \B_i \right\rbrace_{i\in I} : = \lf\{ \B(\avi,\varrho_i) \ri\}_{i \in I} $ be a disjoint collection of balls as in Proposition \ref{pro:vortexballs}. Setting $d_i:= \dg \{ u, \partial \B_i \} $, if $ \B_i \subset \at \cap GS_{\al} $, and $d_i=0$ otherwise, one has
	\begin{equation}\label{jacobianestimate}
		\bigg|\sum_{i\in I}  2 \pi d_i \phi (\avi)- \int_{GS_{\al}\cap \at} \diff \rv \: \phi \:  \curl (iu,\nabla u) \bigg| \leq  C \left\Vert \nabla \phi \right\Vert_{L^{\infty}(GS_{\al})}\ep ^{2} |\log \ep|^{-2} \Fg[u] .  
	\end{equation}	
\end{pro}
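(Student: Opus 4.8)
The plan is to follow the standard Jerrard--Soner jacobian estimate as presented in \cite{JS,SS}, adapted to the weighted setting and localized to the good set. The key observation is that the test function $\phi$ is supported in $\at\cap GS_{\al}$, so the vorticity we need to control lives in a region where the density $g^2$ is bounded below by \eqref{g low bound} and where the vortex ball construction of Proposition \ref{pro:vortexballs} applies.

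First I would reduce the estimate to a ball-by-ball bound. Write $\curl(iu,\nabla u)=\sum_{i\in I}\mathbf 1_{\B_i}\curl(iu,\nabla u)+\mathbf 1_{(GS_{\al}\cap\at)\setminus\bigcup_i\B_i}\curl(iu,\nabla u)$. On the complement of the balls we have $\big||u|-1\big|\leq|\log\ep|^{-1}$, so $u$ does not vanish there and we may write $u=|u|e^{i\varphi}$ locally; then $\curl(iu,\nabla u)=\curl(|u|^2\nabla\varphi)$ and an integration by parts against $\phi$ produces $-\int\nabla^\perp\phi\cdot|u|^2\nabla\varphi$, which is controlled by $\|\nabla\phi\|_{L^\infty}$ times $\int_{(GS_\al\cap\at)\setminus\bigcup\B_i}|\nabla u|\,\big|\,|u|^2-1\big|+|\nabla u|\cdot\big|\,|u|-1\big|$-type quantities; using Cauchy--Schwarz together with \eqref{g low bound} and the good-cell bound, this is bounded by $C\|\nabla\phi\|_{L^\infty}\,\ep^2|\log\ep|^{-2}\,\F[u]$ (the extra $\ep^2|\log\ep|^{-2}\sim\ell^2$ coming precisely from converting the weighted energy $\F[u]$ into the unweighted quantity via $g^{-2}\leq C\ep|\log\ep|^3$ and from the smallness of $\big||u|-1\big|$ on the annuli). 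On each ball $\B_i\subset\at\cap GS_\al$, Stokes' theorem gives $\int_{\B_i}\curl(iu,\nabla u)=\int_{\partial\B_i}(iu,\partial_\tau u)$, and since $|u|$ is close to $1$ on $\partial\B_i$ this boundary integral equals $2\pi d_i+\OO(\cdot)$, while $\int_{\B_i}\phi\,\curl(iu,\nabla u)=\phi(\avi)\int_{\B_i}\curl(iu,\nabla u)+\OO(\|\nabla\phi\|_{L^\infty}\varrho_i\int_{\B_i}|\curl(iu,\nabla u)|)$; summing over $i$ and using that $\sum_i\varrho_i\int_{\B_i}|\nabla u|^2$ is controlled, together with $\varrho_i\leq C\ep|\log\ep|^{-5}$, absorbs these terms into the stated remainder. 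Balls not contained in $\at\cap GS_\al$ have $d_i=0$ by convention and, since $\phi$ is supported in $\at\cap GS_\al$, contribute nothing to either side up to boundary effects already accounted for.

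The main obstacle, as usual in these arguments, is bookkeeping the error terms so that they all collapse into the single clean bound $C\|\nabla\phi\|_{L^\infty}\ep^2|\log\ep|^{-2}\F[u]$: one must carefully track the weight $g^2$, which varies across a ball by a factor $1+\OO(|\log\ep|^{-4})$ by \eqref{densityballs}, check that the discrepancy between $\int_{\partial\B_i}(iu,\partial_\tau u)$ and $2\pi d_i$ is genuinely controlled by the local energy (this is where one needs $\big||u|-1\big|$ small on $\partial\B_i$, which may require slightly shrinking the balls or invoking the mean-value-type selection of a good radius as in \cite[Section 4]{SS}), and verify that the factor $\ep^2|\log\ep|^{-2}$ — morally the squared width of the annulus — is exactly what emerges when the unweighted coarea/Cauchy--Schwarz estimates are multiplied back by the weight bounds. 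None of these steps is conceptually new; they are the weighted analogue of \cite[Theorem 6.1]{SS} or the estimates in \cite{JS}, and I would cite those works for the underlying machinery while supplying the weight-dependent arithmetic.
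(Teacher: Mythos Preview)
Your overall strategy follows Jerrard--Soner and matches the paper's, but there is a concrete error in your complement-of-balls estimate. You write that integration by parts produces $-\int\nabla^\perp\phi\cdot|u|^2\nabla\varphi$ and then claim this is controlled by $\int|\nabla u|\,\big||u|^2-1\big|$-type quantities. As written this is false: $|u|^2\nabla\varphi=(iu,\nabla u)$ is of size $|\nabla u|$, not $\big||u|^2-1\big|\,|\nabla u|$, so the integral is $\OO\big(\|\nabla\phi\|_{L^\infty}\int|\nabla u|\big)$, which is far too large. What you presumably intended is to first subtract off $\curl(\nabla\varphi)=0$ pointwise on the complement and integrate by parts with $(|u|^2-1)\nabla\varphi$ instead; but then the integration by parts produces boundary terms on each $\partial\B_i$ that you do not track, and in your ball estimate the discrepancy $\int_{\partial\B_i}(iu,\partial_\tau u)-2\pi d_i=\int_{\partial\B_i}(|u|^2-1)\partial_\tau\varphi$ is a genuine circle integral that is not obviously controlled without the good-radius selection you only allude to.

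The paper avoids all of this by introducing a regularization $w=\xi(|u|)\,u/|u|$ with $\xi(t)=1$ for $t\geq 1/2$, so that $|w|\equiv 1$ and hence $\curl(iw,\nabla w)\equiv 0$ outside the balls. The difference $\curl(iu,\nabla u)-\curl(iw,\nabla w)$ is handled by a \emph{single global} integration by parts on $GS_{\al}\cap\at$ (no ball boundaries appear, since $\phi$ is compactly supported), and the integrand $(iu,\nabla u)-(iw,\nabla w)$ carries the pointwise smallness $\big||u|^2-|w|^2\big|/|u|\leq 3\big||u|-1\big|$ automatically. Inside each ball $\int_{\B_i}\curl(iw,\nabla w)=2\pi d_i$ \emph{exactly} (because $|w|=1$ on $\partial\B_i$), so there is no ``$2\pi d_i+\OO(\cdot)$'' error and no radius selection is needed. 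The only remaining error is $\sum_i\|\nabla\phi\|_{L^\infty}\varrho_i\int_{\B_i}|\nabla u|^2$, which with $\varrho_i\leq\ep|\log\ep|^{-5}$ and $g^2\geq C\ep^{-1}|\log\ep|^{-3}$ on $\at$ yields the stated $\ep^2|\log\ep|^{-2}\F[u]$.
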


\begin{proof}
	We argue as in \cite[Chapter 6]{SS}. We first introduce a function $\xi:\R^+ \rightarrow \R ^+$ as follows: $\xi (x)=2x$, if $x\in[0,1/2]$, and $\xi (x)=1$, if $x\in[1/2,+\infty[$. This function satisfies
	\begin{itemize}
	 	\item $\xi(t)\leq 2t$ and $\xi ' (t) \leq 2$,
		\item $\left| \xi (t) - t \right|\leq \left|1-t \right|$ and $\left| \xi (t) - 1 \right|\leq \left|1-t \right|$,
		\item $\left| \xi (t)^2 - t^2 \right|\leq 3t \left|1-t \right|$,
	\end{itemize}
	and we define $w$ as a regularization of $u$ (in the sense that $|w|=1$ and therefore $\curl (iw , \nabla w) = 0$ when $|u|$ is far enough from $0$):
	\begin{equation}\label{uregular}
		w= \frac{\xi(|u|)}{|u|}u.
	\end{equation}
	Remark that we have $(iw,\nabla w) = \frac{|w|^2}{|u|^2}(iu,\nabla u)$ and this has a meaning even if $u$ vanishes.
	By integrating by parts on $GS_{\al}\cap \at$ and using the assumptions on $\phi$, one has
	\begin{equation}\label{proofJE1}
		\int_{GS_{\al}\cap \at} \diff \rv \lf[ \curl (iu,\nabla u)-\curl (iw,\nabla w) \right] \phi =  - \int_{GS_{\al}\cap \at} \diff \rv \left[ (iu,\nabla u)- (iw,\nabla w) \right] \nabla ^{\perp}\phi. 
	\end{equation}
	Now,
	\bml{\label{proofJE3} 
		\bigg|\int_{GS_{\al}\cap \at} \diff \sigma \left[ (iu,\nabla u)- (iw,\nabla w) \right] \nabla ^{\perp}\phi \bigg| \leq \left\Vert \nabla \phi \right\Vert_{L^{\infty}(GS_{\al})} \int_{ \at} \diff \rv \left|(iu,\nabla u)- (iw,\nabla w) \right| \leq \\ 
		\left\Vert \nabla \phi \right\Vert_{L^{\infty}(GS_{\al})} \int_{\at} \diff \rv \: \frac{\left||u|^2-|w|^2 \right|}{|u|} \left| \nabla u \right| \leq \left\Vert \nabla \phi \right\Vert_{L^{\infty}(GS_{\al})} \left\Vert 1-|u| \right\Vert_{L^2(\at)} \left\Vert \nabla u \right\Vert_{L^2( \at)} \leq \\
		C \left\Vert \nabla \phi \right\Vert _{L^{\infty}(GS_{\al})} \ep ^{5/2} | \log \ep |^{9/2} \Fg [u] 
	}
	using the properties of $\xi$, the Cauchy-Schwarz inequality, the definition \eqref{defiF} of $\Fg [u]$, $g ^2 \geq C \ep^{-1} |\log \ep|^{-3}$ on $\at$ and $\left(1-|u|\right)^2 \leq \left(1-|u|^2\right)^2$.\\
	We now evaluate 
	\begin{equation}\label{proofJE4}
		\int_{ GS_{\al}\cap \at} \diff \rv \: \curl (iw,\nabla w) \phi = \sum_{i\in I} \int_{\B_i\cap \at} \diff \rv \: \curl (iw,\nabla w)\phi,
	\end{equation}
	which follows from the fact that $\left| |u| - 1 \right| < |\log \ep |^{-1}$ outside $\cup_{i\in I} \B_i$, so that $|w|=1$ and $\curl(iw,\nabla w)=0$ outside $\cup_{i\in I} \B_i$. If $\B_i \subset \at \cap GS_{\al}$, we have
	\begin{equation}\label{proofJE5}
		\int_{\B_i} \diff \rv \: \left| \curl (iw,\nabla w)\right| \left| \phi(\rv)-\phi(\avi)\right| \leq C \left\Vert \nabla \phi \right\Vert_{L^{\infty}(B_i)} \varrho_i \int_{\B_i} \diff \rv \: |\nabla u |^2
	\end{equation}
	and
	\begin{equation}\label{proofJE6}
		\int_{\B_i} \diff \rv \: \curl(iw,\nabla w ) = \dg  \{u, \partial \B_i \} = 2 \pi d_i
	\end{equation}
	by definition of $w$ and $d_i$. On the other hand, if $\B_i \nsubseteq \at \cap GS_{\al}$ then $\B_i \cap \dd \left(\at \cap GS_{\al}\right)\neq \varnothing$ and thus
	\begin{multline}\label{proofJE7}
	 \bigg|\int_{\B_i\cap \at} \diff \rv \: \curl (iw,\nabla w) \phi \bigg| \leq \int_{\B_i\cap \at \cap GS^{\al} } \diff \rv \: |\phi | |\nabla w| ^2 \leq 4 \int_{\B_i\cap \at \cap GS^{\al}} \diff \rv \: |\phi | |\nabla u| ^2  \leq
	\\ C \left\Vert \nabla \phi \right\Vert_{L^{\infty}(\B_i)} \varrho_i \int_{\B_i\cap \at \cap GS_{\al} } \diff \rv \: |\nabla u |^2,
	\end{multline}
	because $|\nabla w| \leq 2 |\nabla u| $ and $\phi$ is supported in the interior of $\at \cap GS_{\al}$. Gathering equations \eqref{proofJE1} to \eqref{proofJE7}, we obtain
	 \begin{multline}\label{proofJE8}
	  \bigg|\sum_{i\in I}  d_i \phi (\avi) - \int_{GS_{\al}\cap \at} \diff \rv \: \phi \: \curl (iu,\nabla u) \bigg| \leq  C \left\Vert \nabla \phi \right\Vert_{L^{\infty}(GS_{\al})}\ep ^{5/2} |\log \ep|^{9/2} \Fg[u] +
	\\ C \sum_i \left\Vert \nabla \phi \right\Vert_{L^{\infty}(\B_i)} \varrho_i \int_{\B_i\cap \at \cap GS_{\al} } \diff \rv \: |\nabla u |^2  
	 \end{multline}
	and the result follows because, using $ \varrho_i \leq \ep |\log \ep|^{-5}$ and \eqref{g low bound},
	\begin{multline}\label{proofJE9}
	 \sum_i \left\Vert \nabla \phi \right\Vert_{L^{\infty}(\B_i)} \varrho_i \int_{\B_i\cap \at \cap GS_{\al} } \diff \rv \: |\nabla u |^2 \leq C \left\Vert \nabla \phi \right\Vert_{L^{\infty}(GS_{\al})} \ep^2 |\log \ep |^{-2} \sum_i \int_{\B_i\cap \at \cap GS_{\al} } \diff \rv \: g^2 |\nabla u |^2 \leq
	\\  C \left\Vert \nabla \phi \right\Vert_{L^{\infty}(GS_{\al})} \ep^2 |\log \ep |^{-2} \F [u].
	\end{multline}
\end{proof}

Note that (\ref{jacobianestimate}) is equivalent to saying that the norm of $\curl(iu,\nabla u)- \sum_i d_i \delta(\rv - \avi) $ in $ ( C^1_c (\at \cap GS_{\al}) )^*$, i.e., the dual space of $ C^1_c (\at \cap GS_{\al}) $, is controlled by the energy.

\subsection{Completion of the Proof of Proposition \ref{pro:refinedbound}}
\label{proof completion sec}

We now complete the proof of Proposition \ref{pro:refinedbound}, collecting the estimates of the preceding subsections.\\
We want to avoid any unwanted boundary term when performing integrations by parts in the proof below. Indeed, our radial frontiers between the good set and the bad set are somewhat artificial and have no physical interpretation. Therefore it is difficult to estimate integrals on these boundaries.\\
To get around this point we need to introduce an azimuthal partition of unity on the annulus in order to `smooth' the radial boundaries appearing in our construction. This requires new definitions:
	\begin{defi}[\textbf{Pleasant and unpleasant cells}]
	\mbox{}	\\
		Recall the covering of the annulus $\ann$ by cells $\A_n,\: n\in \left\lbrace 1,..,N \right\rbrace$. We say that $\A_n$ is 
		\begin{itemize}
		 \item an $\al$-pleasant cell if $\A_n$ and its two neighbors are good cells. We note $PS_{\al}$ the union of all $\al$-pleasant cells and $N_{\alpha} ^{\mathrm{P}}$ their number,
		\item an $\al$-unpleasant cell if either $\A_n$ is a bad cell, or $\A_n$ is a good cell but its two neighbors are bad cells. We note $UPS_{\al}$ the union of all $\al$-unpleasant cells and $N_{\al}^{\mathrm{UP}}$ their number,
		\item an $\al$-average cell if $\A_n$ is a good cell but exactly one of its neighbors is not. We note $AS_{\al}$ the union of all $\al$-average cells and $N_{\al}^{\mathrm{A}}$ their number.		
		\end{itemize}
	\end{defi}
	
Remark that one obviously has, recalling \eqref{numberbad},
\begin{equation}\label{numberunpleasant}
 N_{\al} ^{\mathrm{UP}} \leq \frac{3}{2} N_{\al} ^{\mathrm{B}} \ll N
\end{equation}
and
\begin{equation}\label{numberaverage}
N_{\al} ^{\mathrm{A}} \leq 2 N_{\al} ^{\mathrm{B}} \ll N.
\end{equation}
The average cells will play the role of transition layers between the pleasant set, where we will use the tools of Subsections \ref{vortex balls sec} and \ref{jacobian estimate sec}, and the unpleasant set, where we have little information and therefore have to rely on more basic estimates (like those we used in the proof of Lemma \ref{lem:initialbound}). To make this precise we now introduce the azimuthal partition of unity we have announced.\\  
Let us label $UPS_{\al}^l, l\in\left\lbrace1,\ldots,L\right\rbrace$, and $PS_{\al}^m,m\in\left\lbrace1,\ldots,M\right\rbrace$, the connected components of the $\al$-unpleasant set and  $\al$-pleasant set respectively. We construct azimuthal positive functions, bounded independently of $\ep$, denoted by $\chi_l ^{\mathrm{U}}$ and $\chi_m ^{\mathrm{P}}$ (the labels U and P stand for ``pleasant set" and ``unpleasant set") so that
\begin{eqnarray}\label{partition}
	\chi_l ^{\mathrm{U}} &:=& 1 \mbox{ on } UPS_{\al}^l, \nonumber \\
	\chi_l ^{\mathrm{U}} &:=& 0 \mbox{ on } PS_{\al}^m, \mbox{ } \forall m\in\left\lbrace1,\ldots,M\right\rbrace, \mbox{ and on } UPS_{\al}^{l'}, \mbox{ } \forall  l' \neq l, \nonumber\\
	\chi_m ^{\mathrm{P}} &:=& 1 \mbox{ on } PS_{\al}^m, \nonumber\\
	\chi_m ^{\mathrm{P}} &:=& 0 \mbox{ on } UPS_{\al}^l, \mbox{ } \forall  l \in\left\lbrace1,\ldots,L\right\rbrace, \mbox{ and on } PS_{\al}^{m'}, \mbox{ } \forall m' \neq m, \nonumber \\
	\sum_m \chi_m ^{\mathrm{P}} + \sum_l \chi_l ^{\mathrm{U}} &=& 1 \mbox{ on } \ann. 
\end{eqnarray}
It is important to note that each function so defined varies from $0$ to $1$ in an average cell. A crucial consequence of this is that we can take functions satisfying
\beq\label{gradientchi}
 	\lf|\nabla \chi_l ^{\mathrm{U}} \ri| \leq \frac{C}{\ep |\log \ep |},	\hspace{1,5cm}	\lf|\nabla \chi_m ^{\mathrm{P}}\ri| \leq \frac{C}{\ep |\log \ep |},
\eeq
because the side length of a cell is $\propto \ep |\log \ep|$. For example one can choose this partition of unity to be constituted of piecewise affine functions of the angle. 
\newline
We will use the short-hand notation
\begin{eqnarray}\label{chi in}
 \chi_{\mathrm{in}} &: =& \sum_{m=1} ^M \chi_m ^{\mathrm{P}}, \\
\chi_{\mathrm{out}} &: =& \sum_{l=1} ^L \chi_l ^{\mathrm{U}} \label{chi out}.
\end{eqnarray}
The subscripts `in' and `out' refer to `in the pleasant set' and `out of the pleasant set' respectively.
 
We would like to use the jacobian estimate of Proposition \ref{pro:jacest} with $\phi = \chi_{\mathrm{in}} F$, whose support is not included in $\at$ but only in $\A$ (moreover it does not vanish on $\dd \B$). We will thus need a radial partition of unity: We introduce two radii $R_{\mathrm{cut}}^+$ and $R_{\mathrm{cut}}^-$ as
\begin{eqnarray}\label{Rcutplus}
 R_{\mathrm{cut}}^+ &: =& 1 -\ep |\log \ep|^{-1}, \\
R_{\mathrm{cut}}^- &: =& \rd + \ep |\log \ep|^{-1}. \label{Rcutminus}
\end{eqnarray}
Let $\xi_{\mathrm{in}}(r)$ and $\xi_{\mathrm{out}}(r)$ be two positive radial functions satisfying
\begin{eqnarray}\label{radial partition}
\xi_{\mathrm{in}} (r) &: =& 1 \mbox{ for } R_{\mathrm{cut}}^- \leq r \leq R_{\mathrm{cut}}^+, \nonumber \\
\xi_{\mathrm{in}} (r) &: =& 0 \mbox{ for }  \rt \leq r \leq \rd  \mbox{ and for } r=1, \nonumber \\
\xi_{\mathrm{out}} (r) &: =& 1 \mbox{ for } \rt \leq r \leq \rd, \nonumber \\
\xi_{\mathrm{out}} (r) &: =& 0 \mbox{ for } R_{\mathrm{cut}}^- \leq r \leq R_{\mathrm{cut}}^+, \nonumber \\
\xi_{\mathrm{in}} + \xi_{\mathrm{out}} &=& 1 \mbox{ on } \ann.
\end{eqnarray}
For example $\xi_{\mathrm{in}}$ and $\xi_{\mathrm{out}}$ can be defined as piecewise affine functions of the radius. Moreover, because of \eqref{Rcutplus} and \eqref{Rcutminus}, we can impose
\beq\label{gradientxi}
 	\lf|\nabla \xi_{\mathrm{in}} \ri| \leq \frac{C|\log \ep|}{\ep},	\hspace{1,5cm}	\lf|\nabla \xi_{\mathrm{out}} \ri| \leq \frac{C|\log \ep|}{\ep}.
\eeq
The subscripts `in' and `out' refer to `inside $\at$' and `outside of $\at$' respectively.

In the sequel $\lf\{ \B_i \ri\}_{i \in I} : = \left\lbrace \B(\avi, \varrho_i)\right\rbrace_{i\in I}$ is a collection of disjoint balls as in Proposition \ref{pro:vortexballs}. For the sake of simplicity we label $\B_j$, $j\in J \subset I $, the balls such that $\B_j \subset \at \cap GS_{\al}$. 

\begin{proof}[Proof of Proposition \ref{pro:refinedbound}]\mbox{} \\
	Recall the properties of $F$ (\ref{F2}). By integration by parts, we have
	\begin{equation}\label{IBP}
	 	\int_{\ann} \diff \rv \lf\{ g ^{2} \left| \nabla u \right|^2 - 2g ^2 \vec{B}\cdot (iu,\nabla u) \ri\} = \int_{\ann} \diff \rv \lf\{ g ^{2} \left| \nabla u \right|^2 + F \curl (iu,\nabla u) \ri\} - \int_{\partial \B} \diff \sigma \: F(1) (iu , \partial_{\tau}u) 
	\end{equation}
	and we are going to evaluate the three terms using our previous results. We begin with a lower bound on the kinetic term in \eqref{IBP}, using Proposition \ref{pro:vortexballs}. We introduce a parameter $\gamma$ to be fixed later in the proof and estimate
	\begin{equation}\label{Kinetic1}
		\int_{\ann} \diff \rv \: g ^{2} \left| \nabla u \right|^2 \geq \left(1-\gamma \right) \sum_{j\in J}  \int_{\B_j } \diff \rv \: \xi_{\mathrm{in}}  g^2 |\nabla u| ^2 +\left(1-\gamma \right) \int_{\ann } \diff \rv \: \xi_{\mathrm{out}} g^2 |\nabla u|^2   + \gamma \int_{\ann} \diff \rv \: g ^{2} \left| \nabla u \right|^2.
	\end{equation}
	Using the lower bound \eqref{lowboundballs}, we have 
	\bml{\label{Kinetic2}
	\int_{\B_j} \diff \rv \: \xi_{\mathrm{in}}  g^2 |\nabla u| ^2 \geq \xi_{\mathrm{in}}(a_j) \int_{\B_j}  \diff \rv \: g^2 |\nabla u| ^2 + \left(\inf_{\rv \in \B_j} \xi_{\mathrm{in}}(r) - \xi_{\mathrm{in}}(a_j)\right) \int_{\B_j } \diff \rv \: g^2 |\nabla u| ^2 \geq
	\\ 2\pi \left(\frac{1}{2} -\al \right) |d_j|  \xi_{\mathrm{in}}(a_j) g^2 (a_j) \left| \log \ep \right| \left(1-C \frac{\log \left| \log \ep \right|}{\left|\log \ep\right|}\right) - \frac{C}{|\log \ep| ^4} \int_{\B_j} \diff \rv \: g^2 |\nabla u| ^2.
	}
	The estimate of $\inf_{\B_j} \xi_{\mathrm{in}}  - \xi_{\mathrm{in}}(a_j)$ is a consequence of \eqref{gradientxi} combined with $ \varrho_j \leq C \ep |\log \ep|^{-5} $.\\
	We now compute
	\begin{equation}\label{Momentum1}
	\int_{\ann} \diff \rv \: F \: \curl (iu,\nabla u) =  \int_{\ann}\diff \rv \lf[ \xi_{\mathrm{in}} \chi_{\mathrm{in}} F \: \curl (iu,\nabla u) + \xi_{\mathrm{out}} \chi_{\mathrm{in}}  F \: \curl (iu,\nabla u) + \chi_{\mathrm{out}} F \: \curl (iu,\nabla u) \ri].
	\end{equation}
	We can use Proposition \ref{pro:jacest} to estimate the first term because $\xi_{\mathrm{in}}\chi_{\mathrm{in}} F$ is a piecewise-$C^1$ function with support included in $\at \cap GS_{\al}$. We obtain
	\begin{equation}\label{Momentum2}
	 \int_{\ann} \diff \rv \: \xi_{\mathrm{in}}\chi_{\mathrm{in}} F \: \curl (iu,\nabla u) \geq   2 \pi \sum_{j\in J } d_j F(a_j) \xi_{\mathrm{in}}(a_j)\chi_{\mathrm{in}}(\avj) -C \left\Vert \nabla (\xi_{\mathrm{in}} \chi_{\mathrm{in}} F) \right\Vert_{L^{\infty}(GS_{\al})}\ep ^{2} |\log \ep|^{-2} \Fg[u].
	\end{equation}
 	Now, using \eqref{Fbound}, \eqref{gradFbound}, \eqref{gradientchi} and \eqref{gradientxi}, we have
	\[ 
	\Vert \nabla (\xi_{\mathrm{in}} \chi_{\mathrm{in}} F) \Vert_{L^{\infty} (\ann )} \leq \frac{C|\log \ep |}{\ep ^2 },
	 \]
	so that
	\begin{equation}\label{Momentum3}
	\int_{\ann} \diff \rv \: \xi_{\mathrm{in}}\chi_{\mathrm{in}} F \: \curl (iu,\nabla u)\geq  2\pi \sum_{j\in J } d_j F(a_j) \xi_{\mathrm{in}}( a_j)\chi_{\mathrm{in}}(\avj) - C |\log \ep|^{-1} \Fg[u].
	\end{equation}
	The second term in the r.h.s. of \eqref{Momentum1} is simply bounded below as follows  
	\begin{equation}\label{MomentumOut}
	 \int_{\ann} \diff \rv \: \xi_{\mathrm{out}} \chi_{\mathrm{in}} F \: \curl (iu,\nabla u) \geq - \int_{\ann} \diff \rv \: \xi_{\mathrm{out}} |F| |\nabla u| ^2.
	\end{equation}
	We now estimate the third term in the r.h.s. of \eqref{Momentum1}: We integrate by parts back to get 
	\begin{equation}\label{Momentum4}
		\int_{\ann} \diff \rv \: \chi_{\mathrm{out}} F \: \curl (iu,\nabla u) \geq - \int_{\ann} \diff \rv \:  \nabla^{\perp}( \chi_{\mathrm{out}} F) \cdot (iu,\nabla u)  - C \int_{\dd \B} \diff \sigma \: |F(1)| \lf| (iu,\partial_{\tau} u) \ri|,
	\end{equation}
	but
	\begin{equation}\label{Momentum41}
	 \int_{\ann} \diff \rv \: \nabla^{\perp}( \chi_{\mathrm{out}} F) \cdot (iu,\nabla u)  =  \int_{\ann} \diff \rv \lf\{ F \nabla^{\perp} (\chi_{\mathrm{out}}) \cdot (iu,\nabla u ) + 2 \chi_{\mathrm{out}} g^2 \vec{B}\cdot (iu,\nabla u) \ri\} 
	\end{equation}
	and the second term can be bounded using the same computations as in the proof of Lemma \ref{lem:initialbound}:
	\[
	 2 \int_{\ann} \diff \rv \: \chi_{\mathrm{out}} g^2 \vec{B} \cdot (iu,\nabla u) \leq \delta \int_{\ann} \diff \rv \: \chi_{\mathrm{out}} g^2 |\nabla u|^2  + C \delta^{-1} \int_{\ann} \diff \rv \: \chi_{\mathrm{out}} g^2 B^2 |u|^2,
	\]
	where $\delta$ is a parameter to be fixed later. For the first term in \eqref{Momentum41} we use \eqref{Fbound2}: 
	\begin{multline*}
	 \left| \int_{\ann} \diff \rv \: \nabla^{\perp}( \chi_{\mathrm{out}}) F  \cdot (iu,\nabla u) \right|  \leq C \ep ^{-1}\int_{\ann} \diff \rv \: \left| \nabla^{\perp} \chi_{\mathrm{out}} \right| \left| r- \rt \right| g ^2|u| |\nabla u| \leq 
	\\  \delta \int_{\left\lbrace \nabla \chi_{\mathrm{out}} \neq 0\right\rbrace} \diff \rv \: g^2 |\nabla u|^2  + \frac{C}{\delta\ep ^2} \int_{\left\lbrace \nabla \chi_{\mathrm{out}} \neq 0\right\rbrace} \diff \rv \: g ^2 |u|^2. 
	\end{multline*}
	The second inequality uses $|1 - \rt | \propto \ep |\log \ep|$ and \eqref{gradientchi}. 
	\newline
	We inject the preceding computations in \eqref{Momentum4}, taking into account that $ B \leq C \ep ^{-1}$. We also note that we have $\chi_{\mathrm{out}} \neq 0 $ and/or $\nabla \chi_{\mathrm{out}} \neq 0 $ only in the unpleasant set and the average set, so 
	\bml{\label{Momentum5}
	  \int_{\ann} \diff \rv \: \chi_{\mathrm{out}} F \: \curl (iu,\nabla u) \geq	\\
	- C \delta \int_{UPS_{\al}\cup AS_{\al}} \diff \rv \: g^2 |\nabla u| ^2 - \frac{C}{\delta \ep ^2} \int_{UPS_{\al}\cup AS_{\al}} \diff \rv \: g^2 |u|^2 - C \int_{\dd \B} \diff \sigma \: |F(1)| \lf| (iu,\partial_{\tau} u) \ri|.
	}
	We gather equations \eqref{Kinetic1}, \eqref{Kinetic2}, \eqref{Momentum1}, \eqref{Momentum3}, \eqref{MomentumOut} and \eqref{Momentum5} to obtain (recalling that $|\chi_{\mathrm{in}}|\leq 1$)
	\begin{multline}\label{lowbound1}
	\int_{\ann} \diff \rv \lf\{ g ^{2} \left| \nabla u \right|^2 + F \: \curl (iu,\nabla u) \ri\} \geq 
	\\  2 \pi \sum_{j\in J} \xi_{\mathrm{in}} (a_j) |d_j| \left[ \left(1-\gamma \right)\left(\frac{1}{2} -\al \right)  g^2 (a_j) \left| \log \ep \right| \left(1-C \frac{\log \left| \log \ep \right|}{\left|\log \ep\right|}\right) -  |F(a_j)|   \right] +
	\\ \left(1-\gamma \right) \int_{\ann} \diff \rv \: \xi_{\mathrm{out}} g^2 |\nabla u |^2 -  \int_{\ann } \diff \rv \: \xi_{\mathrm{out}}  |F| |\nabla u|^2  - C \int_{\dd \B} \diff \sigma \: |F(1)| \lf| (iu,\partial_{\tau} u) \ri| +
	\\ \left(\gamma -\delta \right) \int_{\ann} \diff \rv \: g^2 | \nabla u | ^2 - \frac{C}{\delta \ep ^2} \int_{UPS_{\al}\cup AS_{\al}} \diff \rv \: g^2 |u|^2  - C |\log \ep|^{-1} \Fg[u].
	\end{multline}
	We now choose the parameters in \eqref{lowbound1} as follows:
	\begin{equation}\label{parameters}
	 	\gamma = 2\delta = \frac{\log |\log \ep|}{|\log \ep|}, \hspace{1,5cm} \al = \alt \frac{\log |\log \ep|}{|\log \ep|},
	\end{equation}
	where $\alt$ is a large enough constant (see below). This choice allows to bound the terms in \eqref{lowbound1} from below: Indeed, if $ \Om_0 > 2 (3 \pi )^{-1}$, we have from Proposition \ref{critical ang vel pro}
	\[
	\half g^2 (a_j)\left| \log \ep \right| - |F(a_j)| \geq \frac{C}{\ep |\log \ep| ^{2}}  
	\]
	for any $ \avj \in \at $ and thus	
	\begin{multline}\label{lowbound11}
	 \left(1-\gamma \right)\left(\frac{1}{2} -\al \right)  g^2 (a_j) \left| \log \ep \right| \left(1-C \frac{\log \left| \log \ep \right|}{\left|\log \ep\right|}\right) -  |F(a_j)|   \geq  \\  \frac{1}{2} g^2 (a_j)\left| \log \ep \right|\left(1 - C \frac{\log \left| \log \ep \right|}{|\log \ep|} \right) - |F(a_j)| \geq\frac{C}{\ep |\log \ep| ^{2}}  > 0
	\end{multline}
	where we have used \eqref{g estimates}.\\
	On the other hand, by the definition of $\xi_{\mathrm{out}}$, for any $ \rv \in \mathrm{supp}(\xi_{\mathrm{out}}) $, we have either
	\begin{equation}\label{BL 1}
	 |r-\rt| \leq C \ep |\log \ep|^{-1}
	\end{equation}
	or
	\begin{equation}\label{BL 2}
	 |r -1 | \leq C \ep |\log \ep|^{-1}.
	\end{equation}
	Therefore, using \eqref{Fbound2}, we have in the first case
	\[
	 |F(r)| \leq C \frac{g^2 (r)}{|\log \ep|}.
	\]
	In the second case \eqref{Fbound2} yields
	\[
	 |F(r)| \leq \frac{C}{\ep |\log \ep| ^2} 
	\]
	but \eqref{pointwise bounds} shows that, if $ \rv $ satisfies \eqref{BL 2},
	\[
	 g ^2 (r)\geq  \frac{C}{\ep |\log \ep|}.
	\]
	We conclude that 
	\begin{equation}\label{BL 3}
	 g^2 (r) \geq C |\log\eps| |F(r)| \gg |F(r)|
	\end{equation}
	for any $ \rv\in \mathrm{supp}(\xi_{\mathrm{out}})$ and thus
	\begin{equation}\label{boundarylayer}
	 \left(1-\gamma \right)\int_{\ann} \diff \rv \: \xi_{\mathrm{out}} g^2 |\nabla u |^2 - \int_{\ann} \diff \rv \: \xi_{\mathrm{out}} |F| |\nabla u|^2 \geq 0.
	\end{equation}
	Finally we have from \eqref{lowbound1}, \eqref{lowbound11} and \eqref{boundarylayer}
	\begin{multline}\label{lowbound2}
	\int_{\ann} \diff \rv \lf\{ g ^{2} \left| \nabla u \right|^2 + F \: \curl (iu,\nabla u) \ri\} \geq C \frac{\log \left| \log \ep \right|}{\left|\log \ep\right|} \int_{\ann} \diff \rv \: g^2 | \nabla u | ^2 -
	\\C  \frac{|\log \ep|}{ \ep ^2 \log \left| \log \ep \right|} \int_{UPS_{\al}\cup AS_{\al}} \diff \rv \: g^2 |u|^2  - C \int_{\dd \B} \diff \sigma \: |F(1)| \lf| (iu,\partial_{\tau} u) \ri| - C | \log \ep | ^{-1} \Fg[u].
	\end{multline}
	Adding 
	\bdm
		\int_{\ann} \diff \rv \: \frac{g^4}{\ep^2}\left(1-|u|^2 \right)^2 - \int_{\partial \B} \diff \sigma F(1) (iu , \partial_{\tau}u) 
	\edm
	to both sides of \eqref{lowbound2} and using \eqref{IBP}, we get the lower bound
	\begin{equation}\label{lowbound3}
	 	\Eg [u] \geq C \lf\{ \frac{\log \left| \log \ep \right|}{\left|\log \ep\right|}  \Fg [u] - \frac{|\log \ep|}{ \ep ^2 \log \left| \log \ep \right|} \int_{UPS_{\al}\cup AS_{\al}} \diff \rv \: g^2 |u|^2 -  \int_{\dd \B} \diff \sigma \: F(1) (iu,\partial_{\tau} u) \ri\},
	\end{equation}
	valid for $\ep$ small enough and $\Om_0 > 2 (3\pi)^{-1} $. But $g^2 |u|^2 = |\gpm|^2 \leq C \ep^{-1} |\log \ep|^{-1}$, whereas the side length of a cell is $\OO (\ep |\log \ep|)$, thus
	\begin{equation}\label{bad term 0}
	 \int_{UPS_{\al}\cup AS_{\al}} \diff \rv \: g^2 |u|^2 \leq  \frac{C}{\ep |\log \ep|} \left| UPS_{\al}\cup AS_{\al} \right|\leq C \ep |\log \ep| \left( N_{\al} ^ {\mathrm{UP}} + N_{\al} ^ {\mathrm{A}} \right). 
	\end{equation}
 	Using \eqref{numberbad}, \eqref{numberunpleasant} and \eqref{numberaverage}, we deduce
	\begin{equation}\label{bad term}
	 \int_{UPS_{\al}\cup AS_{\al}} \diff \rv \: g^2 |u|^2 \leq C \ep |\log \ep|  N_{\al} ^ {\mathrm{B}} \leq C \ep ^2 \ep ^{\al} \F [u].
	\end{equation}
	On the other hand \eqref{pointwise bounds} implies that $g^2 (1) \geq C \ep^{-1} |\log \ep|^{-1}$. Combining this fact with the upper bound \eqref{GPmin estimates} yields
	\begin{equation}\label{sup u boundary}
	 |u|\leq C \mbox{ on } \dd \B
	\end{equation}
	and thus, using \eqref{compatibility3} and Cauchy-Schwarz inequality,
	\begin{equation}\label{boundary term 1}
	 \left| \int_{\partial \B} \diff \sigma \: F(1) (iu , \partial_{\tau}u) \right| \leq C \left( \int_{\partial \B} \diff \sigma \: |\dd_{\tau} u|^2 \right)^{1/2}.
	\end{equation}
	Using Lemma \ref{lem:boundary} we conclude
	\begin{equation}\label{boundary term 2}
	 \left| \int_{\partial \B} \diff \sigma \: F(1) (iu , \partial_{\tau}u) \right| \leq C \left( |\log \ep|^{1/2}|\E [u]|^{1/2}+\frac{|\log \ep|^{1/4}}{\ep ^{1/4}} \F[u] ^{1/2}\right). 
	\end{equation}
	Combining \eqref{lowbound3}, \eqref{bad term} and \eqref{boundary term 2}, we have
	\begin{equation}\label{lowbound 4}
	 \Eg [u] \geq C \lf\{ \frac{\log \left| \log \ep \right|}{\left|\log \ep\right|}  \Fg [u] - \frac{|\log \ep|}{\log|\log \ep|}\ep ^{\al} \F[u] - |\log \ep|^{1/2} \lf|\E [u] \ri|^{1/2} - \frac{|\log \ep|^{1/4}}{\ep ^{1/4}} \F[u] ^{1/2}\right\}.
	\end{equation}
	Recall the choice of $\al$ in \eqref{parameters}: We choose now a constant $\alt > 2$. Then
	\[
	  \frac{|\log \ep|}{\log|\log \ep|}\ep ^{\al} = \frac{|\log \ep|^{1-\alt}}{\log|\log \ep|} \ll \frac{\log \left| \log \ep \right|}{\left|\log \ep\right|}
	\]
	and thus there exists a finite constant $ c $ such that
	\begin{equation}\label{lowbound 5}
	 \OO(\ep^{\infty}) \geq \Eg [u] \geq c \left(\frac{\log \left| \log \ep \right|}{\left|\log \ep\right|}  \Fg [u] -   |\log \ep|^{1/2}|\E [u]|^{1/2} - \frac{|\log \ep|^{1/4}}{\ep ^{1/4}} \F[u] ^{1/2}\right)
	\end{equation}
	where the upper bound comes from \eqref{starting bound}. 
	\newline
	Since the sign of $ \Eg [u] $ is not known, we might have two possible cases: If $ \E[u] \geq 0 $, \eqref{starting bound} implies that $ |\E[u]| \leq \OO(\eps^{\infty}) $, which can be plugged in \eqref{lowbound 5} yielding
	\begin{equation}\label{cas 11}
	 \OO(\ep ^{\infty}) \geq \E[u] \geq c \left(\frac{\log \left| \log \ep \right|}{\left|\log \ep\right|}  \Fg [u] -\frac{|\log \ep|^{1/4}}{\ep ^{1/4}} \F[u] ^{1/2}\right) 
	\end{equation}
	 This implies 
	\[
	 \F [u] \leq C \frac{|\log \ep|^{5/2}}{\ep ^{1/2} \log |\log \ep| ^2} 
	\]
%	and, going back to (\ref{cas 11}),
%	\[
%	 \E[u] \geq -C \frac{|\log \ep|^{3/2}}{\ep ^{1/2} \log |\log \ep|} 
%	\]
	which concludes the proof of Proposition \ref{pro:refinedbound}, if $ \E[u] \geq 0 $.
	\newline
	On the opposite, if $ \E[u] < 0 $, either
	\bdm
		0 \geq \E[u] + c |\log\eps|^{1/2} \lf|\E[u] \ri|^{1/2} \geq c \left(\frac{\log \left| \log \ep \right|}{\left|\log \ep\right|}  \Fg [u] -\frac{|\log \ep|^{1/4}}{\ep ^{1/4}} \F[u] ^{1/2}\right),
	\edm
	which implies the result as before, or $ \left| \E[u] \right| \leq C |\log\eps| $, which gives
	\bdm
		C |\log\eps|^{3/2} \geq c \left(\frac{\log \left| \log \ep \right|}{\left|\log \ep\right|}  \Fg [u] -\frac{|\log \ep|^{1/4}}{\ep ^{1/4}} \F[u] ^{1/2}\right),
	\edm
	and thus again \eqref{borne Fg final} and \eqref{borne Eg final}.
\end{proof}

As already noted, the end of the proof could be formulated as an induction. Plugging the estimates of Lemma \ref{lem:initialbound} in \eqref{lowbound 4} and using the upper bound on $\E [u]$ would yield improved estimates of $\F [u]$ and $|\E [u]|$, thus reducing the number of bad cells and improving the boundary estimate. The process could then be repeated a large number of times, proving that there are no bad cells at all. The second term in the lower bound \eqref{lowbound 4} would then vanish and the process stop when the first term would reach the order of magnitude of the last one (coming from the boundary estimate), thus giving the results of Proposition \ref{pro:refinedbound}.

\section{Energy Asymptotics and Absence of Vortices}
\label{sec:energy asympt}

In this section we conclude the proofs of our main results. The proof of the energy asymptotics is a straightforward combination of the results of Sections \ref{sec:auxiliary} and \ref{Sect est reduced energy}:

\begin{proof}[Proof of Theorem \ref{theo:energy}]
\mbox{}	\\
From Proposition \ref{reduction} we have, using the simplified notation of Section \ref{Sect est reduced energy},
\[
 \gpe \geq \hat{E}^{\mathrm{GP}}_{\A,\om_0} + \E[u] - \OO(\eps^{\infty}) 
\]
which reduces to 
\[
 \gpe \geq \hat{E}^{\mathrm{GP}}_{\A,\om_0}  -C \frac{|\log \ep|^{3/2}}{\ep ^{1/2} \log |\log \ep|}
\]
thanks to \eqref{borne Eg final}. Using a regularization of $g_{\A,\om_0}$ as a trial function for the functional  $\hat{\E}^{\mathrm{GP}}_{\om_0}$ as in the proof of Proposition \ref{reduction}, we get $ \hgpe_{\A,\omega_0} \geq \hgpe_{\omega_0} - \OO(\eps^{\infty}) $ and thus
\[
 \gpe \geq \hat{E}^{\mathrm{GP}}_{\om_0}  -C \frac{|\log \ep|^{3/2}}{\ep ^{1/2} \log |\log \ep|}.
\]
We conclude the proof of the lower bound recalling that, by definition, 
\[
 \hat{E}^{\mathrm{GP}} = \inf_{\om} \hat{E}^{\mathrm{GP}}_{\om} \leq \hat{E}^{\mathrm{GP}}_{\om_0}.
\]

For the upper bound 
\[
\gpe \leq \hat{E}^{\mathrm{GP}} = \hat{E}^{\mathrm{GP}}_{\om_{\mathrm{opt}}}
\]
we simply use $g_{\om_{\mathrm{opt}}}(r) \exp\{i ( [\Om] - \om_{\mathrm{opt}} )\vartheta \}$ as a trial function for $\gpf$.
\end{proof}

The proof of the absence of vortices requires an additional ingredient:

\begin{lem}[\textbf{Estimate for the gradient of $u _{\om _0}$}]
	\mbox{}	\\
	 Recall the definition of $u_{\om_0}$ in \eqref{function u}. There is a finite constant $C$ such that
	\begin{equation}\label{gradu}
	 	\lf\| \nabla u_{\om_0} \ri\|_{L^{\infty}(\at)} \leq C \frac{|\log \ep| ^{3/2}}{\ep ^{3/2}}.
	\end{equation}
\end{lem}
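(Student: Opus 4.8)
The plan is to estimate $\nabla u_{\om_0}$ pointwise on $\at$ by combining elliptic regularity for the equation satisfied by $u_{\om_0}$ (Lemma \ref{lem:equation u}) with the $L^\infty$ bounds already available for $|\gpm|$, $g$, and $\nabla g$, together with the energy bounds of Proposition \ref{pro:refinedbound}. First I would recall that $\gpm = g\, u_{\om_0}\, e^{i\hat\Om\vartheta}$, so $|\nabla u_{\om_0}|$ can be controlled once we control both $\nabla\gpm$ and the correction coming from $\nabla g$ and the phase: schematically $g\nabla u_{\om_0} = e^{-i\hat\Om\vartheta}\nabla\gpm - u_{\om_0}\nabla g - i\hat\Om r^{-1} u_{\om_0} g \vec e_\vartheta$. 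On $\at$ we have $g^2 \geq C\ep^{-1}|\log\ep|^{-3}$ from \eqref{g low bound}, $|u_{\om_0}|\leq C$ (this follows from the pointwise bounds: $|\gpm|^2\leq \|\tfm\|_\infty(1+o(1))$ from \eqref{GPmin estimates} and $g^2\geq C\tfm$ on $\at$ via \eqref{pointwise bounds}), and $|\nabla g|\leq C\ep^{-7/4}|\log\ep|^{-3/4}$ from \eqref{grad est}; the phase term is bounded by $C\hat\Om g \leq C\ep^{-2}|\log\ep|^{-1}\cdot g$. So the whole problem reduces to bounding $\|\nabla\gpm\|_{L^\infty(\at)}$.

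For $\nabla\gpm$ I would use the GP variational equation \eqref{GP variational} together with standard interior elliptic estimates. The idea is: $-\Delta\gpm = (\chem + \Om^2 r^2 - 2\ep^{-2}|\gpm|^2)\gpm + 2\vec\Omega\cdot\vec L\,\gpm - (\text{angular part})$; more carefully one writes $-\Delta\gpm - 2i\Om\dd_\vartheta\gpm + 2\ep^{-2}|\gpm|^2\gpm = \chem\gpm$. Using the a priori bounds $\|\gpm\|_\infty^2 \leq C\ep^{-1}|\log\ep|^{-1}$, $|\chem-\tfchem|\leq C\ep^{-5/2}|\log\ep|^{-1/2}$ (hence $\chem = \OO(\ep^{-4}|\log\ep|^{-2})$), and $\Om^2 = \OO(\ep^{-4}|\log\ep|^{-2})$, the right-hand side of the equation (viewed as a bounded source plus a first-order term with bounded coefficient) has $L^\infty$ norm $\OO(\ep^{-4}|\log\ep|^{-2})\cdot\|\gpm\|_\infty$. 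Applying a Gagliardo–Nirenberg / elliptic interpolation inequality on balls of radius comparable to $\operatorname{dist}(\cdot,\dd\at)$ — but since we only need the estimate on $\at$ and $\gpm$ is defined and controlled on all of $\B$, we can take balls of fixed size, say radius $\ep|\log\ep|^{-1}$ — gives $\|\nabla\gpm\|_{L^\infty(\at)}^2 \leq C(\|\Delta\gpm\|_{L^\infty}\|\gpm\|_{L^\infty} + \text{(rescaling corrections)})$. The cleanest route mimics the proof of the gradient estimate for $g_\star$ (Proposition on $\nabla g_\star$): take $L^\infty$ norms in the equation to bound $\|D^2\gpm\|_\infty$ in terms of $\|\nabla\gpm\|_\infty$ and a source, then close by Gagliardo–Nirenberg $\|\nabla\gpm\|_\infty \leq C\|D^2\gpm\|_\infty^{1/2}\|\gpm\|_\infty^{1/2}$. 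This yields $\|\nabla\gpm\|_{L^\infty(\at)} \leq C\ep^{-2}|\log\ep|^{-1}\cdot\ep^{-1/2}|\log\ep|^{-1/2} = C\ep^{-5/2}|\log\ep|^{-3/2}$, which is too crude.

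To sharpen it to the claimed $C\ep^{-3/2}|\log\ep|^{3/2}$ one cannot merely feed in $L^\infty$ bounds; instead I would work directly with the equation for $u_{\om_0}$, namely \eqref{vareq u 2}: $-\nabla(g^2\nabla u) - 2ig^2\vec B\cdot\nabla u + 2\ep^{-2}g^4(|u|^2-1)u = \lambda g^2 u$, and exploit that $g^2$ is comparable to a constant $\sim\ep^{-1}|\log\ep|^{-1}$ up to the controlled gradient, and that $|\vec B|\leq C\ep^{-1}$, $|\lambda| \leq C(|\E[u]| + \ep^{-3/2}|\log\ep|^{-1/2}\F[u]^{1/2})$ which by Proposition \ref{pro:refinedbound} is $\OO(\ep^{-1}|\log\ep|^{1/2})$ — wait, more carefully $|\lambda|\leq C\ep^{-3/2}|\log\ep|^{-1/2}\cdot(\ep^{-1/2}|\log\ep|^{5/4}(\log|\log\ep|)^{-1}) = C\ep^{-2}|\log\ep|^{3/4}(\log|\log\ep|)^{-1}$. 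Dividing the equation by $g^2$ gives $-\Delta u - 2\nabla\log g\cdot\nabla u - 2i\vec B\cdot\nabla u + 2\ep^{-2}g^2(|u|^2-1)u = \lambda u$; here $|\nabla\log g| = |\nabla g|/g \leq C\ep^{-7/4}|\log\ep|^{-3/4}/(\ep^{-1/2}|\log\ep|^{-3/2}) = C\ep^{-5/4}|\log\ep|^{3/4}$, $|\vec B| \leq C\ep^{-1}$, $\ep^{-2}g^2 \leq C\ep^{-3}|\log\ep|^{-1}$, $|\lambda| = \OO(\ep^{-2})$ up to logs. Now the $D^2 u$ bound from the equation plus Gagliardo–Nirenberg on $u$ (with $\|u\|_{L^\infty(\at)}\leq C$, and it suffices to work where $|u|\leq 2$, since where $|u|$ is large we already know it is $\OO(1)$) closes as $\|\nabla u\|_\infty^2 \leq C\|D^2 u\|_\infty\|u\|_\infty \leq C(\ep^{-3}|\log\ep|^{-1} + \ep^{-2} + \ep^{-5/4}|\log\ep|^{3/4}\|\nabla u\|_\infty + \ep^{-1}\|\nabla u\|_\infty)$, and solving this quadratic inequality gives $\|\nabla u\|_{L^\infty(\at)} \leq C\ep^{-3/2}|\log\ep|^{-1/2} + \text{lower order}$, which is even better than claimed; I would present the weaker, safer bound $C\ep^{-3/2}|\log\ep|^{3/2}$ to absorb all the logarithmic slack and any boundary/rescaling corrections. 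The main obstacle is making the Gagliardo–Nirenberg / elliptic bootstrap rigorous up to the boundary $\dd\B$ (where $\dd_r u = 0$ by \eqref{Neuman u}, which is exactly what lets one use reflection or Neumann elliptic estimates) and near the inner boundary $\rd$ of $\at$, where one should instead localize to the slightly larger annulus $\A$ on which $g$ is still under control — so one proves the estimate on a domain strictly containing $\at$ and thereby only uses interior estimates relative to $\at$'s inner boundary.
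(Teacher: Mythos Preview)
Your second approach---working directly with the equation \eqref{vareq u 2} for $u$, dividing by $g^2$ to bound $\Delta u$ pointwise, and closing via Gagliardo--Nirenberg $\|\nabla u\|_\infty \le C\|\Delta u\|_\infty^{1/2}\|u\|_\infty^{1/2}$---is exactly what the paper does. The detour through $\nabla\gpm$ is unnecessary and the paper does not take it.

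There is one genuine slip: your claim $|u|\le C$ on $\at$ is not justified by the argument you give. From $|\gpm|^2 \le \|\tfm\|_\infty(1+o(1))$ and $g^2(r)\ge C\tfm(r)$ via \eqref{pointwise bounds} you only obtain $|u(\rv)|^2 \le C\|\tfm\|_\infty/\tfm(r)$, and near the inner edge $r=\rd$ one has $\tfm(r)\sim\ep^{-1}|\log\ep|^{-3}$ while $\|\tfm\|_\infty\sim\ep^{-1}|\log\ep|^{-1}$, so the ratio is $\sim|\log\ep|^2$. The correct bound is therefore $|u|\le C|\log\ep|$ on $\at$, which is precisely what the paper records as \eqref{bound uGP}. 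Feeding $\|u\|_\infty\le C|\log\ep|$ into your bootstrap, the dominant source term becomes $\ep^{-2}g^2|u|^3 \le C\ep^{-3}|\log\ep|^{2}$, and then
\[
\|\nabla u\|_\infty^2 \;\le\; C\,\|\Delta u\|_\infty\,\|u\|_\infty \;\le\; C\,\ep^{-3}|\log\ep|^{2}\cdot|\log\ep| \;=\; C\,\ep^{-3}|\log\ep|^{3},
\]
which gives exactly the claimed $C\ep^{-3/2}|\log\ep|^{3/2}$. So the logarithmic exponent is not mere ``safe slack'' but is forced by the weaker bound on $|u|$ near the inner boundary of $\at$.
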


\begin{proof}
	We use the short-hand notation defined at the beginning of Section \ref{Sect est reduced energy} (in particular $u=u_{\om_0}$). Recall the variational equation \eqref{vareq u 2}
	\begin{equation}\label{varequGP}
	-\nabla \lf( g^2 \nabla u \ri) - 2i g ^2 \vec{B} \cdot \nabla u + \frac{2}{\eps^2} g^4 \lf( |u|^2 - 1 \ri) u = \lambda  g ^2 u. 
	\end{equation} 
	From this equation we get the pointwise estimate 
	\begin{equation}\label{deltau}
	 	\left| \Delta u  \right| \leq C \left(\frac{\left| \nabla g \right|}{g}|\nabla u |+ |B||\nabla u|+\frac{1}{\ep ^2}\left|g^2 \lf( |u|^2 - 1 \ri) u\right|+\left|\lambda \right| \left| u \right|\right),
	\end{equation}
	holding true on $\A$. Recalling that $|\gpm| \leq \ep^{-1/2} |\log \ep|^{-1/2}$ and 
	\begin{equation}\label{lowboundg recall}
	g \geq \frac{C}{ \ep^{1/2} |\log \ep|^{3/2} }\mbox{ on } \at,
	\end{equation}
	 we have 
	\begin{equation}\label{bound uGP}
	 |u | \leq C |\log \ep| \mbox{ on } \at.
	\end{equation}
	Thus, using \eqref{g estimates},
	\begin{equation}\label{nonlin term}
	 \frac{1}{\ep ^2}\left|g^2 \lf( |u|^2 - 1 \ri) u\right| \leq C\frac{|\log \ep|^2}{\ep ^3}.
	\end{equation}
	On the other hand, estimating the chemical potential with \eqref{scalelambda} and plugging in the results of Proposition \ref{pro:refinedbound}, we have
	\begin{equation}\label{scalelambdaGP}
	 | \lambda  u| \leq C |u|\left( \left| \E [u] \right| + \frac{1}{\ep ^{3/2}|\log \ep|^{1/2}}\F [u]^{1/2}\right) \leq \frac{C |\log\eps|^{7/4}}{\eps^{7/4} \log|\log\eps|}  \ll \frac{|\log \ep|^2}{\ep ^3}
	\end{equation}
	on $\at$.
	\newline
	Combining \eqref{g estimates}, \eqref{grad est}, \eqref{g low bound} and \eqref{Bbound} with \eqref{nonlin term} and \eqref{scalelambdaGP}, we deduce from \eqref{deltau}
	\begin{equation*}
	 	\left\Vert \Delta u \right\Vert_{L^{\infty} (\at)} \leq C \left( \frac{|\log \ep |^{3/4} }{\ep ^{5/4}}\left\Vert \nabla u  \right\Vert_{L^{\infty} (\at)} + \frac{|\log \ep| ^{2}}{\ep ^3}\right).
	\end{equation*}
	From Gagliardo-Nirenberg inequality \cite[Theorem at pg. 125]{N}, we deduce 
	\begin{equation}\label{deltauLinf}
	 	\left\Vert \Delta u \right\Vert_{L^{\infty} (\at)} \leq C \left( \frac{|\log \ep |^{3/4} }{\ep ^{5/4}}\left\Vert \Delta u  \right\Vert_{L^{\infty }(\at)}^{1/2}\left\Vert u  \right\Vert_{L^{\infty} (\at)}^{1/2} + \frac{|\log \ep| ^{2}}{\ep ^3}\right).
	\end{equation}
	Inserting \eqref{bound uGP}, we conclude
	\[
	 \left\Vert \Delta u \right\Vert_{L^{\infty} (\at)} \leq C \frac{|\log \ep|^{2}}{\ep ^3}
	\]
	and we get \eqref{gradu} by using \eqref{bound uGP} and the Gagliardo-Nirenberg inequality again. 
\end{proof}

We are now in position to complete the 

\begin{proof}[Proof of Theorem \ref{theo:vortex}]\mbox{} \\
The proof relies on a combination of \eqref{borne Fg final} and \eqref{gradu}, as in \cite{BBH1}.\\
Suppose that at some point $ \rv_0 $ such that 
\[
 \rtf + \half \ep|\log \ep|^{-1} \leq r_0 \leq 1,
\]  
we have 
	\[
 		\left| |u(\rv_0)| - 1 \right| \geq   \ep^{1/8} |\log \ep|^{3}.
	\]
Then, using \eqref{gradu}, there is a constant $C$ such that, for any $ \rv \in \B_0 : = \B(\rv_0, C \ep ^{13/8} |\log \ep | ^{3/2})$, we have 
	\[
	 \left| |u(\rv)| -1 \right| \geq  \half \ep^{1/8} |\log \ep|^{3} .
	\]
This implies (recall \eqref{g low bound})
	\[
	 \int_{\B_0} \diff \rv \: \frac{g^4}{\ep ^2 }\left(1-|u|^2\right) ^2 \geq \frac{C |\log \ep|^3}{\ep^{1/2}},
	\]
and thus (note that by the initial condition on $ r_0 $, $ \B_0 \subset \A $)
	\begin{equation}\label{Fginf}
	 	\F [u] \geq \frac{C |\log \ep|^3}{\ep^{1/2}},
	\end{equation}
which is a contradiction with \eqref{borne Fg final}. 
\newline
We have thus proven that (recall \eqref{bound uGP})
\begin{equation}\label{pointwise g psi}
\left| |\gpm|^2 - g^2 \right| \leq g^2  \left| |u|^2 - 1 \right| \leq C \frac{|\log \ep|^3}{\ep^{7/8}}
\end{equation}
on $\at = \left\lbrace \rtf + \ep|\log \ep|^{-1}\leq r \leq 1 \right\rbrace$. The result then follows by a combination of \eqref{pointwise bounds} and \eqref{pointwise g psi}. 
\end{proof}

\begin{rem}{\it (Absence of vortices in a larger domain)}.
	\mbox{}	\\
	By direct inspection of the proof of Theorem \ref{theo:vortex}, one can easily realize that we could have proven the main result, i.e., the absence of vortices, in a domain larger than $ \at $, i.e., there is some freedom in the choice of the bulk of the condensate.
	\newline
	More precisely the choice of a larger domain would have implied a worse lower bound on $ g^2 $ via \eqref{pointwise bounds} and in turn a worse remainder in \eqref{pointwise g psi}, but at the same time this would have allowed the extension of the no vortex result up to a distance of order $ \eps |\log\eps|^{-a} $ from $ \rtf $ for some power $ a > 1 $. 
	\newline
	We have however chosen to state the main result in $ \at $ for the sake of simplicity.
\end{rem}

The proof of the result about the degree of $ \gpm $ is a corollary of the main result proven above:

\begin{proof}[Proof of Theorem \ref{theo:degree}]
\mbox{}	\\
We first note that the pointwise estimate in \eqref{pointwise g psi} implies that $\gpm$ does not vanish on $\dd \B$, so that its degree is indeed well defined. We then compute
\bml{\label{compute degree}
2 \pi \deg \{\gpm, \dd \B\} = - i \int_{\dd \B} \diff \sigma \: \frac{|\gpm|}{\gpm} \dd_{\tau} \left( \frac{\gpm}{|\gpm|} \right) = - i \int_{\dd \B} \diff \sigma \: \frac{|u|}{u} \dd_{\tau} \left( \frac{u}{|u|} e^{i\left([\Om]-\om_0 \right)\vartheta}\right) e^{-i\left([\Om]-\om_0 \right)\vartheta} =
\\ 2\pi \left([\Om]-\om_0\right) - i \int_{\dd \B} \diff \sigma \:  \frac{|u|}{u} \dd_{\tau} \left( \frac{u}{|u|}\right).
}
Then 
\beq\label{error degree}
\bigg| \int_{\dd \B} \diff \sigma \: \frac{|u|}{u} \dd_{\tau} \left( \frac{u}{|u|}\right) \bigg|\leq \int_{\dd \B} \diff \sigma \:  \bigg| \dd_{\tau} \left( \frac{u}{|u|}\right) \bigg| \leq
 C \int_{\dd \B} \diff \sigma \: \left| \dd_{\tau}  u \right|,
\eeq
where we have used that $|u|$ is bounded below by a constant on $\dd \B$. 
\newline
Finally, combining \eqref{boundary} and the results of Proposition \ref{pro:refinedbound}, we obtain (recall that $g^ 2 \geq C \ep^{-1} |\log \ep|^{-1}$ on $\dd \B$) 
\begin{equation}\label{error degree 2}
\int_{\dd \B} \diff \sigma \: \left| \dd_{\tau}  u \right|^2 \leq \frac{C |\log \ep|^{3}}{\ep (\log|\log\eps|)^2}.   
\end{equation}
Using the Cauchy-Schwarz inequality, we thus conclude from \eqref{compute degree}, \eqref{error degree} and \eqref{error degree 2} that
\begin{equation}\label{degre gpm}
\deg  \{ \gpm, \dd \B \} = [\Om]-\om_0 + \OO \lf(\eps^{-1/2} |\log \ep|^{3/2} (\log|\log\eps|)^{-1} \ri).
\end{equation}
There only remains to recall that (see \eqref{est omega_0})
\[
 \om_{0} = \frac{2}{3 \sqrt{\pi}\ep} + \OO (\eps^{-1} |\log\eps|^{-1/2})
\]
and that an identical estimate applies to $ \omega_{{\rm opt}} $ (see \eqref{est omega_{opt}}).
\end{proof}

\begin{rem}{\it (Degree of a GP minimizer)}.
\mbox{}	\\
According to \eqref{degre gpm}, we could have stated the result \eqref{gpm degree} about the degree of $ \gpm $ in terms of $ \omega_0 $, i.e., the optimal giant vortex phase when the minimization problem is restricted to the annulus $ \A $, instead of $ \omega_{\rm opt} $, i.e., the optimal giant vortex phase in the whole of $ \B $. Moreover the remainder in  \eqref{degre gpm}, i.e., $ \OO(\eps^{-1/2}|\log\eps|^{3/2} (\log|\log\eps|)^{-1}) $, is much better than the one contained in the final result \eqref{gpm degree}, i.e., $ \OO(\eps^{-1} |\log\eps|^{-1/2}) $, which is inherited from \eqref{est omega_0} and \eqref{est omega_{opt}}. Note however that the latter remainder is the best precision to which one can estimate the giant vortex phase in terms of the explicit quantity $ 2/(3\sqrt{\pi}) \eps^{-1} $. For this reason and the fact that $ \omega_{\rm opt} $ occurs more naturally in the analysis, we have used it in \eqref{gpm degree}.
\end{rem}

  % redefine the command that creates the equation no.
    % reset counter
\renewcommand{\thesection}{Appendix A}
  % redefine the command that creates the equation no.

%\section{}

\section*{Appendix A}
\addcontentsline{toc}{section}{Appendix A}

\renewcommand{\theequation}{A.\arabic{equation}}
\setcounter{equation}{0}
\renewcommand{\thesection}{A}

In this Appendix we discuss some useful properties of the TF-like functionals involved in the analysis as well as the critical angular velocity $ \Omega_c $ for the emergence of the giant vortex phase.

\subsection{The TF Functionals}
%\addcontentsline{toc}{subsection}{A.1 : The TF Functionals}

We start by considering the TF functional defined in \eqref{TFf}: 
\bdm
	\tff[\rho] : = \int_{\B} \diff \rv \: \lf\{ - \Omega^2 r^2 \rho + \eps^{-2} \rho^2 \ri\}.
\edm
Its minimizer over positive functions in $ L^1(\B) $ is unique and is given by the radial density
\beq
	\label{TFm}
	\tfm(r) : = \frac{1}{2} \lf[ \eps^2 \tfchem + \eps^2 \Omega^2 r^2 \ri]_+ = \frac{\eps^2 \Omega^2}{2} \lf[ r^2 - \rtf^2 \ri]_+,
\eeq		
where $ [ \:\: \cdot \:\: ]_+ $ stands for the positive part and the chemical potential is fixed by normalizing $ \tfm $ in $ L^1(\B) $, i.e.,
\beq
	\label{tfchem}
	\tfchem : = \tfe + \eps^{-2} \lf\| \tfm \ri\|^2_2 = - \Omega^2 \rtf^2.
\eeq 
Note that, if $ \Omega \geq 2/(\sqrt{\pi} \eps) $, the TF minimizer is a compactly supported function, since it vanishes outside $ \tfd $, i.e., for $ r \leq \rtf $, where
\beq
	\label{TFann}
	\rtf : = \sqrt{1 - \frac{2}{\sqrt{\pi} \eps \Omega}}.
\eeq
The corresponding ground state energy can be explicitly evaluated and is given by
\beq
	\label{TFe}
	\tfe = - \Omega^2 \lf( 1 - \frac{4}{3 \sqrt{\pi} \eps \Omega} \ri). 
\eeq
By \eqref{TFann} and \eqref{TFe} the annulus $ \tfd $ has a shrinking width of order $ \eps |\log\eps| $ and the leading order term in the ground state energy asymptotics is $ - \Omega^2 $, which is due to the convergence of $ \tfm $ to a delta function supported at the boundary of the trap.

In other sections of the paper we often consider the restrictions of the functionals to domains $ \D $ strictly contained inside $ \B $ (denoted by $ \tff_{\D} $). However in the case of the TF functional there is no need to make a distinction between $ \tff $ and $ \tff_{\D} $ since all the ground state properties are basically independent of the integration domain, provided $ \tfd \subset \D $.

Another important TF-like functional is defined in \eqref{hatTF} and includes the giant vortex energy contribution, i.e., 
\beq
 	\label{hTFf}
	\htff_{\omega}[\rho] : = \int_{\B} \diff \rv \: \lf\{ - \Omega^2 r^2 \rho + B_{\omega}^2(r) \rho + \eps^{-2} \rho^2 \ri\} = \int_{\B} \diff \rv \: \lf\{ \lf( [\Omega] - \omega \ri)^2 r^{-2} \rho + \eps^{-2} \rho^2 \ri\} - 2 \Omega [\Omega - \omega],
\eeq
where the potential $ \rmagnp $ is defined in \eqref{newB0}, $ \omega \in \Z $ and we have used the normalization in $ L^1(\B) $ of the density in the last term. 
\newline
The minimization is essentially the same as for \eqref{TFf}: The normalized minimizer is 
\beq
	\label{hTFm}
	\htfm_{\omega}(r) : = \frac{\eps^2}{2} \lf[ \htfchem_{\omega} - \lf( [\Omega] - \omega \ri)^2 r^{-2}\ri]_+,
\eeq 
and the normalization condition becomes
\beq
	\label{hat normalization}
	\frac{1 - \hrtf^2}{\hrtf^2} + \log \hrtf^2 = \frac{2}{\pi \eps^2 ([\Omega] - \omega)^2},
\eeq
where we have denoted
\beq
	\label{hat radius}
	\hrtf^2 : = \frac{([\Omega] - \omega)^2}{\htfchem_{\omega}}.
\eeq
With such a definition the minimizer \eqref{hTFm} can be rewritten in a form very close to the TF minimizer \eqref{TFm}, i.e.,
\bdm
	\ttfm_{\omega}(r) = \frac{\eps^2 ([\Omega] - \omega)^2}{2 \hrtf^2 r^2} \lf[ r^{2} - \hrtf^2 \ri]_+.
\edm
In order to make a comparison it would then be useful to evaluate the radius $ \hrtf $ but the equation \eqref{hat normalization} has no explicit solution. However, since the right hand side of \eqref{hat normalization} tends to zero as $ \eps \to 0 $, we can expand the left hand side assuming $ \hrtf^{-2} = 1 + \delta $ for some $ \delta \ll 1 $:
\bdm
 	\frac{1}{2} \delta^2 - \frac{1}{3} \delta^3 + \OO(\delta^4) = \frac{2}{\pi \eps^2 ([ \Omega] - \omega)^2},
\edm
which yields
\bdm
	\frac{1}{\hrtf^2} - 1 = \delta = \frac{2}{\sqrt{\pi} \eps ([\Omega] -\omega)} \lf[ 1 + \frac{2}{3 \sqrt{\pi} \eps ([\Omega] - \omega)} + \frac{1}{9 \pi \eps^2 ([\Omega]-\omega)^2} + \mathcal{O}(\eps^{-3} \Omega^{-3}) \ri].
\edm
We thus have
\beq
	\label{new radius}
	\hrtf^2 = 1 - \frac{2}{\sqrt{\pi} \eps ([\Omega] - \omega)} - \frac{4}{3 \pi \eps^2 ([\Omega] - \omega)^2} + \mathcal{O}(\eps^{-3}\Omega^{-3}) = \rtf^2 + \frac{2 \omega}{\sqrt{\pi} \eps \Omega^2} - \frac{4}{3 \pi \eps^2 \Omega^2} + \OO(\eps^3 |\log\eps|^3),
\eeq
and whether $ \hrtf $ is larger or smaller than $ \rtf $ depends in a crucial way on the phase $ \omega $: In particular in the case of the giant vortex phase $ \omega_0 $ (see Proposition \ref{optimal phase pro}), the sum of the two last terms in the above expression vanishes to the leading order (see \eqref{est omega_0}), i.e., it is much smaller than $ \OO(\eps^{2}|\log\eps|^2) $.
\newline
The ground state energy $ \htfe $ is easy to compute:
\begin{multline*}
	\htfe_{\omega} = - 2 \Omega \lf([\Omega] - \omega\ri)+ \frac{\pi \eps^2 ([\Omega] - \omega)^4}{4} \lf( \hrtf^{-2} - 1 \ri)^2 \\ 
	= - \Omega^2 + \frac{4 \Omega}{3 \sqrt{\pi} \eps} + \omega^2 - \frac{4 \omega}{3 \sqrt{\pi} \eps} + \frac{2}{3 \pi \eps^2} -2 \Omega \left( [\Omega] - \Omega \right) + \OO(\eps^{-2} |\log\eps|^{-2}),
\end{multline*}
and, assuming that $ |\omega| \leq \OO(\eps^{-1}) $, one can easily recognize that the leading term and the first remainder coincide with \eqref{TFe}, i.e., the energy $ \htfe_{\omega} $ is equal to $ \tfe $ up to second order corrections:
\beq
	\label{hTFe}
	\htfe_{\omega} = \tfe + \bigg[ \omega - \frac{2}{3\sqrt{\pi}\eps} \bigg]^2 + \frac{2}{9 \pi \eps^2}  -2 \Omega \left( [\Omega] - \Omega \right) + \OO(\eps^{-2} |\log\eps|^{-2}).
\eeq
This formula implies that $ \htfe_{\omega} $ is minimized by a phase which is given up to corrections of order $ \eps^{-1} |\log\eps|^{-1} $ by
\beq
	\label{TFphase}
	\optphtf : = \frac{2}{3\sqrt{\pi}\eps},
\eeq
and the same is true for the giant vortex phases $ \omega_0 $ (see Proposition \ref{optimal phase pro} and \eqref{est omega_0}) and $\om_{\mathrm{opt}}$ (see Proposition \ref{optimal phase omega_{opt} pro} and \eqref{est omega_{opt}}).

\subsection{The Critical Angular Velocity and the Vortex Energy}
%\addcontentsline{toc}{subsection}{The Critical Angular Velocity and the Vortex Energy}

The last part of this Appendix is devoted to the study of the critical velocity $ \Omega_c $, which is defined as the angular velocity at which vortices disappear from the bulk of the condensate. To estimate this velocity, according to the discussion in Section \ref{Sect est reduced energy}, we have to compare the vortex energy cost $ \frac{1}{2} g^2(r) |\log\eps| $ and the vortex energy gain $ |F(r)| $ (see \eqref{F}): This leads to the definition of the function
\beq
	\gain(r) : = \half g_{\A,\omega_0}^2(r) |\log\eps| - \lf|F_{\omega_0}(r)\ri|,
\eeq
which yields the overall energy contribution of a vortex at a radius $ \rv $ inside the bulk: If $ \gain $ is positive in some region, then a vortex is energetically unfavorable there, and, if this holds true in the whole of the bulk, the condensate is in the giant vortex phase.
\newline
Before studying the behavior of the above function $ \gain $, it is however convenient to obtain an explicit approximative value for the critical velocity and to this purpose we replace the density $ g^2 $ with $ \tfm $ and study the function
\beq
	\label{gainTF}
	\gaintf(r) : =\half |\log\eps| \tfm(r) - \lf|\costtf(r) \ri|,
\eeq
where the cost function $ \costtf $ is explicitly given by
\beq
	\label{exp TF potential}
	\costtf(r) : = 2 \int_{\rtf}^r \diff s \: \vec{B}_{\optphtf}(s) \cdot \vec{e}_{\vartheta} \tfm(s) = \eps^2 \Omega^2 \int_{\rtf}^r \diff s \: \lf[ \Omega s - \lf( [\Omega] - \optphtf \ri) s^{-1} \ri] (s^2 - \rtf^2), 
\eeq
with $ \optphtf $ defined in \eqref{TFphase}. 
\newline
In order to investigate the behavior of the infimum of $ \gaintf $ inside the bulk, it is convenient to rescale the quantities and set
\beq \label{rescalez}
	z : = \eps \Omega (r^2 - \rtf^2),
\eeq
so that $ z $ varies on a scale of order one, i.e., more precisely $ z \in [0, 2/\sqrt{\pi} ] $ (see \eqref{TFann}). With such a choice the gain function can be easily estimated:
\bmln{
 	\costtf(r) = \frac{\eps^2 \Omega^2}{2} \int_{0}^{r^2-\rtf^2} \diff t \: t  \lf[ \Omega (t + \rtf^2) - [\Omega] + \optphtf \ri] \lf(t + \rtf^2 \ri)^{-1} =	\\
	\frac{\eps^2 \Omega^2}{2} \int_{0}^{r^2-\rtf^2} \diff t \:  t \lf( \Omega t - \frac{4}{3\sqrt{\pi}\eps} + \OO(1) \ri) \lf(1 - \frac{2}{\sqrt{\pi} \eps \Omega} + t \ri)^{-1}= 	\\
	\frac{1}{2\eps} \int_{0}^{z} \diff s \:  s  \lf( s - \frac{4}{3\sqrt{\pi}} + \OO(\eps) \ri) \lf(1 - \frac{2}{\sqrt{\pi} \eps \Omega} + \frac{s}{\eps\Omega} \ri)^{-1} = 	\\
	\frac{1}{2\eps} \int_{0}^{z} \diff s \: s \lf( s - \frac{4}{3\sqrt{\pi}} \ri) + \OO(|\log\eps|) = \frac{z^2}{6 \eps} \lf( z - \frac{2}{\sqrt{\pi}} \ri) + \OO(|\log\eps|),
	}
where we have used the approximation $ [1 - \OO((\eps\Omega)^{-1})]^{-1} = 1 + \OO((\eps\Omega)^{-1}) $. 
\newline
Applying the same rescaling to the energy cost function, we thus obtain
\beq
	\label{Hrescaling}
	\gaintf(r) : = \frac{z \rgaintf(z)}{12\eps},
\eeq
where
\beq
	\label{Hrescaled}
	\rgaintf(z) = 3\Omega_0 - 2z \lf| z - \frac{2}{\sqrt{\pi}} + \OO(\eps|\log\eps|) \ri| = 3\Omega_0 - 2 z \lf( \frac{2}{\sqrt{\pi}} - z \ri) - \OO(\eps|\log\eps|),
\eeq
since $ z \leq 2/\sqrt{\pi} $ by the definition of the scaling. Now it is very easy to see that
\beq	\label{Hinf}
	\rgaintf(z) \geq 3\Omega_0 - 2\pi^{-1} - \OO(\eps|\log\eps|).
\eeq
%so that 
%\beq
	
%	\min_{\rv \in \A} \lf[ \gaintf(r) \ri] \geq C \eps^{-1} > 0,
%\eeq
%for any $ \Omega_0 > (3\pi)^{-1} $ and $ \eps $ small enough. We sum up this fact in the following

The above considerations lead to the following 

\begin{pro}[\textbf{TF vortex energy}]
		\label{TF critical ang vel pro}
		\mbox{}	\\
		For any $ \Omega_0 > 2 (3\pi)^{-1} $ and $ \eps $ small enough, there exists a finite constant $ C $ such that 
		\[
		 \gaintf(r) \geq  C \eps^{-1} |\log \ep| ^{-2} > 0 
		 \]
		  for any $\vec{r}$ such that $ r \geq R_{\rm h} +  \ep |\log \ep| ^{-1} $.
	\end{pro}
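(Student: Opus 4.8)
The plan is to assemble the computations already performed above the statement. By the rescaling $z := \eps\Omega(r^2-\rtf^2)$ introduced in \eqref{rescalez} and the identities \eqref{Hrescaling}--\eqref{Hrescaled}, the TF vortex energy function factorizes as
\[
\gaintf(r) = \frac{z\,\rgaintf(z)}{12\,\eps}, \qquad \rgaintf(z) = 3\Omega_0 - 2z\lf( \frac{2}{\sqrt{\pi}} - z \ri) - \OO(\eps|\log\eps|),
\]
where $z$ ranges over $[0, 2/\sqrt{\pi}]$ by \eqref{TFann}. It therefore suffices to bound the two factors $z$ and $\rgaintf(z)$ from below separately on the region $r \geq \rtf + \eps|\log\eps|^{-1}$.

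\emph{Lower bound on $\rgaintf$.} The map $z \mapsto 2z(2/\sqrt{\pi} - z)$ attains its maximum on $[0, 2/\sqrt{\pi}]$ at $z = 1/\sqrt{\pi}$, with value $2/\pi$; this is precisely the estimate \eqref{Hinf}, i.e.\ $\rgaintf(z) \geq 3\Omega_0 - 2\pi^{-1} - \OO(\eps|\log\eps|)$ uniformly in $z$. Since by assumption $\Omega_0 > 2(3\pi)^{-1}$, the number $3\Omega_0 - 2\pi^{-1}$ is a strictly positive constant, so absorbing the vanishing remainder gives $\rgaintf(z) \geq c := \half(3\Omega_0 - 2\pi^{-1}) > 0$ for all $z \in [0, 2/\sqrt{\pi}]$ once $\eps$ is small enough.

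\emph{Lower bound on $z$.} For $r \geq \rtf + \eps|\log\eps|^{-1}$ one has $r^2 - \rtf^2 \geq \rtf(r - \rtf) \geq \rtf\,\eps|\log\eps|^{-1}$, and since $\rtf = \sqrt{1 - 2(\sqrt{\pi}\eps\Omega)^{-1}} = 1 - o(1) \geq \half$ for small $\eps$ by \eqref{TFann}, this yields $r^2 - \rtf^2 \geq \half\eps|\log\eps|^{-1}$. Plugging this into the definition of $z$ and using $\Omega = \Omega_0(\eps^2|\log\eps|)^{-1}$ from \eqref{angular velocity}, hence $\eps\Omega = \Omega_0\eps^{-1}|\log\eps|^{-1}$, we obtain $z \geq \half\Omega_0|\log\eps|^{-2}$.

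Combining the two bounds, $\gaintf(r) = z\,\rgaintf(z)/(12\eps) \geq \frac{c}{12\eps}\cdot\frac{\Omega_0}{2}|\log\eps|^{-2} = C\eps^{-1}|\log\eps|^{-2}$ with $C = c\,\Omega_0/24 > 0$, which is the claim. I do not expect a real obstacle here, since the analytically delicate part --- the asymptotic expansion of the cost function $\costtf$ in \eqref{exp TF potential} that produces the clean form \eqref{Hrescaled} --- has already been carried out. The only two points that merit a line of justification are that the $\OO(\eps|\log\eps|)$ remainder in $\rgaintf(z)$ is genuinely $o(1)$ uniformly on the region considered (it descends from the $\OO(|\log\eps|)$ error in \eqref{exp TF potential}, and on $r \geq \rtf + \eps|\log\eps|^{-1}$ the factor $z/(12\eps)$ stays bounded below by a negative power of $|\log\eps|$, so the induced error on $\rgaintf$ still vanishes), and that $\rtf$ is bounded away from $0$, which is immediate from \eqref{TFann}.
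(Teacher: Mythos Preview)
Your proof is correct and is exactly the approach the paper takes: its one-line proof simply says to collect \eqref{rescalez}, \eqref{Hrescaling} and \eqref{Hinf}, which is what you have spelled out. Your closing remark about the error term is the one genuine point worth flagging --- the $\OO(|\log\eps|)$ remainder in $\costtf$ becomes, after dividing by $z/(12\eps) \gtrsim \eps^{-1}|\log\eps|^{-2}$, an $\OO(\eps|\log\eps|^3) = o(1)$ contribution to $\rgaintf$ on the region in question --- and you handle it correctly.
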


\begin{proof}

It is sufficient to collect (\ref{Hrescaling}), (\ref{Hinf}) and recall the rescaling (\ref{rescalez}).  

\end{proof}

We now go back to the original function $ \gain $ and show that the above property holds true as well, i.e., $ \Omega_{c}  = 2 (3\pi)^{-1} \eps^{-2} |\log\eps|^{-1} $ is the critical velocity for the disappearance of vortices from the bulk of the condensate: %(see \eqref{defi annt} and \eqref{defi R large} for the definition of $ \at $):

	\begin{pro}[\textbf{Critical angular velocity}] 
		\label{critical ang vel pro}
		\mbox{}	\\
		If $ \Omega_0 > 2 (3\pi)^{-1} $ and $ \eps $ is small enough, there exists a finite constant $ C $ such that 
		\[ 
		\gain(r) \geq C \eps^{-1} |\log \ep| ^{-2} > 0 
		\] 
		for any $\vec{r}$ such that $ r \geq \rtf +  \ep |\log \ep| ^{-1} $.
	\end{pro}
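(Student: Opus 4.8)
The plan is to bootstrap from the already-established TF version of the statement, Proposition \ref{TF critical ang vel pro}, by controlling the two differences between $\gain(r)$ and $\gaintf(r)$: the replacement of the density $g_{\A,\omega_0}^2$ by $\tfm$, and the replacement of the potential $F_{\omega_0}$ by the TF cost function $\costtf$. For the first, I would invoke the pointwise estimate \eqref{pointwise bounds}, which gives $|g^2(r)-\tfm(r)|\leq C\eps^2|\log\eps|^2(r^2-\rtf^2)^{-3/2}\tfm(r)$ for $r\geq \rtf+\OO(\eps^{3/2}|\log\eps|^2)$. On the reduced annulus $\at$, where $r^2-\rtf^2\gtrsim \eps|\log\eps|^{-1}$, the relative error is $\OO(\eps^{1/2}|\log\eps|^{7/2})=o(1)$, so $\half g^2(r)|\log\eps|=\half\tfm(r)|\log\eps|(1+o(1))$, and since $\half\tfm(r)|\log\eps|\geq C\eps^{-1}|\log\eps|^{-1}$ on $\at$ the error term is controlled and much smaller than the $C\eps^{-1}|\log\eps|^{-2}$ lower bound we are targeting. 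Near $r=1$ the estimate is even more comfortable since $\tfm(1)\sim\eps^{-1}|\log\eps|^{-1}$.

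Next I would compare $F_{\omega_0}(r)$ with $\costtf(r)$. Both are integrals of $2g^2\vec B\cdot\vec e_\vartheta$ (resp. $2\tfm\vec B_{\optphtf}\cdot\vec e_\vartheta$), and there are two sources of discrepancy: $\omega_0$ versus $\optphtf=2/(3\sqrt\pi\eps)$, which by \eqref{est omega_0} differ by $\OO(\eps^{-1}|\log\eps|^{-1/2})$, and $g^2$ versus $\tfm$ in the integrand. I would estimate $|F_{\omega_0}(r)-\costtf(r)|$ by splitting the integrand difference into a part proportional to $|\omega_0-\optphtf|\int g^2 s^{-1}\,ds$ and a part proportional to $\int|g^2-\tfm|\,|\vec B|\,ds$. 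Using $|\vec B|\leq C\eps^{-1}$ (Lemma \ref{propertiesBF}), $|\omega_0-\optphtf|=\OO(\eps^{-1}|\log\eps|^{-1/2})$, the normalization of $g^2$, and the $L^\infty$ bound on $g^2-\tfm$ coming from \eqref{pointwise bounds} integrated against the width $\OO(\eps|\log\eps|)$ of the annulus, both contributions should come out as $o(\eps^{-1}|\log\eps|^{-1})$ — in particular much smaller than the gap $C\eps^{-1}|\log\eps|^{-2}$ secured by Proposition \ref{TF critical ang vel pro}. One subtlety is that $F_{\omega_0}$ is normalized to vanish at $\rt$ whereas $\costtf$ vanishes at $\rtf$; since $g^2$ is exponentially small on $[\rt,\rtf]$ by \eqref{g improved exp small} this shift contributes only $\OO(\eps^\infty)$, so it is harmless.

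With these two comparisons in hand the conclusion is immediate: on $\at=\{r\geq\rtf+\eps|\log\eps|^{-1}\}$,
\[
\gain(r)=\gaintf(r)+\lf(\half g^2(r)-\half\tfm(r)\ri)|\log\eps|-\lf(|F_{\omega_0}(r)|-|\costtf(r)|\ri)\geq \gaintf(r)-o\lf(\eps^{-1}|\log\eps|^{-2}\ri),
\]
and since $\gaintf(r)\geq C\eps^{-1}|\log\eps|^{-2}>0$ by Proposition \ref{TF critical ang vel pro} for $\Omega_0>2(3\pi)^{-1}$, we get $\gain(r)\geq C'\eps^{-1}|\log\eps|^{-2}>0$ with a possibly smaller constant $C'$, which is exactly the claim.

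The main obstacle, I expect, is making the comparison $|F_{\omega_0}-\costtf|$ genuinely small \emph{uniformly} down to the inner edge of $\at$, because that is precisely where $\tfm$ and hence the target bound $\half\tfm|\log\eps|$ are smallest (of order $\eps^{-1}|\log\eps|^{-1}$), while the relative error in \eqref{pointwise bounds} is largest there. One has to check that the accumulated error $\OO(\eps^{1/2}|\log\eps|^{7/2})$ times $\tfm(r)|\log\eps|$ still stays below $C\eps^{-1}|\log\eps|^{-2}$; this works because $\eps^{1/2}|\log\eps|^{7/2}\cdot\eps^{-1}|\log\eps|^{-1}=\eps^{-1/2}|\log\eps|^{5/2}\cdot\eps=\eps^{1/2}|\log\eps|^{5/2}\ll|\log\eps|^{-2}$, but one must track the powers carefully. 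A minor additional point is to verify that the $\OO(1)$ discrepancy between $[\Omega]-\omega_0$ and $\Omega-\optphtf$ in the definition of $\vec B$ — the integer-part rounding — only shifts things by $\OO(\int g^2\,ds)=\OO(1)$, which is negligible against $\eps^{-1}|\log\eps|^{-2}$.
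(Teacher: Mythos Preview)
Your overall strategy --- reduce to Proposition \ref{TF critical ang vel pro} by controlling $|g^2-\tfm|$ and $|F_{\omega_0}-\costtf|$ --- is exactly the paper's, and your treatment of the density term is fine. But the potential comparison has a genuine gap. The phase shift contributes
\[
|\omega_0-\optphtf|\int_{\rtf}^{r} g^2(s)\,s^{-1}\,\diff s,
\]
and since $|\omega_0-\optphtf|=\OO(\eps^{-1}|\log\eps|^{-1/2})$ while $\int g^2 s^{-1}\,\diff s=\OO(1)$, this term is $\OO(\eps^{-1}|\log\eps|^{-1/2})$, \emph{not} $o(\eps^{-1}|\log\eps|^{-1})$. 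More importantly, even a bound of $o(\eps^{-1}|\log\eps|^{-1})$ would not be ``much smaller than the gap $C\eps^{-1}|\log\eps|^{-2}$'': you have the inequality backwards, since $|\log\eps|^{-1}\gg|\log\eps|^{-2}$. So a uniform comparison $|H-H^{\rm TF}|\leq(\text{error})$ with error below $\eps^{-1}|\log\eps|^{-2}$ is simply not available.

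The paper resolves this by splitting $\at$ into two pieces. For $r\geq\rtf+\eps|\log\eps|^{1/2}$ the rescaling \eqref{Hrescaling}--\eqref{Hinf} gives the stronger bound $H^{\rm TF}(r)\geq C\eps^{-1}|\log\eps|^{-1/2}$, which \emph{does} dominate the $\OO(\eps^{-1}|\log\eps|^{-1})$ error in the comparison. For $\rtf+\eps|\log\eps|^{-1}\leq r\leq\rtf+\eps|\log\eps|^{1/2}$ the paper abandons the comparison altogether and uses \eqref{Fbound2} directly: there $|r-\rt|\leq C\eps|\log\eps|^{1/2}$, so $|F_{\omega_0}(r)|\leq C|\log\eps|^{1/2}g^2(r)\ll\tfrac12 g^2(r)|\log\eps|$, and then \eqref{g low bound} gives $H(r)\geq C\eps^{-1}|\log\eps|^{-2}$. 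Your ``main obstacle'' paragraph correctly identifies the inner edge as delicate but mislocates the problem: the density error is harmless there, while it is the $F$-comparison in the \emph{bulk} of the annulus that forces the two-region argument.
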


	\begin{proof}
		The result basically follows from what is proven about $ \gaintf $: We are going to show that 
		\beq
			\label{difference energy functions}
			\sup_{\rv \in \at} \lf|  \gaintf(r) - \gain(r) \ri| \leq C \eps^{-1}|\log \ep| ^{-1}.
		\eeq
	%	so that the statement becomes a trivial consequence of Proposition \ref{TF critical ang vel pro}.
		\newline
		In order to prove the above inequality, we use the estimates \eqref{pointwise bounds} and \eqref{g estimates} to get
		\beq
			\label{difference densities}
			\sup_{\rv \in \at} \lf|  \tfm(r) - g_{\A,\omega_0}^2(r) \ri| \leq C \eps^{1/2} |\log\eps|^{7/2} \lf\| \tfm \ri\|_{\infty} \leq C \eps^{-1/2} |\log\eps|^{5/2},
		\eeq
		and
		\bml{
 			\sup_{\rv \in \at} \lf|  \costtf(r) - F_{\omega_0}(r) \ri| \leq 2 \int_{\rt}^{\rd} \diff s \: \lf| B_{\omega_0}(s) \ri| g_{\A,\omega_0}^2(s) +	\\
			C \eps^2 \Omega^2 \lf| \omega_0 - \optphtf \ri| \int_{\rtf}^1 \diff s \: s^{-1} (s^2 - \rtf^2) + 2 \sup_{\rv \in \at} \lf|  \tfm(r) - g_{\A,\omega_0}^2(r) \ri| \int_{\rtf}^1 \diff s \: \lf| B_{\omega_0}(s) \ri| \leq	\\
			C \lf[ \eps^{-1} \lf| \rd - \rt \ri| g^2_{\A,\omega_0}(\rd) + \eps^3 \Omega^2 |\log\eps| + \eps^{-1/2} |\log\eps|^{7/2} \ri] \leq	\\
			C \lf[ |\log\eps|^{-1} \tfm(\rd) + \eps^{-1} |\log\eps|^{-1} \ri] \leq C \eps^{-1} |\log\eps|^{-1},
		}
		where we have used \eqref{difference densities}, the monotonicity of $ g_{\A,\omega_0}(r) $ (see Proposition \ref{htminimization}) and the estimate \eqref{est omega_0}.
		\newline
		Hence one obtains \eqref{difference energy functions} and the final result follows from Proposition \ref{TF critical ang vel pro} if $r\geq \rtf +  \ep |\log \ep |^{1/2}$. Indeed, using (\ref{rescalez}), (\ref{Hrescaling}) and (\ref{Hinf}) we have  
		\[
		\gaintf(r)\geq C \ep ^{-1}|\log \ep| ^{-1/2} \]
		 in this case.\\
On the other hand, if $ \rtf + \ep |\log \ep| ^{-1} \leq r \leq \rtf + \ep |\log \ep| ^{1/2}$ it follows from (\ref{Fbound2}) that
\[
H(r) \geq \half |\log \ep| g_{\A,\om_0} ^2 (r) (1 - C |\log\eps|^{-2}) \geq C \frac{1}{\ep |\log \ep| ^2},
\] 
where the last inequality comes from (\ref{g low bound}).
	\end{proof}

\vspace{1cm}
\noindent{\bf Acknowledgements.} MC and NR gratefully acknowledge the hospitality of the {\it Erwin Schr\"{o}dinger Institute} (ESI). JY acknowledges the hospitality of the {\it Institute for Mathematical Sciences} (IMS) at the National University of Singapore. MC is partially supported by a grant {\it Progetto Giovani GNFM} and NR by {\it R\'{e}gion Ile-de-France} through a PhD grant. NR thanks Sylvia Serfaty for helpful discussions. 
\vspace{1cm}

\end{document}